\documentclass[final]{elsarticle}

\usepackage{amsthm}
\usepackage{amssymb}
\usepackage[dvipsnames]{xcolor}
\usepackage{tikz}
\usetikzlibrary{matrix}
\usetikzlibrary{arrows}
\usetikzlibrary{positioning}
\usepackage{mathrsfs}

\usepackage[shortlabels]{enumitem}
\setlist{nosep,noitemsep}

\usepackage{nicematrix}
\usepackage{bm}
\usepackage{xparse}
\usepackage[percent]{overpic}

\newcommand{\formulaNb}[2]{\widetilde{N}^{(#1)}}
\newcommand{\formulaN}[2]{N^{(#1)}}
\newcommand{\Dmac}{\mathbb{D}}

\newcommand{\mac}[6]{\operatorname{Mac}_{#1#2}(#3,#4)}
\newcommand{\supo}[1]{\operatorname{Supp}(#1)}

\newcommand{\lm}[2]{\operatorname{\mathsf{lm}}_{#1}(#2)}
\newcommand{\lmdrl}[1]{\lm{\grs}{#1}}

\newcommand{\lt}[2]{\operatorname{\mathsf{lt}}_{#1}(#2)}

\newcommand{\lc}[2]{\operatorname{\mathsf{lc}}_{#1}(#2)}

\newcommand{\mon}[2]{\operatorname{\mathsf{Mon}}_{#1 #2}}
\newcommand{\hd}[1]{\operatorname{\mathsf{hom}}(#1)}

\newcommand{\K}{\mathbb{K}}
\newcommand{\Kp}[1]{\mathcal{R}_{#1}}
\newcommand{\Khp}[2]{\mathcal{R}_{#1}^{(#2)}}%homogeneous polynomials of a given degree

\newcommand{\nbpol}{s}
\newcommand{\nbgb}{r}

\newcommand{\NF}[2]{\operatorname{NF}\left( #1,#2 \right) }

\newcommand{\thedeg}[0]{\delta}

\newcommand{\ind}[1]{\operatorname{\mathsf{idx}}(#1)}
\newcommand{\ech}[1]{\mathsf{EchelonForm}(#1)}
\newcommand{\modify}[4]{\mathsf{Modify}(#1,#2,#3,#4)}
\newcommand{\polrows}[2]{\operatorname{Pol}(#2,#1)}
\newcommand{\seqpol}[0]{\bm{f}}
\newcommand{\seqmono}[0]{\mathbb{M}}
\newcommand{\seqmonot}[1]{\mathbb{M}_{#1}}
\newcommand{\redensmon}[2]{\zeta(#1,#2)}
\newcommand{\normfsubr}[2]{\mathsf{NormalForm}(#1,#2)}
\newcommand{\seqpolt}[0]{\textbf{h}}
\newcommand{\ideal}[1]{\langle #1 \rangle}

\newcommand{\field}{\K} % base field
\newcommand{\vectorspace}[2]{\operatorname{Span}_{#2}(#1)}

\newcommand{\grs}[0]{\succ_{\drl}}
\newcommand{\grp}[0]{\prec_{\drl}}
\newcommand{\grseq}[0]{\succeq_{\drl}}
\newcommand{\gs}[0]{\succ_{\dl}}
\newcommand{\gp}[0]{\prec_{\dl}}

\newcommand{\bdgshort}[1]{G_{#1}}
\newcommand{\bdg}[4]{G_{#1#2}(#3,#4)}
\newcommand{\bdgh}[4]{\mathscr{M}_{#1#2}(#3,#4)}
\newcommand{\bdgb}[4]{\mathscr{B}_{#1#2}(#3,#4)}

\newcommand{\lcmt}[2]{\bm{\lambda}_{#1}(#2)}
\newcommand{\taylorbasis}{\mathscr{T}}

\newcommand{\enspol}[3]{\mathsf{P}_{#1,#2}(#3)}
\newcommand{\enspoltr}[3]{\mathsf{TP}_{#1,#2}(#3)}
\newcommand{\truncfu}[2]{\phi(#1,#2)}

\newcommand{\hilse}[2]{\operatorname{HS}_{#1}(#2)}
\newcommand{\zarimore}[2]{\mathcal{D}_{#1,#2}}
\newcommand{\zarisemi}[3]{\mathcal{S}_{#1,#2,#3}}
\newcommand{\zarifilling}[3]{\mathcal{A}_{#1,#2,#3}}

\newcommand{\pair}[2]{p_{#1,#2}}
\newcommand{\Spol}[2]{\mathsf{sp}(#1,#2)}
\newcommand{\setpair}[0]{P}
\newcommand{\Sel}[1]{\operatorname{Sel}(#1)}

\newcommand{\gbalgo}[1]{G_{#1}}
\newcommand{\allpalgo}[1]{P(#1)}
\newcommand{\curdeg}[1]{\degalgo(#1)}
\newcommand{\asseralgo}[2]{\mathcal{H}_{#2}(#1)}
\newcommand{\asseralgoshort}[1]{\mathcal{H}_{#1}}
\newcommand{\syzy}[2]{S_{#1,#2}}
\newcommand{\type}[0]{distinguished}
\newcommand{\proj}[2]{\pi_{#2}\left(#1\right)}

\newcommand{\drl}[0]{\mathrm{grevlex}}
\newcommand{\dl}[0]{\mathrm{grlex}}
\newcommand{\opark}[1]{\mathsf{T}{(#1)}}

\newcommand{\selered}[3]{\Psi \left( #1,#2,#3 \right)}

\newcommand{\nbred}[2]{\rho_{#1,#2}}
\newcommand{\nbredw}[2]{\widetilde{\rho}_{#1,#2}}
\newcommand{\card}[1]{\sharp \left( #1 \right)}

\newcommand{\compfq}[0]{N_{L}}
\newcommand{\compfqt}[0]{N_{F4T}}
\newcommand{\compfqtb}[0]{\widetilde{N}_{F4T}}
\newcommand{\compfc}[0]{N_{F5}}
\newcommand{\cofc}[2]{b_{#1}^{(#2)}}

\newcommand{\degalgo}[0]{d}
\newcommand{\indalgo}[0]{\iota}
\newcommand{\dregprev}[0]{\mathbb{D}_{n-1}}
\newcommand{\dregmb}[0]{\Dmac}

\newcommand{\explpairs}[0]{P}
\newcommand{\explred}[0]{R}
\newcommand{\explSp}[0]{S}
\newcommand{\expluni}[0]{L}

\newcommand{\serieb}[2]{B_{#1}(#2)}
\newcommand{\thegoodr}[2]{r(#1,#2)}
\newcommand{\thegoodd}[1]{d(#1)}

\newcommand{\valint}[2]{t(#1)}
\newcommand{\pourcbm}[0]{p}

\newcommand{\proporMB}{p}

\usepackage{algorithm}
\usepackage[noend]{algpseudocode}
\algnewcommand\algorithmicassume{\emph{Assume:}}
\algnewcommand\Assume{\item[\algorithmicassume]}
\algrenewcomment[1]{\hfill \(\triangleright\) \emph{\small #1}} % to italicize text in comments
\algnewcommand{\CommentLine}[1]{\(\triangleright\) \emph{\small #1}}
\newcommand{\algoName}[1]{Algorithm \nameref{#1}}
\makeatletter
  \newcommand{\algoCaptionLabel}[2]{
     \caption[\textproc{\upshape #1}]{\textproc{\upshape #1}\ifthenelse{\equal{#2}{}}{}{$(#2)$}}%
     \NR@gettitle{\textproc{\upshape #1}}%
      \label{algo:#1}%
     }%
\makeatother

\usepackage{geometry}
\usepackage{graphicx}
\usepackage{xspace}
\usepackage{epstopdf}
\usetikzlibrary{decorations.pathreplacing}
\usepackage[hypertexnames=false]{hyperref}
\hypersetup{
	colorlinks=true,
	linkcolor=blue,
	filecolor=magenta,      
	urlcolor=cyan,
}
\usepackage{chngcntr}
\usepackage[capitalise]{cleveref}

\newtheorem{thm}{Theorem}[section]

\newtheorem{lemma}[thm]{Lemma}
\newtheorem{cor}[thm]{Corollary}
\newtheorem{definition}[thm]{Definition}
\newtheorem{definition*}{Definition}
\newtheorem{prop}[thm]{Proposition}

\newtheorem{remark}[thm]{Remark}

\crefname{prop}{Proposition}{Propositions}
\Crefname{algorithm}{Algorithm}{Algorithm}

\newcommand{\msolve}{\texttt{msolve}}

\begin{document}

\begin{frontmatter}

\title{A complexity analysis of the F4 Gröbner basis algorithm with tracer data}

% \author{\footnotesize{\shortstack{Robin Kouba$^{\ast}$ \\ Sorbonne Université, CNRS, LIP6 \\ F-75005 Paris, France \\ Robin.Kouba@lip6.fr} }
% \hspace{0.6cm}
% \footnotesize{\shortstack{Vincent Neiger \\ Sorbonne Université, CNRS, LIP6 \\ F-75005 Paris, France \\ Vincent.Neiger@lip6.fr} }
% \hspace{0.6cm}
% \footnotesize{\shortstack{ Mohab Safey El Din  \\ Sorbonne Université, CNRS, LIP6 \\ F-75005 Paris, France \\ Mohab.Safey@lip6.fr} }}

% \author{\shortstack{Robin Kouba ~~~~ Vincent Neiger$^{\ast}$ ~~~~ Mohab Safey El Din \\[0.2cm] Sorbonne Université, CNRS, LIP6, F-75005 Paris, France}}
\author{\shortstack{Robin Kouba ~~~~ Vincent Neiger ~~~~ Mohab Safey El Din \\[0.2cm] Sorbonne Université, CNRS, LIP6, F-75005 Paris, France}}

% \cortext[cor1]{Corresponding author.}

\begin{abstract}
We provide a new complexity bound for the computation of grevlex Gröbner bases
in the generic zero-dimensional case, relying on Moreno-Socías' conjecture. We
first formalize a property of regular sequences that implies a well-known
folklore consequence, which we call the increasing degree property.  We then
derive a new understanding of the selection of pairs in the F4 algorithm based
on Moreno-Socías' conjecture. Moreover, we obtain an exact formula for the
number of elements in the grevlex Gröbner basis of a given degree, for half of
the relevant degrees. Combining these results, we derive a precise complexity
formula for the F4 Tracer algorithm, together with its asymptotic behavior when
the number of variables tends to infinity. These results yield an improvement
over the state-of-the-art complexity bounds by a factor which is exponential in
the number of variables. 
\end{abstract}

\begin{keyword}
Gröbner basis \sep Regular sequence \sep Syzygies \sep Algorithm F4 tracer
\sep Complexity analysis
\end{keyword}

\end{frontmatter}

\section{Introduction}
\label{section_intro}

\paragraph*{Gröbner basis computations}

Gröbner bases are a key tool for representing polynomial ideals with respect to
some monomial ordering; we refer the reader to \citep{cox94} for basic
definitions and properties. Since Gröbner bases allow one to solve the ideal
membership problem through a multivariate division algorithm, or also to
compute ``modulo'' an ideal,  they have also become a versatile and widely used
tool for solving systems of polynomial equations. 

For these reasons, the complexity of computing them has been extensively studied.
In the 80's, much of the focus was on delivering lower bounds (see
\citep{mayr1982complexity,Huynh86}) and upper
bounds (see \citep{giusti1984, Moller_Mora84} for the characteristic zero
case and \citep{dube90} for the positive characteristic case),
establishing a worst-case behaviour of Gröbner
bases which is doubly exponential in the number of
variables. These worst-case results have been greatly improved under some
additional assumptions (generically satisfied, or conjectured to be
so, in the sense of Zariski topology), such as
zero-dimensionality, regularity, simultaneous Noether position (see
\citep{lazard83, giusti1984, bardet2014complexity}), which are
quite realistic in many application contexts. 

Still, these bounds do not depend much on the algorithm used to
compute Gröbner bases and were proved at a time when only
Buchberger's algorithm (see \cite{buchberger1965algorithmus}) was the single algorithmic
masterpiece of Gröbner basis theory. It has been observed that the practical behaviour of this algorithm,
which essentially consists in iteratively reducing $S$-polynomials
associated with critical pairs modulo a current basis, strongly
depends on the order in which these pairs are selected
(see, e.g., \cite{Traverso_sugar}).
It was also observed that, usually, most of these reductions lead to compute
zero, and that being able to detect these zero reductions before actually
performing the reduction would save significant computational time.

\paragraph*{Algorithm F5}

This F5 algorithm \cite{faugere2002F5} and its variants
\cite{eder2017survey} provide an algorithmic framework which avoids
all reductions to zero under some regularity assumption. These 
algorithms can be combined with a linear algebra view of the
aforementioned reductions through row echelon form computations on Macaulay 
matrices \cite{macaulay1902some, lazard83}. 
This comes at the cost of introducing \emph{signatures}
associated with the polynomials arising during the Gröbner basis
computation, which significantly restricts the set of row
permutations that can be used in the linear algebra
steps. These signatures can also be exploited, using adapted criteria,
to take into account the structure of the polynomials
\cite{gopalakrishnan2024optimized}.
(Concerning column permutations in these row echelon form computations,
note that they are basically forbidden, independently of the used algorithm,
due to the construction of the Macaulay matrix and the fact that Gröbner bases
are computed with respect to monomial orderings.)

Still, one of the variants enjoys a \emph{dedicated}
complexity analysis in \cite{bardet2014complexity}, which leverages
the special design of the algorithm, under some genericity
assumptions.
However, probably because of the inflexibility in linear algebra steps,
combined with the difficulty of handling signatures efficiently, the
F5 algorithm is for example not the one which is implemented in the fastest
available Gr\"obner basis software tools.

\paragraph*{Algorithm F4 and Gr\"obner traces}

The predecessor of F5, the so-called F4 algorithm \cite{faugere1999new} has led to 
significant speed-up in
practice against Buchberger's algorithm, and it is currently the one
which is implemented in most leading computer algebra software. 
However, to the best of our knowledge, this observed efficiency of F4 is not
explained by any dedicated complexity analysis which would leverage its
specific design, even under some additional assumptions. As a matter of fact,
all complexity statements related to F4 that we could find in works spanning
the last two decades are based on regularity assumptions and estimates on the
degree of regularity (to control the number and the size of Macaulay matrices);
see for example 
\cite{FaSaSpa13, FaSaVe16, Hashemi2011, Hashemi2017, GORLA2022386, CAMINATA2023322}.

In this framework, unlike in F5, the reduction to row echelon form leaves 
great freedom in choosing reducers or pivots for Gaussian elimination,
and in which row permutations can be applied.

Avoiding reductions to zero in the F4 algorithm has not found
yet a definitive solution in this framework.  A criterion first introduced in
\cite{buchberger1965algorithmus} and further developed in
\cite{gebauer1988installation} allows one to detect a certain number
of pairs that necessarily reduce to zero, but not all. When the base
field is large enough, one can take random linear
combinations of subsets of the rows to reduce to detect when some of
these all reduce to zero. This technique was introduced in Magma \citep{magma} and
has also implemented in Maple \citep{maple}
and in \msolve{} \citep{berthomieu}.
 While it leads to significant speedup, its impact on the overall complexity
of the computation is not studied yet, even under some regularity
assumptions. This technique does not eliminate all reductions to
zero, but it does remove a significant fraction of them.

A more systematic way to avoid reductions to zero is inspired from the
tracing approach introduced in \cite{traverso1989grobner} for
Buchberger's algorithm, which readily extends to the F4 context. 
Running the F4 algorithm on a given input sequence of polynomials, 
using a prescribed selection strategy for critical pairs such as the 
so-called ``normal strategy'' in \cite{faugere1999new} produces the 
structure of the leading monomials of all algebraic combinations of the 
input polynomials at each degree. 
This information of all leading monomials generated throughout the computation 
is referred to as the \emph{Gröbner trace} associated with the 
Gr\"obner basis computation. More precisely, using a data structure for this trace
which stores monomial information indicating which rows are not reduced to zero 
and which reducers are used, it becomes possible to eliminate all reductions 
to zero when computing a Gröbner basis for input sequences that are \emph{compatible} 
with this Gröbner trace.
Although this is a strong assumption, it often occurs in practice when 
solving polynomial systems over the rationals via a multi-modular approach
\citep{traverso1989grobner}, 
or when dealing with systems depending on parameters (see e.g.\ \cite{LeSa22}). 
This leads to the so-called F4-Tracer algorithm, implemented for instance in 
Maple \citep{maple} or 
in the \msolve{} library, which has demonstrated practical efficiency 
(see \cite{berthomieu}).

\begin{quotation}
\emph{This paper aims to establish complexity results for
the F4-Tracer algorithm, leveraging its structure, under some
classical genericity assumptions, which we make explicit below.}
\end{quotation}

\paragraph*{Main results}

To describe our results, we need to introduce some notation.
Throughout the paper, we denote by $\Kp{n} = \K[x_1,\ldots,x_n]$ the ring of multivariate polynomials
in the variables $x_1, \ldots , x_n$ over a field $\K$, and by
$\Kp{n}^{(\leq \delta)}$ the subset of $\Kp{n}$ consisting of all polynomials
of degree at most $\delta$.
For a polynomial $f$, we denote its homogeneous component of highest degree by $\hd{f}$.
For a sequence of polynomials $\seqpol = (f_1,\ldots , f_{\nbpol})$, we define
$\hd{\seqpol} = (\hd{f_1},\ldots , \hd{f_{\nbpol}})$.

For a positive real number $\proporMB \in (0,1]$ and an integer $\delta \geq 2$, we define
\[
E_{\proporMB}(\delta) =
\ln\left( \frac{1+\proporMB(\delta-1)}{\delta} \right)
+
\proporMB(\delta-1)\ln\left(\frac{1+\proporMB(\delta-1)}{\proporMB(\delta-1)}\right)
\]
as well as
\[
L_{\omega}(\delta)
=
\frac{\delta}{1-e^{-\frac{1}{(\omega-2)}}}
-
\frac{1}{1-e^{-\frac{1}{\delta(\omega-2)}}}.
\]
Here $\omega$ denotes the exponent of matrix multiplication over $\K$.
Note that $L_{\omega}(\delta)$ is not defined when $\omega = 2$; in this case,
by convention, we set $L_{2}(\delta) = 1$ for all $\delta \geq 2$.
Furthermore, we denote by $\bm{\ell}(\omega)$ the constant $L_{\omega}(2)$,
and for any real number $\varepsilon > 0$, we define the constant
\[
\bm{c}(\varepsilon, \delta, \omega)
=
E_{1/2}(\delta) + E_{\bm{\ell}(\omega)+\varepsilon}(\delta).
\]

As discussed above, the F4-Tracer algorithm takes as input a sequence of
polynomials together with a Gröbner trace $\mathcal{T}$, assuming that
both are compatible.
The algorithm that constructs this Gröbner trace from a given sequence of
polynomials is denoted by \algoName{algo:F4B}, whereas the F4-Tracer
algorithm itself is denoted by \algoName{algo:F4T}.

We use the classical notation $\mathcal{O}$ and $\widetilde{\mathcal{O}}$
  for asymptotic bounds, as defined in
\cite[Definitions~25.7 and~25.8]{vzgg03}.
We recall that for two functions $f,g : \mathbb{N} \to \mathbb{R}_{\ge 0}$,
the function $f(n)$ belongs to $\widetilde{\mathcal{O}}\!\left(g(n)\right)$
if there exist constants $N,c$ such that
\[
f(n) \le g(n)\bigl(\log_2(3+g(n))\bigr)^c
\]
for all $n \ge N$.

\begin{thm}
\label{thm:main}
Let $\seqpol = (f_1, \ldots , f_n)$ be a sequence of nonzero polynomials
in the ring $\Kp{n}$, all of degree $\delta \geq 2$, with $n \geq 2$.
Assume that $\delta$ is fixed.
Let $\mathcal{T}$ be the Gr\"obner trace returned by
\algoName{algo:F4B} when executed with input $\seqpol$ and the graded
reverse lexicographical ordering $\grs$.

There exists a Zariski open set
$\Omega \subset \big( \Kp{n}^{(\leq \delta)} \big)^n$
such that, if $\hd{\seqpol}$ lies in $\Omega$, then running
\algoName{algo:F4T} on input a sequence $\bm{g} = \left( g_1, \ldots,
g_n\right)$ of polynomial equations which is compatible with
$\mathcal{T}$,
computes a minimal Gröbner basis for $(\bm{g}, \grs)$ using 
\[
\widetilde{\mathcal{O}}\!\left(
\delta^{\omega n+1} e^{\bm{c}(\varepsilon, \delta, \omega)\, n}
\right)
\]
arithmetic operations in $\K$, for all $\varepsilon > 0$.
\end{thm}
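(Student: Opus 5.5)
The argument follows the roadmap of the abstract: a structural description of the F4 run on generic input, then a count of the linear algebra performed by the tracer.

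\textbf{Step 1: the genericity set.} I would define $\Omega$ as the intersection of two nonempty Zariski open sets of sequences of homogeneous forms of degree $\delta$ (via $\hd{\cdot}$). The first is the set of regular sequences, so the homogenized ideal has the Hilbert series $(1-t^\delta)^n/(1-t)^n$. The second is the set where the grevlex initial ideal is the generic one described by Moreno-Socías' conjecture, i.e.\ the weakly reverse lexicographic ideal. On this set, the increasing degree property of regular sequences (formalized earlier) says that the degree in which F4 with the normal strategy processes pairs is nondecreasing. Moreover, every pair selected in degree $d$ yields either a new basis element or a zero reduction. The trace $\mathcal{T}$ removes all zero reductions. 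Compatibility of $\bm{g}$ with $\mathcal{T}$ then guarantees that \algoName{algo:F4T} reproduces the same matrix shapes and outputs a minimal Gröbner basis of $(\bm{g},\grs)$. The highest degree ever reached is the Macaulay bound $\dregmb = n(\delta-1)+1$.

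\textbf{Step 2: size of the matrices.} In degree $d$, the tracer builds a matrix whose rows are the non-zero-reducing rows: new basis elements of degree $d$ plus the reducers. Its columns are monomials of degree $d$ appearing in these rows. Using the weakly reverse lexicographic structure, I would show two things:
\begin{itemize}
\item the number of rows to reduce is the number of minimal generators of the initial ideal in degree $d$, for which the paper has an exact formula in half the degrees and a bound in the others;
\item the number of columns is at most the dimension of the degree-$d$ part of $\Kp{n}$ modulo the initial ideal plus the number of leading monomials, i.e.\ bounded by the coefficient of $t^d$ in $(1-t^\delta)^n/(1-t)^{n}$ plus the rows.
\end{itemize}
Echelon form of an $r\times c$ matrix of rank $r$ costs $\mathcal{O}(r^{\omega-2} rc)$ in the worst case, so the total cost is $\sum_d r_d^{\omega-1}c_d$.

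\textbf{Step 3: asymptotics.} Both $r_d$ and $c_d$ are coefficients of expansions like $(1+t+\cdots+t^{\delta-1})^n$, which peak near the middle degree $d\approx n(\delta-1)/2$. I would bound them by saddle-point estimates of the form $\delta^{n} e^{-nE_{p}(\delta)}\cdot$poly$(n)$ style expressions, with $p$ parametrizing the degree as $d=p n(\delta-1)$. The two relevant scales are $p=1/2$ for the column count and $p=\bm{\ell}(\omega)+\varepsilon$ for the location where $r_d^{\omega-1}c_d$ is maximal. The latter comes from optimizing a sum of geometric-type terms, producing $L_\omega$, and explains the $e^{-1/(\omega-2)}$. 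Combining the factors gives $\delta^{\omega n+1}e^{\bm{c}(\varepsilon,\delta,\omega)n}$ up to polynomial factors in $n$, absorbed in $\widetilde{\mathcal{O}}$ since $\delta$ is fixed. The extra $\delta$ accounts for the number of degrees, which is $\mathcal{O}(n\delta)$.

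\textbf{Main obstacle.} The hard step is Step 3's control of $r_d$, the number of new basis elements in degree $d$, in the degrees where no exact formula is available. There I would first try bounding it by a monotonicity argument: the number of minimal generators of a weakly reverse lexicographic ideal in degree $d$ is at most the Hilbert function value in degree $d-1$ times $n$. I would then check that this crude bound only matters in degrees where the sum is dominated by the other term, which is what forces the $\varepsilon$ in the statement.
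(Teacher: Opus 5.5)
\textbf{The gap is in Step~2: you never control the reducers, and that is where the paper's improvement comes from.} In degree $d$, the tracer's matrix contains the $\gamma_d$ S-polynomial rows plus one reducer for each non-standard monomial that occurs. There can be up to $\binom{n+d-1}{d}$ reducers, and the columns range over up to $\binom{n+d}{d}$ monomials. Neither count is a coefficient of $(1+t+\cdots+t^{\delta-1})^n$. Your accounting is inconsistent on exactly this point:
\begin{itemize}
\item If $r_d$ includes the reducers, an ordinary echelon form costing $r_d^{\omega-1}c_d$ is Lazard's $\binom{n+d}{d}^{\omega}$, and there is no gain.
\item If $r_d$ excludes them, you have dropped the cost of reduction. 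With $r_d,c_d\le\delta^n$ you would get $\mathcal{O}(n\delta\cdot\delta^{\omega n})$ with no exponential factor at all, which is too good to be true.
\end{itemize}
Two symptoms show the same problem. First, $\binom{n+p(\delta-1)n}{p(\delta-1)n}\sim\delta^n e^{+E_p(\delta)n}$ up to a $\sqrt{n}$ factor, so the sign in your estimate is wrong. The $E_p$ terms come from binomial coefficients counting columns and reducers, not from Hilbert-series coefficients, which never exceed $\delta^n$. Second, maximizing $r_d^{\omega-1}c_d$ cannot produce the $e^{-1/(\omega-2)}$ in $L_\omega$; that exponent comes from a factor $\gamma^{\omega-2}$.

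Three ingredients are missing.
\begin{itemize}
\item \emph{Triangular reducer block.} The reducer block is already in echelon form, so the \textsf{Reduction} routine only computes a Schur complement. This costs about $\binom{n+d}{d}\,\rho_d\,\gamma_d^{\omega-2}$ (Proposition~\ref{prop_complexity_normal_form}), i.e.\ it is linear in the number $\rho_d$ of reducers.
\item \emph{Few reducers in high degrees.} Under Moreno-Soc\'{\i}as' conjecture, every pair surviving the Gebauer--M\"oller criterion multiplies one of its basis elements $g_j$ by a single variable $x_k$ with $k<\operatorname{idx}(g_j)$ (Theorem~\ref{thm_type1_good_pairs}). For $d\ge\mathbb{D}_{n-1}+1$, the new leading monomials are exactly $x_n^{2d-\mathbb{D}}m$ with $m$ standard in $x_1,\ldots,x_{n-1}$ (Theorem~\ref{thm_card_bdg_deg_moitMB}). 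That theorem needs a Fr\"oberg-type assumption on the truncation $\phi(\cdot,n-1)$, which is absent from your $\Omega$. Together these confine, for $d\ge\mathbb{D}_{n-1}+2$, all reducers to $x_n$-degree in $[2d-\mathbb{D}-2,\,2d-\mathbb{D}]$. Hence $\rho_d\le\sum_{k=0}^{2}\binom{n-2+\mathbb{D}-d+k}{\mathbb{D}-d+k}\in\widetilde{\mathcal{O}}(\delta^n e^{E_{1/2}(\delta)n})$.
\item \emph{Locating the dominant degree.} Simultaneous Noether position, also missing from your $\Omega$, gives $\gamma_{d,n}\le b^{(n)}_d$, the coefficient of $z^d$ in $z^\delta\bigl((1-z^\delta)/(1-z)\bigr)^{n-1}$. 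A saddle-point bound on $b^{(n)}_d$ places the dominant degree near $(\bm{\ell}(\omega)+\varepsilon)(\delta-1)n$. There the column count $\binom{n+d}{d}$ contributes $\delta^n e^{E_{\bm{\ell}(\omega)+\varepsilon}(\delta)n}$.
\end{itemize}
The bound is then the product $\delta^{(\omega-2)n}\cdot\delta^n e^{E_{1/2}n}\cdot\delta^n e^{E_{\bm{\ell}(\omega)+\varepsilon}n}$. Low degrees $d\le\mathbb{D}_{n-1}+1$ are handled crudely and contribute only $e^{2E_{1/2}n}$. By contrast, bounding the number of new elements, which you flag as the main obstacle, is easy: the paper just uses $n\delta^n$ or $b^{(n)}_d\le\delta^{n-1}$.
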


A more precise version of this result is given in
\cref{thm_complexity_F4_tracer_assymp} in
\cref{sec:complexity_proof}. 
The non-emptiness of the Zariski open set $\Omega$ relies on
\cite[Conjecture~4.1]{moreno2003degrevlex},
\cite[Conjecture~1.1]{froberg1994hilbert} and 
\cite[Conjecture~1.6]{moreno2003degrevlex}.

The hidden factor in the $\widetilde{\mathcal{O}}$ notation of
\cref{thm:main} is bounded by a polynomial in $n$ whose degree
grows linearly with $\omega$ and does not depend on $\delta$
or $\varepsilon$.
The coefficient $\bm{c}(\varepsilon, \delta, \omega)$ increases when
$\varepsilon$ increases, and for any $\delta \geq 2$ we have that
\[
  \lim_{\varepsilon \to 0^+} \bm{c}(\varepsilon, \delta, \omega) =
E_{1/2}(\delta) + E_{\bm{\ell}(\omega)}(\delta)  .
\tag*{($\star$)} \label{introval}
\] 
For example, with $\delta = 2$, we have that \ref{introval} is approximately 
equal to $0.2616$ for $\omega = 3$, to $0.6743$ for $\omega = 2.81$ and to 
$0.7956$ for $\omega = 2.38$.

For algorithms such as \algoName{algo:F4T}, complexity results
follow from the understanding of three different quantities.  
The first one is the number of new elements found at each step, which has been
thoroughly studied in \cite{bardet2014complexity}.  
The second one is the number of reductors used at each step to compute these
new polynomials; this is where our analysis leads to an improvement with
respect to the state of the art.  
Finally, the third quantity is the support of the polynomials involved,
an aspect that we treat here the same way as has been done in the state of the art.

In order to provide a sharper view of the asymptotic result of
\cref{thm:main}, we compare it with
\cite[Theorem~2 and Section~3.2]{bardet2014complexity}.
That result defines a formula $\compfc$, 
which provides an upper bound on the number of arithmetic operations in $\K$ 
needed to compute a grevlex Gröbner basis of a generic sequence 
$(f_1, \ldots, f_n)$ of polynomials in $\Kp{n}$ of degree $\delta$ 
using the F5 algorithm.
Moreover, for a fixed value of $\delta$, we have
\[
\compfc \in \widetilde{\mathcal{O}}\!\left(
\delta^{3n} h(\delta)^n
\right),
\]
where, for $\delta \geq 2$,
 it holds that $1 \leq h(\delta) \leq 3$.
Besides, the function $h(\delta)$ is increasing and
\[
\lim_{\delta \to +\infty} h(\delta) \approx 2.814.
\]

We estimate the gain provided by our bound on the arithmetic complexity of
\algoName{algo:F4T} with respect to that of F5 by considering the following ratio:
\begin{align*}
\frac{\delta^{3n} 2.81^n}{\delta^{\omega n} 
e^{n ( E_{1/2}(\delta)+E_{\bm{\ell}(\omega)}(\delta) )}} 
= 
\left( 
\frac{2.81\, \delta^{3-\omega}}
     {e^{E_{1/2}(\delta)+E_{\bm{\ell}(\omega)}(\delta)}} 
\right)^n .
\end{align*}
\Cref{fig:compassymp} shows the base of this exponential gain,
\[
\bm{g}(\omega,\delta) = \frac{2.81\,\delta^{3-\omega}}{e^{E_{1/2}(\delta)+E_{\bm{\ell}(\omega)}(\delta)}},
\]
as a function of $\delta$ for several values of $\omega$.
The two plots in \cref{fig:compassymp} represent the same function over different
ranges of $\delta$.
The first plot illustrates the behavior for small values of $\delta$,
whereas the second one highlights the asymptotic behavior for larger values of
$\delta$.

\vspace{0.5cm}

\begin{figure}[htb]
\centering
        \begin{overpic}[width=0.35\textwidth]{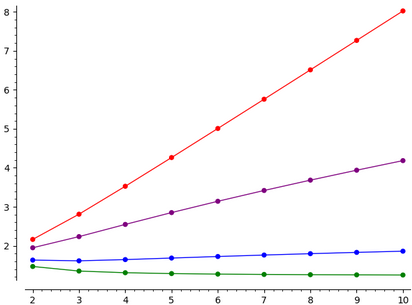}
          \put(-15,78){$\bm{g}(\omega,\delta)$}
          \put(105,0){$\delta$}
          \put(100,6){\color{green!60!black}$\omega = 3$}
          \put(100,13){\color{blue}$\omega = 2.81$}
          \put(100,35){\color{purple}$\omega = 2.38$}
          \put(100,71){\color{red}$\omega = 2$}
        \end{overpic}
        \hspace{2cm}
        \begin{overpic}[width=0.35\textwidth]{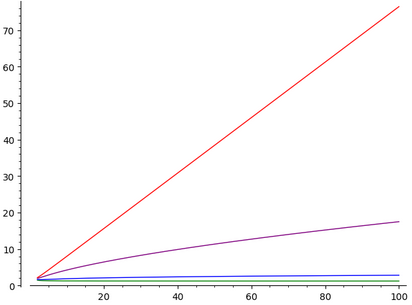}
        \put(-15,78){$\bm{g}(\omega,\delta)$}
          \put(105,0){$\delta$}
          \put(100,5){\color{green!60!black}$\omega = 3$}
          \put(100,10){\color{blue}$\omega = 2.81$}
          \put(100,20){\color{purple}$\omega = 2.38$}
          \put(100,72){\color{red}$\omega = 2$}
        \end{overpic}
\caption{exponential gain in the asymptotic}
\label{fig:compassymp}
\end{figure}

The gain can be interpreted as $\bm{g}(\omega,\delta)^{n}$. For example, setting
$\delta = 2$, we have $\bm{g}(\omega,2) = 1.4687$ when $\omega = 3$,
$\bm{g}(\omega, 2) = 1.6331$ when $\omega = 2.81$ (Strassen), and
$\bm{g}(\omega, 2)=1.9531$ when $\omega = 2.38$,
(slightly above the best currently known bound for 
$\omega$).
Moreover, for $\delta = 3$, $\bm{g}(\omega,3)$ equals $1.3518$ for $\omega = 3$,
$\bm{g}(\omega, 3) = 1.6139$ for $\omega = 2.81$, and 
$\bm{g}(\omega, 3) = 2.2348$ for $\omega = 2.38$.
More generally, for a fixed value of $\omega$, the function
$\bm{g}(\omega,\delta)$ is asymptotically equivalent to
$\bm{t}(\omega)\,\delta^{3-\omega}$ as $\delta$ tends to infinity,
where $\bm{t}(\omega)$ is a constant.

\paragraph*{Conclusions and perspectives}

To the best of our knowledge, this is the
first dedicated complexity analysis of a variant of the F4 algorithm,
even under the strong assumption of a priori knowing a
Gr\"obner trace and under genericity assumptions.
As a matter of fact, the complexity bound above yields
better estimates than the best known ones on the variant of the F5
algorithm studied in \cite{bardet2014complexity}. This may be the
first step to a better understanding of the folklore corresponding observation on running times
with the available software implementations. However, this raises 
more questions for future investigation than it provides definitive conclusions 
regarding the comparison of these algorithms.
First, let us
recall that our assumptions are much stronger than the ones made in
\cite{bardet2014complexity}. Second, here, only upper bounds are
compared and both rely on methodologies that overestimate the number
of columns of Macaulay matrices arising during the computation.

Still, this result shows that it may be relevant to analyze the
complexity of the probabilistic variant of the F4 algorithm which relies
on the aforementioned probabilistic linear algebra steps and which
removes a significant number of reductions to zero. It also settles
the question of removing all reductions to zero in the framework of
the F4 algorithm, while keeping its appealing freedom on linear
algebra steps. 

\paragraph*{Structure of the paper}
In \cref{sec:degree_property}, we establish several properties of regular sequences
that imply what we call the \emph{increasing degree property}.
In \cref{sec:criticalpairs}, we prove a main original theorem concerning the form of
the selected critical pairs using
a criterion first introduced in \cite{buchberger1965algorithmus} and further
developed in \cite{gebauer1988installation}.
In \cref{sec:shapemonomialstaircase}, we show some structural properties on the
grevlex Macaulay matrices and we provide a customized version of a particular case of
\cite[Corollary~3.2]{moreno2003degrevlex}.
\cref{sec:Algorithmsandtheirproperties} is devoted to the description of the
algorithms and their main properties, whereas
\cref{sec:complexity_analysis} focuses on complexity analyses and comparisons.

\paragraph*{Acknowledgments} The authors are supported by the Agence Nationale
de la Recherche, which funds the ANR-23-CE48-0003 project CREAM and the
ANR-25-CE39-7662 project PQMC, and by the grant FA8655-25-1-7469 of the
European Office of Aerospace Research and Development of the Air Force Office
of Scientific Research (AFOSR).

\section{On the increasing degree property}
\label{sec:degree_property}

Further, for a nonzero polynomial $f \in \Kp{n}$, we denote by $\supo{f}$ its support.
For a finite set $\seqpol$ of polynomials, the support of $\seqpol$ is
defined as the union
$\supo{\seqpol} = \bigcup_{f \in \seqpol} \supo{f}$.

This section is devoted to prove \cref{prop_no_degree_fall}, stated below.
We start by recalling the definition of a homogeneous regular sequence 
\cite[Section 16]{matsumura1989commutative}.

\begin{definition}
\label{def_reg_seq}
Let $ \seqpol=(f_1, \dots, f_\nbpol)$ be a sequence of nonzero homogeneous polynomials  in $\Kp{n}$. 
The sequence $\seqpol$ is said to be a \emph{regular} sequence if
for all $i$ in $\{2, \ldots , \nbpol \} $, 
$ f_i $ does not divide zero in the quotient ring $ \Kp{n} / \langle f_1, \dots, f_{i-1} \rangle$.
\end{definition}

We recall that for a polynomial \( f \) in $\Kp{n}$, we denote by \( \hd{f} \) its homogeneous
component of highest degree.  For a subset \(\seqpol \subset \Kp{n}\), 
\(\hd{\seqpol}\) denotes the set \(\{\hd{f}\mid f\in \seqpol\}\). 

\begin{prop}
\label{prop_no_degree_fall}
Let $ \seqpol=(f_1, \dots, f_{\nbpol})$ be a sequence of nonzero polynomials 
in $\Kp{n} $ with $\deg(f_i)=\delta_i$ and suppose that  $\hd{\seqpol}$ is a
regular sequence. Let $d$ be an integer. 
For any combination \(w= q_1 f_1 + \cdots + q_\nbpol f_\nbpol\)
with $w, q_1, \ldots, q_\nbpol$ in \(\Kp{n}\),
such that $\deg(w)<d$, and $\deg(q_i) \leq d-\delta_i$ for all \(i\),
there exist polynomials $h_1, \ldots, h_\nbpol$ such that
$w=h_1 f_1 + \cdots + h_\nbpol f_\nbpol$ and $\deg(h_i)<d-\delta_i$ for all \(i\).
\end{prop}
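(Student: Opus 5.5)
The approach is the classical ``degree fall'' argument: since $\hd{\seqpol}$ is regular, every syzygy of the $\hd{f_i}$ is a Koszul syzygy, and we use this to lower the degrees of the cofactors one step at a time.

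Given $w=\sum_i q_i f_i$ with $\deg(w)<d$ and $\deg(q_i)\le d-\delta_i$, look at the part of degree exactly $d$. For each $i$, let $\bar q_i$ be the homogeneous component of $q_i$ of degree $d-\delta_i$ (possibly zero). The degree-$d$ component of $\sum_i q_i f_i$ is $\sum_i \bar q_i\,\hd{f_i}$. Since $\deg(w)<d$, this vanishes:
\[
\sum_{i=1}^{\nbpol} \bar q_i \,\hd{f_i}=0 .
\]
So $(\bar q_1,\ldots,\bar q_\nbpol)$ is a homogeneous syzygy of degree $d$ of the regular sequence $\hd{\seqpol}$.

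Next I invoke the standard fact that the first syzygy module of a homogeneous regular sequence is generated by the Koszul syzygies $\hd{f_j}e_i-\hd{f_i}e_j$. This is exactness of the Koszul complex at the first homology, e.g.\ \cite[Theorem~16.5]{matsumura1989commutative}. If needed, I would prove it directly by induction on $\nbpol$. In $\sum_i \bar q_i\hd{f_i}=0$, the element $\bar q_\nbpol\hd{f_\nbpol}$ lies in $\langle \hd{f_1},\ldots,\hd{f_{\nbpol-1}}\rangle$. Since $\hd{f_\nbpol}$ is a nonzerodivisor modulo that ideal, we can write $\bar q_\nbpol=\sum_{j<\nbpol} a_j\hd{f_j}$ with $a_j$ homogeneous. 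Subtracting the corresponding Koszul syzygies reduces the problem to $\nbpol-1$ elements, and the induction hypothesis applies because a prefix of a regular sequence is regular. Keeping track of gradings, this gives homogeneous $a_{ij}=-a_{ji}$ of degree $d-\delta_i-\delta_j$ with $\bar q_i=\sum_j a_{ij}\hd{f_j}$.

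Now set
\[
h_i = q_i-\sum_j a_{ij} f_j .
\]
By antisymmetry, $\sum_i\sum_j a_{ij}f_jf_i=0$, so $\sum_i h_i f_i=w$. The degree-$(d-\delta_i)$ component of $\sum_j a_{ij}f_j$ is $\sum_j a_{ij}\hd{f_j}=\bar q_i$, and all its other components have lower degree. Hence the top component of $q_i$ cancels in $h_i$, and $\deg(h_i)<d-\delta_i$, which is the claim (with the convention that the zero polynomial has degree $-\infty$).

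The main obstacle is the syzygy step, i.e.\ the Koszul-exactness claim for regular sequences with the homogeneous degree bookkeeping. Definition~\ref{def_reg_seq} is order-dependent, so the inductive proof must only use prefixes, which it does. The rest is a routine degree comparison.
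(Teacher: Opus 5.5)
Your proposal is correct and is essentially the paper's argument. The paper also extracts the top-degree syzygy $\sum_i \bar q_i\,\hd{f_i}=0$ and builds an alternating matrix of homogeneous cofactors by the same ``peel off the last generator'' induction, i.e.\ Koszul $H_1$-exactness done by hand. Its final cofactors $q_i-\bar q_i+p_i$, with $p_i=\sum_j b_{j,i}\tilde f_j$, simplify exactly to your $h_i=q_i-\sum_j a_{ij}f_j$; the only difference is that the paper splits this into a homogeneous lemma (\cref{lemma_homog_deg_fall}) plus a reduction step, which you merge into one. Do state explicitly that $a_{ii}=0$ (alternating, not merely $a_{ij}=-a_{ji}$): the cancellation $\sum_{i,j}a_{ij}f_jf_i=0$ needs it in characteristic~$2$, and your Koszul construction gives it automatically.
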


An important consequence of this result is
a folklore fact concerning Gr\"obner basis computations:
when computing a graded Gröbner basis using linear algebra-based algorithms
such as Lazard's algorithm \citep{lazard83}, F4 \cite{faugere1999new}, or F5 \cite{faugere2002F5},
the degrees of the polynomials added in the Gröbner basis of
the algorithm strictly increase when processing critical pairs.

\medskip The proof of \cref{prop_no_degree_fall} relies on the two auxiliary
lemmata below. We start with some notation used hereafter.  For $\bm{\alpha} =
(\alpha_1,\dots,\alpha_n)$ in $\mathbb{N}^n$, a monomial $\prod_{i=1}^n
x_i^{\alpha_i}$ is written  as $\bm{x}^{\bm{\alpha}}$. Moreover, we set \(
|\bm{\alpha}| = \sum_{i = 1}^n \alpha_i \). We denote by $\mon{d,}{n}$ the set
of monomials in $\Kp{n}$ of total degree $d$, and by $\mon{\leq d,}{n}$ (resp.\
$\mon{< d,}{n}$) the set of monomials in $\Kp{n}$ of 
degree at most $d$ (resp.\ less than $d$). The set of all monomials in $\Kp{n}$
is denoted  by $\mon{}{n}$. For \( d \in \mathbb{N} \), we denote by \(
\Khp{n}{d} \) the
\( \K \)-vector space generated by all homogeneous polynomials of degree \( d
\) in \( \Kp{n} \). For a set $S$ in a $\K$-vector space, we denote by 
$\operatorname{Span}_{\K}\left(S\right)$ the $\K$-vector subspace generated by $S$.
\begin{lemma}
\label{lemma_generator_homog_space}
Let \( \seqpol = (f_1,\dots,f_\nbpol) \) be homogeneous polynomials in \(
\Kp{n} \) with \( \deg(f_i) = \delta_i \), and let $d$ be in $\mathbb{N}$.
Then
\[
\enspol{d}{n}{\seqpol} \;=\; \left\{m\,f_i \mid i \in \{1,\dots,\nbpol\},\; m \in \mon{d-\delta_i,}{n} \right\}
\]
generates the \( \K \)-vector space of all homogeneous polynomials of degree \(
d \) that lie in 
\( \langle \seqpol \rangle \).
\end{lemma}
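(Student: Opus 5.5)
The plan is to prove the two inclusions between $\vectorspace{\enspol{d}{n}{\seqpol}}{\K}$ and the space $\langle \seqpol\rangle \cap \Khp{n}{d}$ of homogeneous polynomials of degree $d$ in the ideal.

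\emph{Easy inclusion.} Each element $m f_i$ of $\enspol{d}{n}{\seqpol}$ lies in $\langle \seqpol \rangle$. It is homogeneous of degree $\deg(m)+\delta_i = d$, because $m\in\mon{d-\delta_i,}{n}$ and $f_i$ is homogeneous of degree $\delta_i$. Hence the span of $\enspol{d}{n}{\seqpol}$ is contained in $\langle \seqpol\rangle \cap \Khp{n}{d}$.

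\emph{Reverse inclusion.} Let $w \in \langle \seqpol\rangle$ be homogeneous of degree $d$, and write $w = \sum_{i=1}^{\nbpol} q_i f_i$ with arbitrary $q_i \in \Kp{n}$.
\begin{enumerate}
\item Decompose each $q_i$ into homogeneous components, $q_i = \sum_{k} q_i^{(k)}$ with $q_i^{(k)}\in\Khp{n}{k}$.
\item Since $f_i$ is homogeneous of degree $\delta_i$, the product $q_i^{(k)} f_i$ is either zero or homogeneous of degree $k+\delta_i$.
\item Group the terms of $\sum_i q_i f_i$ by total degree. By uniqueness of the decomposition of a polynomial into homogeneous components, the degree-$d$ component of the right-hand side equals the degree-$d$ component of $w$, which is $w$ itself. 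This gives
\[
w = \sum_{i=1}^{\nbpol} q_i^{(d-\delta_i)} f_i,
\]
with the convention $q_i^{(k)}=0$ when $k<0$.
\item Expand each $q_i^{(d-\delta_i)}$ as a $\K$-linear combination of monomials in $\mon{d-\delta_i,}{n}$. This writes $w$ as a $\K$-linear combination of elements $m f_i$ of $\enspol{d}{n}{\seqpol}$, which concludes.
\end{enumerate}

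There is no real obstacle here. The only point needing care is the extraction of the degree-$d$ homogeneous component in step 3, i.e.\ checking that the components of $\sum_i q_i f_i$ in degrees other than $d$ must cancel. This follows from the direct sum decomposition $\Kp{n} = \bigoplus_k \Khp{n}{k}$, together with the fact that multiplication by the homogeneous $f_i$ shifts degrees uniformly by $\delta_i$.
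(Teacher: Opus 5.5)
Your proof is correct and follows essentially the same route as the paper: both show the easy inclusion directly and, for the reverse inclusion, write $w=\sum_i q_i f_i$, discard the parts of each $q_i$ not of degree $d-\delta_i$ (they contribute only components of degree $\neq d$, which must cancel since $w$ is homogeneous of degree $d$), and expand the remaining homogeneous coefficients in monomials of $\mon{d-\delta_i,}{n}$. The only difference is presentational: the paper carries out the separation term by term on the monomial expansions of $u_i f_i$, while you invoke the direct sum $\Kp{n}=\bigoplus_k \Khp{n}{k}$.
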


\begin{proof}
    Let \(I\) denote the ideal 
\( \langle \seqpol \rangle \).
We show that
\[
I \cap \Khp{n}{d} \;=\;
\operatorname{Span}_{\K}\left(\enspol{d}{n}{\seqpol}\right).
\]

\textbf{$(\supseteq)$}  
Every polynomial \( m f_i \) lies in \( I \cap \Khp{n}{d} \); linear combinations preserve membership.

\medskip
\textbf{$(\subseteq)$}  
Conversely, let $f$ be in $I \cap \Khp{n}{d}$.
If $f = 0$, then $f$ lies in $\operatorname{Span}_{\K}\left(\enspol{d}{n}{\seqpol}\right)$.
Suppose that \( f \neq 0 \).
Since $f$ lies in $I$, we can write  \( f = \sum_{i=1}^\nbpol u_i f_i \) with \( u_i
\in \Kp{n} \). Since \(f \neq 0\) by assumption, it holds that there exists
\(1\le  i \le  \nbpol\) such that \(u_i \neq 0\). We denote by \(\mathscr{Q}\) the
set of such integers. 

For \(i \in \mathscr{Q}\), write
\[
u_i = \sum_{ \bm{x}^{\bm{\beta}} \in \supo{u_i}} b_{i,\bm{\beta}}\,\bm{x}^{\bm{\beta}}, 
\qquad
f_i = \sum_{ \bm{x}^{\bm{\alpha}} \in \supo{f_i}} a_{i,\bm{\alpha}}\,\bm{x}^{\bm{\alpha}},
\]
with \( a_{i,\bm{\alpha}}, b_{i,\bm{\beta}} \in \K \).  Then
\[
u_i f_i
  \;=\;
\sum_{\substack{\bm{x}^{\bm{\beta}} \in \supo{u_i} \\ \bm{x}^{\bm{\alpha}} \in \supo{f_i} }}
      a_{i,\bm{\alpha}} b_{i,\bm{\beta}}\,
      \bm{x}^{\bm{\alpha} + \bm{\beta}},
\quad
\text{so}\;\;
f
  \;=\;
\sum_{i\in \mathscr{Q}}
\sum_{\substack{\bm{x}^{\bm{\beta}} \in \supo{u_i} \\ \bm{x}^{\bm{\alpha}} \in \supo{f_i}} }
      a_{i,\bm{\alpha}} b_{i,\bm{\beta}}\,
      \bm{x}^{\bm{\alpha} + \bm{\beta}}.
\]

By separating the terms according to whether $|\bm{\beta}| = d - \delta_i$ or not, we obtain

\[
f \;=\;
\underbrace{\sum_{i\in \mathscr{Q}}
            \!\!\!
            \sum_{\substack{\bm{x}^{\bm{\alpha}} \in \supo{f_i} \\ \bm{x}^{\bm{\beta}} \in \supo{u_i}, |\bm{\beta}| \neq d-\delta_i } }
            a_{i,\bm{\alpha}} b_{i,\bm{\beta}}\,
            \bm{x}^{\bm{\alpha} + \bm{\beta}}}_{\text{degree}\;\neq d}
\;+\;
\underbrace{\sum_{i\in \mathscr{Q}}
            \!\!\!
            \sum_{\substack{\bm{x}^{\bm{\alpha}} \in \supo{f_i} \\ \bm{x}^{\bm{\beta}} \in \supo{u_i}, |\bm{\beta}| = d-\delta_i } }
            a_{i,\bm{\alpha}} b_{i,\bm{\beta}}\,
            \bm{x}^{\bm{\alpha} + \bm{\beta}}}_{\text{homogeneous of degree }d}.
\]

Because \( f \) is homogeneous of degree \( d \geq 0 \), the first sum above
vanishes.  Hence
\[
f
  \;=\;
  \sum_{i\in \mathscr{Q}} \tilde{u}_i\,f_i,
\quad
\text{where }
\tilde{u}_i
  \;=\;
\sum_{\substack{\bm{x}^{\bm{\beta}} \in \supo{u_i} \\ |\bm{\beta}| = d-\delta_i}}
      b_{i,\bm{\beta}}\,\bm{x}^{\bm{\beta}}
  \in \Khp{n}{\,d-\delta_i}.
\]
Since each \(\tilde{u}_i f_i\) lies in  
\(\operatorname{Span}_{\K}\bigl(\{\,m f_i \mid  i \in \{1,\dots,\nbpol \}, m \in
\mon{d-\delta_i,}{n} \}\bigr)\), 
 we are done.
\end{proof}

\begin{lemma}
\label{lemma_homog_deg_fall}
Let $ \seqpol=(f_1, \dots, f_{\nbpol})$ be a sequence of nonzero polynomials 
in $\Kp{n} $ with $\deg(f_i)=\delta_i$ and suppose that  $\hd{\seqpol}$ is a
regular sequence.
For any combination \(w= q_1 f_1 + \cdots + q_\nbpol f_\nbpol\) with $w, q_1,
\ldots, q_\nbpol$ in \(\Kp{n}\), such that
$\deg(w)<d$, and \(q_i\) is either zero or homogeneous of degree $d-\delta_i$ for
all \(i\),
there exist polynomials $p_1, \ldots, p_\nbpol$ such that
$w=p_1 f_1 + \cdots + p_\nbpol f_\nbpol$ and $\deg(p_i)<d-\delta_i$ for all \(i\).
\end{lemma}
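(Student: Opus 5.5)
The idea is to look at the degree-$d$ component of $w$, which must vanish, and use regularity of $\hd{\seqpol}$ to rewrite the top-degree parts of the $q_i$ as a Koszul syzygy. The Koszul syzygy then lets us trade the degree-$d$ parts of $q_i f_i$ for terms of lower degree.

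Write $f_i = \hd{f_i} + r_i$ with $\deg r_i < \delta_i$, and set $g_i = \hd{f_i}$. Since each nonzero $q_i$ is homogeneous of degree $d-\delta_i$, the homogeneous component of degree $d$ of $w=\sum_i q_i f_i$ is exactly $\sum_i q_i g_i$. The products $q_i r_i$ all have degree $<d$. Since $\deg(w)<d$, we get the relation
\[
\sum_{i=1}^{\nbpol} q_i g_i = 0,
\]
so $(q_1,\dots,q_\nbpol)$ is a homogeneous syzygy of the regular sequence $(g_1,\dots,g_\nbpol)$.

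The key step, and the main obstacle, is to show that syzygies of a homogeneous regular sequence are generated by the Koszul (trivial) syzygies $g_j e_i - g_i e_j$, with homogeneous coefficients. Concretely, I would aim for homogeneous polynomials $a_{ij}=-a_{ji}$ of degree $d-\delta_i-\delta_j$ (zero if this is negative) such that $q_i = \sum_{j} a_{ij} g_j$ for all $i$. I would prove this by induction on $\nbpol$.

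For $\nbpol=1$, the relation $q_1 g_1=0$ gives $q_1=0$, since $g_1\neq 0$ and $\Kp{n}$ is a domain.

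For the inductive step, note that $q_\nbpol g_\nbpol \in \langle g_1,\dots,g_{\nbpol-1}\rangle$. Since $g_\nbpol$ is a nonzerodivisor modulo that ideal, $q_\nbpol \in \langle g_1,\dots,g_{\nbpol-1}\rangle$. By \cref{lemma_generator_homog_space}, which handles homogeneity, we can write $q_\nbpol=\sum_{j<\nbpol} a_{\nbpol j} g_j$ with each $a_{\nbpol j}$ homogeneous of degree $d-\delta_\nbpol-\delta_j$. Then set $q_j' = q_j + a_{\nbpol j} g_\nbpol$ for $j<\nbpol$. This gives $\sum_{j<\nbpol} q_j' g_j = 0$. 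The subsequence $(g_1,\dots,g_{\nbpol-1})$ is still regular, so the induction hypothesis applies to $(q_j')$ and yields the remaining coefficients.

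Having the $a_{ij}$, I substitute them back into $w$:
\[
w = \sum_i q_i f_i = \sum_i q_i r_i + \sum_{i,j} a_{ij} g_j g_i .
\]
The last sum vanishes by antisymmetry of the $a_{ij}$. Replacing each $g_j$ by $f_j - r_j$ in $q_i = \sum_j a_{ij} g_j$ gives
\[
w = \sum_i \Bigl( \sum_j a_{ij} f_j - \sum_j a_{ij} r_j \Bigr) f_i \cdot 0 + \dots,
\]
which is messy as written. A cleaner route is to directly define
\[
p_i = q_i - \sum_j a_{ij} f_j = -\sum_j a_{ij} r_j .
\]
Then
\[
\sum_i p_i f_i = w - \sum_{i,j} a_{ij} f_j f_i = w,
\]
again by antisymmetry. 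Moreover, $\deg p_i \le \max_j\bigl(\deg a_{ij} + \deg r_j\bigr) < (d-\delta_i-\delta_j)+\delta_j = d-\delta_i$, which is exactly the required bound.

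The only delicate points are the homogeneous bookkeeping in the induction (including the cases where $d-\delta_i-\delta_j<0$, in which the corresponding $a_{ij}$ are zero) and the antisymmetrization. I expect the Koszul-syzygy step to be the real content of the proof.
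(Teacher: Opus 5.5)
Your proposal is correct and follows essentially the same route as the paper. The paper's decreasing induction on $(1.k)$--$(2.k)$ is your peeling-off of the last generator via regularity and \cref{lemma_generator_homog_space}: your modified $q_j'$ are its $q_j+\sum_{i>k}a_{i,j}\hd{f_i}$, and its $p_j=\sum_i b_{i,j}\tilde{f_i}$, built from the antisymmetric matrix $(b_{i,j})$, coincides with your $p_i=-\sum_j a_{ij}r_j$. Simply delete the abandoned ``messy'' display, since your cleaner definition of $p_i$ already completes the argument.
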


\begin{proof}
The first part of the proof establishes that for all $(i,j)$ in $\{1, \ldots , \nbpol \}^2$ with $i \geq j$,
there exists a homogeneous polynomial $a_{i,j}$ which is zero or of degree $d-\delta_i-\delta_j$ such that for all $k$ in $\{1, \ldots ,\nbpol\}$ we have 

\begin{align*}
\sum_{j = 1}^{k} \left( q_j + \sum_{i = k+1}^{\nbpol}a_{i,j} \hd{f_i}  \right) \hd{f_j} = 0 \tag*{$(1.k)$}
\end{align*}
and 
\begin{align*}
q_{k} = -\sum_{i = k+1}^{\nbpol} a_{i,k}\hd{f_i} + \sum_{j= 1}^k a_{k,j}\hd{f_j}. \tag*{$(2.k)$}
\end{align*}
Moreover, we prove that for all $k$ in $\{1,\ldots ,\nbpol \}$, we have that $a_{k,k}$ can be chosen to be zero.

We start by proving $(1.\nbpol)$ and $(2.\nbpol)$.
Since 
\(w=\sum_{j =1}^\nbpol q_jf_j\)
with $\deg(w)<d$, the homogeneous component of degree $d$ of $w$ is zero.
As for all $j$ in $\{1, \ldots , \nbpol \}$, the polynomial $q_j$ is zero or homogeneous 
of degree $d-\delta_j$ and $\deg(f_j) = \delta_j$,
then it follows that:
\begin{align*}
\sum_{j =1}^\nbpol q_j\hd{f_j}=0. \tag*{$(1.\nbpol)$}
\end{align*}

This implies that $q_{\nbpol}\hd{f_{\nbpol}}$ lies in the ideal $\langle \hd{f_1}, \ldots , \hd{f_{\nbpol-1}} \rangle$.
Since the sequence $\hd{\seqpol}$ is regular,
by \cref{def_reg_seq}, the polynomial $q_{\nbpol}$ can be written as:
\begin{align*}
q_{\nbpol}= \sum_{j=1}^{\nbpol-1}a_{\nbpol,j} \hd{f_j} 
\end{align*}
where $a_{\nbpol,j}$ is an element of $\Kp{n}$ for all $j$ in $\{ 1, \ldots , \nbpol - 1 \}$.
By assumption, if not \(0\), $q_\nbpol$ is homogeneous of degree $d-\delta_\nbpol$
and $\hd{f_1}, \ldots , \hd{f_{{\nbpol-1}}}$ are homogeneous of degree $\delta_1 ,
\ldots , \delta_{\nbpol}$ respectively.
Thus, by \cref{lemma_generator_homog_space}, the polynomial $a_{\nbpol,j}$
can be chosen to be zero or homogeneous of degree $d-\delta_{\nbpol}-\delta_{j}$ for $1 \le  j
\le  \nbpol-1$.
By setting $a_{\nbpol,\nbpol} = 0$, we can write 
\begin{align}
 q_{\nbpol}= \sum_{j=1}^{\nbpol}a_{\nbpol,j} \hd{f_j} \tag*{$(2.\nbpol)$}. 
\end{align}
If \(q_\nbpol\) is zero, we set \(a_{\nbpol, j} = 0\)  for \(1 \le  j \le  \nbpol\). This
establishes $(1.\nbpol)$ and $(2.\nbpol)$. 

Our reasoning is now by decreasing
induction.
We pick $k\in \{1, \ldots , \nbpol-1\}$ and we assume that for all $\ell$ in 
$\{k+1, \ldots ,\nbpol\}$,  
the statements $(1.\ell)$ and $(2.\ell)$ hold. We establish $(1.k)$ and $(2.k)$ starting with
\((1.k)\). 

Let us write $(1.k+1)$:
\begin{align*}
& \sum_{j = 1}^{k+1} \left( q_j + \sum_{i = k+2}^{\nbpol}a_{i,j} \hd{f_i}  \right) \hd{f_j} = 0 \tag*{$(1.k+1)$}. 
\end{align*}
Using $(2.k+1)$ to replace and isolate the last term of the sum, we obtain
\begin{align*}
\hd{f_{k+1}}\sum_{j = 1}^{k+1}a_{k+1,j}\hd{f_{j}} +\sum_{j = 1}^{k} \left( q_j + \sum_{i = k+2}^{\nbpol}a_{i,j} \hd{f_i}  \right) \hd{f_j} = 0.
\end{align*}
Since, by our induction assumption, $a_{k+1,k+1} = 0$, 
this can also be expressed as follows:
\begin{align*}
\sum_{j = 1}^{k} \left( q_j + \sum_{i = k+1}^{\nbpol}a_{i,j} \hd{f_i}  \right) \hd{f_j} = 0
\end{align*}
which establishes $(1.k)$.

Now, we prove $(2.k)$ and let
\( f = q_{k} + \sum_{i = k+1}^{\nbpol}a_{i,k} \hd{f_i} \).
Since by $(1.k)$ we have that 
\[ f\hd{f_k} = -\sum_{j = 1}^{k-1} \left( q_j + \sum_{i = k+1}^{\nbpol}a_{i,j} \hd{f_i}  \right) \hd{f_j}, \]
and since $\hd{\seqpol}$ is a regular sequence, we deduce that  
$f\in \langle \hd{f_1}, \ldots , \hd{f_{k-1}} \rangle$.
Since $a_{i,j}$ (resp.\ \(q_j\)) is zero or homogeneous of degree $d-\delta_i-\delta_{j}$
(resp.\ \(d-\delta_j\)), the polynomial
$f$ is then zero or homogeneous of degree $d-\delta_{k}$.

Thus, by 
\cref{lemma_generator_homog_space} applied to \(\langle \hd{f_1},\ldots,
\hd{f_{i-1}} \rangle\), there exist 
homogeneous polynomials $a_{k,j}$ of degree \(d-\delta_k-\delta_j\) if not  \(0\), with $j$ in $\{1, \ldots , j\}$
such that 
\[ f = \sum_{j = 1}^{k-1} a_{k,j} \hd{f_j}  \]
Setting additionally $a_{k,k} = 0$ proves $(2.k)$.

\medskip

We can now prove the statement of the lemma.
Consider the polynomials $(b_{i,j})_{1 \leq i \leq \nbpol, 1 \leq j \leq \nbpol}$ with $b_{i,j}=a_{i,j}$ when $i \geq j$ and $b_{i,j}=-a_{j,i}$ when $j>i$. 
Note that the square matrix 
$(b_{i,j})_{1 \leq i \leq \nbpol, 1 \leq j \leq \nbpol}$ is anti-symmetric and that for all $i$ in $\{1, \dots , \nbpol\}$:
\begin{align}
q_i=\sum_{j \in \{1 , \ldots , \nbpol \}} b_{i,j}\hd{f_j} \tag*{$(\ast)$}
\end{align} 
by $(2.i)$.

For all $i$ in $\{1 , \ldots , \nbpol \}$, write $f_i=\hd{f_i}+\tilde{f_i}$ and 
\begin{align*}
w&=\sum_{i \in \{1,\ldots ,\nbpol \}}q_if_i
    = \sum_{i \in \{1,\ldots ,\nbpol \}}q_i\tilde{f_i} + \sum_{i \in \{1,\ldots ,\nbpol \}} q_i\hd{f_i}
    = \sum_{i \in \{1,\ldots ,\nbpol \}} q_i\tilde{f_i} \tag*{\text{ (by } $(1.\nbpol)$)} \\
&= \sum_{i,j \in \{1,\ldots ,\nbpol \}} b_{i,j}\hd{f_j} \tilde{f_i} 
= \sum_{i,j \in \{1,\ldots ,\nbpol \}} b_{i,j}f_j \tilde{f_i}  - \sum_{i,j \in \{1,\ldots ,\nbpol \}} b_{i,j}\tilde{f_j} \tilde{f_i}
\tag*{(by $(\ast)$)}.
\end{align*}
Since the matrix $(b_{i,j})_{1 \leq i \leq \nbpol, 1 \leq j \leq \nbpol}$ is
anti-symmetric and $\Kp{n}$ is commutative, it holds that the sum
\(
  \sum_{i,j \in \{1,\ldots ,\nbpol \}} b_{i,j}\tilde{f_j} \tilde{f_i}
\)
is equal to zero.
This implies that 
\[
w = \sum_{i,j \in \{1,\ldots ,\nbpol \}} b_{i,j}f_j \tilde{f_i}   
= \sum_{i,j \in \{1,\ldots ,\nbpol \}} b_{i,j}\tilde{f_i} f_j.
\]
For all $j$ in $\{1, \ldots, \nbpol \}$, let:
\[p_j= \sum_{i \in \{1 , \ldots , \nbpol \}} b_{i,j}\tilde{f_i}.\]
It remains to prove that $\deg(p_j) <d-\delta_j$. By definition, 
\(\deg(p_j) \leq \max_{i \in \{1, \ldots ,
\nbpol\}}(\deg(b_{i,j})+\deg(\tilde{f_i})) \) while for \(1\le  i \le  \nbpol\)
$\deg(\tilde{f_i})<\delta_i$ and $\deg(b_{i,j}) \leq d-\delta_i-\delta_j$. This implies that $\deg(b_{i,j})+\deg(\tilde{f_i})<d-\delta_j.$
As it is true for all $i$ in $\{ 1, \ldots , \nbpol \}$, then $\deg(p_j)<d-\delta_j$.
\end{proof}

We can now prove \cref{prop_no_degree_fall}.

\begin{proof}[Proof of \cref{prop_no_degree_fall}]
We set $E=\{i \in \{1 , \ldots , \nbpol \} \mid \deg(q_i)=d-\delta_i \}$ and we 
decompose $w$ as follows
\begin{align*}
w&=\sum_{i \in \{1, \ldots, \nbpol \}}q_if_i = \left( \sum_{i \in \{1, \ldots, \nbpol \}}(q_i-\hd{q_i})f_i \right) + \left( \sum_{i \in \{1, \ldots, \nbpol \}}\hd{q_i}f_i \right) \\
&= \left( \sum_{i \in \{1, \ldots, \nbpol \}}(q_i-\hd{q_i})f_i \right) + \left( \sum_{i \notin E}\hd{q_i}f_i \right) +\left( \sum_{i \in E}\hd{q_i}f_i \right).
\end{align*}
By hypothesis, the polynomial
\[ 
w -\left( \sum_{i \in \{1, \ldots, \nbpol \}}(q_i-\hd{q_i})f_i \right) - \left( \sum_{i \notin E}\hd{q_i}f_i \right)
 \]
has degree strictly less than $d$.
Then one can apply \cref{lemma_homog_deg_fall} to the last sum. This yields the following expressions for $w$:
\begin{align*}
w&= \left( \sum_{i \in \{1, \ldots, \nbpol \}}(q_i-\hd{q_i})f_i \right) + \left( \sum_{i \notin E}\hd{q_i}f_i \right) +\left( \sum_{i \in \{1, \ldots, \nbpol\}}p_if_i \right) \\
&=\left( \sum_{i \notin E}(q_i+p_i)f_i \right) + \left( \sum_{i \in E}(q_i-\hd{q_i}+p_i)f_i \right)
\end{align*}
where $\deg(p_i)<d-\delta_i$.
If $i$ does not lie in $E$, then we have that $\deg(q_i)<d-\delta_i$ which implies that
$\deg(q_i+p_i)<d-\delta_i$ and we set \(h_i = q_i+p_i\).
If $i$ lies in $E$, then $\deg(q_i-\hd{q_i})<d-\delta_i$ which implies that
$\deg(q_i-\hd{q_i}+p_i)<d-\delta_i$ and we set \(h_i = q_i-\hd{q_i}+p_i\). 
This concludes the proof.
\end{proof}

\section{Critical pairs}
\label{sec:criticalpairs}

The main result in this section is stated in \cref{thm_type1_good_pairs}, and makes
explicit a certain structure of the pairs selected in a
Buchberger-type algorithm (see \cite{buchberger1965algorithmus}). This
structure arises from the criterion developed in \cite{gebauer1988installation}.

The section is divided into three subsections.
\Cref{subsection_Onthemonomialstaircase} is dedicated to prove
\cref{thm_bijection_monomials}, a key preliminary result for establishing
\cref{thm_type1_good_pairs}. In \cref{subsection_Onthesyzygiesmodule}, we
introduce necessary context and notations to state 
\cref{thm_type1_good_pairs}, and we give a few preliminary lemmata for its
proof which is given in \cref{subsection_Onthesyzygiesmodule_proof}.

\subsection{On the monomial staircase}
\label{subsection_Onthemonomialstaircase}

We first introduce definitions and notations. 
We use classical notation and terminology concerning
(admissible) monomial orderings $\succ$ on the monomials of $\Kp{n}$,
as well as the notions of leading monomial $\lm{\succ}{f}$, leading term
$\lt{\succ}{f}$, and leading coefficient $\lc{\succ}{f}$ of a nonzero polynomial
$f$ (see \citep[Chapter~2, \S2]{cox94} for more details).
For a subset $S \subseteq \Kp{n}$, we denote by $\lm{\succ}{S}$ the set of
leading monomials of the elements of $S$.

We recall that
for an integer $i$, the set
 $\mon{}{i}$ is the set of monomials of $\Kp{i}$.
Let $f_1, \ldots ,f_{\nbpol}$ be polynomials in the ring $\Kp{n}$
and set $I = \langle f_1, \ldots ,f_{\nbpol} \rangle$.
For a given monomial ordering \(\succ\) on the monomials of $\Kp{n}$,
we first define \[\bdgb{}{}{I}{\succ} =  \mon{}{n} \setminus \lm{\succ}{I}\] the
set of monomials which are not leading monomials of polynomials in \(I\), i.e.\ a
monomial basis for \(\Kp{n} / I\) (see e.g.\ \cite[Chapter~5, Section~3,Proposition~4]{cox94}). This
set is then refined through the intersection with \(\mon{}{i}\) by defining 
\(\bdgb{}{i}{I}{\succ} = \bdgb{}{}{I}{\succ} \cap \mon{}{i}\) 
and next by degree by defining 
\[ \bdgb{d,}{i}{I}{\succ} = \{m \mid m \in \bdgb{}{i}{I}{\succ} \text{ and } \deg(m) = d \}.\]
Observe that 
\(\bdgb{}{}{I}{\succ} = \bdgb{}{n}{I}{\succ}\), that 
\(\bdgb{}{i-1}{I}{\succ} \subset \bdgb{}{i}{I}{\succ}\) for all 
\(i \in \{1,\ldots,n\}\), and that we have
\(\bdgb{}{i}{I}{\succ} = \bigsqcup_{d \geq 0} \bdgb{d,}{i}{I}{\succ}\).

For the convenience of the reader, we recall the notion of stability, 
see e.g.~\cite[Definition 4.2.2]{HerzogMono}.

\begin{definition}
A set $S$ of monomials of $\Kp{n}$ is said to be \emph{stable} if for any monomial $m$ in $S$
such that $x_i$ divides $m$ then $\frac{x_jm}{x_i}$ belongs to $S$ for all $j<i$.
\end{definition}

We denote by 
$\bdg{}{}{I}{\succ}$ 
the reduced $\succ$-Gröbner basis of $I$.
We define the following sets:
\begin{itemize}
  \item $\bdg{d,}{\ast}{I}{\succ}$ is the set of elements in $\bdg{}{}{I}{\succ}$ of degree $d$;
  \item $\bdg{\ast,}{i}{I}{\succ}$ is the set of elements in $\bdg{}{}{I}{\succ}$ whose leading monomial lies in $\Kp{i}$ but not in $\Kp{i-1}$;
  \item $\bdg{d,}{i}{I}{\succ} = \bdg{d,}{\ast}{I}{\succ} \cap \bdg{\ast,}{i}{I}{\succ}$.
\end{itemize}
Observe that
\(
\bdg{}{}{I}{\succ} = \bigsqcup_{d = 0}^{\infty} \bdg{d,}{\ast}{I}{\succ}
= \bigsqcup_{i = 1}^{n} \bdg{\ast,}{i}{I}{\succ}
= \bigsqcup\limits_{\substack{d \in \mathbb{N} \\ i \in \{1, \ldots, n\}}} \bdg{d,}{i}{I}{\succ}
\).

We also introduce notations for the leading monomials of $\bdg{}{}{I}{\succ}$:
\begin{align*}
  \bdgh{d,}{i}{I}{\succ}    & = \{ \lm{\succ}{g} \mid g \in \bdg{d,}{i}{I}{\succ}  \},  \\
  \bdgh{\ast,}{i}{I}{\succ} & = \{ \lm{\succ}{g} \mid g \in \bdg{\ast,}{i}{I}{\succ}  \}, \\
 \bdgh{d,}{\ast}{I}{\succ}  & = \{ \lm{\succ}{g} \mid g \in \bdg{d,}{\ast}{I}{\succ}  \} 
\end{align*}
and finally $\bdgh{}{}{I}{\succ} = \lm{\succ}{\bdg{}{}{I}{\succ}}$. 

When the ideal $I$ and the ordering $\succ$ are clear from the context, we omit them.
Recall that an ideal $I$ is said to be \emph{zero-dimensional} if the dimension of the
quotient $\Kp{n} / I$ as a $\field$-vector space is finite
(see \citep[Chapter~2, \S2, Proposition 4]{cox94}).

\begin{thm}
\label{thm_bijection_monomials}
Set $\delta \geq 0$, $n \geq 1$, $\nbpol \geq 1$ and consider a sequence $\seqpol = (f_1, \ldots , f_{\nbpol})$ 
in $\Kp{n}$ of degree $\delta$ and $I = \langle \seqpol \rangle$.
Let $\succ$ be a monomial ordering on the monomials of $\Kp{n}$. Suppose that $\lm{\succ}{I}$
is stable and that $I$ is zero-dimensional.
Let $\ell$ be in $\{1, \ldots , n\}$
and $m$ be in $\bdgb{}{\ell-1}{I}{\succ}$. 
There exists an integer \(\alpha_{m,\ell}\) such that:
\[ \alpha_{m,\ell} = 
    \min_{k \in \mathbb{N}}\{k \mid m\ x_\ell^k \in \lm{\succ}{I} \} 
\quad \text{ and } \quad 
m\ x^{\alpha_{m, \ell}}\in \bdgh{\ast,}{\ell}{I}{\succ}.\]
Moreover, the application 
\(\psi_\ell : 
m \in \bdgb{}{\ell-1}{I}{\succ} \longmapsto  
m\ x_\ell^{\alpha_{m,\ell}}\in \bdgh{\ast,}{\ell}{I}{\succ}
\)
is a bijection. 
\end{thm}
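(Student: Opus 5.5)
Existence of $\alpha_{m,\ell}$ follows from zero-dimensionality. Since $\Kp{n}/I$ is finite-dimensional, the classes of $1, x_\ell, x_\ell^2, \ldots$ are linearly dependent. Hence some power $x_\ell^{k_0}$ lies in $\lm{\succ}{I}$, and so does $m\,x_\ell^{k_0}$, because $\lm{\succ}{I}$ is a monomial ideal. The set $\{k \mid m\,x_\ell^k \in \lm{\succ}{I}\}$ is therefore a nonempty subset of $\mathbb{N}$, and we let $\alpha_{m,\ell}$ be its minimum. Since $m \in \bdgb{}{\ell-1}{I}{\succ}$, we have $m \notin \lm{\succ}{I}$, so $\alpha_{m,\ell} \geq 1$.

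Next we show that $\mu = m\,x_\ell^{\alpha_{m,\ell}}$ is a minimal generator of $\lm{\succ}{I}$, i.e.\ that $\mu/x_j \notin \lm{\succ}{I}$ for every variable $x_j$ dividing $\mu$. For $j=\ell$ this is exactly minimality of $\alpha_{m,\ell}$.

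For $j<\ell$, assume $\mu/x_j \in \lm{\succ}{I}$. Stability applied to $\mu/x_j$, which is divisible by $x_\ell$ because $\alpha_{m,\ell}\ge 1$, gives that $x_j\cdot(\mu/x_j)/x_\ell = m\,x_\ell^{\alpha_{m,\ell}-1}$ lies in $\lm{\succ}{I}$. This contradicts the minimality of $\alpha_{m,\ell}$.

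For $j>\ell$ the variable $x_j$ does not divide $\mu$, since $m$ involves only $x_1,\ldots,x_{\ell-1}$. So $\mu$ is a minimal generator, and the leading monomials of the reduced Gröbner basis are exactly the minimal generators. Moreover $\mu \in \Kp{\ell}\setminus\Kp{\ell-1}$ because $\alpha_{m,\ell}\ge1$. Hence $\mu\in\bdgh{\ast,}{\ell}{I}{\succ}$. I note that the hypothesis that all $f_i$ have degree $\delta$ seems not to be needed.

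Injectivity is immediate. If $\psi_\ell(m)=\psi_\ell(m')$, then since $m,m'$ are free of $x_\ell$, comparing $x_\ell$-exponents gives $\alpha_{m,\ell}=\alpha_{m',\ell}$, and then $m=m'$.

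For surjectivity, take $\mu\in\bdgh{\ast,}{\ell}{I}{\succ}$ and write $\mu = m\,x_\ell^{a}$ with $a\ge1$ and $m\in\mon{}{\ell-1}$. Then $m$ divides $\mu$ properly, so $m\notin\lm{\succ}{I}$ by minimality of $\mu$, giving $m\in\bdgb{}{\ell-1}{I}{\succ}$. Similarly $m\,x_\ell^{k}$ divides $\mu/x_\ell$ for every $k<a$, hence is not in $\lm{\succ}{I}$, while $m\,x_\ell^a=\mu\in\lm{\succ}{I}$. Therefore $\alpha_{m,\ell}=a$ and $\psi_\ell(m)=\mu$.

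The only delicate step is the $j<\ell$ case of minimality, which is where stability is used. There one must check that the direction of the stability exchange, replacing the larger-index variable $x_\ell$ by the smaller-index $x_j$, matches the definition of stability. It does, since that definition allows $x_i\to x_j$ for $j<i$.
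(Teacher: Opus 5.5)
Your proof is correct and follows the paper's route. Existence of $\alpha_{m,\ell}$ comes from zero-dimensionality, stability yields that $m\,x_\ell^{\alpha_{m,\ell}}$ is a minimal generator, and injectivity and surjectivity are argued exactly as in the paper. Checking only the maximal proper divisors $\mu/x_j$ is a tidier form of the paper's two-case analysis on an arbitrary proper divisor $m_0$, and you are right that the equal-degree hypothesis is never used.
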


\begin{proof}
    Take \(\ell\in \{1, \ldots, n\}\) and $m \in \bdgb{}{\ell-1}{I}{\succ}$.
    Since \(I\) is zero-dimensional, 
    there is \(k\) such that $x_\ell^k \in \lm{\succ}{I}$ (see e.g.\ \cite[Prop.
    12.7]{BaPoRo06}). This establishes the existence of
    $\alpha_{m,\ell}$ for any
  $m$ in $\bdgb{}{\ell-1}{I}{\succ}$.
We first prove that $m\ x_\ell^{\alpha_{m,\ell}}$ lies in
  $\bdgh{\ast,}{\ell}{I}{\succ}$.
  By de\-finition, it holds that $m\
  x_\ell^{\alpha_{m,\ell}}$ lies in $\Kp{\ell}$. Also, 
  since \(m\not\in \lm{\succ}{I}\) and 
  $m\ x_\ell^{\alpha_{m,\ell}}\in \lm{\succ}{I}$, we deduce that 
  $\alpha_{m,\ell} \geq 1$ and then $m\
  x_\ell^{\alpha_{m,\ell}}$ does not lie
  in $\Kp{\ell-1}$. It remains to show that $m
  x_\ell^{\alpha_{m,\ell}}$ is a leading
  monomial of an element of the reduced $\succ$-Gröbner basis of $I$.

  Assume by contradiction that this is not the case.
  This implies that there exists a monomial $m_0$ in $\lm{\succ}{I}$ that
  divides $m\ x_\ell^{\alpha_{m,\ell}}$ without being equal to it.  
  We distinguish two cases
  depending on whether $m_0$ and $mx_\ell^{\alpha_{m,\ell}}$ have the same degree in
  $x_\ell$ or not.

  \emph{Case 1.} Suppose that $m_0 = m_1\ x_\ell^k$ with
  \(m_1 \in \Kp{\ell-1}\) and $k
  < \alpha_{m,\ell}$. Hence, $m_1$ divides the monomial $m$, and
  therefore \(m\ x_\ell^k = \frac{m}{m_1} m_0\). 
  We deduce that \(m\ x_\ell^k\in \lm{\succ}{I}\) with \( k <
  \alpha_{m, \ell}\) (since
  \(m_0\in \lm{\succ}{I}\) by assumption). This contradicts
  the minimality of \(\alpha_{m, \ell}\). 

  \emph{Case 2.} Suppose that $m_0 = m_1\
  x_\ell^{\alpha_{m,\ell}}$ with \(m_1 \in
  \Kp{\ell-1}\). Thus, $m_1$ divides $m$ and $m \neq m_1$. Since $m$ and $m_1$ lie
  in $\Kp{\ell-1}$, there exists $j<\ell$ such that $m_1\ x_j$ divides $m$.  
  We established that $\alpha_{m,\ell} \geq 1$. Then,
  $x_\ell$
  divides $m_0$. Since, by assumption, $\lm{\succ}{I}$ is stable, 
  the monomial  $m_0x_j/x_\ell = 
  m_1\ x_j\ x_\ell^{\alpha_{m,\ell}-1}$ lies in 
  $\lm{\succ}{I}$. Since $m_1\ x_j$ divides
  $m$ and $\alpha_{m,\ell}-1 < \alpha_{m,\ell}$, we are bound to
  apply {\it Case 1} to $\left(m_1\ x_j\right
      )x_\ell^{\alpha_{m,\ell}-1}$ and
  deduce that it does not lie in $\lm{\succ}{I}$. This
  yields 
  a contradiction and we conclude that $m_0 \not\in \lm{\succ}{I}$. 

  All in all, we have established that 
  $m\ x_\ell^{\alpha_{m,\ell}}$ lies in $\bdgh{\ast,}{\ell}{I}{\succ}$. 

Now we prove the bijection statement.
  For two distinct monomials $m_1$ and $m_2$ in
  $\bdgb{}{\ell-1}{I}{\succ}$, the exponents of
  $m_1\ x_\ell^{\alpha_{m_1,\ell}}$ and $m_2\
  x_\ell^{\alpha_{m_2,\ell}}$
  on the first \(\ell-1\) variables differ. This suffices to
  prove that \(\psi_\ell\) is injective. 
  It remains to prove that $\psi_\ell$ is a surjective function. Let
  $\widetilde{m} \in \bdgh{\ast,}{\ell}{I}{\succ}$, and write
  $\widetilde{m} =
  m\ x_\ell^{\alpha}$ with \(m \in \Kp{\ell-1}\). By definition of
  \(\bdgh{\ast,}{\ell}{I}{\succ}\), it holds that \(\alpha >
  0\). 
  Since $\widetilde{m}$ is a leading monomial of an element of the reduced Gröbner
  basis of $I$, for all $k$ in $\{0, \ldots, \alpha-1\}$ the monomial
  $m\ x_\ell^k$ does not lie in  $\lm{\succ}{I}$ (this would
  contradict the minimality of reduced Gr\"obner bases). 
  This implies both $m \in
  \bdgb{}{\ell-1}{I}{\succ}$ and $\psi(m) = \widetilde{m}$. Thus,
  $\psi_\ell$ is surjective, and is a bijection.
\end{proof}

\subsection{On the module of syzygies}
\label{subsection_Onthesyzygiesmodule}

We use the standard {\em graded reverse lexicographical} and 
{\em graded lexicographic}
monomial 
orderings; {\em grevlex} and {\em grlex} for short, see \cite[Chapter 2, Section 2, Definition 6]{cox94} for
more details. They are orderings on the monomials $\mon{n}{}$ of $\Kp{n}$
denoted by $\grs$ and \(\gs\). Here we
take the conventions $x_1 \grs \cdots
\grs x_n$ and $x_1 \gp \cdots
\gp x_n$. For $(m_1, \ldots , m_\ell)$ in $\mon{n}{}$,
their least common multiple is denoted by $\operatorname{lcm}(m_1, \ldots ,
m_\ell)$.

Let $G = \{g_1, \ldots, g_r\}$ be a set of polynomials in $\Kp{n}$ such that
their leading monomials for the \emph{grevlex} ordering are all distinct, we say that
$G$ is ordered with respect to $\succ_{\mathrm{ap}}$ if:
\[
  \forall (i,j) \in \{1, \ldots, r\}^2, \quad i < j \Longleftrightarrow \lm{\grs}{g_i} \gp \lm{\grs}{g_j}.
\]
Let us recall the definition of the basis $L^{\star}(G)$ of the module of
syzygies of leading monomials of elements of $G$ introduced in 
\cite[Proposition 3.5]{gebauer1988installation}. For $i,j,k \in \{ 1,
\ldots , r \}$, write
\begin{align*}
  \lcmt{G}{i}  = \lm{\grs}{g_i}, \quad 
  \lcmt{G}{i,j}  =
  \operatorname{lcm}(\lcmt{G}{i},\lcmt{G}{j}), \quad \text{
  and } 
  \lcmt{G}{i,j,k}  = \operatorname{lcm}(\lcmt{G}{i},\lcmt{G}{j},\lcmt{G}{k}).
\end{align*}
The $\Kp{n}$-module of the syzygies of leading monomials of elements of $G$ is
\[ S^{(1)}(G) = \left\{ (p_1, \ldots , p_r)  \,\middle|\, p_i \in \Kp{n} \text{ and
    }\sum_{i = 1}^r p_i \lcmt{G}{i} = 0 \right\}.
\]
For all $k$ in $\{1, \ldots , r \}$, we set $\bm{e}_k = (0, \ldots , 1, \ldots , 0)$ 
in $\K^{r}$ the
\(k\)-th coordinate vector.
By \cite[Section 3.2]{gebauer1988installation}, 
the module $S^{(1)}(G)$ has a Taylor basis 
$\taylorbasis^{(1)}(G) = \{ \syzy{i}{j} \mid 1 \leq i < j \leq r \} $ with 
\[ \syzy{i}{j} = \frac{\lcmt{G}{i,j}}{\lcmt{G}{i}}\bm{e}_i - 
\frac{\lcmt{G}{i,j}}{\lcmt{G}{j}}\bm{e}_j. \]

We denote by $\succ_{\mathrm{syz}}$ the following ordering on
$\taylorbasis^{(1)}(G)$: for $i<j$ and $k<\ell$,
\[
\begin{array}{c}
  \syzy{i}{j} \prec_{\mathrm{syz}} \syzy{k}{\ell}  \\
  \Updownarrow \\
  \lcmt{G}{i,j} \gp \lcmt{G}{k,\ell} \text{ or } \left
  (\lcmt{G}{i,j} = \lcmt{G}{k, \ell} \text{ and } 
  \left (j < \ell \text{ or } \left (j = \ell
\text{ and } i<k \right ) \right )\right ).  
\end{array}
\]

We define $\mathcal{B}(G)$  as the set of $\syzy{i}{j}$
(with \( i < j\)) for which there exists
\(k\in \{1, \ldots, r\} \setminus \{i, j\}  \) such that 
\begin{description}
    \item[\((\mathsf{T}_1)\)\hypertarget{T1}] $\lcmt{G}{i, j, k} = \lcmt{G}{i,j}$ and
    \item[\((\mathsf{T}_2 ) \hypertarget{T2}\)] 
 $\syzy{i}{j} =
\max_{\succ_{\mathrm{syz}}}\{\syzy{i}{j},\syzy{\alpha}{\beta},\syzy{\gamma}{\delta}
\}$ with \(\alpha = \min(i, k), \beta = \max(i, k), \gamma =
\min\left(j, k \right) \) and \(\delta=\max\left( j, k \right) \).
\end{description}

By \cite[Proposition 3.5]{gebauer1988installation}, the
module of syzygies $S^{(1)}(G)$ is generated by
\[ \taylorbasis^{\star}(G) = \{ \syzy{i}{j} \mid 1 \leq i < j \leq r, \syzy{i}{j} \notin \mathcal{B}(G) \}. \]

For any nonzero
  polynomial \(f\), we define its variable index \(\ind{f}\) as the unique
  integer $i$ in $\{1, \ldots , n\}$ such that \(\lm{\grs}{f}\)
lies in \(\Kp{i} \setminus \Kp{i-1}\).

\begin{definition}
\label{def_type1_pairs}
Let $(t,f) \in \mon{}{n} \times \Kp{n}$. 
Then, the tuple $(t,f)$ is said to be
\emph{\type{}} if $t = x_j$
with $j<\ind{f}$.
\end{definition}

We denote $\zarifilling{\delta}{n}{\nbpol}$ the set of sequences
$\seqpol = (f_1, \ldots , f_{\nbpol})$ of homogeneous polynomials of
degree $\delta$ in $\Kp{n}$ that satisfy \cite[Conjecture~4.1]{moreno2003degrevlex}.
 We can now state
\cref{thm_type1_good_pairs}.

\begin{thm}
\label{thm_type1_good_pairs}
Let $\seqpol$ in
$\zarifilling{\delta}{n}{\nbpol}$, $\delta \geq 0$ and
$n,\nbpol $ being nonzero integers and $I = \langle \seqpol
\rangle$ which is assumed to be zero-dimensional. Let \(G =
(g_1, \ldots, g_r)\) be the
reduced $\grs$-Gröbner basis of $I$, ordered with respect to
$\succ_{\mathrm{ap}}$.
Then, for any $1 \leq i < j \leq r$ such that
$\syzy{i}{j}$ lies in $\taylorbasis^{\star}(G)$, the tuple 
$\left( \frac{\lcmt{G}{i,j}}{\lcmt{G}{j}}, g_j \right)$ is
\type. 
\end{thm}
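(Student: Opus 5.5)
The plan is to argue by contraposition. Fix $1\le i<j\le r$, set $\mu=\lcmt{G}{j}$, $\nu=\lcmt{G}{i}$ (so $\nu\grp\mu$, since $G$ is ordered by $\succ_{\mathrm{ap}}$), $\ell=\ind{g_j}$ and $t=\lcmt{G}{i,j}/\mu$. We will show that if $t$ is not a single variable $x_k$ with $k<\ell$, then some index $k\notin\{i,j\}$ satisfies \hyperlink{T1}{$(\mathsf{T}_1)$} and \hyperlink{T2}{$(\mathsf{T}_2)$}, so $\syzy{i}{j}\in\mathcal{B}(G)$ and hence $\syzy{i}{j}\notin\taylorbasis^{\star}(G)$.

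The input from \cite[Conjecture~4.1]{moreno2003degrevlex} is that $\lm{\grs}{I}$ is a weakly reverse lexicographic ideal. Concretely, if $\mu$ is a minimal generator of degree $d$, then every monomial of degree $d$ that is $\grs$-greater than $\mu$ lies in $\lm{\grs}{I}$. In particular $\lm{\grs}{I}$ is stable, so \cref{thm_bijection_monomials} applies. I will write $\mu=m\,x_\ell^{\alpha}$ with $m\in\bdgb{}{\ell-1}{I}{\grs}$ and $\alpha=\alpha_{m,\ell}\ge 1$.

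\emph{Case $t$ has a variable $x_p$ with $p<\ell$.} The monomial $w=x_p\mu/x_\ell$ has degree $\deg\mu$ and is $\grs$-greater than $\mu$, so it lies in $\lm{\grs}{I}$. Hence some $\lcmt{G}{k}$ divides $w$, and $k\ne j$ because $\mu\nmid w$. Since $w\mid x_p\mu\mid\lcmt{G}{i,j}$, condition \hyperlink{T1}{$(\mathsf{T}_1)$} holds.
\begin{itemize}
\item If $t\neq x_p$, then $\lcmt{G}{j,k}\mid x_p\mu$ strictly divides $\lcmt{G}{i,j}$, so $\lcmt{G}{j,k}\grp\lcmt{G}{i,j}$.
\item For the pair $(i,k)$, either $\lcmt{G}{i,k}$ strictly divides $\lcmt{G}{i,j}$, or the two are equal. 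In the equal case the tie-break in $\succ_{\mathrm{syz}}$ requires $\max(i,k)<j$, i.e.\ $\lcmt{G}{k}\grp\mu$.
\item If $\deg\lcmt{G}{k}<\deg\mu$ this tie-break holds by degree. The bad subcase is $\lcmt{G}{k}=w$ itself, a generator $\grs$-greater than $\mu$. There I will replace $w$ by a different choice: take $p$ minimal, or use a power $x_p^{e}\mu/x_\ell^{e}$, or use the full $x_\ell$-exponent of $\lcmt{G}{i,j}$ to show $\lcmt{G}{i,k}$ is strictly smaller.
\end{itemize}

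\emph{Case every variable of $t$ has index $\ge\ell$.} Here I want to show that either the pair is killed by a $\lcmt{G}{k}$ built from $\nu$ via stability, or $\nu\grs\mu$, a contradiction. The route is via the bijection $\psi_{\ind{g_i}}$ of \cref{thm_bijection_monomials}. If $x_p\mid t$ with $p\ge\ell$, then $x_p\mid\nu$, so $\ind{g_i}\ge p$. Using stability I will move exponent from $x_p$ down to smaller variables in $\nu$ to produce a monomial of $\lm{\grs}{I}$ dividing $\lcmt{G}{i,j}$. Comparing degrees of $\nu$ and $\mu$ in grevlex, with the weakly reverse lexicographic property in the top degree, should force a third generator giving \hyperlink{T2}{$(\mathsf{T}_2)$}. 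The subcase $t=x_\ell$ is the basic instance here; the goal is that $\nu$ together with $x_\ell m\mid\lcmt{G}{i,j}$ lets one slide $\nu$ to a generator with smaller index.

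The main obstacle I expect is the tie-breaking in \hyperlink{T2}{$(\mathsf{T}_2)$}: exhibiting $k$ with $\lcmt{G}{k}\mid\lcmt{G}{i,j}$ is easy, but guaranteeing that both other syzygies are $\succ_{\mathrm{syz}}$-smaller needs careful choice of $k$. I would first try choosing $k$ so that $\lcmt{G}{k}$ is the $\grs$-smallest generator dividing $\lcmt{G}{i,j}$ other than $\nu,\mu$, and check that equality of lcm's then forces the index condition.
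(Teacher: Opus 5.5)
Your Case~B is not proved. You only hope to ``force a third generator'' or to reach the contradiction $\nu\grs\mu$. The missing idea is that this case never occurs:
\begin{itemize}
\item By \cref{lemma_orderonfi_inGj}, $\ind{g_i}\le\ind{g_j}=\ell$, so every variable dividing $t$ has index at most $\ell$.
\item By \cref{lemma_degreeinxq_order_fillingup}, $\deg_{x_\ell}(\nu)\le\deg_{x_\ell}(\mu)$, so $x_\ell\nmid t$.
\item Hence $t\in\Kp{\ell-1}$. Also $t\neq 1$, because $\nu\nmid\mu$ in a reduced basis.
\end{itemize}
So a variable $x_p\mid t$ with $p<\ell$ always exists, and your Case~A covers every pair. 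These two lemmas are exactly where the weakly reverse lexicographic property coming from $\seqpol\in\zarifilling{\delta}{n}{\nbpol}$ is needed beyond stability. It enters through \cref{prop_starecaise_dimension_onebyone} and through the degree comparison in the proof of \cref{lemma_degreeinxq_order_fillingup}. Your sketch does not supply either lemma, and the contradiction you aim for is set up with the wrong order, as explained next.

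You have misread $\succ_{\mathrm{ap}}$. It is defined via the grlex order $\gp$ with $x_1\gp\cdots\gp x_n$. So when $\deg\nu=\deg\mu$, $i<j$ means $\nu\grs\mu$, not $\nu\grp\mu$; for example, $x_1^2$ precedes $x_1x_2$. In general, $k<j$ if and only if $\lcmt{G}{k}\gp\mu$.

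Read correctly, the ``bad subcase'' of your Case~A disappears. A $\lcmt{G}{k}$ dividing $w=x_p\mu/x_\ell$ either has smaller degree than $\mu$ or equals $w$. Moreover $w\gp\mu$, because $w$ and $\mu$ agree on $x_{\ell+1},\ldots,x_n$ and $w$ has the smaller $x_\ell$-exponent. Thus $k<j$, and the tie-break for $(i,k)$ in $(\mathsf{T}_2)$ holds automatically; none of your proposed workarounds is needed.

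You must also check $k\neq i$, which the definition of $\mathcal{B}(G)$ requires. If $\nu\mid w$, then $\lcmt{G}{i,j}\mid x_p\mu$, which forces $t=x_p$ — the \type{} case you excluded.

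With these repairs your proof is complete, and it is considerably shorter than the paper's. The paper splits on whether $\ind{g_i}<\ind{g_j}$, builds its witness through the bijections $\psi_\ell$ of \cref{thm_bijection_monomials}, and invokes \cref{lemma_critere2}. Your single witness $w=x_p\lcmt{G}{j}/x_\ell$ lies in $\lm{\grs}{I}$ by stability alone and works uniformly in all cases.
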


We start with four auxiliary lemmata needed for the proof of
\cref{thm_type1_good_pairs}. We reuse the notations
of its statement until the end of this section. 

The following lemma makes explicit that the monomial staircase closes
 dimension by dimension, when its generators are considered by increasing degree.

\begin{lemma}
\label{prop_starecaise_dimension_onebyone}
Suppose that $\seqpol$ lies in $\zarifilling{\delta}{n}{\nbpol}$.
Let \(i\) and  \(j\) in  \(\{1, \ldots, n\} \) with \(i < j\). Let \(d\)
and  \(e\) in  \(\mathbb{N}\) with \(d < e\). 
It holds that, if \(\bdg{d,}{j}{I}{\grs}\) is nonempty, then 
\(\bdg{e,}{i}{I}{\grs}\) is empty.
\end{lemma}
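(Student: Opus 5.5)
The plan is to use the structure of the initial ideal given by the hypothesis $\seqpol \in \zarifilling{\delta}{n}{\nbpol}$, i.e.\ Moreno-Socías' conjecture \cite[Conjecture~4.1]{moreno2003degrevlex}. This conjecture asserts that $\lm{\grs}{I}$ is \emph{weakly reverse lexicographic}: whenever $m$ is a minimal generator of $\lm{\grs}{I}$, every monomial $m''$ with $\deg(m'') = \deg(m)$ and $m'' \grs m$ also lies in $\lm{\grs}{I}$. The minimal generators of $\lm{\grs}{I}$ are exactly the leading monomials of the reduced Gröbner basis, i.e.\ the elements of $\bdgh{}{}{I}{\grs}$. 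I will argue by contradiction.

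Assume $g \in \bdg{d,}{j}{I}{\grs}$ and $h \in \bdg{e,}{i}{I}{\grs}$ with $i<j$ and $d<e$. Set $m = \lm{\grs}{g}$ and $m' = \lm{\grs}{h}$. By definition of the sets $\bdg{\ast,}{k}{I}{\grs}$, the monomial $m$ lies in $\Kp{j}\setminus\Kp{j-1}$, so $x_j$ divides $m$ and no $x_k$ with $k>j$ does. Likewise $m'$ lies in $\Kp{i}$, and $\deg(m') = e > d$. Since $e > d$, I can pick a divisor $m''$ of $m'$ of degree exactly $d$, obtained by removing $e-d$ variables from $m'$. This $m''$ is a proper divisor of $m'$ and, like $m'$, involves only $x_1,\ldots,x_i$.

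The key step is to compare $m''$ and $m$ in grevlex, with the convention $x_1 \grs \cdots \grs x_n$. Both monomials have degree $d$. Looking at exponents from $x_n$ downwards, both have exponent zero in $x_n,\ldots,x_{j+1}$. At $x_j$, the monomial $m$ has a positive exponent, while $m''$ has exponent zero because $j>i$. Grevlex declares the monomial with the smaller exponent in the last differing variable to be larger, so $m'' \grs m$. Since $m$ is a minimal generator of $\lm{\grs}{I}$, weak reverse lexicographicity gives $m'' \in \lm{\grs}{I}$.

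Then $m''$ is an element of $\lm{\grs}{I}$ that properly divides $m' = \lm{\grs}{h}$. This contradicts the fact that $h$ belongs to the reduced (hence minimal) Gröbner basis, whose leading monomials are minimal generators of $\lm{\grs}{I}$. Therefore $\bdg{e,}{i}{I}{\grs}$ must be empty. I expect the only delicate point to be stating correctly which form of Moreno-Socías' conjecture is assumed (weakly reverse lexicographic, rather than merely stable) and checking the grevlex comparison direction; the rest of the argument is immediate.
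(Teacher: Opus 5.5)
Your proposal is correct and follows essentially the same argument as the paper. You use Moreno-Socías' conjecture to put every degree-$d$ monomial grevlex-larger than $\lmdrl{g}$ (in particular every element of $\mon{d,}{j-1}$) into $\lm{\grs}{I}$, and then exhibit a proper degree-$d$ divisor of $\lmdrl{h}$ to contradict minimality of the reduced Gröbner basis. The only difference is phrasing: the paper states the conjecture as ``the $c$ largest monomials of $\mathscr{E}$ are the degree-$d$ leading monomials of the basis,'' while you use the equivalent weakly reverse lexicographic formulation directly.
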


\begin{proof}
Let us fix $j$ in $\{1, \ldots, n\}$ and $d$ in
$\mathbb{N}$, and assume that $\bdg{d,}{j}{I}{\grs}$ is
nonempty. Let $m$ be the leading monomial with respect to
 $\grs$ of an arbitrary
element in $\bdg{d,}{j}{I}{\grs}$. By construction, $m$
lies in $\mon{d,}{j} \setminus \mon{d,}{j-1}$. By definition
of the {\em grevlex} ordering, for all $\widetilde{m}$ in
$\mon{d,}{j-1}$, 
we have $m \grp \widetilde{m}$.

Let \(c\) be the cardinality of \(\sharp
\bdg{d,}{*}{I}{\grs}\).
Since $\seqpol$ lies in $\zarifilling{\delta}{n}{\nbpol}$,
one can apply \cite[Conjecture~4.1]{moreno2003degrevlex} and
deduce the \(c\) largest monomials in 
 \[
     \mathscr{E} =\left( \bdgb{}{}{I}{\grs}\cap \mon{d,}{*}
     \right)  \bigsqcup \bdgh{d,}{*}{I}{\grs}
\]
are the monomials of \(\bdg{d,}{*}{I}{\grs}\). 
Let \(\widetilde{m}\) be in $\mon{d,}{j-1}$.
Note that if \(\widetilde{m}\) does not lie in
\(\mathscr{E}\), then it lies in  \(\lm{\grs}{I}\), by
definition of  \(\bdgb{}{}{I}{\grs}\). 
If \(\widetilde{m}\) lies in \(\mathscr{E}\), then it lies
in \(\bdg{d,}{*}{I}{\grs}\) since it is larger than  \(m\)
for  \(\grs\). Since, by definition,
\(\bdg{d,}{*}{I}{\grs}\) is contained in \(\lm{\grs}{I}\),
we again deduce that  \(\widetilde{m}\) lies in \(\lm{\grs}{I}\).

Let $i < j$ and suppose by contradiction that
$\bdg{e,}{i}{I}{\grs}$ is nonempty for some
$e > d$. Let $g$ be in
$\bdg{e,}{i}{I}{\grs}$, so that
$\lm{\grs}{g} \in \mon{e,}{j-1}$. Since, by
assumption, $e > d$, there
exists a monomial $\widetilde{m}$ in $\mon{d,}{j-1}$
such that $\widetilde{m}$ divides $\lm{\grs}{g}$
and $\widetilde{m} \ne
\lm{\grs}{g}$. By the previous paragraph, 
\(\widetilde{m}\) lies in $\lm{\grs}{I}$. 
This is in contradiction with the fact that $g$ lies in the
reduced $\grs$-Gröbner basis of $I$.
\end{proof}

\begin{lemma}
\label{lemma_orderonfi_inGj}
Let $\seqpol$ in
$\zarifilling{\delta}{n}{\nbpol}$, $\delta \geq 0$ and
$n,\nbpol $ being nonzero integers and $I = \langle \seqpol
\rangle$. Let \(G =
(g_1, \ldots, g_r)\) be the
reduced $\grs$-Gröbner basis of $I$, ordered with respect to
$\succ_{\mathrm{ap}}$.
For any integers $i,j$ such that 
$1 \leq i < j \leq \nbgb$, then one has $\ind{g_i} \leq \ind{g_j}$.
\end{lemma}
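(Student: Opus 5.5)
We argue by contradiction and reduce everything to \cref{prop_starecaise_dimension_onebyone}. Take $i<j$ and suppose $\ind{g_i} > \ind{g_j}$. Set $a=\ind{g_j}$, $b=\ind{g_i}$ (so $a<b$), $d_i = \deg(g_i)$ and $d_j=\deg(g_j)$.

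Since $G$ is ordered with respect to $\succ_{\mathrm{ap}}$, the hypothesis $i<j$ means $\lcmt{G}{i} \grp \lcmt{G}{j}$. As grevlex is a graded ordering, this forces $d_i \leq d_j$. We split according to whether this inequality is strict.

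\emph{Case $d_i<d_j$.} Here $g_i \in \bdg{d_i,}{b}{I}{\grs}$ and $g_j\in\bdg{d_j,}{a}{I}{\grs}$ with $a<b$ and $d_i<d_j$. Applying \cref{prop_starecaise_dimension_onebyone} with its $j:=b$, $i:=a$, $d:=d_i$, $e:=d_j$: since $\bdg{d_i,}{b}{I}{\grs}$ is nonempty, $\bdg{d_j,}{a}{I}{\grs}$ must be empty. This contradicts $g_j \in \bdg{d_j,}{a}{I}{\grs}$.

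\emph{Case $d_i=d_j=d$.} We compare the two leading monomials directly, using the grevlex convention $x_1\grs\cdots\grs x_n$. The monomial $\lcmt{G}{i}$ involves $x_b$, whereas $\lcmt{G}{j}$ lies in $\Kp{a}\subseteq\Kp{b-1}$, so its exponents in $x_b,\dots,x_n$ are all zero. Two distinct monomials of equal degree are compared in grevlex by the last variable in which their exponents differ; the monomial with the smaller exponent there is the larger one. The last differing variable is $x_b$, where $\lcmt{G}{i}$ has positive exponent and $\lcmt{G}{j}$ has exponent zero. Hence $\lcmt{G}{i}\grp\lcmt{G}{j}$ would require $\lcmt{G}{i}$ to have the smaller $x_b$-exponent, which is false; in fact $\lcmt{G}{j}\grp\lcmt{G}{i}$. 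This contradicts $i<j$.

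Both cases are contradictory, so $\ind{g_i}\le\ind{g_j}$. The only genuinely nontrivial input is \cref{prop_starecaise_dimension_onebyone}, which carries the Moreno-Socías assumption; the point requiring care is applying it with the roles of the indices and degrees correctly matched. The equal-degree case is just a careful check of the grevlex convention.
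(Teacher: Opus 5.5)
Your case $d_i<d_j$ is correct and is exactly the paper's argument. It only needs the ordering behind $\succ_{\mathrm{ap}}$ to be graded, and you match the indices of \cref{prop_starecaise_dimension_onebyone} correctly.

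The equal-degree case, however, contains a genuine error, made of two compounding mistakes.
\begin{enumerate}
\item $\succ_{\mathrm{ap}}$ is not defined through grevlex. By definition, $i<j$ means $\lcmt{G}{i} \gp \lcmt{G}{j}$, i.e.\ comparison in the graded \emph{lexicographic} order with the convention $x_1 \gp \cdots \gp x_n$.
\item Even under your grevlex reading, the comparison is inverted. The last variable in which the exponents differ is $x_b$, and there $\lcmt{G}{j}$ has the smaller exponent $0$. By the very rule you quote, $\lcmt{G}{j}$ is then the grevlex-larger monomial, i.e.\ $\lcmt{G}{i} \grp \lcmt{G}{j}$, which is perfectly consistent with $i<j$. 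This is the fact used in the proof of \cref{prop_starecaise_dimension_onebyone}: a degree-$d$ monomial involving $x_b$ is grevlex-smaller than every degree-$d$ monomial of $\Kp{b-1}$.
\end{enumerate}
So no contradiction arises. In fact, with a grevlex sorting the lemma would be false. For two generic quadrics in $\Kp{2}$, the degree-$2$ leading monomials are $x_1^2$ and $x_1x_2$. Since $x_1x_2 \grp x_1^2$, an element of index $2$ would precede one of index $1$.

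The repair is to use the actual definition of $\succ_{\mathrm{ap}}$. In grlex with $x_n$ the largest variable, two monomials of equal degree are compared on the exponents of $x_n, x_{n-1}, \ldots$ in that order, and the larger exponent wins. Both $\lcmt{G}{i}$ and $\lcmt{G}{j}$ have zero exponent in $x_{b+1},\ldots,x_n$. At $x_b$, the exponent of $\lcmt{G}{i}$ is positive, while that of $\lcmt{G}{j}\in\Kp{a}\subseteq\Kp{b-1}$ is zero. Hence $\lcmt{G}{j}\gp\lcmt{G}{i}$, contradicting $i<j$. With this correction your proof coincides with the paper's, which settles the equal-degree case in one line ``by definition of the grlex order''.
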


\begin{proof}
  Write $d = \deg(\lcmt{G}{i})$ and $e =
  \deg(\lcmt{G}{j})$. By assumption we have $\lcmt{G}{i} \gp
  \lcmt{G}{j}$, and thus $d \leq e$.

  If \(e = d\), by definition of the {\em
  grlex} order, it holds that  \(\ind{g_i} \leq \ind{g_j}\).
  If \(d < e\), it holds that both \(G_{d, \ind{g_i}}\) and
      \(G_{e, \ind{g_j}}\)
  are nonempty, which, by
  \cref{prop_starecaise_dimension_onebyone}, implies that
  \(\ind{g_i} \le \ind{g_j}\). 
\end{proof}

\begin{lemma}
\label{lemma_critere2}
  Let $k<i<j$ be integers in $\{1, \ldots , \nbgb \}$. If $\lcmt{G}{k}$ divides
  $\lcmt{G}{i,j}$, then $\syzy{i}{j} \notin \taylorbasis^{\star}(G)$.
\end{lemma}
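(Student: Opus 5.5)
The plan is to show directly that $\syzy{i}{j}\in\mathcal{B}(G)$, using $k$ itself as the witness index in the definition of $\mathcal{B}(G)$. Since $k<i<j$, the index $k$ lies in $\{1,\ldots,r\}\setminus\{i,j\}$. With the notation of the definition of $\mathcal{B}(G)$ we get $\alpha=\min(i,k)=k$, $\beta=\max(i,k)=i$, $\gamma=\min(j,k)=k$ and $\delta=\max(j,k)=j$. So the two competing syzygies are $\syzy{k}{i}$ and $\syzy{k}{j}$.

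First we verify condition $(\mathsf{T}_1)$. By hypothesis $\lcmt{G}{k}$ divides $\lcmt{G}{i,j}$, so
\[
\lcmt{G}{i,j,k}=\operatorname{lcm}(\lcmt{G}{i,j},\lcmt{G}{k})=\lcmt{G}{i,j}.
\]

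Next we verify condition $(\mathsf{T}_2)$, namely that $\syzy{i}{j}$ is the $\succ_{\mathrm{syz}}$-maximum of $\{\syzy{i}{j},\syzy{k}{i},\syzy{k}{j}\}$.

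\emph{Comparison with $\syzy{k}{j}$.} Both $\lcmt{G}{k}$ and $\lcmt{G}{j}$ divide $\lcmt{G}{i,j}$, so $\lcmt{G}{k,j}$ divides $\lcmt{G}{i,j}$. Divisibility implies $\lcmt{G}{k,j}\preceq\lcmt{G}{i,j}$ for the grlex order, as for any monomial order. If the inequality is strict, then $\syzy{k}{j}\prec_{\mathrm{syz}}\syzy{i}{j}$ by the first clause of the definition of $\succ_{\mathrm{syz}}$. If the two lcms are equal, both syzygies share the larger index $j$, and the tie is broken by the smaller index: $k<i$ gives $\syzy{k}{j}\prec_{\mathrm{syz}}\syzy{i}{j}$.

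\emph{Comparison with $\syzy{k}{i}$.} In the same way, $\lcmt{G}{k,i}$ divides $\lcmt{G}{i,j}$, so either its lcm is strictly grlex-smaller, or the lcms coincide. In the latter case we compare larger indices, and $i<j$ gives $\syzy{k}{i}\prec_{\mathrm{syz}}\syzy{i}{j}$.

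Hence $\syzy{i}{j}\in\mathcal{B}(G)$, and so $\syzy{i}{j}\notin\taylorbasis^{\star}(G)$ by definition of $\taylorbasis^{\star}(G)$.

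The only delicate point is bookkeeping. One must correctly identify $\alpha,\beta,\gamma,\delta$ and make sure each tie-break in $\succ_{\mathrm{syz}}$ goes the right way. The key observation is that $j$ is the largest of the three indices. A divisibility argument then handles every case in which the lcms are distinct, so no property of grevlex or of $\zarifilling{\delta}{n}{\nbpol}$ is needed here.
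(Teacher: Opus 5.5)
Your proof is correct and follows the same route as the paper: take $k$ itself as the witness in the definition of $\mathcal{B}(G)$, get $(\mathsf{T}_1)$ directly from the divisibility hypothesis, and get $(\mathsf{T}_2)$ from the fact that each competing lcm divides $\lcmt{G}{i,j}$ and so is either strictly smaller or equal, with ties broken by $i<j$ (for $\syzy{k}{i}$) and $k<i$ (for $\syzy{k}{j}$). The only cosmetic difference is that you phrase the strict case as a grlex comparison, whereas the paper uses a strict drop in degree; the two are equivalent for a proper divisor.
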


\begin{proof}
  Suppose that $\lcmt{G}{k}$ divides $\lcmt{G}{i,j}$. This implies that
  $\lcmt{G}{i,j} = \lcmt{G}{k,i,j}$ and that $\lcmt{G}{k,i}$ divides
  $\lcmt{G}{i,j}$. Thus, $\deg(\lcmt{G}{k,i}) < \deg(\lcmt{G}{i,j})$ or
  $\lcmt{G}{k,i} = \lcmt{G}{i,j}$. First, if the inequality $\deg(\lcmt{G}{k,i}) <
  \deg(\lcmt{G}{i,j})$ holds, then $\syzy{k}{i}
  \prec_{\mathrm{syz}}
  \syzy{i}{j}$. If $\lcmt{G}{k,i} = \lcmt{G}{i,j}$, using the
  fact that $i<j$, we also have $\syzy{k}{i}
  \prec_{\mathrm{syz}} \syzy{i}{j}$.
  Similar arguments on $\syzy{k}{j}$ and $\syzy{i}{j}$, with the fact that
  \(k < i\), yield $\syzy{i}{j} =
  \max_{\succ_{\mathrm{syz}}}(\syzy{k}{i},\syzy{k}{j},\syzy{i}{j})$. Thus, by
  definition of \(\mathcal{B}(G)\) and $\taylorbasis^{\star}(G)$,
  the syzygy $\syzy{i}{j}$ does not lie in $\taylorbasis^{\star}(G)$.
\end{proof}

The degree of a polynomial $f$ in the variable $x_i$ is denoted by $\deg_{x_i}(f)$.

\begin{lemma}
\label{lemma_degreeinxq_order_fillingup}
Let $\seqpol$ in
$\zarifilling{\delta}{n}{\nbpol}$, $\delta \geq 0$ and
$n,\nbpol $ being nonzero integers and $I = \langle \seqpol
\rangle$. Let \(G =
(g_1, \ldots, g_r)\) be the
reduced $\grs$-Gröbner basis of $I$, ordered with respect to
$\succ_{\mathrm{ap}}$.
Consider $1 \leq i<j \leq \nbgb$ and let \(a\) denote
$\ind{g_j}$. Then the following inequality holds:
\[ \deg_{x_a}(\lcmt{G}{i}) \leq \deg_{x_a}(\lcmt{G}{j}). \]
\end{lemma}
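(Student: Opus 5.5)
We need $\deg_{x_a}(\lcmt{G}{i}) \le \deg_{x_a}(\lcmt{G}{j})$ for $i<j$, where $a=\ind{g_j}$. By \cref{lemma_orderonfi_inGj}, $b:=\ind{g_i}\le a$. If $b<a$, then $\lcmt{G}{i}$ lies in $\Kp{b}\subseteq\Kp{a-1}$, so $\deg_{x_a}(\lcmt{G}{i})=0$ and the inequality is immediate. So the plan reduces to the case $b=a$, i.e.\ both leading monomials lie in $\Kp{a}\setminus\Kp{a-1}$. Here we use the bijection of \cref{thm_bijection_monomials}: we write $\lcmt{G}{i}=m_i x_a^{\alpha_i}$ and $\lcmt{G}{j}=m_j x_a^{\alpha_j}$ with $m_i,m_j\in\bdgb{}{a-1}{I}{\grs}$ and $\alpha_i=\alpha_{m_i,a}$, $\alpha_j=\alpha_{m_j,a}$. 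Stability of $\lm{\grs}{I}$ is needed to invoke that theorem; it holds for generic sequences (Moreno-Socías' conjecture implies the initial ideal is the generic initial ideal, hence Borel-fixed, hence stable), so I would add this as a remark, or prove stability directly from the "largest monomials" property used in \cref{prop_starecaise_dimension_onebyone}.

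Let $d=\deg\lcmt{G}{i}\le e=\deg\lcmt{G}{j}$ (from the $\succ_{\mathrm{ap}}$ ordering). Suppose by contradiction that $\alpha_i>\alpha_j$. Then $\deg m_j = e-\alpha_j > d-\alpha_i=\deg m_i$. The key step is to compare using the Moreno-Socías property in degree $\deg(m_j x_a^{\alpha_j})$: I would try to show that the monomial $m' := m_i x_a^{\alpha_j}$ multiplied by a suitable factor in $x_1,\dots,x_{a-1}$ of degree $e-d+(\alpha_i-\alpha_j)$ reaching degree $e$ is grevlex-larger than $\lcmt{G}{j}$. Concretely, in degree $e$ consider monomials in $\Kp{a}$ with $x_a$-exponent $\alpha_j$: grevlex compares last variables first, so monomials with smaller $x_a$-degree are larger. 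The cleaner route: $\lcmt{G}{i}=m_ix_a^{\alpha_i}$ has $x_a$-degree $\alpha_i>\alpha_j$; since $x_a^{\alpha_j} m_i\notin\lm{\grs}{I}$ (minimality of $\alpha_i$), and the set of standard monomials is an order ideal, every divisor of it is standard. Then use the staircase property of \cref{prop_starecaise_dimension_onebyone}-style arguments: in degree $e$, all monomials of $\Kp{a}$ grevlex-larger than $\lcmt{G}{j}$ are in $\lm{\grs}{I}$ (shown in that proof for $\Kp{j-1}$; I would extend the argument to the full degree-$e$ initial segment). Any monomial of degree $e$ in $\Kp{a}$ with $x_a$-degree $\alpha_j$ is either larger than $\lcmt{G}{j}$ or has the same $x_a$-degree and then compared by $x_{a-1},\ldots$.

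The main obstacle is producing a degree-$e$ multiple $u\, m_i x_a^{\alpha_j}$ with $u\in\Kp{a-1}$ that is grevlex-larger than $\lcmt{G}{j}$ yet forces a contradiction. I would argue the opposite direction instead: since $\deg m_j>\deg m_i$ and the standard set in $\Kp{a-1}$ is stable, take $m_i x_a^{\alpha_j}$ and note that $\lcmt{G}{j}$ being in the initial segment in degree $e$ means every degree-$e$ monomial larger than it is in $\lm{\grs}{I}$; multiplying $m_i x_a^{\alpha_i-1}$ (standard) by $x_1^{e-d+1}$ gives a degree-$e$... I expect to choose the comparison degree carefully (possibly degree $d$ rather than $e$, using that $m_j$ restricted via stability yields a standard monomial $\tilde m\,x_a^{\alpha_j}$ of degree $d$ larger than $\lcmt{G}{i}$ is not allowed to be standard by the initial-segment property in degree $d$). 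This degree-$d$ version seems right: replace $m_j$ by its image under stability moves to a divisor-compatible monomial $\tilde m\in\Kp{a-1}$ of degree $d-\alpha_j$ with $\tilde m x_a^{\alpha_j}$ standard-or-not; it is grevlex-larger than $\lcmt{G}{i}$ (lower $x_a$-degree), hence in $\lm{\grs}{I}$, contradicting that a stable shift of a divisor of $m_jx_a^{\alpha_j-1}$... working out this last stability bookkeeping is the crux.
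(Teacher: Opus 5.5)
\textbf{Gap: the proposal never reaches a contradiction from $\alpha_i>\alpha_j$.} Your reduction to the case $\ind{g_i}=\ind{g_j}=a$ is correct. After that, you stop at the decisive step and say yourself that the final ``stability bookkeeping is the crux.'' No stability argument is needed; the missing observation is a one-liner.

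Write $d=\deg\lcmt{G}{i}\le e=\deg\lcmt{G}{j}$ and assume $d<e$. Take any divisor $\mu$ of $\lcmt{G}{j}$ of degree $d$; your $\tilde m\,x_a^{\alpha_j}$ with $\tilde m\mid m_j$ is such a divisor. Then:
\begin{enumerate}
\item $\mu\in\Kp{a}$ and, under your assumption, $\deg_{x_a}(\mu)\le\deg_{x_a}(\lcmt{G}{j})<\deg_{x_a}(\lcmt{G}{i})$.
\item Both monomials have degree $d$ and no variable beyond $x_a$, so $\mu\grs\lcmt{G}{i}$.
\item $\lcmt{G}{i}$ is a degree-$d$ minimal generator, so the defining property of $\zarifilling{\delta}{n}{\nbpol}$ (as used in the proof of \cref{prop_starecaise_dimension_onebyone}) gives $\mu\in\lm{\grs}{I}$.
\item Since $\deg\mu<e$, $\mu$ is a proper divisor of $\lcmt{G}{j}$ lying in $\lm{\grs}{I}$. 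This contradicts that $G$ is reduced.
\end{enumerate}
So the contradiction is with the minimality of $\lcmt{G}{j}$, not with any ``stable shift.'' This is exactly the paper's argument.

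\textbf{Second gap: the case $d=e$.} Your plan does not split off $d=e$, and there the divisor argument yields nothing, since $\mu=\lcmt{G}{j}$ itself. This case is settled directly by the ordering $\succ_{\mathrm{ap}}$. You have $\lcmt{G}{i}\gp\lcmt{G}{j}$ with equal degrees. Under the convention $x_1\gp\cdots\gp x_n$, $x_a$ is the largest variable occurring in either monomial, so the comparison reads off $x_a$ first and gives $\deg_{x_a}(\lcmt{G}{i})\le\deg_{x_a}(\lcmt{G}{j})$.

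\textbf{Unnecessary detour.} Invoking \cref{thm_bijection_monomials} is not needed, and it imports zero-dimensionality of $I$, which this lemma does not assume. The decomposition $\lcmt{G}{i}=m_i x_a^{\alpha_i}$ with $m_i x_a^{k}\notin\lm{\grs}{I}$ for $k<\alpha_i$ follows from reducedness of $G$ alone. Stability of $\lm{\grs}{I}$ plays no role in the proof.
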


\begin{proof}
  First, \cref{lemma_orderonfi_inGj} establishes that $\ind{g_i} \leq
  \ind{g_j} = a$. If $\ind{g_i} < a$, then by definition
  $\deg_{x,_a}(\lcmt{G}{j}) >0$ and $\deg_{x_a}(\lcmt{G}{i}) = 0$, hence
  $\deg_{x_a}(\lcmt{G}{i}) < \deg_{x_a}(\lcmt{G}{j})$. In the rest of the
  proof, we consider the case $\ind{g_i} = a$. As $\bdg{}{}{I}{\grs}$ is
  ordered with $\succ_{\mathrm{ap}}$, one has $\lcmt{G}{i} \gp \lcmt{G}{j}$
  which implies that $\deg(\lcmt{G}{i}) \leq \deg(\lcmt{G}{j})$.

  Suppose first that $\deg(\lcmt{G}{i}) = \deg(\lcmt{G}{j})$. Since     
  $\ind{g_i} = \ind{g_j} = a$, the monomials $\lcmt{G}{i}$ and $\lcmt{G}{j}$
  both lie in  $\Kp{a} \setminus \Kp{a-1}$; combined with
  $\lcmt{G}{i} \gp \lcmt{G}{j}$, this gives
  $\deg_{x_a}(\lcmt{G}{i}) \leq \deg_{x_a}(\lcmt{G}{j})$.

  Now suppose that $\deg(\lcmt{G}{i}) < \deg(\lcmt{G}{j})$. Then there exists a
  monomial $m$ in $\Kp{a}$ that divides $\lcmt{G}{j}$ such that  $\deg(m) =
  \deg(\lcmt{G}{i})$. Assume $\deg_{x_a}(\lcmt{G}{i}) > \deg_{x_a}(\lcmt{G}{j})$
  by contradiction. It then holds that  $\deg_{x_a}(\lcmt{G}{i}) >
  \deg_{x_a}(\lcmt{G}{j}) \ge \deg_{x_a}(m)$, hence $m \grs \lcmt{G}{i}$.
  From the fact that \(\lcmt{G}{i} \in \lm{\grs}{\bdg{}{}{I}{\grs}}\)
  and \(\seqpol \in \zarifilling{\delta}{n}{\nbpol}\),
  we deduce that $m$ lies in $\lm{\grs}{I}$. Since $m$ divides $\lcmt{G}{j}$
  and is different from it, this
  contradicts the minimality of the $\grs$-Gröbner
  basis $\bdg{}{}{I}{\grs}$.
\end{proof}

\begin{remark}
  \label{remark_fillingup_implies_stable}%
  The assumption that $\seqpol \in \zarifilling{\delta}{n}{\nbpol}$ implies the
  stability of $\lm{\grs}{\langle \seqpol \rangle}$. Indeed, suppose that $I$
  is the ideal generated by a sequence $\seqpol$ which lies in
  $\zarifilling{\delta}{n}{\nbpol}$. Using  \cite[Lemma 4.2.3]{HerzogMono} we
  just need to verify the stability property on monomials of $\bdgh{}{}{I}{\grs}$. Let
  $m$ be in  $\bdgh{}{}{I}{\grs}$ such that $x_i$ divides $m$. Then the 
  monomial $\widetilde{m} = \frac{x_jm}{x_i}$ with $j<i$ is greater than $m$
  with respect to the {\em grevlex} ordering. Since $\seqpol$ lies in
  $\zarifilling{\delta}{n}{\nbpol}$, we have that $\widetilde{m}$ lies in
  $\lm{\grs}{I}$.  Then $\lm{\grs}{I}$ is stable.
\end{remark}

\subsection{Proof of Theorem~\ref{thm_type1_good_pairs}}
\label{subsection_Onthesyzygiesmodule_proof}

For a monomial $m = x_1^{\epsilon_1} \cdots x_n^{\epsilon_n}$ and an integer $i$ in 
$\{1, \ldots , n\}$, let \(\proj{m}{i}\) denote 
$x_1^{\epsilon_1} \cdots x_i^{\epsilon_i}$. 

Let us take $\syzy{i}{j}$ with $1 \leq i < j \leq \nbgb$.
For the remainder of the proof, we set the following notation:
\[\ind{g_i} = a, \quad  \lcmt{G}{i} = \prod_{t =
1}^{a}x_t^{\alpha_{t}} \quad \text{and} \quad
\ind{g_j} = b, \quad 
\lcmt{G}{j} = \prod_{t =
1}^{b}x_t^{\beta_{t}}.\]
Since the sequence of polynomials $\seqpol$ lies in $\zarifilling{\delta}{n}{\nbpol}$, 
one can apply \cref{lemma_orderonfi_inGj} to deduce that 
$a \leq b$. 

\begin{lemma}
    Under the above notation and assumptions, it holds that 
$\frac{\lcmt{G}{i,j}}{\lcmt{G}{j}}\in \Kp{a}$.
\end{lemma}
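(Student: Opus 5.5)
We compute the quotient directly from exponents. Since $\lcmt{G}{i,j} = \operatorname{lcm}(\lcmt{G}{i},\lcmt{G}{j})$, the exponent of $x_t$ in $\frac{\lcmt{G}{i,j}}{\lcmt{G}{j}}$ is $\max(\alpha_t,\beta_t)-\beta_t$, with the convention $\alpha_t = 0$ for $t > a$ and $\beta_t = 0$ for $t>b$. Showing that the quotient lies in $\Kp{a}$ therefore amounts to proving that $\max(\alpha_t,\beta_t) = \beta_t$, i.e.\ $\alpha_t \le \beta_t$, for every $t > a$.

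For $t>a$ the monomial $\lcmt{G}{i}$ lies in $\Kp{a}$ by definition of $a = \ind{g_i}$. Hence $\alpha_t = 0$, and the inequality $\alpha_t \le \beta_t$ is automatic because $\beta_t \ge 0$. So every variable $x_t$ with $t>a$ appears in $\frac{\lcmt{G}{i,j}}{\lcmt{G}{j}}$ with exponent $0$, and the quotient is a monomial in $x_1,\ldots,x_a$ only.

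The argument is essentially bookkeeping, and no earlier result is needed beyond the definition of $\ind{\cdot}$. The fact $a\le b$ from \cref{lemma_orderonfi_inGj} is not even required for this statement; it only matters for the subsequent refinement. The one point to check carefully is the convention that $\lcmt{G}{i}=\prod_{t=1}^{a}x_t^{\alpha_t}$ with no factors beyond index $a$, which follows directly from $\lm{\grs}{g_i}\in\Kp{a}\setminus\Kp{a-1}$. I do not expect any real obstacle here.
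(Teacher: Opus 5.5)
Your proposal is correct and is essentially the paper's argument: the paper also computes the exponent $\max(\alpha_t,\beta_t)-\beta_t$ and uses $\alpha_t=0$ for $t>a$, though it first splits into the cases $a=b$ (called obvious) and $a<b$. Your uniform treatment removes that case split, and you are right that the inequality $a\le b$ is not needed for this particular statement.
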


\begin{proof}
If $a = b$, it is obvious that 
$\frac{\lcmt{G}{i,j}}{\lcmt{G}{j}} \in \Kp{a}$.
If $a < b$, then 
\[ \frac{\lcmt{G}{i,j}}{\lcmt{G}{j}}  = \left( \prod_{t =
    1}^{a}x_t^{\max(\alpha_{t},\beta_{t})-\beta_{t}} \right) 
\left( \prod_{t = a + 1}^{b} 
x_t^{\max(\alpha_{t},\beta_{t})-\beta_{t}}
\right). \] 

Since, by definition, for all $t >
a$,  $\alpha_{t}\ = 0$, we deduce that  
\( \frac{\lcmt{G}{i,j}}{\lcmt{G}{j}}  = \left( \prod_{t =
    1}^{a}x_t^{\max(\alpha_{t},\beta_{t})-\beta_{t}}
\right) \)
which lies in $\Kp{a}$.
\end{proof}

Let us prove that either
$\left( \frac{\lcmt{G}{i,j}}{\lcmt{G}{j}}, g_j \right)$ is
\type, or $\syzy{i}{j}$ does not lie in
$\taylorbasis^{\star}(G)$. We perform several case analyses,
starting with whether $a$ and $b$ are equal or not (recall
that we have established already \(a \le  b\)).

\paragraph*{Case 1: \(a < b\)}
Since $g_i$ and $g_j$ are elements of a minimal $\grs$-Gröbner basis,
$\lcmt{G}{i}$ does not divide $\lcmt{G}{j}$. We deduce that there exists 
$c$ in $\{1, \ldots , a\}$ such that $\alpha_{c} >
\beta_{c}$.
This implies that $x_{c}$ divides $\frac{\lcmt{G}{i,j}}{\lcmt{G}{j}}$.
Consider: 
\[\mu = \proj{x_{c}\lcmt{G}{j}}{a}  = 
x_{c}\prod_{t = 1}^{a}x_t^{\beta_{t}}.\] 
Obviously, the monomial $\mu$ lies in $\Kp{a}$. 
We distinguish cases depending on $\mu$ being in
$\bdgb{}{a}{I}{\grs}$ or in 
$\lm{\grs}{I}$.

\subparagraph*{Case 1.1: $\mu$ lies in
$\bdgb{}{a}{I}{\grs}$.}
Since $\seqpol$ is in $\zarifilling{\delta}{n}{\nbpol}$,
we deduce by \cref{remark_fillingup_implies_stable}
that $\lm{\grs}{I}$ is stable. Moreover, $I$ is zero-dimensional by assumption.

Thus, we are in position to apply
\cref{thm_bijection_monomials} with \(\ell  = a + 1\) and
denote by \(\nu_{a+1}\) the image of  \(\mu\) by the map
\[
\psi_{a+1} : m\in \bdgb{}{a}{I}{\grs}\mapsto m\
x_{a+1}^{\sigma_{m, a+1}}\in \bdgh{\ast,}{a + 1}{I}{\grs} \]
with \( \sigma_{m, a + 1} = \min_{k \in \mathbb{N}}\{k \mid
mx_{a+1}^k \in \lm{\grs}{I} \} \).

We use repeatedly this construction to define a sequence of
monomials, \(\mu_{a}, \mu_{a+1}, \ldots,\) starting with
\(\mu_a = \mu\). The next
elements of this sequence are recursively defined as
follows. For \(\ell \in \{a + 1, \ldots, b\} \), we have that 
 \(\deg_{x_{\ell}}(x_c\ \lcmt{G}{j} ) = \beta_\ell\).
Also, for elements \(\mu_\ell\) of that sequence, lying in 
\(\bdgb{}{\ell}{I}{\grs}\), we denote by  \(\nu_{\ell + 1}\) the
image by the map  \(\psi_{\ell + 1}\) of  \(\mu_\ell\) (where
\(\psi_{\ell + 1}\) is defined by \cref{thm_bijection_monomials}).
Given \(\mu_\ell\), we define \(\mu_{\ell + 1}\) as : 
\begin{itemize}
    \item[{\em (i)}] if \(\beta_{\ell + 1} < \deg_{x_{\ell +
        1}}(\nu_{\ell + 1})\) then
        \(\mu_{\ell + 1} = \mu_\ell\ x_{\ell +
        1}^{\beta_{\ell + 1}}\);
    \item[{\em (ii)}] otherwise, \(\mu_{\ell + 1} =
        \nu_{\ell + 1} \) which does not
            lie in \(\bdgb{}{\ell + 1}{I}{\grs}\), and the
        sequence is stationary at \(\mu_{\ell + 1}\). 
\end{itemize}

\begin{lemma}\label{lemma:stationarity}
   We reuse the above notations and assumptions.  
   There exists \(\mathfrak{e}\in \{a + 1, \ldots, b\}\)
   such that the sequence is stationary at
   \(\mu_{\mathfrak{e}}\) and for all \(a \le \ell \le
   \mathfrak{e} - 1\), it holds
   that  \(\mu_{\ell} =
   \proj{x_{c}\lcmt{G}{j}}{\ell} = x_c \left(
   \prod_{t=1}^{\ell} x_{t}^{\beta_{t}}\right) \). 
   Moreover, for \( a\le \ell \le \mathfrak{e} - 1 \), the 
   monomial $\mu_{\ell}$ lies in \(\bdgb{}{\ell}{I}{\grs}\).
\end{lemma}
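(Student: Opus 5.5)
We argue by induction on $\ell$, starting from $\ell = a$, with the statement
\[
\mathcal{P}(\ell):\quad \mu_\ell = x_c \prod_{t=1}^{\ell} x_t^{\beta_t} \quad\text{and}\quad \mu_\ell \in \bdgb{}{\ell}{I}{\grs}.
\]
The sequence is still defined at index $\ell$, and $\mathcal{P}(\ell)$ holds as long as the sequence has not become stationary at $\mu_\ell$. The base case $\ell = a$ is the hypothesis of Case 1.1, since $\mu_a = \mu = \proj{x_c\lcmt{G}{j}}{a} \in \bdgb{}{a}{I}{\grs}$.

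For the inductive step, assume $\mathcal{P}(\ell)$ with $a \le \ell < b$. Then $\mu_\ell \in \bdgb{}{\ell}{I}{\grs}$, so \cref{thm_bijection_monomials}, applied with index $\ell+1$ using stability from \cref{remark_fillingup_implies_stable} and zero-dimensionality, gives $\nu_{\ell+1} = \mu_\ell\, x_{\ell+1}^{\sigma}$. Here $\sigma$ is minimal with $\mu_\ell x_{\ell+1}^{\sigma} \in \lm{\grs}{I}$.

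In branch (i), $\beta_{\ell+1} < \sigma$. Minimality of $\sigma$ then gives $\mu_{\ell+1} = \mu_\ell x_{\ell+1}^{\beta_{\ell+1}} \notin \lm{\grs}{I}$. This monomial lies in $\Kp{\ell+1}$, hence in $\bdgb{}{\ell+1}{I}{\grs}$, and it equals $x_c\prod_{t\le \ell+1}x_t^{\beta_t}$. So $\mathcal{P}(\ell+1)$ holds.

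In branch (ii), $\mu_{\ell+1} = \nu_{\ell+1} \in \lm{\grs}{I}$, so $\mu_{\ell+1}$ is not in $\bdgb{}{\ell+1}{I}{\grs}$ and the sequence stops. In that case we set $\mathfrak{e} = \ell+1$. The identity for $\mu_\ell$ and its membership in $\bdgb{}{\ell}{I}{\grs}$ then hold for every $a \le \ell \le \mathfrak{e}-1$ by the induction.

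The remaining point is that stationarity occurs at some $\mathfrak{e} \le b$; this is the main step. Suppose not. Then branch (i) is taken at every index up to $b$, so $\mu_b = x_c\prod_{t=1}^{b}x_t^{\beta_t} = x_c\lcmt{G}{j}$ lies in $\bdgb{}{b}{I}{\grs}$, meaning it is not in $\lm{\grs}{I}$. But $\lcmt{G}{j} = \lm{\grs}{g_j}$ lies in $\lm{\grs}{I}$, and $\lm{\grs}{I}$ is closed under multiplication by monomials, so $x_c\lcmt{G}{j} \in \lm{\grs}{I}$. This is a contradiction. Hence branch (ii) occurs for some $\ell+1 \le b$, and $\mathfrak{e} \in \{a+1,\ldots,b\}$.

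The lower bound $\mathfrak{e} \ge a+1$ is automatic, because the first possible stop is at index $a+1$. The step to handle carefully is the exponent bookkeeping: one must check that $\deg_{x_{\ell+1}}(\mu_\ell) = 0$ for $\ell \ge a$ (true since $\mu_\ell \in \Kp{\ell}$), so that the exponent of $x_{\ell+1}$ in $\mu_{\ell+1}$ is exactly $\beta_{\ell+1}$, matching $\proj{x_c\lcmt{G}{j}}{\ell+1}$. Here $c \le a \le \ell$ guarantees that the factor $x_c$ never changes an exponent at index $\ell+1$.
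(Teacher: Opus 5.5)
Your proof is correct and follows essentially the same route as the paper: an induction along branch~(i) establishing $\mu_\ell = x_c\prod_{t=1}^{\ell}x_t^{\beta_t}$, followed by a contradiction showing that stationarity must occur by index $b$, since otherwise $\mu_b = x_c\lcmt{G}{j}$ would lie outside $\lm{\grs}{I}$. The only differences are presentational: you merge the two claims into one induction, and you explicitly justify $\mu_{\ell+1}\in\bdgb{}{\ell+1}{I}{\grs}$ via the minimality of $\sigma$ (the paper calls this obvious) and $x_c\lcmt{G}{j}\in\lm{\grs}{I}$ via closure under multiplication.
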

\begin{proof}

By definition of the sequence $\mu_a, \mu_{a+1}, \ldots$, we have that if,
for a $\ell > a$, the monomial $\mu_\ell$ is defined by $(ii)$, then the same
holds for all subsequent monomials of the sequence.
Let us call $\mathfrak{e}$ the index of the first monomial of the sequence
defined by $(ii)$, with $\mathfrak{e} = + \infty$ if it is never the case.

It is obvious that for \( a\le \ell \le \mathfrak{e} - 1 \), the 
monomial $\mu_{\ell}$ lies in \(\bdgb{}{\ell}{I}{\grs}\).

Now, we prove by induction that for \( a\le \ell < \mathfrak{e} \), $\mu_{\ell} =
\proj{x_{c}\lcmt{G}{j}}{\ell}$. The statement holds by
construction for \(\ell = a\). Assume now that $\mu_{\ell-1} =
\proj{x_{c}\lcmt{G}{j}}{\ell-1}$ for $\ell$ such that \( a\le \ell < \mathfrak{e} \).
 Since we have \(\ell < \mathfrak{e}\),
 it holds by construction that \(\mu_\ell =
\mu_{\ell - 1}x_\ell^{\beta_\ell}\) with \(\beta_\ell =
\deg_{x_\ell}\left( x_c\lcmt{G}{j} \right) \). Since we have
$\mu_{\ell-1} =
\proj{x_{c}\lcmt{G}{j}}{\ell-1}$ and $c \leq a < \ell$, our conclusion follows by
the definition of \(\lcmt{G}{j}\).

It remains to prove that $\mathfrak{e}$ lies in $\{a+1, \ldots, b\}$.
Suppose, for the sake of contradiction, that this is not the case.
Then $\mathfrak{e} > b$.
This implies that
\(\mu_b = \proj{x_c \lcmt{G}{j}}{b} = x_c \lcmt{G}{j}
\)
lies in $\bdgb{}{b}{I}{\grs}$.
This contradicts the fact that $\lcmt{G}{j}$ lies in
$\bdgh{}{}{I}{\grs}$.
Hence, $\mathfrak{e}$ lies in $\{a+1, \ldots, b\}$.
\end{proof}

Using \cref{thm_bijection_monomials}, the monomial
$\mu_{\mathfrak{e}}$ lies in 
$\bdgh{\ast,}{\mathfrak{e}}{I}{\grs}$. Then
there exists an integer $k$ in $\{1, \ldots , \nbgb\}$
such that $\lcmt{G}{k} = \mu_{\mathfrak{e}}$.
We prove that $\syzy{i}{j}$ is not in
$\taylorbasis^{\star}(G)$, hence proving that both
\hyperlink{T1}{\(\mathsf{T}_1\)} and
\hyperlink{T2}{\(\mathsf{T}_2\)} hold, using the index $k$.

By \cref{lemma:stationarity}, \(\mu_{\mathfrak{e} - 1} =
\pi_{\mathfrak{e} - 1}\left(x_c \lcmt{G}{j}  \right) = x_c
\prod_{t = 1}^{\mathfrak{e} - 1} x_t^{\beta_t} \) and, by
definition, we have that 
\[ \lcmt{G}{k} = \mu_{\mathfrak{e}} = \psi_{\mathfrak{e}}\left(
\mu_{\mathfrak{e} - 1} \right) = \mu_{\mathfrak{e} -
1}x_\mathfrak{e}^{\sigma} = x_c \left(
\prod_{t=1}^{\mathfrak{e}-1}
x_{t}^{\beta_{t}}\right)x_{\mathfrak{e}}^{\sigma} \]
 where
\(\sigma\) is the smallest
integer such that  \(\mu_{\mathfrak{e}}\in \lm{\grs}{I}\).

Since $\mu_{\mathfrak{e}}$ is a monomial of the sequence that is defined by
$(ii)$, this implies \(\sigma \leq \beta_{\mathfrak{e}}\). 
We can then deduce from the above expression that $\lcmt{G}{k,j} = x_{c}\lcmt{G}{j}$. Moreover, since, by
definition of \(c\), $\lcmt{G}{k}$ divides
$\lcmt{G}{i,j}$, we deduce that $\lcmt{G}{i,j} =
\lcmt{G}{i,k,j}$. 

By \cref{lemma:stationarity}, $a = \ind{g_i} <
\mathfrak{e} = \ind{g_k}$. 
This implies that $i<k$, in particular $i\neq k$. 
Also, since $\deg_{x_{c}}(\lcmt{G}{k}) = \deg_{x_{c}}(\lcmt{G}{j})+1$, 
$\lcmt{G}{k} \ne \lcmt{G}{j}$ and we deduce that $k \ne j$.
This ends to prove \hyperlink{T1}{\(\mathsf{T}_1\)}. 

\medskip

Now, we prove that $\syzy{i}{j} \succ_{\mathrm{syz}} \syzy{i}{k}$ and 
$\syzy{i}{j} \succ_{\mathrm{syz}} \syzy{k}{j}$ which is
\hyperlink{T2}{\(\mathsf{T}_2\)}. This is
done by establishing that 
\(\deg\left(\lcmt{G}{k,j} \right) < \deg\left(
    \lcmt{G}{i,j}
\right) \)
and \(\deg\left(\lcmt{G}{i,k} \right) < \deg\left(
    \lcmt{G}{i,j}
\right) \) 
which is sufficient by definition of
\(\succ_{\mathrm{syz}}\). 

\smallskip
We proved in the above paragraph that \(\lcmt{G}{k,j} =
x_c\lcmt{G}{j}\). In fact, by
definition, we have that \(x_c \lcmt{G}{j} = x_c
\prod_{t=1}^{b}x_t^{\beta_t} = \mu_a
\prod_{t=a+1}^{b}x_t^{\beta_t}\).
 Also, 
recall that, by definition, \(x_c\) divides
\(\frac{\lcmt{G}{i,j}}{\lcmt{G}{j}}\). We deduce that 
\(\lcmt{G}{i,j}= 
\operatorname{lcm}(\lcmt{G}{i},x_{c}\lcmt{G}{j})\).
It then holds that \(\lcmt{G}{i,j}= 
\left(
\prod_{t=a+1}^{b}x_t^{\beta_t} \right)
\operatorname{lcm}(\lcmt{G}{i},\mu_a)\).
 Since, by
assumption, $\mu_{a}
\in \bdgb{}{a}{I}{\grs}$ and $\lcmt{G}{i}
\in \lm{\grs}{I}$, there exists a monomial \(\nu\) of
positive degree in  \(\Kp{a}\) such that
\(\operatorname{lcm}\left(\lcmt{G}{i}, \mu_a  \right) = \mu_a\nu \). 
We deduce that \(\lcmt{G}{i,j} = 
\left(
\prod_{t=a+1}^{b}x_t^{\beta_t} \right)\mu_a \nu = x_c
\lcmt{G}{j}\ \nu 
\) and then \(\deg\left(\lcmt{G}{k,j} \right) <
\deg\left(\lcmt{G}{i,j}  \right)  \)
as requested. 

Recall that \(\lcmt{G}{k} = \mu_{\mathfrak{e}}  = x_c \left(
\prod_{t=1}^{\mathfrak{e}-1} x_t^{\beta_t}\right)
x_{\mathfrak{e}}^\sigma = \mu_a \left(
    \prod_{t=a+1}^{\mathfrak{e}-1}x_t^{\beta_t}
\right)x_{\mathfrak{e}}^\sigma \), with \(\mathfrak{e} > a\) and 
\(\sigma\) defined above, and
that  \(\lcmt{G}{i} = \prod_{t=1}^a x_t^{\alpha_t}\). Hence, 
\(\lcmt{G}{i,k} = \operatorname{lcm}\left( \lcmt{G}{i},\lcmt{G}{k} \right) = 
\operatorname{lcm}\left(\lcmt{G}{i}, \mu_a  \right) \left(
\prod_{t=a+1}^{\mathfrak{e}-1}
x_t^{\beta_t}\right)x_{\mathfrak{e}}^{\sigma} \). By
definition of \(\nu\) above, we deduce that 
\(\operatorname{lcm}\left( \lcmt{G}{i},\lcmt{G}{k} \right) =\nu
\mu_a \left (\prod_{t=a+1}^{\mathfrak{e}-1}
x_t^{\beta_t}\right)x_{\mathfrak{e}}^{\sigma}=\nu\lcmt{G}{k}
\). We proved in the above paragraph that 
\(\lcmt{G}{i,j} = x_c\lcmt{G}{j}\nu\). Hence, it remains to prove the inequality \(\deg\left(\lcmt{G}{k}  \right) < \deg\left( x_c \lcmt{G}{j} \right)  \) to
establish that \(\deg\left(\lcmt{G}{i,k}  \right) < \deg\left(
\lcmt{G}{i,j} \right)  \). 
By construction, \(\lcmt{G}{k}\) divides \(x_c \lcmt{G}{j}\).
Since \(\lcmt{G}{k}\) and \(\lcmt{G}{j}\) are leading monomials of a 
minimal basis, \(\lcmt{G}{k}\) is not equal to \(x_c \lcmt{G}{j}\).
This implies the degree
inequalities and ends the proof. 

\subparagraph*{Case 1.2: $\mu$ does not lie in
$\bdgb{}{a}{I}{\grs}$.}
    This implies that there
exists an integer $k$ in $\{1, \ldots , \nbgb\}$ such that 
$\lcmt{G}{k}$ divides $\mu$.
Moreover, as $\mu$ belongs to $\Kp{a}$. Then the monomial $\lcmt{G}{k}$
 lies in $\Kp{a}$ and consequently
$\ind{g_k} \leq a$. Let \(e\) denote  \(\ind{g_k}\). 

We first prove that when $k\in \{i,j\}$, 
$\left ( \frac{\lcmt{G}{i,j}}{\lcmt{G}{j}}, g_j\right )$ is
\type.
Since $a < b$ by assumption
and $e \leq a$, we have that $k \ne j$ and $k = i$. 
Since $k = i$, then $\lcmt{G}{i}$ divides $\mu$.
By definition, \(\mu\) divides $x_{c}\lcmt{G}{j}$. Then, 
$\lcmt{G}{i}$ divides $x_{c}\lcmt{G}{j}$. We deduce 
that $\lcmt{G}{i,j}$ divides $x_{c}\lcmt{G}{j}$.
Since $\lcmt{G}{j}$ divides $\lcmt{G}{i,j}$ and
$\lcmt{G}{j}\neq \lcmt{G}{i,j}$ (because \(G\)
 is a minimal $\grs$-Gröbner basis),
 $\lcmt{G}{i,j} = x_{c}\lcmt{G}{j}$.
Since $1 \le  c \le  a$ and $ a < b$, the tuple  
$\left ( \frac{\lcmt{G}{i,j}}{\lcmt{G}{j}}, g_j\right )$ is
\type.

\smallskip
Now, we prove that when $k$ can not be chosen in $\{i,j\}$, properties \hyperlink{T1}{\(\mathsf{T}_1\)} and
\hyperlink{T2}{\(\mathsf{T}_2\)} hold, i.e.\ \(S_{i,j}\notin
\taylorbasis^{\star}(G)\). We start with
\hyperlink{T2}{\(\mathsf{T}_2\)}. 

We have 
 \( e \leq a < b \).
Since $\seqpol$ lies in $\zarifilling{\delta}{n}{\nbpol}$, 
\cref{lemma_orderonfi_inGj} implies that $k<j$.
Since $\lcmt{G}{k}$ divides $\mu$ which divides
\(x_c\lcmt{G}{j}\), $\lcmt{G}{k}$ divides
$x_{c}\lcmt{G}{j}$. We deduce that $\lcmt{G}{k,j}$
divides $x_{c}\lcmt{G}{j}$. 
Note that $\lcmt{G}{i,j}=
\operatorname{lcm}(\lcmt{G}{j},\lcmt{G}{j}) = \left(
    \prod_{t = a+1}^{b}x_{t}^{\beta_{t}} \right)
\operatorname{lcm}\left( \lcmt{G}{i},  \prod_{t =
1}^{a}x_{t}^{\beta_{t}} \right)$.
By definition of $c$, we have that
 \(\operatorname{lcm}\left( \lcmt{G}{i},  \prod_{t = 1}^{a}x_{t}^{\beta_{t}} \right) = \operatorname{lcm}\left(
\lcmt{G}{i},  x_{c}\prod_{t = 1}^{a}x_{t}^{\beta_{t}} \right) 
= \operatorname{lcm}\left( \lcmt{G}{i},  \mu \right)\).
By assumption, \(\lcmt{G}{i}\) does not divide
 \(\mu\) and \(\mu\neq\lcmt{G}{i}\).
  Also, since \(\mu\in \lm{\grs}{I}\) while
 \(\lcmt{G}{i}\in \lm{\grs}{G}\) and \(\mu\neq \lcmt{G}{i}\),
 we deduce that  \(\mu\) does not divide  \(\lcmt{G}{i}\). 
Thus, there exists a monomial $\nu$ such that the expression 
\( \operatorname{lcm}\left( \lcmt{G}{i},  \prod_{t =
    1}^{a}x_{t}^{\beta_{t}} \right) = \nu x_{c}\prod_{t =
1}^{a}x_{t}^{\beta_{t}} \)
with $\deg(\nu) \geq 1$ holds.
This implies that $\lcmt{G}{i,j} =
\nu x_{c}\lcmt{G}{j}$.
Since $\lcmt{G}{k,j}$
divides $x_{c}\lcmt{G}{j}$, we deduce that the inequality $\deg(\lcmt{G}{k,j}) < \deg(\lcmt{G}{i,j})$ holds.
It follows that $\syzy{k}{j} \prec_{\mathrm{syz}} \syzy{i}{j}$. 

Since $\lcmt{G}{k}$ divides $\mu$, it holds that 
$\lcmt{G}{i,k}$ divides $\operatorname{lcm}(\lcmt{G}{i},\mu)$.
Moreover, we already proved that 
\( \lcmt{G}{i,j} = \left( \prod_{t =
a+1}^{b}x_{t}^{\beta_{t}} \right)
\operatorname{lcm}(\lcmt{G}{i},\mu)\) with $\beta_b >0$,
we deduce that $\deg{\lcmt{G}{i,k}} < \deg(\lcmt{G}{i,j})$.
It follows that 
$\syzy{i}{k} \prec_{\mathrm{syz}} \syzy{i}{j}$ and that 
$\syzy{i}{j} = \max_{\succ_{\mathrm{syz}}}\{ \syzy{i}{k} , \syzy{k}{j} , \syzy{i}{j}\}$.
This ends to establish \hyperlink{T2}{\(\mathsf{T}_2\)}. 

We prove now \hyperlink{T1}{\(\mathsf{T}_1\)}.
It suffices to prove that the monomial $\lcmt{G}{k}$
divides $\lcmt{G}{i,j}$.
Since $\lcmt{G}{k}$ divides $x_{c}\prod_{t =
1}^{a}x_{t}^{\beta_{t}} $
 and $x_{c} \prod_{t = 1}^{a}x_{t}^{\beta_{t}} $ divides
 $\lcmt{G}{i,j}$ by definition of $c$, it follows that $\lcmt{G}{i,k,j} = \lcmt{G}{i,j}$. 

\paragraph*{Case 2: \(a=b\)}
Since $\bdg{}{}{I}{\grs}$ is reduced, it holds that
$\sharp(\bdg{\ast,}{1}{I}{\grs}) \leq 1$. We deduce that $a
\geq 2$ (because \(i < j\)).
Consider \(\proj{\lcmt{G}{i}}{a-1} = \prod_{t =
1}^{a-1}x_{t}^{\alpha_{t}}\) and \(\proj{\lcmt{G}{j}}{a-1} = \prod_{t =
1}^{a-1}x_{t}^{\beta_{t}}\) and let \(\bm{\lambda}\) be
their least common multiple. 

Suppose first that $\bm{\lambda} \not\in \bdgb{}{a-1}{I}{\grs}$. 
Then it lies in $\lm{\grs}{I}$
and there exists $k\in \{1, \ldots , \nbgb\}$ such that $\lcmt{G}{k}$ divides 
$\bm{\lambda}$ with 
$\ind{g_k} < a$. Since $\seqpol$ is in 
$\zarifilling{\delta}{n}{\nbpol}$, by 
\cref{lemma_orderonfi_inGj} we have that $k<i<j$.  
Observe that $\lcmt{G}{k}$ divides $\lcmt{G}{i,j}$. Then
$\syzy{i}{j}$ is not in $\taylorbasis^{\star}(G)$ using
\cref{lemma_critere2}.

\smallskip
Now, suppose that $\bm{\lambda} \in \bdgb{}{a-1}{I}{\grs}$.
Since $\seqpol$ lies in $\zarifilling{\delta}{n}{\nbpol}$, 
\cref{lemma_degreeinxq_order_fillingup} implies that 
$ \alpha_{a} \leq \beta_{a}$. This implies that $\frac{\lcmt{G}{i,j}}{\lcmt{G}{j}}$
lies in $\Kp{a-1}$. 
We assume
that $\left(\frac{\lcmt{G}{i,j}}{\lcmt{G}{j}},
g_j\right)$ is not \type,
 and we prove that $\syzy{i}{j}$ does not belong to
 $\taylorbasis^{\star}(G)$. 

Since $\frac{\lcmt{G}{i,j}}{\lcmt{G}{j}}$ lies in $\Kp{a-1}$
and $\left(\frac{\lcmt{G}{i,j}}{\lcmt{G}{j}}, g_j\right)$ is
not \type, we have that 
$\deg \left( \frac{\lcmt{G}{i,j}}{\lcmt{G}{j}}\right) \geq 2$.
Thus, one can write 
$ \frac{\lcmt{G}{i,j}}{\lcmt{G}{j}}$ as the product of two
monomials $ \tau $ and $\kappa$ in
\(\Kp{a-1}\) with 
\(\tau\) being of degree  \(2\).
Then there exists a variable $x_{c}$
that divides the monomial $\tau$ with $c \in \{1,
\ldots a-1\}$.

Observe that 
\(
\bm{\lambda} = \proj{\lcmt{G}{i,j}}{a-1} =
\proj{\tau\kappa \lcmt{G}{j}}{a-1}
\)
lies in $\bdgb{}{a-1}{I}{\grs}$ by hypothesis.
We deduce that 
\( \proj{x_{c}\lcmt{G}{j}}{a-1} \in \bdgb{}{a-1}{I}{\grs} \)
 since it divides 
 $\proj{\tau\kappa \lcmt{G}{j}}{a-1}$.
Since $\seqpol$ lies in $\zarifilling{\delta}{n}{\nbpol}$, 
we deduce by \cref{remark_fillingup_implies_stable}
that $\lm{\grs}{I}$ is stable. Moreover, $I$ is
zero-dimensional by assumption.
Thus, by \cref{thm_bijection_monomials},  
there exists $k$ in $\{1, \ldots , \nbgb\}$ such that 
\( \lcmt{G}{k} = \psi_a\left(
    \proj{x_{c}\lcmt{G}{j}}{a-1}\right) \).
Now, we prove \hyperlink{T1}{\(\mathsf{T}_1\)}.

One can write 
\( \lcmt{G}{k} = \left (x_{c}\prod_{t =
1}^{a-1}x_{t}^{\beta_{t}}\right ){x_{a}^{\gamma}} \) for
some \(\gamma > 0\). Since $1 \le c \le  a-1$, it implies that $k \ne j$. Note also that \(\gamma < \beta_a\)
since  \(\lcmt{G}{j}\in \lm{\grs}{G}\). 
We deduce that \(\lcmt{G}{j,k} = x_c\lcmt{G}{j}\).

Note
that \(\lcmt{G}{k,i,j} = \operatorname{lcm}(\lcmt{G}{i,j},
\lcmt{G}{j,k}) \), with \(\lcmt{G}{i,j} = \kappa \tau
\lcmt{G}{j}\) and  \(\lcmt{G}{j,k} = x_{c} \lcmt{G}{j}\). The least common
multiple of these two is \(\kappa\tau \lcmt{G}{j}\), since
\(x_c\) divides  \(\tau\). This 
proves \hyperlink{T1}{\(\mathsf{T}_1\)}, provided that \(k
\not\in \{i,j\} \), which we prove now. 

Since $\gamma < \beta_{a}$ and $G$ is ordered by
$\succ_{\mathrm{ap}}$, we deduce that $k<j$.
Since $\left( \frac{\lcmt{G}{i,j}}{\lcmt{G}{j}}, g_j \right)$ is not 
\type\ and \(\lcmt{G}{j,k} = x_c\lcmt{G}{j}\), it holds that
$i \neq k$. This finishes the proof of
\hyperlink{T1}{\(\mathsf{T}_1\)}.

\smallskip
It remains to prove \hyperlink{T2}{\(\mathsf{T}_2\)}.
By the above paragraphs, the following inequalities hold 
\[
\deg(\lcmt{G}{i,j})  \geq 2 + \deg(\lcmt{G}{j}) > 1+
\deg(\lcmt{G}{j}) \geq \deg(\lcmt{G}{k,j}) \] 
which implies that $\syzy{i}{j} \succ_{\mathrm{syz}} \syzy{k}{j}$. 

Note also that \(
\lcmt{G}{i,k} = \operatorname{lcm}\left( \lcmt{G}{i} ,
\left (x_{c}\prod_{t =
1}^{a-1}x_{t}^{\beta_{t}}\right ) x_{a}^{\gamma}\right)\). 
By the definition of $c$, it follows that
\(\lcmt{G}{i,k}\) equals \( \operatorname{lcm}\left( \lcmt{G}{i} ,
\left (\prod_{t = 1}^{a-1}x_{t}^{\beta_{t}}\right ) x_{a}^{\gamma} \right) 
= x_{a}^{{\sigma}}\bm{\lambda}\)
with ${\sigma} =\max( \gamma, \alpha_{a})$.
Since, as established before, $\gamma < \beta_{a}$ and
$\alpha_{a} \leq \beta_{a}$, we 
deduce that $\lcmt{G}{i,k}$ divides $\bm{\lambda}x_a^{\beta_{a}}$.
It follows that $\lcmt{G}{i,k}$ divides $\lcmt{G}{i,j}$. We
already established that $j>k$.
Then, we deduce that $\syzy{i}{j} \succ_{\mathrm{syz}} \syzy{i}{k}$.
This finishes the proof of \hyperlink{T2}{\(\mathsf{T}_2\)}.

\section{On the shape of the monomial staircase}
\label{sec:shapemonomialstaircase}

This section is devoted to prove \cref{thm_card_bdg_deg_moitMB}, which 
provides an explicit description of certain leading monomials in a minimal 
$\grs$-Gröbner basis generated by $n$ generic polynomials. 

\subsection{Gr\"obner bases of truncated polynomials}
\label{subsection_Macaulay_matrices}

We start with an auxiliary result
(\cref{thm_semieregseq_macmat} below).

For a homogeneous polynomial $f\in \Kp{n}$, we define for $1 \leq j
\leq n-1$ the polynomial \(\truncfu{f}{j}\in \Kp{j}\) as the
polynomial obtained by instantiating its variables \(x_{j+1}, \ldots,
x_n\) to \(0\).
By convention $\truncfu{f}{n} = f$.
Observe that for all $1 \leq j \leq n$, the polynomial $\phi(f, j)$ is homogeneous and 
has the same degree as $f$.

For a sequence $\seqpol = (f_1, \ldots , f_\nbpol)$ of polynomials, we set 
\[ \truncfu{\seqpol}{j} = \left( \truncfu{f_1}{j}, \ldots , \truncfu{f_{\nbpol}}{j} \right) \]
 for $j$ in $\{1, \ldots n\}$.
We denote by $I_j(\seqpol)$ the ideal $\langle \truncfu{\seqpol}{j} \rangle $ of $\Kp{j}$ generated by the elements of $\truncfu{\seqpol}{j}$.

\begin{thm}
\label{thm_semieregseq_macmat}
Let $\seqpol=(f_1, \ldots , f_\nbpol)$ be a sequence of homogeneous polynomials of degree $\delta$ in $\Kp{n}$.
Let $\bdg{}{}{I}{\grs}$ be the reduced $\grs$-Gröbner basis of $I = \langle \seqpol \rangle$.
Let $j$ be an integer in $\{1, \ldots , n \}$. 

Then the reduced $\grs$-Gröbner basis $\bdg{}{}{I_j(\seqpol)}{\grs}$ of the ideal $I_j(\seqpol)$ in 
$\Kp{j}$ is the set:
\[ \{ \truncfu{g}{j} \mid g \in \bigsqcup_{i = 0}^{j} \bdg{\ast,}{i}{I}{\grs} \}. \]
\end{thm}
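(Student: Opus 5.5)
The plan is to compare leading monomial ideals: first show that $\lm{\grs}{I_j(\seqpol)} = \lm{\grs}{I} \cap \mon{}{j}$, and then read off the reduced Gröbner basis. Throughout, $\seqpol$ is homogeneous, so $I$ is homogeneous. The reduced $\grs$-Gröbner basis $\bdg{}{}{I}{\grs}$ of a homogeneous ideal then consists of homogeneous polynomials. (The degenerate case $\bdg{\ast,}{0}{I}{\grs}$, i.e.\ the constant $1$ when $I$ is the whole ring, is handled trivially.)

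\emph{Key grevlex fact.} Let $g$ be a nonzero homogeneous polynomial with $\lm{\grs}{g} \in \Kp{j}$. Then $\truncfu{g}{j} \neq 0$ and $\lm{\grs}{\truncfu{g}{j}} = \lm{\grs}{g}$. To see this, take a monomial $m \in \mon{}{j}$ and a monomial $m'$ of the same degree divisible by some $x_k$ with $k>j$. The last index at which their exponents differ is some $\ell \geq j+1$, and $m$ has the smaller exponent there. With the convention $x_1 \grs \cdots \grs x_n$, this gives $m \grs m'$. Hence the leading monomial of $g$ survives the truncation, since the truncation only deletes the terms divisible by $x_{j+1},\ldots,x_n$.

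I expect the only real care to be needed in this step: getting the grevlex comparison direction right with the paper's variable convention. The remaining steps are bookkeeping.

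\emph{Equality of leading monomial ideals.} Since $\phi(\cdot,j)$ is a surjective ring morphism $\Kp{n}\to\Kp{j}$, we have $I_j(\seqpol) = \phi(I,j)$, and it preserves homogeneity and degree.

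Let $h \in I_j(\seqpol)$ be nonzero and homogeneous. Pick a homogeneous $f \in I$ of the same degree with $\truncfu{f}{j} = h$. Then $f = h + r$, where every monomial of $r$ is divisible by some $x_k$ with $k>j$. By the comparison above, $\lm{\grs}{f} = \lm{\grs}{h}$, so $\lm{\grs}{h} \in \lm{\grs}{I} \cap \mon{}{j}$. Since $I_j(\seqpol)$ is homogeneous, this covers all of its leading monomials.

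Conversely, let $m = \lm{\grs}{f}$ with $f \in I$ homogeneous and $m \in \mon{}{j}$. The key fact gives $\lm{\grs}{\truncfu{f}{j}} = m$, and $\truncfu{f}{j} \in I_j(\seqpol)$.

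The minimal generators of the monomial ideal generated by $\lm{\grs}{I}\cap\mon{}{j}$ in $\Kp{j}$ are exactly the elements of $\bdgh{}{}{I}{\grs}$ lying in $\mon{}{j}$. This holds because any divisor of a monomial in $\mon{}{j}$ again lies in $\mon{}{j}$. These elements are precisely $\bigsqcup_{i\le j}\bdgh{\ast,}{i}{I}{\grs}$.

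\emph{Conclusion.} For $g \in \bigsqcup_{i\le j}\bdg{\ast,}{i}{I}{\grs}$, the polynomial $\truncfu{g}{j}$ lies in $I_j(\seqpol)$, and by the key fact it has leading monomial $\lm{\grs}{g}$ with leading coefficient $1$. By the previous step these leading monomials are exactly the minimal generators of $\lm{\grs}{I_j(\seqpol)}$, so the set is a minimal Gröbner basis of $I_j(\seqpol)$.

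For reducedness, every non-leading monomial of $\truncfu{g}{j}$ is a non-leading monomial of $g$ lying in $\mon{}{j}$. Because $\bdg{}{}{I}{\grs}$ is reduced, such a monomial lies in $\bdgb{}{}{I}{\grs}$, i.e.\ outside $\lm{\grs}{I}$. Hence it lies outside $\lm{\grs}{I}\cap\mon{}{j} = \lm{\grs}{I_j(\seqpol)}$. By uniqueness of the reduced Gröbner basis, the set in the statement is $\bdg{}{}{I_j(\seqpol)}{\grs}$.
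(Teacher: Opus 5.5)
Your proof is correct, but it takes a different route from the paper's. You work at the level of ideals. Truncation is a ring morphism, so $I_j(\seqpol)=\phi(I,j)$. Monomials of $\mon{d,}{j}$ dominate, for grevlex, every degree-$d$ monomial involving $x_{j+1},\ldots,x_n$, and this gives $\lm{\grs}{I_j(\seqpol)}=\lm{\grs}{I}\cap\mon{}{j}$. You then read off the reduced basis from the minimal generators and from the reducedness of $\bdg{}{}{I}{\grs}$. The one step you leave implicit, choosing the preimage $f$ homogeneous, is fine: take the degree-$d$ component of any preimage, which lies in $I$ and still truncates to $h$.

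The paper argues degree by degree with Macaulay matrices instead. \Cref{lemma_sub_Macaulay_matrix} uses the same grevlex dominance fact to show that the Macaulay matrix of the truncated system in $j-1$ variables is the top-left block of the one in $j$ variables, with zeros below. Hence the row reduced echelon form $E_{d,j}$ restricts to $E_{d,j-1}$. Both inclusions then come from transporting, through this block structure, the characterization in \cref{lemma_Macmat_echlo_contains_GB_homog}: a nonzero row of $E_e$ is in the reduced basis iff no row of any $E_d$ with $d<e$ has a leading monomial dividing its leading monomial.

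Your version is shorter and avoids the echelon-form bookkeeping, including the implicit iteration of the block structure from $n$ down to $j$. It also gives the equality of leading monomial sets directly, and with it the subsequent corollary on $\bdgb{d,}{j}{I}{\grs}$. The paper's version stays in the Macaulay-matrix language on which its algorithms and complexity estimates rest. It also shows concretely how echelon forms of truncated systems sit inside those of the full system.
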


The proof of \cref{thm_semieregseq_macmat} relies on
\cref{lemma_Macmat_echlo_contains_GB_homog} and
\cref{lemma_sub_Macaulay_matrix} below and the analysis of
properties of Macaulay matrices. 

Let $\seqpol$ be a finite subset of
$\Kp{n}$ ordered with an ordering $\succ_1$ and $T$ be a finite set of monomials ordered with an
ordering $\succ_2$.  The matrix
$\mac{\succ_1,}{\succ_2}{\seqpol}{T}{\succ_1}{\succ_2}$ is defined as follows:
its rows are indexed by the elements of $\seqpol$, and its
columns are indexed by the monomials in $T$ in decreasing
order. The entry at row $f$ in $\seqpol$ and column $t$ in
$T$ is the coefficient of the monomial $t$ in the polynomial
$f$. 
When the context already fixes the orderings, we use the notation $\mac{}{}{\seqpol}{T}{}{}$.

For the remainder of the document, we adopt the convention that the pivots of a 
matrix in row echelon or row reduced echelon form are positioned on the left.

\begin{remark}
\label{remark_pivot_equal_leadingmonomial}
Consider a row $L$ of a Macaulay matrix as a submatrix.
Remark that $L$ is in row echelon form. The matrix $L$
represents an associated polynomial $f$.  The pivot column
of $L$  corresponds to the leading monomial of $f$, with
respect to the monomial ordering used for the columns.
\end{remark}

Let $\seqpol = (f_1, \ldots , f_\nbpol)$ be a sequence of polynomials with \( \deg(f_i) = \delta\).
For $e$ in $\mathbb{N}$, as in \cref{lemma_generator_homog_space} we define the set
\[ \enspol{e}{n}{\seqpol} = \{ m\ f_i \mid i \in \{1, \ldots
, \nbpol \}, m \in \mon{e-\delta,}{n} \}, \] 
which is ordered using the order \(\succ_{\seqpol}\) defined below:
\[ m_1\ f_{i_1} \succ_{\seqpol} m_2\ f_{i_2} \quad
    \Longleftrightarrow \quad m_1 \grs m_2 \quad \text{or}  \quad
(m_1 = m_2 \quad \text{and} \quad i_1 > i_2). \]

For the remainder of this subsection, we assume that for all $j$ in $\mathbb{N}$ and
 $d$ in $\mathbb{N}$,
the set of monomials $\mon{d,}{j}$ is ordered with the grevlex ordering. 
In the same way, the set $\enspol{e}{n}{\seqpol}$ is ordered by $\succ_{\seqpol}$.

\begin{lemma}
\label{lemma_Macmat_echlo_contains_GB_homog}
Let $\seqpol = (f_1, \ldots, f_\nbpol)$ be a sequence of
homogeneous polynomials of degree $\delta$ in $\Kp{n}$, and
let $e\in\mathbb{N}$. Let $G=\bdg{}{}{I}{\grs}$ be 
the reduced $\grs$-Gröbner basis of the ideal $I = \langle
\seqpol \rangle$. 
We denote by $E_{e}$ the row reduced echelon form of the
matrix  \[
\mac{}{}{\enspol{e}{n}{\seqpol}}{\mon{e,}{n}}{\grs}{\succ_{\seqpol}}.
\] Then, the following assertions hold.
\begin{enumerate}
\item
The matrix $E_{e}$ contains, as rows, all polynomials $g$ in $G$ such that $\deg(g) = e$.
\item Let $g$ be a nonzero polynomial represented by a row
    of $E_{e}$.
    The polynomial $g$ lies in $G$ if and only
    if there is no polynomial $h$ which is represented by a
    row of $E_{d}$ with $d<e$ such that $\lmdrl{h}$ 
    divides $\lmdrl{g}$.
\end{enumerate}
\end{lemma}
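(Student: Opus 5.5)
The plan rests on two facts. By \cref{lemma_generator_homog_space}, the rows of $\mac{}{}{\enspol{e}{n}{\seqpol}}{\mon{e,}{n}}{\grs}{\succ_{\seqpol}}$ span $I \cap \Khp{n}{e}$, and row reduction does not change the row space. Hence the nonzero rows of $E_e$ form the unique reduced echelon basis of the vector space $V_e = I\cap \Khp{n}{e}$, with columns ordered decreasingly for $\grs$. By \cref{remark_pivot_equal_leadingmonomial}, their pivots are exactly the leading monomials of the rows. Two standard consequences follow. First, $\{\lmdrl{h} \mid h \in V_e\setminus\{0\}\}$ is exactly the set of pivot monomials of $E_e$. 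Second, each row $h$ of $E_e$ is the unique element of $V_e$ whose leading monomial is its pivot with coefficient $1$ and whose other monomials are not pivots of $E_e$. Since $I$ is homogeneous, $\lm{\grs}{I}\cap\mon{e,}{n}$ equals the set of pivots of $E_e$.

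For assertion~1, take $g\in G$ with $\deg(g)=e$. Because $I$ is homogeneous, the reduced Gröbner basis consists of homogeneous polynomials, so $g\in V_e$. It is monic, and no monomial of $g$ other than $\lmdrl{g}$ lies in $\lm{\grs}{I}$. In particular, no such monomial is a pivot of $E_e$. The row $h$ of $E_e$ with pivot $\lmdrl{g}$ satisfies the same properties, so $g-h\in V_e$ involves no pivot monomial. It must therefore be zero, because any nonzero element of $V_e$ has a pivot as its leading monomial. Hence $g=h$ is a row of $E_e$.

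For assertion~2, let $g$ be a nonzero row of $E_e$. For the forward direction, suppose $g\in G$ and some row $h$ of $E_d$, with $d<e$, has $\lmdrl{h}$ dividing $\lmdrl{g}$. Then $\lmdrl{h}\in\lm{\grs}{I}$ is a proper divisor of $\lmdrl{g}$, since the degrees differ. This contradicts the minimality of $G$.

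The converse is the delicate step. Assume that no pivot of any $E_d$ with $d<e$ divides $\lmdrl{g}$. I first show that $\lmdrl{g}$ is a minimal generator of $\lm{\grs}{I}$. Any proper divisor $m$ of $\lmdrl{g}$ lying in $\lm{\grs}{I}$ has some degree $d<e$ and is then a pivot of $E_d$, which is excluded. So some $g'\in G$ has $\lmdrl{g'}=\lmdrl{g}$, and $\deg(g')=e$. By assertion~1, $g'$ is the row of $E_e$ with that pivot, which is $g$ itself.

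I expect the only point needing care to be the use of homogeneity: that the reduced basis of a homogeneous ideal is homogeneous, and that $\lm{\grs}{I}$ in degree $d$ equals the pivots of $E_d$. Both follow because each homogeneous component of an element of $I$ lies in $I$, and because grevlex is degree-compatible.
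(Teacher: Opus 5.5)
Your proof is correct and follows essentially the same route as the paper. For assertion~1, you take the row of $E_e$ with pivot $\lmdrl{g}$ and show the difference vanishes because it has no pivot monomial (the paper instead splits cases on $\lmdrl{f-g}$, which amounts to the same thing). For assertion~2, you use minimality of the reduced basis in one direction, and in the other you identify the pivots of $E_d$ with $\lm{\grs}{I}\cap\mon{d,}{n}$ and then apply assertion~1, as the paper does. Your explicit use of homogeneity and degree-compatibility of grevlex for that identification makes precise a step the paper only states informally.
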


\begin{proof}
The first assertion is already claimed in 
\cite[Section 2]{lazard83}; for completeness, we provide 
an explicit proof in the appendix.

 We prove now the second assertion. 
 Let $g$ be a polynomial represented by a row of $E_{e}$. 
 Note that by \cref{lemma_generator_homog_space} and \cite[Section
 1.4, Theorem 5]{hoffman1971linear}, for all $d\in\mathbb{N}$, the
 rows of $E_d$ generate the $\K$-vector space of homogeneous
 polynomials of degree $d$ in $I$. Hence, for any \(h\) represented by a
row of  \(E_d\) for  \(d\le e\), it holds that  \(h\in I\).

$\left( \Longrightarrow \right)$ We assume that $g$ lies in $G$.
Now, we consider a polynomial \(h\) represented by a
row of  \(E_d\) for  \(d<e\).
Suppose, by contradiction, that 
$\lmdrl{h}$ divides $\lmdrl{g}$. As previously established,
the polynomial $h$ lies in $I$.
By the definition of Gröbner bases, there exists a polynomial $p$ in $G$
such that $\lmdrl{p}$ divides $\lmdrl{h}$. By the definition of the 
grevlex ordering, it holds that $\deg(p) \leq d$. We deduce that
the two polynomials $p$ and $g$ can not be equal and that 
$\lmdrl{p}$ divides $\lmdrl{g}$. This is a contradiction since 
both $p$ and $g$ lie in $G$, which is assumed to be reduced. Hence, such a polynomial $h$ can not
exist.

$\left( \Longleftarrow \right)$ Assume that there is no polynomial $h$ 
which is represented by a row of $E_d$ with $d< e$ such that  
$\lmdrl{h}$ divides $\lmdrl{g}$. Since 
the $\K$-vector space of homogeneous 
polynomials of degree $d$ in $I$ is exactly the one generated by
the rows of $E_d$, all the monomials that index the columns of pivots
of $E_d$ are all the possible leading monomials of degree $d$ in $I$.
By assumption, none of these leading monomials divide $\lmdrl{g}$ for all $d
< e$, while \(g\) lies in the  \(\K\)-vector space of all
homogeneous polynomials of degree \(e\). This implies that no leading monomial of a 
polynomial in $I$ divides $\lmdrl{g}$ other than $\lmdrl{g}$ itself.
We  
deduce that there exists a polynomial $f$ in $G$ with 
$\lmdrl{f} = \lmdrl{g}$. By the first assertion of the lemma, \(f\) is
represented by a row of $E_e$.
This row is the same as the one that represents $g$ since their leading monomials
are the same and $E_e$ is in row echelon form. Thus, $f = g$ and $g$ lies in $G$ as claimed.
\end{proof}

Let $\seqpol = (f_1, \ldots , f_\nbpol)$ be a sequence of polynomials.
For $d\in\mathbb{N}$ and $j\in\{1, \ldots , n\}$, we define the set
\[ \enspoltr{d}{j}{\seqpol} = \{ m\ \truncfu{f_i}{j} \mid i \in \{1, \ldots , \nbpol \}, m \in \mon{d-\delta,}{j} \}, \] 
which we order as follows:
\[ m_1\ \truncfu{f_{i_1}}{j} \succ_{\seqpol} m_2\ \truncfu{f_{i_2}}{j} \]
 if $m_1 \grs m_2$ or
$m_1 = m_2$ and $i_1 > i_2$.

For the remainder of the section, we use the ordering defined above to order the rows of Macaulay matrices.

\begin{lemma}
\label{lemma_sub_Macaulay_matrix}
Let $\seqpol= (f_1, \ldots , f_\nbpol)$ be a sequence of polynomials of degree \( \delta \) in $\Kp{n}$.
For fixed $d \geq 0$ and $j\in\{2, \ldots , n\}$, the Macaulay matrix $\mac{}{}{\enspoltr{d}{j-1}{\seqpol}}{\mon{d,}{j-1}}{\grs}{\succ_{\truncfu{\seqpol}{j-1}}}$ is a submatrix of $\mac{}{}{\enspoltr{d}{j}{\seqpol}}{\mon{d,}{j}}{\grs}{\succ_{\truncfu{\seqpol}{j}}}$
 in the following way:

\bigskip
\[
  \mac{}{}{\enspoltr{d}{j}{\seqpol}}{\mon{d,}{j}}{\grs}{\succ_{\truncfu{\seqpol}{j}}} =
  \begin{pNiceArray}[margin]{ccc|cccc}
   \Block{2-3}{\mac{}{}{\enspoltr{d}{j-1}{\seqpol}}{\mon{d,}{j-1}}{\grs}{\succ_{\truncfu{\seqpol}{j-1}}}} & & &  & & & \\
  & & & & & &\\
  0 ~~~~~~~~~~~~~& \cdots & ~~~~~~~~~~~~~0 & \Block{1-4}{  \ast   } & & & \\
  \vdots ~~~~~~~~~~~~~ &  & ~~~~~~~~~~~~~\vdots & & & & \\
  0 ~~~~~~~~~~~~~ & \cdots & ~~~~~~~~~~~~~0 & & & &
  \CodeAfter
   \OverBrace{1-1}{3-3}{\mon{d,}{j-1}}
   \UnderBrace{5-1}{5-7}{\mon{d,}{j}}
   \tikz \draw [black] (3-|1) -| (3-|4);
\end{pNiceArray} . 
 \]
\end{lemma}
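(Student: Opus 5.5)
The plan is to compare the two Macaulay matrices entry by entry, after fixing how their rows and columns are matched.

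\emph{Columns.} The columns of the larger matrix are indexed by $\mon{d,}{j}$ in decreasing grevlex order. With the convention $x_1 \grs \cdots \grs x_n$, every monomial of $\mon{d,}{j-1}$ (no $x_j$) is grevlex-larger than every monomial of degree $d$ divisible by $x_j$. So $\mon{d,}{j-1}$ occupies exactly the first $\sharp\mon{d,}{j-1}$ columns, in the same relative order as in the smaller matrix. This splits the columns into a left block indexed by $\mon{d,}{j-1}$ and a right block indexed by $\mon{d,}{j}\setminus\mon{d,}{j-1}$.

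\emph{Rows.} The rows of the larger matrix are $m\,\truncfu{f_i}{j}$ with $m\in\mon{d-\delta,}{j}$, ordered by $\succ_{\truncfu{\seqpol}{j}}$. By the same grevlex argument in degree $d-\delta$, the multipliers $m\in\mon{d-\delta,}{j-1}$ come first. Ties in $m$ are broken by the index $i$ in the same way for both orderings. Hence the rows with $m\in\mon{d-\delta,}{j-1}$ form an initial segment, in the same order as the rows $m\,\truncfu{f_i}{j-1}$ of the smaller matrix. This gives a bijection between the first rows of the larger matrix and all rows of the smaller one.

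\emph{Upper-left block.} Take a row $m\,\truncfu{f_i}{j}$ with $m\in\Kp{j-1}$, and a column $t\in\mon{d,}{j-1}$. The coefficient of $t$ in $m\,\truncfu{f_i}{j}$ equals the coefficient of $t$ in $\truncfu{m\,\truncfu{f_i}{j}}{j-1}$, since setting $x_j=0$ only removes monomials divisible by $x_j$. Because $m$ does not involve $x_j$, this truncation is $m\,\truncfu{\truncfu{f_i}{j}}{j-1}=m\,\truncfu{f_i}{j-1}$, using the composition property $\truncfu{\truncfu{f}{j}}{j-1}=\truncfu{f}{j-1}$. So this block equals the smaller Macaulay matrix.

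\emph{Lower-left block.} The remaining rows are $m\,\truncfu{f_i}{j}$ with $x_j\mid m$. Every monomial of such a row is divisible by $x_j$, so its entries in columns indexed by $\mon{d,}{j-1}$ vanish. This gives the zero block, and the two remaining blocks are arbitrary, which matches the $\ast$ in the statement.

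There is no real obstacle. The only point needing care is checking that the row ordering $\succ_{\truncfu{\seqpol}{j}}$ restricted to the first segment coincides with $\succ_{\truncfu{\seqpol}{j-1}}$. This holds because grevlex on $\mon{d-\delta,}{j-1}\subset\mon{d-\delta,}{j}$ is the restriction of grevlex on $\mon{d-\delta,}{j}$. It is also worth stating explicitly the degenerate case $d<\delta$, where both matrices have no rows and the claim is vacuous.
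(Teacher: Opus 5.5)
Your proof is correct and follows the paper's argument step by step. The grevlex comparison puts $\mon{d,}{j-1}$ first among the columns and the multipliers in $\mon{d-\delta,}{j-1}$ first among the rows. The upper-left block is identified through the substitution $x_j = 0$, using that $m$ does not involve $x_j$, and the lower-left block vanishes because those rows are divisible by $x_j$. Your explicit checks that the row ordering restricts correctly to the initial segment, and of the degenerate case $d<\delta$, are details the paper leaves implicit.
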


\bigskip

\begin{proof}First, the columns of the matrix $\mac{}{}{\enspoltr{d}{j}{\seqpol}}{\mon{d,}{j}}{\grs}{\succ_{\truncfu{\seqpol}{j}}}$ are indexed
by the elements of the set $\mon{d,}{j}$ and ordered according to the grevlex ordering.
Let $m_1$ be a monomial in $\mon{d,}{j-1}$ and $m_2$ be a monomial in $\mon{d,}{j} \setminus \mon{d,}{j-1}$.
The monomials $m_1$ and $m_2$ have same degree $d$ and the equality $\deg_{x_\ell}(m_1) = \deg_{x_\ell}(m_2) = 0$
holds for all $\ell>j$. Moreover, the inequality $\deg_{x_j}(m_1) < \deg_{x_j}(m_2)$ holds since $\deg_{x_j}(m_1) = 0$ and $\deg_{x_j}(m_2)>0$.
Thus, we obtain $m_1 \grs m_2$.
We deduce that $\mon{d,}{j-1}$ is a subset of $\mon{d,}{j}$ whose elements
are greater than any of those in $\mon{d,}{j} \setminus \mon{d,}{j-1}$.

Next, we define the following set of polynomials
\[ S = \{ m\ \truncfu{f_i}{j} \mid i \in \{1, \ldots , s \},\ m \in \mon{d-\delta,}{j-1} \}. \]
Let $m_1\ \truncfu{f_{i_1}}{j}$ be in $S$ and $m_2\ \truncfu{f_{i_2}}{j}$ be in 
$\enspoltr{d}{j}{\seqpol} \setminus S$. Since the monomial $m_1$ lies in $\mon{d-\delta,}{j-1}$
and $m_2$ lies in $\mon{d-\delta,}{j} \setminus \mon{d-\delta,}{j-1}$, we already established that $m_1 \grs m_2$. It follows that $m_1\ \truncfu{f_{i_1}}{j}$ is greater 
than $m_2\ \truncfu{f_{i_2}}{j}$ with respect to the order we use on $\enspoltr{d}{j}{\seqpol}$.
We deduce that $S$ is a subset of $\enspoltr{d}{j}{\seqpol}$ whose elements
are greater than any of those in $\enspoltr{d}{j}{\seqpol} \setminus
S$.

It remains to prove two more facts.

First, we prove that the submatrix defined by the rows indexed
by elements of $S$
and the columns indexed by element of $\mon{d,}{j-1}$ is the matrix $\mac{}{}{\enspoltr{d}{j-1}{\seqpol}}{\mon{d,}{j-1}}{\grs}{\succ_{\truncfu{\seqpol}{j-1}}}$.
As proved above, the sets $S$ and $\mon{d,}{j-1}$ are sets whose elements index the first
rows and columns of $\mac{}{}{\enspoltr{d}{j}{\seqpol}}{\mon{d,}{j}}{\grs}{\succ_{\truncfu{\seqpol}{j}}}$ respectively.
Let $m\ \truncfu{f_i}{j}$ be in $S$.
The polynomial defined by the row indexed by $m\ \truncfu{f_i}{j}$ and the 
columns indexed by elements of $\mon{d,}{j-1}$ is:
\begin{align*}
(m \times \truncfu{f_i}{j})(x_1,\ldots , x_{j-1}, 0 ) &= m \times
((\truncfu{f_i}{j})(x_1,\ldots , x_{j-1}, 0 ))\qquad  \text{(because } 
m \text{ lies in } \Kp{j-1}) \\
&= m \times \truncfu{f_i}{j-1}.
\end{align*}
Therefore, the matrix $\mac{}{}{\enspoltr{d}{j-1}{\seqpol}}{\mon{d,}{j-1}}{\grs}{\succ_{\truncfu{\seqpol}{j-1}}}$ appears as a submatrix
in the top-left corner of the Macaulay matrix $\mac{}{}{\enspoltr{d}{j}{\seqpol}}{\mon{d,}{j}}{\grs}{\succ_{\truncfu{\seqpol}{j}}}$.

Finally, we establish that all the rows below this submatrix
have zeroes in the entries corresponding to the columns indexed by $\mon{d,}{j-1}$.
Let $m\ \truncfu{f_i}{j}$ be one of these rows. Then, the monomial $m$ lies in the set $\mon{d-\delta,}{j} \setminus \mon{d-\delta,}{j-1}$.
This implies that every monomial in the support of $m\ \truncfu{f_i}{j}$ is divisible by $x_j$.
Therefore, its support does not intersect $\mon{d,}{j-1}$, and the corresponding entries are zero.
\end{proof}

Now, we can prove \cref{thm_semieregseq_macmat}.

\begin{proof}[Proof of \cref{thm_semieregseq_macmat}]
Let $j$ be in $\{1,\ldots ,n\}$ and $d$ in $\mathbb{N}$.
Using \cite[Section 1.4, Theorem 5]{hoffman1971linear}, there exists a square invertible matrix $U_{d,j}$ such that 
\[ U_{d,j}\mac{}{}{\enspoltr{d}{j}{\seqpol}}{\mon{d,}{j}}{}{} = E_{d,j} \] 
with $E_{d,j}$ being in row reduced echelon form.
Using the inclusion of matrices stated in
\cref{lemma_sub_Macaulay_matrix}, we consider the following matrix
product:

\[
 \begin{pNiceArray}[margin]{c|ccccc}
   \Block{2-1}{U_{d,j-1}}  & \Block{2-5}{\ast}&   & & & \\
  & & & & & \\
  \hline
  0 \cdots 0 & \Block{3-5}{  Id   } &  & & & \\
  \vdots~~~~~\vdots &  & & & & \\
  0 \cdots 0 & & & & &
\end{pNiceArray}
\begin{pNiceArray}[margin]{ccc|cccc}
   \Block{2-3}{\mac{}{}{\enspoltr{d}{j-1}{\seqpol}}{\mon{d,}{j-1}}{}{}} & & &  & & & \\
  & & & & & &\\
  0 ~~~~~~~~~~~~~& \cdots & ~~~~~~~~~~~~~0 & \Block{1-4}{  \ast   } & & & \\
  \vdots ~~~~~~~~~~~~~ &  & ~~~~~~~~~~~~~\vdots & & & & \\
  0 ~~~~~~~~~~~~~ & \cdots & ~~~~~~~~~~~~~0 & & & &
  \CodeAfter
   \tikz \draw [black] (3-|1) -| (3-|4);
\end{pNiceArray} \]
\[
 = \begin{pNiceArray}[margin]{ccc|cccc}
   \Block{2-3}{E_{d,j-1}} & &  &  & & & \\
  & & & & & &\\
  0 ~~~~~& \cdots & ~~~~~0 & \Block{1-4}{  \ast   } & & & \\
  \vdots ~~~~~ &  & ~~~~~\vdots & & & & \\
  0 ~~~~~ & \cdots & ~~~~~0 & & & &
  \CodeAfter
   \tikz \draw [black] (3-|1) -| (3-|4);
\end{pNiceArray} \]
\[
 = \begin{pNiceArray}[margin]{ccc|cccc}
   \Block{1-3}{E} & & & \Block{1-4}{  \ast   } & & & \\
  \hline
  \Block{1-3}{0 ~~~~~~ \cdots  ~~~~~~~0} & & & \Block{4-4}{  B   } & & & \\
  0 ~~~~~& \cdots & ~~~~~0 &  & & & \\
  \vdots ~~~~~ &  & ~~~~~\vdots & & & & \\
  0 ~~~~~ & \cdots & ~~~~~0 & & & &
  \CodeAfter
   \tikz \draw [black] (3-|1) -| (3-|4);
\end{pNiceArray} \]
where $E$ is the matrix formed by the nonzero rows of $E_{d,j-1}$.

Therefore, the row reduced echelon form $E_{d,j}$ 
of $\mac{}{}{\enspoltr{d}{j}{\seqpol}}{\mon{d,}{j}}{}{}$ can be expressed as follows:
\begin{align*}
 E_{d,j} = \begin{pNiceArray}[margin]{ccc|cccc}
   \Block{2-3}{E_{d,j-1}} & & &  & & & \\
  & & & & & &\\
  0 ~~~~~& \cdots & ~~~~~0 & \Block{1-4}{  \ast   } & & & \\
  \vdots ~~~~~ &  & ~~~~~\vdots & & & & \\
  0 ~~~~~ & \cdots & ~~~~~0 & & & &
  \CodeAfter
  \tikz \draw [black] (3-|1) -| (3-|4);
\end{pNiceArray}. \tag*{$(\ast)$}
\end{align*}

Let \(G_j\) denote the 
reduced $\grs$-Gröbner basis $\bdg{}{}{I_j(\seqpol)}{\grs}$. Now, we
establish that 
\[G_j =  \{ \truncfu{g}{j} \mid g \in \bigsqcup_{i = 0}^{j}
\bdg{\ast,}{i}{I}{\grs} \}. \]
$(\supseteq)$ Let $f = \truncfu{g}{j} $ with $g$ in $G = \bigsqcup_{i =
0}^{j} \bdg{\ast,}{i}{I}{\grs}$. We prove that $f$ lies in the set
$G_j$.  Consider the integer $D$ which is the
maximal degree of the elements in the reduced $\grs$-Gröbner basis of
$I$.  Since $g$ lies in the reduced $\grs$-Gröbner basis of $I$, it
follows from \cref{lemma_Macmat_echlo_contains_GB_homog} that there
exists  $e$ in $\{ \delta , \ldots , D \}$ such that $g$ is
represented by a row of $E_{e,n}$.
Since $g$ lies in
$\bigsqcup_{i = 0}^{j} G_{\ast,i}(I)$, then its leading monomial lies in $\Kp{j}$,
 which implies that $f$ cannot be zero.
Using $(\ast)$, we deduce that $f$ is a nonzero row of $E_{e,j}$.

By \cref{lemma_Macmat_echlo_contains_GB_homog}, $f$ lies in 
$G_j$ if and only if there is no
polynomial $h$ which is represented by a row of $E_{d,j}$ with $d< e$ such 
that $\lm{\grs}{h}$ divides $\lm{\grs}{f}$.
Assume by contradiction that such a $h$ exists. Using $(\ast)$, there
exists a polynomial $\eta$ which is represented by a row of $E_{d,n}$ and
which is such that $h = \truncfu{\eta}{j}$.  Since $\lm{\grs}{\eta}$ and
$\lm{\grs}{g}$ lie $\Kp{j}$, remark that 
\[ \lm{\grs}{\eta} = \lm{\grs}{h} \quad  \text{and that} \quad \lm{\grs}{g} = \lm{\grs}{f}. \]

Since $\lm{\grs}{h}$ divides $\lm{\grs}{f}$, we deduce that
$\lm{\grs}{\eta}$ divides $\lm{\grs}{g}$. 
Since $d< e$, by the second assumption of \cref{lemma_Macmat_echlo_contains_GB_homog}
 this contradicts the fact that $g$ lies in $G$. Thus, such
 a polynomial $h$
 can not exist. It follows that $f$ lies in $G_j$.  

\medskip

$(\subseteq)$ Let $f$ be in $G_j$. We prove
 that there exists a polynomial $g$ in $\bigsqcup_{i = 0}^{j} \bdg{\ast,}{i}{I}{\grs}$ such that $f = \truncfu{g}{j}$.

Consider the integer $D_j$ which is the maximal degree of the elements 
in \(G_j\).
Using \cref{lemma_Macmat_echlo_contains_GB_homog}, there exists 
$e$ in $\{ \delta , \dots , D_j \}$ such that $f$ is a row of $E_{e,j}$.
By $(\ast)$, $f$ is of the form $\truncfu{g}{j}$ where $g$ is a nonzero row of $E_{e,n}$.
Using \cref{remark_pivot_equal_leadingmonomial} and $(\ast)$, we have that
 \[ \lm{\grs}{g} = \lm{\grs}{\truncfu{g}{j}} = \lm{\grs}{f}.\]

In order to prove that $g$ lies in $G$, we show that there does not exist a polynomial $h$
which is represented by a row of $E_{d,n}$ with $d< e$ such that
$\lm{\grs}{h}$ divides $\lm{\grs}{g}$.

Assume by contradiction that there exists $d$ in $\{ \delta , \dots ,
e - 1 \}$ such that the leading monomial of some polynomial $h$
represented by a row of $E_{d,n}$
divides $\lm{\grs}{g}$.  Since $\lm{\grs}{g}$ is in $\Kp{j}$, the
monomial $\lm{\grs}{h}$ lies in $\Kp{j}$.  Therefore, the polynomial
$\truncfu{h}{j}$ is a nonzero row of $E_{d,j}$ with
$\lm{\grs}{\truncfu{h}{j}}$ dividing $\lm{\grs}{f}$. Since $f\in G_j$, by
\cref{lemma_Macmat_echlo_contains_GB_homog} this is a contradiction. 
Thus, such a polynomial $h$ does no exist. It follows by
\cref{lemma_Macmat_echlo_contains_GB_homog}
that $g$ lies in $G$. Moreover, since $\lm{\grs}{g}$ lies in $\Kp{j}$,
the polynomial $g$ lies in $\bigsqcup_{i = 0}^{j} \bdg{\ast,}{i}{I}{\grs}$. 
\end{proof}

The following corollary is a direct consequence of \cref{thm_semieregseq_macmat}. 

\begin{cor}
\label{cor_lien_mono_base_quotient_tronqué}
Let $\seqpol=(f_1, \ldots , f_\nbpol)$ be a sequence of homogeneous polynomials of degree $\delta$ in $\Kp{n}$. Let $d$ be in $\mathbb{N}$ and $j$ be in $\{1, \ldots , n\}$.
We set $I = \langle \seqpol \rangle$. Then we have
\[ \bdgb{d,}{j}{I}{\grs}  = \bdgb{d,}{j}{I_j(\seqpol)}{\grs}. \]
\end{cor}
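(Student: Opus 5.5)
The plan is to derive the corollary from \cref{thm_semieregseq_macmat} by comparing leading monomials, after fixing conventions. Here $\bdgb{d,}{j}{I}{\grs}$ is the set of monomials of $\Kp{j}$ of degree $d$ that are not in $\lm{\grs}{I}$. For the truncated ideal $I_j(\seqpol)\subset \Kp{j}$, the set $\bdgb{d,}{j}{I_j(\seqpol)}{\grs}$ is the set of degree $d$ monomials of $\Kp{j}$ that are not in $\lm{\grs}{I_j(\seqpol)}$. So it suffices to prove that a monomial $m\in\mon{d,}{j}$ lies in $\lm{\grs}{I}$ if and only if it lies in $\lm{\grs}{I_j(\seqpol)}$.

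For both directions, use that a monomial is a leading monomial of an ideal exactly when it is divisible by the leading monomial of some element of the reduced Gröbner basis. Let $m\in\mon{d,}{j}$.

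Suppose first that $m\in\lm{\grs}{I}$. Then some $g\in\bdg{}{}{I}{\grs}$ satisfies $\lm{\grs}{g}\mid m$. Since $m\in\Kp{j}$, its divisor $\lm{\grs}{g}$ also lies in $\Kp{j}$, so $g\in\bigsqcup_{i\le j}\bdg{\ast,}{i}{I}{\grs}$. By \cref{thm_semieregseq_macmat}, $\truncfu{g}{j}$ is an element of $\bdg{}{}{I_j(\seqpol)}{\grs}$.

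The key step is the identity $\lm{\grs}{\truncfu{g}{j}}=\lm{\grs}{g}$ whenever $\lm{\grs}{g}\in\Kp{j}$. To justify it, note that $g$ is homogeneous, which follows from the row echelon description in \cref{lemma_Macmat_echlo_contains_GB_homog}. Under grevlex, any monomial of the same degree involving some $x_t$ with $t>j$ is smaller than every monomial of $\Kp{j}$. This is the argument already used in the proof of \cref{lemma_sub_Macaulay_matrix}, iterated over the variables $x_{j+1},\ldots,x_n$. Setting $x_{j+1},\ldots,x_n$ to $0$ therefore keeps the leading term of $g$, and the identity holds. Consequently $\lm{\grs}{\truncfu{g}{j}}$ divides $m$, so $m\in\lm{\grs}{I_j(\seqpol)}$.

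Conversely, suppose that $m\in\lm{\grs}{I_j(\seqpol)}$. Then some element of $\bdg{}{}{I_j(\seqpol)}{\grs}$ has leading monomial dividing $m$. By \cref{thm_semieregseq_macmat} this element is $\truncfu{g}{j}$ for some $g\in\bigsqcup_{i\le j}\bdg{\ast,}{i}{I}{\grs}$. By the same identity, $\lm{\grs}{g}=\lm{\grs}{\truncfu{g}{j}}$ divides $m$, so $m\in\lm{\grs}{I}$.

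Intersecting with degree $d$ monomials of $\Kp{j}$ then gives the stated equality. The only delicate point is the leading monomial identity above, and it is routine given the grevlex argument from \cref{lemma_sub_Macaulay_matrix}.
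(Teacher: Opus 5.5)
Your proof is correct and is essentially the paper's argument. Both characterize the degree-$d$ monomials of $\Kp{j}$ lying in $\lm{\grs}{I}$ as multiples of leading monomials of reduced Gröbner basis elements of index at most $j$, then invoke \cref{thm_semieregseq_macmat} to identify these with the leading monomials of the reduced Gröbner basis of $I_j(\seqpol)$. You also spell out the identity $\lm{\grs}{\truncfu{g}{j}}=\lm{\grs}{g}$, which the paper uses tacitly; it holds for a simpler reason than the grevlex comparison you invoke, since truncation only deletes terms of $g$ and keeps $\lm{\grs}{g}\in\Kp{j}$.
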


\begin{proof}
By definition we have that 
\begin{align*}
\bdgb{d,}{j}{I}{\grs} &=  \mon{d,}{j} \setminus \left( \mon{d,}{j} \cap \langle \bigsqcup_{i = 0}^j\bdgh{\ast,}{i}{I}{\grs} \rangle \right) \\
 &= \mon{d,}{j} \setminus \left( \mon{d,}{j} \cap \langle \bigsqcup_{i = 0}^j\bdgh{\ast,}{i}{I_{j}(\seqpol)}{\grs} \rangle \right) \tag{Using \cref{thm_semieregseq_macmat}}\\
 &= \bdgb{d,}{j}{I_j(\seqpol)}{\grs}. \tag{\text{by definition}}
\end{align*}
\end{proof}

\subsection{The last dimension of the generic monomial staircase}
\label{subsectionThelastdimensionofthegeneriquemonomialstaircase}

This subsection is devoted to prove \cref{thm_card_bdg_deg_moitMB}.  This
theorem is a specific instance of \cite[Corollary
3.2]{moreno2003degrevlex}, which characterizes the generic shape of
the monomial staircase of an ideal generated by a sequence of $n$
homogeneous polynomials.  We restate this result in a sharper form,
since the original statement is more general than the situation we
consider here. Moreover, the original theorem has 
several conclusions, which are presented in two different versions in
\cite[Corollary 3.2]{moreno2003degrevlex}. One of the key differences
is that in our setting all polynomials have the same degree. Without
this assumption, further subtleties arise, as discussed in
\cite{moreno2003degrevlex}. Moreover, we consider specifically the
case where the number of generating
polynomials equals the number of variables.
Note that this subsection only concerns homogeneous polynomials.
 For a given pair
$(\delta, n)$ of integers, we denote by $\zarimore{\delta}{n}$ the Zariski open set
of sequences $(f_1, \ldots, f_n)$ of homogeneous polynomials of degree
$\delta$ in $\Kp{n}$ that satisfy the properties stated in \cite[Corollary
3.2]{moreno2003degrevlex}.  The fact that the set
$\zarimore{\delta}{n}$ is Zariski dense relies on \cite[Conjectures 1.2
and 1.6]{moreno2003degrevlex}.  We first recall some notions.

As in \cite[Definition 2.1]{pardue2009syzygies}, for a series \(\sum_{d = 0}^{+\infty} a_d\),  
we define  
\[
\left[ \sum_{d = 0}^{+\infty} a_d \right]_+ = \sum_{d = 0}^{+\infty} b_d,
\]
where
\[
\begin{cases}
b_d = a_d & \text{if } a_i > 0 \text{ for all } i \leq d, \\
b_d = 0 & \text{otherwise}.
\end{cases}
\]

Let $n\in \mathbb{N}$ and let $I$ be an ideal of $\Kp{n}$ generated by homogeneous polynomials. 
We recall that its Hilbert series is defined by
\[
    \hilse{I}{z} = \sum_{d = 0}^{+ \infty} \beta_d z^d 
\]
where $\beta_d = \card{\bdgb{d,}{n}{I}{\grs}}$.
Strictly speaking, this is the Hilbert series of the quotient ring $\Kp{n}/I$, 
but we use the above notation and the expression ``Hilbert series of $I$'' by abuse of language.

We recall that for an integer \( d \), we denote by \( \Khp{n}{d} \) the \( \K \)-vector space generated by all homogeneous polynomials of degree \( d \)
in \( \Kp{n} \).
Let $n,\delta$ and $\nbpol$ be three integers, a sequence $\seqpol = (f_1, \ldots ,f_{\nbpol})$ 
of polynomials in $\Khp{n}{\delta}$ is said to be semi-regular if 
\[ \hilse{\langle \seqpol \rangle}{z} = \left[ \frac{(1-z^{\delta})^{\nbpol}}{(1-z)^n} \right]_+ .\]

We denote by $\zarisemi{\delta}{n}{\nbpol}$ the set of all semi-regular sequences $(f_1, \ldots ,f_{\nbpol})$
 of polynomials in $\Khp{n}{\delta}$.
If $\nbpol \leq n$, then the set $\zarisemi{\delta}{n}{\nbpol}$ is 
the set of all regular sequences $(f_1, \ldots ,f_{\nbpol})$
 of polynomials in $\Khp{n}{\delta}$ \cite[Proposition 1]{pardue2010generic}. In that case,
the set $\zarisemi{\delta}{n}{\nbpol}$ is known to be Zariski dense. Conversely,
if $\nbpol > n$, the statement that $\zarisemi{\delta}{n}{\nbpol}$ is Zariski dense
is \cite[Conjecture 1.2]{moreno2003degrevlex}, which was first exposed as 
\cite[Conjecture 1.1]{froberg1994hilbert}. 

\begin{thm}[Variant of {\cite[Corollary 3.2]{moreno2003degrevlex}}]
\label{thm_card_bdg_deg_moitMB}
Let $\delta \geq 1$ and $n \geq 2$ be integers. Let $\seqpol$ be in $\zarimore{\delta}{n}$
and set $I = \langle \seqpol \rangle$. 
Assume that $\truncfu{\seqpol}{n-1}$ lies in $\mathcal{S}_{\delta,n-1,n}$. 
We define $\dregmb = n(\delta-1)+1$ and $\dregprev = \left\lfloor \frac{\dregmb-1}{2} \right\rfloor + 1$. 
We also dfine the coefficient $\beta_d$ as $\left[ \frac{(1-z^{\delta})^n}{(1-z)^{n-1}} \right]_+ = \sum_{d = 0}^{\dregprev-1} \beta_d z^d$.
Then for all integer $d$ in $\{ \dregprev +1, \ldots , \dregmb \}$ we have that 
\begin{enumerate}[1)]
\item $\bdgh{d,}{n}{I}{\grs} = \{ x_n^{2d-\dregmb} m \mid m \in \bdgb{\dregmb-d,}{n-1}{I}{\grs} \}$
\item $\sharp (\bdg{d,}{n}{I}{\grs}) = \beta_{\dregmb-d}$.
\end{enumerate}
\end{thm}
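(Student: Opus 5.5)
The plan is to describe $\bdgh{\ast,}{n}{I}{\grs}$ through the bijection of \cref{thm_bijection_monomials} with $\ell=n$, and then to count it using \cref{cor_lien_mono_base_quotient_tronqué} together with the semi-regularity of $\truncfu{\seqpol}{n-1}$. By \cref{remark_fillingup_implies_stable}, or directly from the properties encoded in $\zarimore{\delta}{n}$, the ideal $\lm{\grs}{I}$ is stable. Since $\seqpol\in\zarimore{\delta}{n}$ is in particular a regular sequence of $n$ forms in $n$ variables, $I$ is zero-dimensional. Hence \cref{thm_bijection_monomials} applies and gives a bijection
\[
\psi_n : \bdgb{}{n-1}{I}{\grs}\to \bdgh{\ast,}{n}{I}{\grs},\qquad m\mapsto m\,x_n^{\alpha_{m,n}}.
\]
Everything then reduces to computing $\alpha_{m,n}$ as a function of $e=\deg(m)$.

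The first step controls the domain. By \cref{cor_lien_mono_base_quotient_tronqué}, $\bdgb{e,}{n-1}{I}{\grs}=\bdgb{e,}{n-1}{I_{n-1}(\seqpol)}{\grs}$. Since $\truncfu{\seqpol}{n-1}\in\zarisemi{\delta}{n-1}{n}$, the cardinality of this set is the coefficient $\beta_e$ of $\left[(1-z^\delta)^n/(1-z)^{n-1}\right]_+$. By the symmetry of the coefficients of $(1-z^\delta)^n/(1-z)^{n-1}$, this truncated series vanishes from degree $\dregprev$ on. Thus $\bdgb{}{n-1}{I}{\grs}$ consists of monomials of degree $e\le \dregprev-1$, and $\sharp\,\bdgb{e,}{n-1}{I}{\grs}=\beta_e$.

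The second step, which I expect to be the main obstacle, is to extract from \cite[Corollary~3.2]{moreno2003degrevlex} (the defining property of $\zarimore{\delta}{n}$) the exact exponent. The target is: for $m\in\bdgb{e,}{n-1}{I}{\grs}$ with $\dregmb-e\ge \dregprev+1$, one has $\alpha_{m,n}=\dregmb-2e$. For the remaining $m$, namely those with $\dregmb-\dregprev-1< e\le\dregprev-1$, one needs $\deg\psi_n(m)=e+\alpha_{m,n}\le \dregprev$. I would first try to read both facts directly off Moreno-Socías' description of the degrevlex staircase of a generic complete intersection: the part of the staircase lying over $\bdgb{}{n-1}{I}{\grs}$ is bounded by the ``reflection'' $e\mapsto \dregmb-e$, which comes from the symmetry of the Hilbert function of $\Kp{n}/I$, whose socle degree is $\dregmb-1$. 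The key check is that $x_n^{k}m\notin\lm{\grs}{I}$ exactly for $k<\dregmb-2e$ in the low range of $e$. If the citation does not give this verbatim, I would fall back on counting. Compare $\sharp\,\bdgb{d,}{n}{I}{\grs}=[z^d](1-z^\delta)^n/(1-z)^n$ with the decomposition $\bdgb{d,}{n}{I}{\grs}=\bigsqcup_{k\ge0}\{x_n^k m : m\in\bdgb{d-k,}{n-1}{I}{\grs},\ k<\alpha_{m,n}\}$. Then use the stability of $\lm{\grs}{I}$ and the fact that $\alpha_{m,n}$ is nonincreasing under divisibility, together with the grevlex ``fill-up'' property, to pin down $\alpha_{m,n}$ degree by degree from $d=\dregmb$ downward.

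With this in hand, the conclusion follows. Fix $d\in\{\dregprev+1,\ldots,\dregmb\}$ and let $\widetilde m\in\bdgh{d,}{n}{I}{\grs}$. By surjectivity of $\psi_n$, write $\widetilde m=\psi_n(m)$. By the second step, $m$ must lie in the low range, so $d=e+(\dregmb-2e)=\dregmb-e$, giving $e=\dregmb-d$ and $\widetilde m=x_n^{2d-\dregmb}m$. Conversely, each $m\in\bdgb{\dregmb-d,}{n-1}{I}{\grs}$ is mapped by $\psi_n$ to exactly this monomial. This proves 1). Item 2) follows from the injectivity of $\psi_n$, the count $\sharp\,\bdgb{\dregmb-d,}{n-1}{I}{\grs}=\beta_{\dregmb-d}$ from the first step, and the fact that the elements of $\bdg{d,}{n}{I}{\grs}$ are in bijection with their leading monomials.
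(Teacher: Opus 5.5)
There is a genuine gap. Your argument runs entirely through $\psi_n$ from \cref{thm_bijection_monomials}. For the inclusion ``$\supseteq$'' in 1) and for 2) you need that $\psi_n(m)=m\,x_n^{\alpha_{m,n}}$ is a \emph{minimal} generator of $\lm{\grs}{I}$. That is exactly the place where the theorem uses stability (Case~2 of its proof); injectivity and surjectivity do not need it.

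The statement only assumes $\seqpol\in\zarimore{\delta}{n}$ and semi-regularity of $\truncfu{\seqpol}{n-1}$. But \cref{remark_fillingup_implies_stable} gives stability only for sequences in $\zarifilling{\delta}{n}{n}$, and your alternative ``directly from the properties encoded in $\zarimore{\delta}{n}$'' is not substantiated.

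Moreover, your Step~2, which carries all the content, is not proved. The first option amounts to citing the result in a form you have not checked. The fallback invokes the grevlex fill-up property, which is again membership in $\zarifilling{\delta}{n}{n}$ and is not assumed.

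Under that extra hypothesis the fallback can in fact be completed. Write $P=(1-z^\delta)^n/(1-z)^{n-1}=\sum_d p_d z^d$. Stability and regularity give $\sum_{m\in\bdgb{}{n-1}{I}{\grs}} z^{\deg m+\alpha_{m,n}}=[P]_+-P$, and the antisymmetry $p_d=-p_{\dregmb-d}$ yields 2). Fill-up forces every generator of degree $d$ to have $x_n$-degree at most that of every standard monomial of degree $d$. So the $m$ with $\deg m+\alpha_{m,n}=d$ are the remaining ones of largest degree. Induction on $d$ with the matching counts $\beta_e$ then gives $\alpha_{m,n}=\dregmb-2\deg m$, hence 1). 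But this proves the theorem only under a hypothesis the statement does not contain.

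The paper needs neither stability nor $\psi_n$. It extracts from the second formulation of \cite[Corollary~3.2]{moreno2003degrevlex} a statement about generators, not about $\alpha_{m,n}$ for every standard $m$. For $d\in\{\dregprev+1,\ldots,\dregmb\}$, every element of $\bdgh{d,}{\ast}{I}{\grs}$ has $x_n$-degree $2d-\dregmb\ge 2$, hence lies in $\bdgh{d,}{n}{I}{\grs}$. Moreover $\sharp\,\bdgh{d,}{n}{I}{\grs}=\beta_{\dregmb-d}$, which is 2).

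Then any $\widetilde m\in\bdgh{d,}{n}{I}{\grs}$ equals $x_n^{2d-\dregmb}m$ with $m\in\Kp{n-1}$ of degree $\dregmb-d$. This $m$ is standard because it is a proper divisor of a minimal generator, which gives ``$\subseteq$'' in 1). Your Step~1 count $\sharp\,\bdgb{\dregmb-d,}{n-1}{I}{\grs}=\beta_{\dregmb-d}$ upgrades it to equality. If you are going to lean on the citation anyway, this is the argument to use.
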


To justify our formulation of this theorem, we relate our notations to those of  
\cite[Corollary 3.2]{moreno2003degrevlex}.  
An homogeneous ideal $I$ is zero-dimensional if and only if $\hilse{I}{z}$ 
is a polynomial in $z$. 
In this case, the degree of regularity of $I$ is equal to 
$\deg(\hilse{I}{z}) + 1$. 
The integers $\dregprev$ and $\dregmb$ correspond respectively to  
$\overline{\sigma}$ and $\overline{\delta}$ in \cite[Corollary 3.2]{moreno2003degrevlex}. As explained in the aforementioned corollary,
the integer $\dregprev$ denotes the degree of regularity 
of an ideal generated by a sequence in $\zarisemi{\delta}{n-1}{n}$.

Furthermore, we make explicit the definition of the integer $\mu$ defined in \cite[Section 2]{moreno2003degrevlex}, which is given by  
\[
\mu = \dregmb-1 - 2(\dregprev-1) = \dregmb - 2\left\lfloor \frac{\dregmb-1}{2} \right\rfloor -1.
\]
This implies, in our case, that $\mu \in \{0,1\}$.

The second formulation of \cite[Corollary 3.2]{moreno2003degrevlex} states that for any degree 
$d$ which lies in $\{ \dregprev + 1, \ldots , \dregmb\}$, the elements of $\bdgh{d,}{\ast}{I}{\grs}$ have degree in $x_n$ equal to $2d-\dregmb$, and moreover  
\[
\card{\bdgh{d,}{n}{I}{\grs}} = \beta_{\dregmb-d}.
\]  
It also follows that for such integers $d$, we have $\bdgh{d,}{\ast}{I}{\grs} = \bdgh{d,}{n}{I}{\grs}$ since $2d-\dregmb \geq 2$.  
This proves the second assertion 
of \cref{thm_card_bdg_deg_moitMB}. We now turn to the first assertion.  

We have already established that  
\[
\bdgh{d,}{n}{I}{\grs} \subseteq \{ x_n^{2d-\dregmb} m \;\mid\; m \in \bdgb{\dregmb-d,}{n-1}{I}{\grs} \}.
\]  
In addition, we know that $\card{\bdgh{d,}{n}{I}{\grs}} = \beta_{\dregmb-d}$.  
Thus, it remains to show that  
\[
\card{\{ x_n^{2d-\dregmb} m \;\mid\; m \in \bdgb{\dregmb-d,}{n-1}{I}{\grs} \}} = \beta_{\dregmb-d}.
\]
  
It appears that the condition $\seqpol \in \zarimore{\delta}{n}$ already implies that  
$\truncfu{\seqpol}{n-1} \in \mathcal{S}_{\delta,n-1,n}$.  
Nevertheless, we provide an additional explanation using \cref{cor_lien_mono_base_quotient_tronqué}.  

\begin{align*}
\card{\{ x_n^{2d-\dregmb} m \;\mid\; m \in \bdgb{\dregmb-d,}{n-1}{I}{\grs} \}} &= \card{\bdgb{\dregmb-d,}{n-1}{I}{\grs}} \\
&= \card{\bdgb{\dregmb-d,}{n-1}{I_{n-1}(\seqpol)}{\grs}} \tag{using \cref{cor_lien_mono_base_quotient_tronqué}} \\ 
  &= \beta_{\dregmb-d}. \tag{since $\truncfu{\seqpol}{n-1}$ lies in $\mathcal{S}_{\delta,n-1,n}$}
\end{align*}
It follows that  
\[
\bdgh{d,}{n}{I}{\grs} = \{ x_n^{2d-\dregmb} m \;\mid\; m \in \bdgb{\dregmb-d,}{n-1}{I}{\grs} \},
\]
as claimed.

\section{Algorithms and their properties}
\label{sec:Algorithmsandtheirproperties}

In \cref{subsection_Overview_of_the_algorithm}, we provide an overall description of a variant of the F4 algorithm.
As emphasized in the introduction, this variant combines Buchberger's criterion with row echelon form computations 
of Macaulay matrices to compute Gr\"obner bases. Note that these matrices may not have full rank. After computing 
their row echelon forms, the polynomials corresponding to the nonzero rows are added to the current basis. The set
of all leading monomials obtained during the computation forms what we call the Gröbner trace 
associated with the Gr\"obner basis computation.

\cref{sec:F4B_F4T} introduces first 
\algoName{algo:F4B} which is essentially the
F4 algorithm augmented with steps that store some data in some object
\(\mathcal{T}\), which we call a \emph{trace}.
This trace gives in particular some rank information that
enables us to locate full rank submatrices of the matrices considered by F4. 
This allows us to derive \algoName{algo:F4T} which can then be used to
compute Gr\"obner bases for systems of polynomial equations that 
are compatible with the trace. 
Finally, \cref{ssec:algo:properties} establishes structural properties
of these algorithms. 

\subsection{Overall description of the F4 algorithm}
\label{subsection_Overview_of_the_algorithm}

We start by recalling the basic principles of the F4 algorithm from
\cite{faugere1999new}.
Let $(f,g)$ be a pair of nonzero polynomials in $\Kp{n}$, and let 
$t = \operatorname{lcm}(\lm{\grs}{f},\lm{\grs}{g})$.
 As in \cite[Definition~2.5]{faugere1999new}, we define 
 the pair associated to $(f,g)$ by
\[ \pair{f}{g} = \left( t,\frac{t}{\lm{\grs}{f}},f,\frac{t}{\lm{\grs}{g}},g \right). \]
The degree of the pair $\pair{f}{g}$ is $\deg(t)$ by definition.
Moreover, we denote the associated S-polynomial by
\[ \Spol{f}{g}{} = \frac{t}{\lt{\grs}{f}}f - \frac{t}{\lt{\grs}{g}}g. \]

The termination of the algorithm is based on the following statement.
Let $I$ be an ideal of $\Kp{n}$ generated by the set 
$G = \{g_1, \ldots , g_{\nbgb}\}$. 
As shown in \cite[Theorem~6, Section~6, Chapter~2]{cox94} and 
\cite{buchberger1965algorithmus}, 
$G$ forms a $\grs$-Gröbner basis of $I$  
if and only if for every pair $\{i,j\}$, 
the remainder obtained when dividing $\Spol{g_i}{g_j}$ by $G$ is zero.
This division is also called reduction, and the polynomials of \(G\) used
in this division are called reductors.
This division is a multivariate version of the Euclidean division using 
a monomial ordering. It is described in \cite[Section~3, Chapter~2]{cox94}.  
However, the construction of the set of reductors involves some choices.

Starting from a sequence $\seqpol$ of polynomials of $\Kp{n}$, the F4
algorithm constructs 
the set $\explpairs$ of all pairs, and
it reduces these pairs.
In Buchberger's algorithm, these reductions are performed sequentially, 
handling one pair at a time. 
By contrast, the F4 algorithm performs several reductions simultaneously. 
To determine which pairs will be reduced at the same step, 
a \emph{selection function} $\Sel{\cdot}$ is introduced. 
Given a set of pairs $\setpair$, the function $\Sel{\setpair}$ 
returns a subset of $\setpair$, with
$\Sel{\setpair} = \emptyset$ if and only if $\setpair = \emptyset$.
Further, we choose a specific selection function that filters pairs by degree:
\[ \Sel{\setpair} = \{ p \in \setpair \mid \deg(p) = \min_{\rho \in
\setpair} \left( \deg(\rho) \right) \}.\] 

The algorithm then selects the pairs in $\Sel{\explpairs}$ and removes
them from $\explpairs$. Next it reduces them using Macaulay matrices
combined with Gaussian elimination. The resulting remainders are
polynomials that may be zero: the nonzero ones are added to $\seqpol$
and all newly generated pairs are added to $\explpairs$.  When
$\explpairs$ is empty, $\seqpol$ is a $\grs$-Gröbner basis of the
ideal $I$.

Let us now examine how the $S$-polynomials are reduced. Consider the set $\explpairs_1 = \Sel{\explpairs}$ at a given step in the algorithm. Define the set of $S$-polynomials:
\[
\explSp = \{ \Spol{f}{g} \mid p_{f,g} \in \explpairs_1 \}.
\]
A subroutine called $\mathsf{SymbolicPreprocessing}$ constructs a set $\explred$ of reductors,
 each of the form $mh$ where $m$ is a monomial and $h$ is in the current basis. 
First introduced in \cite[Section~2.3]{faugere1999new}, this subroutine checks, for each monomial $\widetilde{m} \in \supo{\explSp}$, whether $\widetilde{m}$ lies in the monomial ideal $\langle \lm{\grs}{\seqpol} \rangle$. 
If it does, then there exist a monomial $m$ and a polynomial $h \in \seqpol$ such that 
$\lm{\grs}{mh} = \widetilde{m}$. Thus, the polynomial $mh$ is inserted in $\explred$
as a reductor.
Moreover, for each reductor $mh$ added to $\explred$, the subroutine 
$\mathsf{SymbolicPreprocessing}$ goes through the same procedure as with $\widetilde{m}$ 
with any monomial in $\supo{mh}$.
This process is guaranteed to terminate \cite[Theorem 2.2]{faugere1999new}.
One can notice that there is flexibility in choosing the reductors. In our 
analysis, we do not favor any particular selection strategy.

At this stage of the algorithm, we thus have
a set $\explSp$ of polynomials to reduce and a set $\explred$ of reductors.
The reduction step is carried out using Gaussian elimination on the
Macaulay matrix associated with the polynomials in $\explSp$ and
$\explred$.  Here, we consider Macaulay matrices whose columns are
indexed by monomials ordered with respect to the grevlex ordering. 
All the sets of monomials that index the columns of matrices are
ordered with respect to the grevlex ordering; thus, we call them
sequences.
We do not make an explicit specification of the row ordering, as it has no influence on
our complexity analysis.

We define the sequence $\expluni = \explSp \cup \explred$.
Recall that $\mac{}{}{\expluni}{\supo{\expluni}}{}{}$ may not have full
rank (and often does not).
The ordering of the
polynomials in $\explSp$ is irrelevant. But note that \emph{one can order the polynomials in
$\explred$ so that $\mac{}{}{\explred}{\supo{\expluni}}{}{}$
is in row-echelon form}.  Such an ordering exists, since
$\mathsf{SymbolicPreprocessing}$ does not deal with the same monomial
twice. This latter property will play a crucial role in the complexity 
analysis we provide in \cref{sec:complexity_analysis}. 

In \cref{sec:reduction}, we will describe \algoName{algo:Reduction} which
takes as input a matrix $\mac{}{}{\explSp}{\supo{\expluni}}{}{}$,
a matrix $\mac{}{}{\explred}{\supo{\expluni}}{}{}$ in row echelon form and a
sequence of monomials $\seqmono$, and returns a row echelon form of
$\mac{}{}{\expluni}{\supo{\expluni}}{}{}$, up to a permutation of the rows.
The two input matrices must have a number of columns equal to the cardinality
of $\seqmono$ such that each column is indexed by a monomial of $\seqmono$.
\algoName{algo:Reduction} is a
specific algorithm that computes echelon forms in a certain way in order to 
obtain our complexity result. It itself uses a basic linear algebra
subroutine called $\ech{\cdot}$ that takes a matrix $\mathscr{M}$ with entries
in $\K$ as input and that returns a row echelon form of $\mathscr{M}$.
Also, as $\mac{}{}{\explred}{\supo{\expluni}}{}{}$ is already in row echelon form,
the output of \cref{sec:reduction} is of the form
\[  \begin{pNiceArray}[margin]{c}
    \mac{}{}{\explred}{\supo{\expluni}}{}{}\\
    \hline
    \mathscr{N}
    \end{pNiceArray}. \]

Consider a matrix $\mathscr{M}$ with coefficients
in $\K$ and a sequence of monomials $\seqmono$
such that each column of $\mathscr{M}$ is indexed by a monomial.
We set $\polrows{\seqmono}{\mathscr{M}}$ a subroutine that
takes $\mathscr{M}$ and $\seqmono$ as input, and that returns
the set of nonzero polynomials that are
represented by the rows of $\mathscr{M}$.
Denoting by \(\mathscr{E}\) the matrix obtained
after running \algoName{algo:Reduction}, the polynomials in the set 
\(\polrows{\supo{\expluni}}{\mathscr{E}}\) that 
have their leading monomial not divisible by the leading monomial of a polynomial
in the current basis are added to it.

\smallskip
\emph{In the sequel, we assume that the reduction algorithm performs
top-reduction}, i.e.\ the row echelon form computed is not necessarily reduced. 

\smallskip
Also, before each step of reduction,  we use the criterion introduced
in \cite[section 3.4]{gebauer1988installation} (see \cref{algo:F4-GM})
in order to decrease the number of pairs to reduce.  Once the set
$\explSp = \Sel{\explpairs}$ has been constructed, all pairs
$\pair{g_i}{g_j}$ such that $(i,j)$ lies in $\mathcal{B}(G)$ defined
in \cref{subsection_Onthesyzygiesmodule} are removed from it. 

We can now provide a description of F4 (\cref{algo:F4}), which, overall,
at each step (denoted by \(\iota\)): 
 \begin{itemize}
     \item selects pairs as described above, 
     \item computes a row reduction echelon form of the Macaulay
         matrix defined by the $S$-polynomials associated to these
         pairs and their reductors identified by the routine
         $\mathsf{SymbolicPreprocessing}$, 
     \item stores the new nonzero polynomials encoded by the
         echelonized matrix that have new leading monomials, 
     \item updates the set of pairs accordingly.
\end{itemize}
This is repeated until the set of pairs becomes empty. 

\begin{algorithm}[ht]
  \algoCaptionLabel{F4}{f_1,\ldots,f_{\nbpol}}
  \begin{algorithmic}[1]
  \Require{A sequence $\seqpol = (f_1, \ldots , f_{\nbpol})$ of polynomials in $\Kp{n}$.}
  \Ensure{A $\grs$-Gröbner basis of $\langle \seqpol \rangle$ .}
  \State $\indalgo = 0$
  \State $\mathscr{E} = \ech{ \mac{}{}{\seqpol}{\supo{\seqpol}}{}{}}$
  \State $G = \polrows{\supo{\seqpol}}{\mathscr{E}}$
  \State $\explpairs = \{ \pair{f}{g} \mid f,g \in G \text{ with } f \ne g \}$
  \While{$\explpairs \ne \emptyset$} 
    \State $ \indalgo = \indalgo + 1 $  
    \State $\explpairs_\indalgo = \Sel{\explpairs}$
    \State $\explpairs = \explpairs \setminus \explpairs_\indalgo$
    \State $\explpairs_\indalgo = \{ \pair{g_i}{g_j} \in \explpairs_\indalgo \mid (i,j) \notin \mathcal{B}(G) \}$ 
    \label{algo:F4-GM}
    \State $S_\indalgo = \{ \Spol{f}{g} \mid \pair{f}{g}  \in \explpairs_\indalgo \}$
    \State $R_\indalgo =\mathsf{SymbolicPreprocessing}(S_\indalgo,G)$
    \label{algo:F4:reducers}
    \State $ \mathscr{S} = \mac{}{}{S_\indalgo}{\supo{S_\indalgo \cup R_\indalgo}}{}{}$
    \State $ \mathscr{R} =  \mac{}{}{R_\indalgo}{\supo{S_\indalgo \cup R_\indalgo}}{}{}$
    \State $\seqmono = \supo{S_\indalgo \cup R_\indalgo}$
    \State  $\begin{pNiceArray}[margin]{c}
    \mathscr{R}\\
    \hline
    \mathscr{N}
    \end{pNiceArray} =
    \mathsf{Reduction}(\mathscr{S},\mathscr{R},\seqmono)$ 
    \For{$f$ in $\polrows{\seqmono}{\mathscr{N}}$}
      \State $G = G \cup \{ f\}$ \label{f4:step:addpolys}
      \State $\explpairs = \explpairs \cup \{ \pair{f}{g} \mid g \in G \}$
    \EndFor 
  \EndWhile
  \State \textbf{return $G$}
  \end{algorithmic}
\end{algorithm}

As explained above, the leading monomials of the newly added
polynomials  at step \(\iota\) (Line~\ref{f4:step:addpolys}) 
form what we call the trace of this Gr\"obner basis computation. 

\subsection{Descriptions of \algoName{algo:F4B} and \algoName{algo:F4T}} 
\label{sec:F4B_F4T}

The goal of \algoName{algo:F4B} is to slightly modify the F4 algorithm
in order to build an additional data structure called the Gr\"obner trace, depending on the
input sequence of polynomials \(\seqpol = (f_1, \ldots, f_s)\).
This trace allows one to obtain \algoName{algo:F4T} which computes
Gr\"obner bases for sequences of polynomials 
that are compatible with this trace, by building full rank Macaulay matrices only. 
This trace is similar to the one introduced in \cite{traverso1989grobner}, and is also
hinted at or used in \cite{berthomieu,magma,maple}.

Formally, a Gröbner trace $\mathcal{T} = (\Gamma, \Sigma, \Theta)$ is defined as
follows:
\begin{enumerate}[(1)]
\item $\Gamma$ is a sequence of sets of monomials $\Gamma_\indalgo$.
The sequence $\Gamma$ is intended to contain all leading monomials newly
obtained during the computation.
\item $\Sigma$ is a sequence of sets of positive integers $\Sigma_{\indalgo}$.
The sequence $\Sigma$ stores the pairs that do not reduce to zero during the 
computation.
\item $\Theta$ is a sequence of sets of tuples of monomials $\Theta_{\indalgo}$.
Each element in $\Theta_{\indalgo}$ is of the form $(m_1,m_2)$.
The sequence $\Sigma$ keeps track of the reductors used during
the computation.
\end{enumerate}

We now introduce \algoName{algo:F4B} whose purpose is to build such a
tracer, during the execution of the F4 algorithm.

\begin{algorithm}[ht]
  \algoCaptionLabel{F4B}{f_1,\ldots,f_{\nbpol}}
  \begin{algorithmic}[1]
  \Require{A sequence $\seqpol = (f_1, \ldots , f_{\nbpol})$ of polynomials in $\Kp{n}$.}
  \Ensure{A $\grs$-Gröbner basis of $\langle \seqpol \rangle$ and a Gröbner trace $\mathcal{T}$.}
  \State $\indalgo = 0$ and \(\mathcal{T} = \left(  \right) \)
  \State $\mathscr{E} = \ech{ \mac{}{}{\seqpol}{\supo{\seqpol}}{}{}}$
  \State $G = \polrows{\supo{\seqpol}}{\mathscr{E}}$
  \State $\mathcal{T} = \mathsf{Update}\Gamma(\mathcal{T}, \lm{}{G})$
  \label{algo:F4-BuildTracer_build_gamma0}
  \State $\explpairs = \{ \pair{f}{g} \mid f,g \in G \text{ with } f \ne g \}$
  \While{$\explpairs \ne \emptyset$} \label{algo:F4B:while}
    \State $ \indalgo = \indalgo + 1 $ \label{algo:F4-BuildTracer_index_d} 
    \State $\explpairs_\indalgo = \Sel{\explpairs}$ 
    \State $\explpairs = \explpairs \setminus \explpairs_\indalgo$ \label{algo:F4-BuildTracer_rem_pairs} 
    \State $\explpairs_\indalgo = \{ \pair{g_i}{g_j} \in \explpairs_\indalgo \mid (i,j) \notin \mathcal{B}(G) \}$ \label{algo:F4-BuildTracer_critere}
    \State $S_\indalgo = \{ \Spol{f}{g} \mid \pair{f}{g}  \in \explpairs_\indalgo \}$
    \State \label{algo:F4-BuildTracer_SymPT} $(R_\indalgo, \Theta_{\indalgo}) =\mathsf{SymbolicPreprocessingTracer}(S_\indalgo,G)$
    \State \label{algo:F4-BuildTracer_updT} $\mathcal{T} = \mathsf{Update}\Theta(\mathcal{T}, \Theta_{\indalgo})$
    \State \label{algo:F4-BuildTracer_S} $ \mathscr{S} = \mac{}{}{S_\indalgo}{\supo{S_\indalgo \cup R_\indalgo}}{}{}$
    \State \label{algo:F4-BuildTracer_R} $ \mathscr{R} =  \mac{}{}{R_\indalgo}{\supo{S_\indalgo \cup R_\indalgo}}{}{}$
    \State $\seqmono = \supo{S_\indalgo \cup R_\indalgo}$
    \State \label{algo:F4-BuildTracer_Reduction}  $\begin{pNiceArray}[margin]{c}
    \mathscr{R}\\
    \hline
    \mathscr{N}
    \end{pNiceArray} =
    \mathsf{Reduction}(\mathscr{S},\mathscr{R},\seqmono)$
    \State \label{algo:F4-BuildTracer_updS} $\mathcal{T} = \mathsf{Update}\Sigma(\mathcal{T}, \mathscr{N})$
    \State $\Gamma_{\indalgo} = \{\}$
    \For{$f$ in $\polrows{\seqmono}{\mathscr{N}}$}
      \State $\Gamma_{\indalgo} = \Gamma_{\indalgo} \cup \{ \lm{}{f} \}$ 
      \State $G = G \cup \{ f\}$
      \State $\explpairs = \explpairs \cup \{ \pair{f}{g} \mid g \in G \}$
    \EndFor 
    \State \label{algo:F4-BuildTracer_updG} $\mathcal{T} = \mathsf{Update}\Gamma(\mathcal{T}, \Gamma_{\indalgo})$
  \EndWhile
  \State \textbf{return $G$ and $\mathcal{T}$}
  \end{algorithmic}
\end{algorithm}

Note that the rows of the matrices \(\mathscr{S}\) (at
\cref{algo:F4-BuildTracer_S}) and \(\mathscr{N}\)  (at
\cref{algo:F4-BuildTracer_Reduction}) are in one-to-one correspondence
with the set of pairs $\explpairs_\indalgo $ at
\cref{algo:F4-BuildTracer_critere}. Further, we make implicit this
mapping. 

The Gr\"obner trace construction is based on three subroutines: 
\begin{itemize}
    \item $\mathsf{Update}\Gamma$ which takes as input the current
        trace $\mathcal{T} = (\Gamma, \Sigma, \Theta)$ (with $\Gamma =
        \Gamma_0, \ldots, \Gamma_\indalgo$) and a set of
        monomials \( \Gamma_{\indalgo+1} \) and replaces \(\Gamma\) by
        $\Gamma_0, \ldots, \Gamma_\indalgo, \Gamma_{\indalgo+1}$ in $\mathcal{T}$.
         It is used in
        \cref{algo:F4-BuildTracer_build_gamma0} and 
        \cref{algo:F4-BuildTracer_updG} 
        to store the
        leading monomials of the newly polynomials which are added to 
        the current basis. 
    \item \(\mathsf{Update}\Sigma\) which takes as input the current
        trace  \(\mathcal{T} = \left( \Gamma, \Sigma, \Theta \right)
        \) (with \(\Sigma =  \Sigma_0, \ldots, \Sigma_\indalgo 
        \)) and a matrix \(\mathscr{N}\) in row echelon form. It identifies the
        set \(\Sigma_{\indalgo+1}\) of indices of \(\mathscr{N}\) which encode nonzero
        polynomials and replaces $\Sigma$
        by \(\left( \Sigma_0, \ldots, \Sigma_\indalgo, \Sigma_{\indalgo+1} \right) \)
        in \(\mathcal{T}\). This routine is used in \cref{algo:F4-BuildTracer_updS}. 
    \item \(\mathsf{Update}\Theta\) which takes as input the current
        trace  \(\mathcal{T} = \left( \Gamma, \Sigma, \Theta \right)
        \) (with \(\Theta =  \Theta_1, \ldots, \Theta_\indalgo 
        \)) and a set of tuples of monomials $\Theta_{\indalgo+1}$ and replaces 
        $\Theta$ by $\Theta_0, \ldots, \Theta_\indalgo,\Theta_{\indalgo+1}$ in $\mathcal{T}$. 
        This routine is used in \cref{algo:F4-BuildTracer_updT}.
\end{itemize}

In order to build the sequence $\Theta$, we introduce the subroutine
$\mathsf{SymbolicPreprocessingTracer}$, which is a variant of
$\mathsf{SymbolicPreprocessing}$ that returns additional information.
This routine is used in \cref{algo:F4-BuildTracer_SymPT}.
It takes as input two sets of polynomials: $S_\indalgo$ (the $S$-polynomials)
and $G$ (the current elements of the basis).
It returns a set $R_\indalgo$ of reductors and a set $\Theta_{\indalgo}$ of
tuples of monomials.
The output $\Theta_{\indalgo}$ is constructed as follows:
\begin{itemize}
\item each reductor in $R_\indalgo$ is of the form $m_1 g$, where $g \in G$;
\item for each reductor $m_1 g$ in $R_\indalgo$, we add the tuple
$(m_1, \lm{}{g})$ to $\Theta_{\indalgo}$.
\end{itemize}
The set $\Theta_{\indalgo}$ thus precisely describes the choices of reductors
made by $\mathsf{SymbolicPreprocessingTracer}$.

Note that applying \(\mathsf{Update}\Sigma\) identifies which
rows of $\mathscr{S}$
(at \cref{algo:F4-BuildTracer_S} of \algoName{algo:F4B}) 
will not be reduced to \(0\). These rows were
coming from the set of pairs \(\explpairs_\indalgo\) 
at \cref{algo:F4-BuildTracer_critere} of
\algoName{algo:F4B}. Hence, we define the following subroutine:
\begin{itemize}
    \item \(\mathsf{RemoveUselessPairs}\) which takes as input a
        set 
        of pairs \(P\), a Gr\"obner trace \(\mathcal{T} = \left(
        \Gamma, \Sigma, \Theta \right) \) with \(\Sigma = \left(
        \Sigma_0, \ldots, \Sigma_\indalgo \right) \) and an integer \(\indalgo\)
        and which returns the subset of \(P\) that corresponds to \(\Sigma_\indalgo\). 
\end{itemize}
Also, once useless pairs have been removed, some reductors may become
useless.
The purpose of applying \(\mathsf{Update}\Theta\) is to
avoid considering such reductors, which are introduced at 
\cref{algo:F4:reducers}.
Hence, we define the following subroutine:
\begin{itemize}
    \item $\mathsf{ConstructReductors}$ which takes as input
    a Gr\"obner trace \(\mathcal{T} = \left(
        \Gamma, \Sigma, \Theta \right) \) with \(\Gamma = \left(
        \Gamma_0, \ldots, \Gamma_\indalgo \right) \) and \(\Theta = \left(
        \Theta_1, \ldots, \Theta_\indalgo \right) \)
         and an integer \( \indalgo \)
     and a set $G$ of polynomials.
    It returns a set of $R$ of polynomials.
    First, it considers each element $(m_1,m_2)$ of $\Theta_{\indalgo}$.
    Since we are in a top-reduction context, it verifies if 
    $m_1m_2 \grs \min_{\grs}\left( \Gamma_{\indalgo} \right)$.
    If it is the case, then it adds to $R$ the element $m_1g$
    where $g$ lies in $G$ and $\lm{}{g} = m_2$. 
\end{itemize}

\begin{algorithm}[ht]
  \algoCaptionLabel{F4T}{f_1, \ldots , f_{\nbpol}, \mathcal{T}}
  \begin{algorithmic}[1]
  \Require{A sequence $\seqpol = (f_1, \ldots , f_{\nbpol})$ of polynomials in $\Kp{n}$ and a Gröbner trace $\mathcal{T}$.}
  \Ensure{A set of polynomials $G$ if it does not fail.}
  \State $\indalgo = 0$  
  \State $\mathscr{E} = \ech{ \mac{}{}{\seqpol}{\supo{\seqpol}}{}{}}$ \label{algo:F4-UsingTracer_before_for}
  \State $G = \polrows{\supo{\seqpol}}{\mathscr{E}}$  
  \State $\indalgo_{\max} = \max \left\{ \indalgo \in \mathbb{N} \mid \Sigma_\indalgo \neq \emptyset \right\}$ 
  \For{$\indalgo = 1$ to $\indalgo = \indalgo_{\max}$}\label{algo:F4-UsingTracer_for}   
    \If{$\Sigma_\indalgo \neq \emptyset$} 
      \State $\explpairs_\indalgo = \Sel{\explpairs}$ 
      \State $\explpairs = \explpairs \setminus \explpairs_\indalgo$ 
      \State $\explpairs_\indalgo = \{ \pair{g_i}{g_j} \in \explpairs_\indalgo \mid (i,j) \notin \mathcal{B}(G) \}$ \label{algo:F4-UsingTracer_critere}
      \State $\explpairs_\indalgo = \mathsf{RemoveUselessPairs}(\mathcal{T},\explpairs_\indalgo,\indalgo)$
      \State $S_\indalgo = \{ \Spol{f}{g} \mid \pair{f}{g}  \in \explpairs_\indalgo \}$ \label{algo:F4-UsingTracer_constpairs}
      \State $R_\indalgo = \mathsf{ConstructReductors}(\mathcal{T},G,\indalgo)$
      \State \label{algo:F4-UsingTracer_S} $ \mathscr{S} = \mac{}{}{S_\indalgo}{\supo{S_\indalgo \cup R_\indalgo}}{}{}$
      \State $ \mathscr{R} =  \mac{}{}{R_\indalgo}{\supo{S_\indalgo \cup R_\indalgo}}{}{}$
      \State $\seqmono = \supo{S_\indalgo \cup R_\indalgo}$
      \State \label{algo:F4-UsingTracer_Reduction}  $\begin{pNiceArray}[margin]{c}
      \mathscr{R}\\
      \hline
      \mathscr{N}
      \end{pNiceArray} =
      \mathsf{Reduction}(\mathscr{S},\mathscr{R},\seqmono)$
    \State $\mathbb{L} = \{\}$
    \For{$f$ in $\polrows{\seqmono}{\mathscr{N}}$}
      \State $\mathbb{L} = \mathbb{L} \cup \{ \lm{}{f} \}$
      \State $G = G \cup \{ f\}$
    \EndFor
    \If{ $\mathbb{L} \ne \mathcal{T}_{\indalgo}$}
      \State \Return False
    \EndIf
    \EndIf 
    \EndFor
  \State \Return $G$
  \end{algorithmic}
\end{algorithm}

We can now introduce \algoName{algo:F4T} that uses the Gr\"obner trace 
associated to some polynomial sequence \(\seqpol\), in order
to compute Gr\"obner bases for polynomial systems that are compatible with the trace, but avoiding 
all reductions to zero and useless reductors.  

The correctness of \algoName{algo:F4T} is immediate and already
commented in \cite{berthomieu}. 

\subsection{Details on the reduction algorithm}
\label{sec:reduction}

Consider the following matrix:
\[
M =
\begin{pNiceArray}[margin]{c}
    \mathscr{R} \\
   \cline{1-1}
   \mathscr{S}
\end{pNiceArray}
=
\begin{pNiceArray}[margin]{*{3}{>{\hspace{0.8cm}}c<{\hspace{0.8cm}}}}
   \Block{3-3}{\mac{}{}{\explred}{\supo{\expluni}}{}{}} & & \\
   & & \\
   & & \\
   \cline{1-3}
   \Block{3-3}{\mac{}{}{\explSp}{\supo{\expluni}}{}{}} & & \\
   & & \\
   & &
\end{pNiceArray}.
\]
\algoName{algo:Reduction} in a top-reduction context consists in computing a row echelon
form of the matrix $M$. We recall that the matrix $\mathscr{R}$ is already
in row echelon form. Hence, $\mathscr{R}$ will be a submatrix of the
output of this subroutine.

For a matrix $M$, a set $J_1$ of row indices and a set $J_2$ of column
indices, we denote by
$M_{J_1,J_2}$ the associated submatrix. We also denote by 
$M_{\ast,J_2}$ (resp.\ $M_{J_1,\ast}$) the submatrix with all columns 
(resp.\ rows) indexed by \(J_2\) (resp.\ \(J_1\)).  
Let $\mathscr{S}$ and $\mathscr{R}$ be two matrices with entries in $\K$ and the same number of columns.
Suppose that $\mathscr{R}$ is in row-echelon form with no rows of zeroes. 
Then, there exist unique matrices $\mathscr{N}$ and $\mathscr{Q}$ such that
\[\mathscr{S} = \mathscr{Q}\mathscr{R} + \mathscr{N}\] 
and the entries in $\mathscr{N}$ in the same columns of pivots of
$\mathscr{R}$ are zeros.
This matrix $\mathscr{N}$ is called the \emph{normal form} of
$\mathscr{S}$ with respect 
to $\mathscr{R}$ and is denoted by 
\[ \mathscr{N} = \NF{\mathscr{S}}{\mathscr{R}}.\]
Write $J$ for the set of indices of the columns in $\mathscr{R}$ that contain
the pivots used for the reduction.
If we assume that \(\mathscr{R}\) has full row rank, the submatrix
\(\mathscr{R}_{\ast,J}\) is invertible.
One then easily verifies that $\mathscr{Q}$ equals $\mathscr{S}_{\ast,J}
(\mathscr{R}_{\ast,J})^{-1}$.
We use the subroutine $\mathsf{NormalForm}$ in \algoName{algo:Reduction},
which takes as input two matrices $\mathscr{S}$ and $\mathscr{R}$ having
the same number of columns, with $\mathscr{R}$ in row echelon form
and which returns $\NF{\mathscr{S}}{\mathscr{R}}$.

We describe now \algoName{algo:Reduction} which computes 
such a normal form \(\mathscr{N}\). We start by introducing some
notations.
Consider a matrix $M$, a set $\seqmono$ of monomials whose cardinality is
the number of columns of $M$, and a subset $\seqmonot{1}$ of $\seqmono$.
The set $\seqmono$ is ordered with respect to the grevlex ordering, and
each column of the matrix $M$ is indexed by an element of $\seqmono$ in
decreasing order.
We denote by $\selered{M}{\seqmono}{\seqmonot{1}}$ the submatrix of $M$
consisting of the rows whose pivots lie in columns indexed by elements
of $\seqmonot{1}$.

The goal is now to perform a reduction using row submatrices whose pivots
are selected from certain subsets of $\seqmono$.
We introduce the following notation to describe these subsets.
Let $i$ be an integer in $\{1, \ldots, n\}$, and let
$d = \max_{m \in \seqmono} \deg(m)$
be the maximal degree of the elements of $\seqmono$.
We denote by $\redensmon{\seqmono}{i}$ the subset
$\seqmono \cap \left( \mon{d,}{i} \setminus \mon{d,}{i-1} \right)$.
Moreover, we denote by $\redensmon{\seqmono}{0}$ the subset
$\seqmono \cap \mon{\leq d-1}{}$.
Observe that
$\seqmono = \bigsqcup_{i=0}^{n} \redensmon{\seqmono}{i}$,
except in the case $\seqmono = \{1\}$, which never occurs in our setting.
We define the subroutine $\mathsf{Modify}$ that takes as input a matrix $M$,
a set $\seqmono$ of monomials whose cardinality is
the number of columns of $M$, a subset $\seqmonot{1}$ of $\seqmono$,
and a matrix $N$ that has the same dimensions as 
$\selered{M}{\seqmono}{\seqmonot{1}}$. It returns the matrix $M$ where 
its submatrix $\selered{M}{\seqmono}{\seqmonot{1}}$ has been replaced by $N$.

\begin{algorithm}[ht]
    \algoCaptionLabel{Reduction}{} 
    \begin{algorithmic}[1]
        \Require{Two matrices $\mathscr{S}$ and $\mathscr{R}$ 
            with entries in $\K$, the same number of columns and $R$ is in row 
  echelon form. A set $\seqmono$ of  monomials which cardinality is equal to
  the number of columns of $\mathscr{S}$ and $\mathscr{R}$.}
  \Ensure{A row echelon form of the matrix obtained by combining the rows of $S$ and $R$.}
  \State $\mathscr{S}' = \mathscr{S}$
  \For{$i = 1$ to $i = n$}\label{algorithm_reduction_for}  
    \State $\seqmonot{i} = \redensmon{\seqmono}{i}$
    \State $N = \normfsubr{\selered{\mathscr{S}'}{\seqmono}{\seqmonot{i}}}{\selered{\mathscr{R}}{\seqmono}{\seqmonot{i}}}$ \label{algorithm_reduction_red}
    \State $ \mathscr{S}' = \modify{\mathscr{S}'}{\seqmono}{\seqmonot{i}}{N}$
    \State $E = \ech{\selered{\mathscr{S}'}{\seqmono}{\seqmonot{i}}}$ \label{algorithm_reduction_ech}
    \State $ \mathscr{S}' = \modify{\mathscr{S}'}{\seqmono}{\seqmonot{i}}{E}$ 
  \EndFor
  \State $\seqmonot{0} = \redensmon{\seqmono}{0}$
    \State $N = \normfsubr{\selered{\mathscr{S}'}{\seqmono}{\seqmonot{0}}}{\selered{\mathscr{R}}{\seqmono}{\seqmonot{0}}}$ \label{algorithm_reduction_redap}
    \State $ \mathscr{S}' = \modify{\mathscr{S}'}{\seqmono}{\seqmonot{0}}{N}$
    \State $E = \ech{\selered{\mathscr{S}'}{\seqmono}{\seqmonot{0}}}$ \label{algorithm_reduction_echap}
    \State $ \mathscr{S}' = \modify{\mathscr{S}'}{\seqmono}{\seqmonot{0}}{E}$
  \State \Return $\begin{pNiceArray}[margin]{c}
   \mathscr{R}\\
   \hline 
   \mathscr{S}'
\end{pNiceArray}.$
  \end{algorithmic}
\end{algorithm}

In order to prove the correctness of \algoName{algo:Reduction}, we establish the 
following result.

\begin{lemma}\label{lemma_red_redech}
    Let $\mathscr{S}$ and $\mathscr{R}$ be two matrices with entries in $\K$
    and the same number of columns. Suppose that $\mathscr{R}$ is in 
    row echelon form. Consider the matrix $\mathscr{E}$ to be a row echelon 
    form of $\mathscr{N} = \NF{\mathscr{S}}{\mathscr{R}}$.

    Then the matrix $\begin{pNiceArray}[margin]{c}
   \mathscr{R}\\
   \hline 
   \mathscr{E}
\end{pNiceArray}$
is a row echelon form of $\begin{pNiceArray}[margin]{c}
   \mathscr{R}\\
   \hline 
   \mathscr{S}
\end{pNiceArray}$ up to a permutation of its rows. 
\end{lemma}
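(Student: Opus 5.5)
We show two things: the stacked matrix is obtained from $\begin{pNiceArray}{c}\mathscr{R}\\ \hline \mathscr{S}\end{pNiceArray}$ by invertible row operations, and, after a suitable row permutation, it is in row echelon form. Since being ``a row echelon form of'' a matrix means exactly being row echelon and row-equivalent to it, this suffices.

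\emph{Row equivalence.} By the definition of the normal form, $\mathscr{N} = \mathscr{S} - \mathscr{Q}\mathscr{R}$. Hence
\[
\begin{pNiceArray}{c|c} I & 0 \\ \hline -\mathscr{Q} & I \end{pNiceArray}
\begin{pNiceArray}{c}\mathscr{R}\\ \hline \mathscr{S}\end{pNiceArray}
=
\begin{pNiceArray}{c}\mathscr{R}\\ \hline \mathscr{N}\end{pNiceArray},
\]
and the left factor is block unitriangular, so it is invertible. The matrix $\mathscr{E}$ is a row echelon form of $\mathscr{N}$, so $\mathscr{E} = U\mathscr{N}$ for some invertible $U$. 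Applying $\operatorname{diag}(I,U)$ gives $\begin{pNiceArray}{c}\mathscr{R}\\ \hline \mathscr{E}\end{pNiceArray}$. Therefore the two stacked matrices have the same row space, and in particular the same rank. If $\mathscr{R}$ has zero rows, we may drop them first, or note that zero rows only affect the final permutation, which places them at the bottom. In that case the normal form is still defined using the nonzero rows of $\mathscr{R}$.

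\emph{Echelon shape.} Let $J$ be the set of pivot columns of $\mathscr{R}$. By the defining property of the normal form, every entry of $\mathscr{N}$ in a column of $J$ is zero. Row operations on $\mathscr{N}$ preserve this, since each row of $\mathscr{E}$ is a linear combination of rows of $\mathscr{N}$. So $\mathscr{E}$ also vanishes on the columns in $J$, and each nonzero row of $\mathscr{E}$ has its pivot in a column outside $J$. The pivots of $\mathscr{E}$ are pairwise distinct because $\mathscr{E}$ is in echelon form, and the pivots of $\mathscr{R}$ are pairwise distinct for the same reason. The two pivot sets are disjoint. Consequently the nonzero rows of the stacked matrix have pairwise distinct pivot columns. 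Sorting these rows by pivot position, with zero rows placed last, produces a matrix in row echelon form. This is the row permutation allowed in the statement.

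\emph{Main obstacle.} The only delicate point is the meaning of ``row echelon form'' under the top-reduction convention: no zeros are required above the pivots. Distinct pivot columns, together with sorting the rows by pivot position, is exactly the echelon condition needed. The one thing to check is that sorting does not break the internal order of $\mathscr{R}$ or of $\mathscr{E}$; this holds because each of them is already sorted by pivot position.
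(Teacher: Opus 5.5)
Your proof is correct and follows the same approach as the paper. Both get row equivalence from the invertible block unitriangular transformation built from $\mathscr{S} = \mathscr{Q}\mathscr{R} + \mathscr{N}$, composed with the passage from $\mathscr{N}$ to $\mathscr{E}$, and both get the echelon shape from the fact that $\mathscr{N}$, and hence $\mathscr{E}$, vanishes on the pivot columns of $\mathscr{R}$; your handling of possible zero rows of $\mathscr{R}$ and of the sorting step only makes explicit what the paper leaves implicit.
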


\begin{proof}
Let $J_{\mathscr{R}}$ (resp.\ $J_{\mathscr{E}}$) denote the set of indices of
columns containing pivots of $\mathscr{R}$ (resp.\ $\mathscr{E}$).
We prove that the matrix $\begin{pNiceArray}[margin]{c}
   \mathscr{R}\\
   \hline 
   \mathscr{E}
\end{pNiceArray}$
is in row echelon form up to a permutation of its rows.
First, note that the matrices $\mathscr{R}$ and $\mathscr{E}$ are in row
echelon form.
Thus, it suffices to prove that for each $j \in J_{\mathscr{R}}$, the
$j$-th column of $\mathscr{E}$ does not contain a pivot.

By definition, the $j$-th column of $\mathscr{N}$ is a column of zeroes.
Since the rows of the matrix $\mathscr{E}$ are linear combinations of the
rows of $\mathscr{N}$, the $j$-th column of $\mathscr{E}$ is also a
column of zeroes.
We deduce that the matrix $\begin{pNiceArray}[margin]{c}
   \mathscr{R}\\
   \hline 
   \mathscr{E}
\end{pNiceArray}$
is in row echelon form up to a permutation of its rows.

Since the two matrices $\mathscr{S}$ and $\mathscr{E}$ have the same number of rows, 
it remains to prove that the rows of the matrices $\begin{pNiceArray}[margin]{c}
   \mathscr{R}\\
   \hline 
   \mathscr{E}
\end{pNiceArray}$ and $\begin{pNiceArray}[margin]{c}
   \mathscr{R}\\
   \hline 
   \mathscr{S}
\end{pNiceArray}$
generate the same $\K$-vector space.
Since the rows of the matrices $\mathscr{N}$ and $\mathscr{E}$
generate the same $\K$-vector space, rows of the matrices
$\begin{pNiceArray}[margin]{c}
   \mathscr{R}\\
   \hline 
   \mathscr{N}
\end{pNiceArray}$ and $\begin{pNiceArray}[margin]{c}
   \mathscr{R}\\
   \hline 
   \mathscr{E}
\end{pNiceArray}$
generate the same $\K$-vector space.
We consider the transformation matrix $\mathscr{Q}$ such that 
$\mathscr{S} = \mathscr{Q}\mathscr{R} + \mathscr{N}$.
We then have the following expression 
\[
\begin{pNiceArray}[margin]{c}
   \mathscr{R}\\
   \hline 
   \mathscr{N}
\end{pNiceArray} =
\begin{pNiceArray}[margin]{c|c}
   \mathscr{I} &  0\\
   \hline 
   -\mathscr{Q} & \mathscr{I}
\end{pNiceArray} 
\begin{pNiceArray}[margin]{c}
   \mathscr{R}\\
   \hline 
   \mathscr{S}
\end{pNiceArray}
\]
with $\mathscr{I}$ designates the identity matrix of the 
conformal size. 
As this transformation matrix is invertible, this implies that the rows of the matrices 
$\begin{pNiceArray}[margin]{c}
   \mathscr{R}\\
   \hline 
   \mathscr{S}
\end{pNiceArray}$
and 
$\begin{pNiceArray}[margin]{c}
   \mathscr{R}\\
   \hline 
   \mathscr{N}
\end{pNiceArray}$
generate the same $\K$-vector space.
This concludes the proof.
\end{proof}

\begin{lemma}\label{lemma:reduction:correctness}
    On input $\mathscr{R}$ and $\mathscr{S}$,
    \algoName{algo:Reduction} returns the matrix $\begin{pNiceArray}[margin]{c}
   \mathscr{R}\\
   \hline 
   \mathscr{E}
\end{pNiceArray}$
which is a row echelon form of $\begin{pNiceArray}[margin]{c}
   \mathscr{R}\\
   \hline 
   \mathscr{S}
\end{pNiceArray}$.   
\end{lemma}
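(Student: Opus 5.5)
We argue by induction on the loop of \algoName{algo:Reduction}, using \cref{lemma_red_redech} as the basic step. The idea is that the column set $\seqmono$ splits into the disjoint blocks $\redensmon{\seqmono}{1},\ldots,\redensmon{\seqmono}{n},\redensmon{\seqmono}{0}$. Since all monomials of $\redensmon{\seqmono}{i}$ have maximal degree $d$ and lie in $\Kp{i}\setminus\Kp{i-1}$, for $i<i'$ every column in block $i$ lies strictly to the left of every column in block $i'$ under grevlex. The block $\redensmon{\seqmono}{0}$, which contains the lower degree columns, is rightmost. So the blocks are contiguous column intervals ordered $1,2,\ldots,n,0$. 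Every row of a matrix in (partial) echelon form has its pivot in exactly one block. We write $\mathscr{R}_i=\selered{\mathscr{R}}{\seqmono}{\seqmonot{i}}$.

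The invariant after iteration $i$ is the following:
(a) the row space of $\left(\begin{smallmatrix}\mathscr{R}\\ \mathscr{S}'\end{smallmatrix}\right)$ equals that of $\left(\begin{smallmatrix}\mathscr{R}\\ \mathscr{S}\end{smallmatrix}\right)$;
(b) every row of $\mathscr{S}'$ whose pivot lies in a block among $1,\ldots,i$ belongs to a matrix that is already in echelon form, and it has no pivot in a column that is a pivot of $\mathscr{R}$;
(c) every other row of $\mathscr{S}'$ is zero on blocks $1,\ldots,i$, or at least on all pivot columns of $\mathscr{R}_1,\ldots,\mathscr{R}_i$ and of the previously echelonized rows.

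For the inductive step we look at the rows of $\mathscr{S}'$ with pivot in block $i$. By (c), these are exactly the rows not yet processed whose leading entry falls in block $i$. We reduce them by $\mathscr{R}_i$. The matrix $\mathscr{R}_i$ consists of whole rows of $\mathscr{R}$ and is in echelon form, so $\mathsf{NormalForm}$ applies. After reduction these rows vanish on the pivot columns of $\mathscr{R}_i$, and a row may move its pivot to a later column, possibly into a later block. We then echelonize those rows whose pivot is still in block $i$.

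By \cref{lemma_red_redech}, applied to the row block ($\mathscr{R}_i$, current rows), the row space is preserved and the result is echelon up to a row permutation. This gives (a) and extends (b). Rows of $\mathscr{R}$ with pivots in other blocks play no role in the pivot columns of block $i$: rows of $\mathscr{R}$ from earlier blocks have their pivots to the left, and rows from later blocks are zero on block $i$. Hence no pivot of a block-$i$ row of $\mathscr{S}'$ collides with a pivot of $\mathscr{R}$.

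The final step, with $\seqmonot{0}$, is handled identically. At the end every row of $\mathscr{S}'$ is either zero or has a pivot distinct from all pivots of $\mathscr{R}$ and of the other rows of $\mathscr{S}'$. So $\left(\begin{smallmatrix}\mathscr{R}\\ \mathscr{S}'\end{smallmatrix}\right)$ is in echelon form after permuting rows, with zero rows placed at the bottom, and by (a) it spans the right space.

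The main obstacle is the bookkeeping for rows whose pivot drifts into a later block after reduction by $\mathscr{R}_i$. We must make sure such rows are picked up in the later iteration and that they were not wrongly echelonized in block $i$. The selection $\selered{\cdot}{\seqmono}{\seqmonot{i}}$ is recomputed at each call to $\mathsf{Modify}$, so we will be careful to re-evaluate it after the normal form step. We also need that a reduction in block $i$ never creates nonzero entries in earlier blocks. This holds because each row of $\mathscr{R}_i$ is zero to the left of its pivot.
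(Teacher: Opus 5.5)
Your proposal is correct and takes essentially the paper's approach: sweep the column blocks $\redensmon{\seqmono}{1},\ldots,\redensmon{\seqmono}{n},\redensmon{\seqmono}{0}$ from left to right, invoke \cref{lemma_red_redech} on each block, and let rows whose pivot moves right be handled at a later iteration. The only difference is bookkeeping: the paper echelonizes all of $\NF{\mathscr{S}_1}{\mathscr{R}_1}$ and splits the result into $\mathscr{E}_{1,1}$ and $\mathscr{E}_{1,2}$, whereas you re-evaluate $\selered{\cdot}{\seqmono}{\seqmonot{i}}$ after the normal form step, as the pseudocode literally prescribes, and your invariant handles that reading correctly.
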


\begin{proof}
There are $n$ steps in the loop of \algoName{algo:Reduction} in \cref{algorithm_reduction_for}.
The step after the loop corresponds to
\cref{algorithm_reduction_redap} and \cref{algorithm_reduction_echap} and
consists of the same operations as the previous steps but is out of the loop 
for notational reasons.

Let us denote $\selered{\mathscr{S}}{\seqmono}{\seqmonot{1}}$ by
$\mathscr{S}_1$ and $\selered{\mathscr{R}}{\seqmono}{\seqmonot{1}}$ by
$\mathscr{R}_1$. We proceed through the first step and then conclude by
iteration.

Consider the input matrix formed by the rows of $\mathscr{S}$ and
$\mathscr{R}$. We can reorder these rows as needed to better visualize
what is happening in \algoName{algo:Reduction}
\bigskip

\[ \mathscr{M}_1 = \begin{pNiceArray}[margin]{ccc|c|c|c}[last-col]
    0 & \cdots & 0 & ~~~~\ast~~~~ & \cdots & ~~~~\ast~~~~ & \\
    \vdots & \ddots & \vdots & ~~~~\ast~~~~ & \cdots & ~~~~\ast~~~~ & \\
    0 & \cdots & 0 & ~~~~\ast~~~~ & \cdots & ~~~~\ast~~~~ & \\
   \hline
   & ~~~~\ast~~~~& & ~~~~\ast~~~~ & \cdots & ~~~~\ast~~~~ & \}\mathscr{R}_1 \\
   \hline
   & ~~~~\ast~~~~ & & ~~~~\ast~~~~ & \cdots & ~~~~\ast~~~~ & \}\mathscr{S}_1\\
   \CodeAfter
   \OverBrace{1-1}{3-3}{\seqmonot{1}}
   \OverBrace{1-4}{1-4}{\seqmonot{2}}
   \OverBrace{1-6}{1-6}{\seqmonot{0}}
\end{pNiceArray}.\]
During the first iteration of the loop in \cref{algorithm_reduction_for}, the initial matrix
$\mathscr{S}_1$ is transformed into the matrix $\mathscr{E}_1 = \ech{\NF{\mathscr{S}_1}{\mathscr{R}_1}}$.
Since $\mathscr{R}$ is already in row echelon form, the same holds for
$\mathscr{R}_1$. Therefore, one can apply \cref{lemma_red_redech} to deduce that the matrix
$\begin{pNiceArray}[margin]{c}
   \mathscr{R}_1\\
   \hline 
   \mathscr{E}_1
\end{pNiceArray}$ is a row echelon form of the matrix 
$\begin{pNiceArray}[margin]{c}
   \mathscr{R}_1\\
   \hline 
   \mathscr{S}_1
\end{pNiceArray}$. Hence, at the end of the first iteration of the loop at \cref{algorithm_reduction_for},
the matrix $\mathscr{M}_1$ becomes the following matrix $\widetilde{\mathscr{M}_1}$ up to a permutation of its rows
\bigskip

\[ \widetilde{\mathscr{M}_1} = \begin{pNiceArray}[margin]{ccc|c|c|c}[last-col]
    0 & \cdots & 0 & ~~~~\ast~~~~ & \cdots & ~~~~\ast~~~~ & \\
    \vdots & \ddots & \vdots & ~~~~\ast~~~~ & \cdots & ~~~~\ast~~~~ & \\
    0 & \cdots & 0 & ~~~~\ast~~~~ & \cdots & ~~~~\ast~~~~ & \\
   \hline
   & ~~~~\ast~~~~& & ~~~~\ast~~~~ & \cdots & ~~~~\ast~~~~ & \}\mathscr{R}_1 \\
   \hline
   & ~~~~\ast~~~~ & & ~~~~\ast~~~~ & \cdots & ~~~~\ast~~~~ & \}\mathscr{E}_1\\
   \CodeAfter
   \OverBrace{1-1}{3-3}{\seqmonot{1}}
   \OverBrace{1-4}{1-4}{\seqmonot{2}}
   \OverBrace{1-6}{1-6}{\seqmonot{0}}
\end{pNiceArray} \]
with $\begin{pNiceArray}[margin]{c}
   \mathscr{R}_1\\
   \hline 
   \mathscr{E}_1
\end{pNiceArray}$ being a row echelon form of
$\begin{pNiceArray}[margin]{c}
   \mathscr{R}_1\\
   \hline 
   \mathscr{S}_1
\end{pNiceArray}$.
Thus, the rows of $\mathscr{M}_1$ and $\widetilde{\mathscr{M}_1}$ generate the same
$\K$-vector space.

There are two types of rows in $\mathscr{E}_1$: the rows whose pivots lie in columns
indexed by elements of $\seqmonot{1}$, and the remaining rows. These two types of rows
correspond to the matrices $\mathscr{E}_{1,1}$ and $\mathscr{E}_{1,2}$, respectively.
By suitably permuting the rows of $\widetilde{\mathscr{M}_1}$, one obtains the following matrix:
\bigskip

\[ \begin{pNiceArray}[margin]{ccc|c|c|c}[first-col,last-col]
   \Block{4-1}{ \mathscr{M}_2 \scalebox{1.6}{\Bigg\{}  } &0 & \cdots & 0 & ~~~~\ast~~~~ & \cdots & ~~~~\ast~~~~ & \\
    &\vdots & \ddots & \vdots & ~~~~\ast~~~~ & \cdots & ~~~~\ast~~~~ & \\
    &0 & \cdots & 0 & ~~~~\ast~~~~ & \cdots & ~~~~\ast~~~~ & \\
   \hline
    && 0 & & ~~~~\ast~~~~ & \cdots & ~~~~\ast~~~~ & \}\mathscr{E}_{1,2}\\
   \hline
    && ~~~~\ast~~~~& & ~~~~\ast~~~~ & \cdots & ~~~~\ast~~~~ & \}\mathscr{R}_1 \\
   \hline
   & & ~~~~\ast~~~~ & & ~~~~\ast~~~~ & \cdots & ~~~~\ast~~~~ & \}\mathscr{E}_{1,1}\\
   \CodeAfter
   \OverBrace{1-1}{3-3}{\seqmonot{1}}
   \OverBrace{1-4}{1-4}{\seqmonot{2}}
   \OverBrace{1-6}{1-6}{\seqmonot{0}}
\end{pNiceArray} \]
with $\begin{pNiceArray}[margin]{c}
   \mathscr{R}_1\\
   \hline 
   \mathscr{E}_{1,1}
\end{pNiceArray}$ being in row echelon form with its pivots in columns indexed
by elements of $\seqmonot{1}$.
The second iteration of the loop in \cref{algorithm_reduction_for} transforms only
the matrix $\mathscr{M}_2$ and considers the columns indexed by $\seqmonot{2}$. 
Since each step of \cref{algorithm_reduction_for} deals with a group of columns, 
and since the union of all these groups eventually covers all the columns of $\mathscr{M}_1$, 
the algorithm returns a row echelon form of $\mathscr{M}_1$.
\end{proof}

\begin{remark}
\label{remark_degree_last_lines}
Let $d$ denote the highest degree of the elements in the set $\seqmono$.
Suppose that all the pivots of the matrix $\mathscr{R}$ lie in columns
indexed by monomials of degree $d$. Then
\cref{algorithm_reduction_redap} and \cref{algorithm_reduction_echap} of
\algoName{algo:Reduction} do not occur.
\end{remark}

The next statement provides a complexity bound for the computation of such a
normal form. Its proof relies on 
\cite[Lemma~8.8]{Gopalakrishnan2025} and indirectly on
\cite[Lemma~3.1]{DumasGiorgiPernet2004}.
Recall that $\omega$ denotes the constant of matrix multiplication
over \(\K\).

\begin{prop}
\label{prop_complexity_normal_form}
Let $M_1,M_2,C_1,C_2$ be matrices that form the block matrix
\[ \begin{pNiceArray}[margin]{c|c}
    M_1& M_2   \\ 
    \hline
    C_1 & C_2 
    \end{pNiceArray}, \]
where $M_1$ is an upper triangular invertible matrix in $\K^{ a \times a}$, the matrix $M_2$ is in $\K^{ a \times b}$,
the matrix $C_1$ is in $\K^{ c \times a}$ and the matrix $C_2$ is in $\K^{ c \times b}$.
Computing the Schur complement
\[ \NF{\begin{pNiceArray}[margin]{c|c}
    C_1 & C_2 
    \end{pNiceArray}}{\begin{pNiceArray}[margin]{c|c}
    M_1& M_2
    \end{pNiceArray}} = \begin{pNiceArray}[margin]{c|c}
    0 & C_2 - C_1M_1^{-1}M_2  
    \end{pNiceArray} \]
takes:
\begin{itemize}
\item $\mathcal{O} \left( a^2c^{\omega - 2} + \frac{cab}{\min(c,b)^{3 - \omega}} \right)$ operations in $\K$ if $c \leq a$,
\item $\mathcal{O} \left( ca^{\omega - 1} + \frac{cab}{\min(a,b)^{3 - \omega}} \right)$ operations in $\K$ if $a \leq c$.
\end{itemize}
\end{prop}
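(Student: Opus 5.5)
The plan is to reduce the computation to two classical kernels: a triangular solve with multiple right-hand sides, followed by a rectangular matrix product and a subtraction. Write $X = C_1 M_1^{-1}$, a $c\times a$ matrix. Then the Schur complement is $C_2 - X M_2$. Its first block is zero by construction, since $C_1 - X M_1 = 0$. So the cost splits as
\[
T_{\mathrm{tri}}(a,c) + T_{\mathrm{mul}}(c,a,b) + cb,
\]
where $T_{\mathrm{tri}}(a,c)$ is the cost of solving $X M_1 = C_1$ with $M_1$ upper triangular invertible of size $a$ and $c$ right-hand sides. The term $T_{\mathrm{mul}}(c,a,b)$ is the cost of multiplying a $c\times a$ matrix by an $a\times b$ matrix. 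The final subtraction costs $cb$ and is absorbed in the product term.

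For the triangular solve, I would invoke \cite[Lemma~3.1]{DumasGiorgiPernet2004} through \cite[Lemma~8.8]{Gopalakrishnan2025}. The intended statement is the classical TRSM bound: $\mathcal{O}(a^2 c^{\omega-2})$ when $c\le a$ and $\mathcal{O}(c a^{\omega-1})$ when $a\le c$. For $a\le c$, split $C_1$ into $\lceil c/a\rceil$ row blocks of height $a$. Each block is a square $a\times a$ triangular solve costing $\mathcal{O}(a^\omega)$, which gives $\mathcal{O}(c a^{\omega-1})$ in total. For $c\le a$, use the recursive divide-and-conquer TRSM. Split $M_1$ into $2\times 2$ triangular blocks, solve on the first half, update the remaining columns by one rectangular product, and recurse. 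The dominant cost is products of shape $c\times (a/2^k)\times (a/2^k)$ over $2^k$ subproblems. Cutting each into square $c\times c$ pieces gives cost $2^k (a/2^k)^2 c^{\omega-2}$ per level. This sums geometrically to $\mathcal{O}(a^2 c^{\omega-2})$, with the recursion stopped when the blocks reach size about $c$.

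For the product $XM_2$, I would use the standard bound for rectangular multiplication obtained by tiling with square blocks. Let $\mu$ be the smallest of the three dimensions $c,a,b$. Then the product costs $\mathcal{O}(cab\,\mu^{\omega-3})$, which is $cab/\min(c,a,b)^{3-\omega}$.

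It remains to match the stated forms. If $c\le a$, then $\min(c,a,b)=\min(c,b)$, so the product term is exactly $cab/\min(c,b)^{3-\omega}$. If $a\le c$, then $\min(c,a,b)=\min(a,b)$, which gives $cab/\min(a,b)^{3-\omega}$. Adding the triangular-solve cost in each regime yields the two bullets.

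The main obstacle is the $c\le a$ triangular-solve bound. The recursion must be organized so that every product is cut into blocks of size at least $c$; otherwise one loses the $c^{\omega-2}$ factor. If the citation does not directly give this regime, I would carry out the geometric summation above explicitly. I would also check that for $c=a$ both regimes agree at $a^\omega$.
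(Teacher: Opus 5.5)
Your proposal is correct and follows the paper's proof: form $X=C_1M_1^{-1}$ with the triangular-solve bound of \cite[Lemma~8.8]{Gopalakrishnan2025} in the two regimes $c\le a$ and $a\le c$, then bound the product $XM_2$ by square tiling, where $\min(c,a,b)$ reduces to $\min(c,b)$ or $\min(a,b)$ according to the regime. Your product bound $cab\,\min(c,a,b)^{\omega-3}$ is the correct one (the paper's proof writes $\min(a,c,b)^{\omega}$ in the denominator, a typo for $\min(a,c,b)^{3-\omega}$). Your explicit TRSM recursion and your absorption of the $cb$ subtraction fill in details that the paper leaves to the citation.
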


\begin{proof}
The arithmetic complexity of computing such a normal form is the same
as the one of computing the product $C_1M_1^{-1}M_2$.
We decompose this product as $(C_1M_1^{-1})M_2$.
The complexity to compute the product of $(C_1M_1^{-1})$ by $M_2$
is equal to $\mathcal{O} \left( \frac{acb}{\min(a,c,b)^{\omega}} \right)$.
Moreover, since $M_1$ is an upper triangular invertible matrix, one can apply
\cite[Lemma~8.8]{Gopalakrishnan2025} to establish that the arithmetic complexity 
of computing $C_1M_1^{-1}$ is $\mathcal{O} \left( a^2c^{\omega - 2} \right)$
if $c \leq a$ and $\mathcal{O} \left( ca^{\omega - 1} \right)$ if $a \leq c$.
\end{proof}

\subsection{Structural properties}
\label{ssec:algo:properties}

Further, we assume that both the number of variable $n$ and the 
maximum degree $\delta$ of the input
polynomials are greater than or equal to $2$.
We denote by $\gbalgo{0}$ the set $G$ in \algoName{algo:F4B} before the while loop at 
\cref{algo:F4B:while}.

\begin{lemma}
\label{lemma_icreasing_degree_in_algorithm_vector_space}
Let $\seqpol = (f_1, \ldots ,f_\nbpol)$ be a sequence of nonzero
polynomials in the ring $\Kp{n}$ such that $(\hd{f_1}, \ldots ,
\hd{f_\nbpol})$ lies in the nonempty Zariski open set 
$\zarisemi{\delta}{n}{\nbpol}$ (introduced in
\cref{subsectionThelastdimensionofthegeneriquemonomialstaircase}) 
with $\nbpol
\leq n$. For all $\degalgo \geq \delta$, set 
\[ W_\degalgo = \vectorspace{\{mf_i \mid i \in \{1, \ldots , \nbpol\}, m \in \mon{\leq \degalgo-\delta,}{n} \}}{\K}. \]
Then, in \algoName{algo:F4B}, 
for all polynomials $g$ in $\gbalgo{0}$, it holds that $\deg(g) = \delta$ and
the following equality holds:
\[ W_\degalgo = \vectorspace{\{mg \mid g \in \gbalgo{0}, m \in \mon{\leq \degalgo-\delta,}{n} \}}{\K} \]
for all $\degalgo \geq \delta$.
\end{lemma}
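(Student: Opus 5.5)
The plan is to read off the structure of $\gbalgo{0}$ directly from the first echelon step of \algoName{algo:F4B}, and then compare the two spans degree by degree.

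\emph{Step 1: the elements of $\gbalgo{0}$ have degree $\delta$.} Since all $f_i$ have degree $\delta$, the support $\supo{\seqpol}$ consists of monomials of degree at most $\delta$. In the Macaulay matrix, columns are sorted decreasingly for grevlex, so the degree-$\delta$ monomials come first. Because $\hd{\seqpol}$ is regular (semi-regular with $\nbpol\le n$, by \cite[Proposition 1]{pardue2010generic}), the polynomials $\hd{f_1},\ldots,\hd{f_\nbpol}$ are $\K$-linearly independent. Indeed, a nontrivial linear relation $\sum c_i \hd{f_i}=0$ with $c_k\neq0$ and $k$ maximal would put $\hd{f_k}$ in $\langle \hd{f_1},\ldots,\hd{f_{k-1}}\rangle$, so $1\cdot\hd{f_k}\in\langle\ldots\rangle$. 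This contradicts $f_k$ being a non-zero-divisor on a proper quotient, the quotient being nonzero by homogeneity and positive degree. Hence the submatrix of $\mac{}{}{\seqpol}{\supo{\seqpol}}{}{}$ on the degree-$\delta$ columns has full row rank $\nbpol$. It follows that any echelon form has all $\nbpol$ pivots in degree-$\delta$ columns and no zero rows. So $\gbalgo{0}$ has exactly $\nbpol$ elements, each of degree $\delta$.

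\emph{Step 2: the span equality.} The echelon form is $\mathscr{E}=U\,\mac{}{}{\seqpol}{\supo{\seqpol}}{}{}$ with $U\in\K^{\nbpol\times\nbpol}$ invertible, since there are no zero rows. Writing $\gbalgo{0}=\{g_1,\ldots,g_\nbpol\}$, we get $g_k=\sum_i u_{k,i}f_i$ and $f_i=\sum_k u'_{i,k}g_k$. Hence for any monomial $m$ of degree at most $\degalgo-\delta$, each $mg_k$ lies in $W_\degalgo$ and each $mf_i$ lies in the span of the $mg_k$. This gives both inclusions for every $\degalgo\ge\delta$.

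The only real obstacle is Step 1, namely ruling out a degree fall at the very first echelon step, i.e.\ a row whose degree-$\delta$ part cancels. This is handled by the linear independence of the $\hd{f_i}$ coming from regularity; equivalently, it is the case $d=\delta$ of \cref{prop_no_degree_fall}, where a combination of degree less than $\delta$ must be zero because the $h_i$ have negative degree. I would cite that proposition to keep the argument uniform with the paper.
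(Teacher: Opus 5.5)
Your proof is correct and follows essentially the paper's route. First you rule out a degree drop in $\K$-linear combinations of the $f_i$ using regularity of $\hd{\seqpol}$; the paper does this by invoking \cref{prop_no_degree_fall} with $d=\delta$, which is exactly the fallback you mention. Then you get the span equality from the fact that $\gbalgo{0}$ and $\seqpol$ span the same $\K$-vector space. Your direct linear-independence argument additionally gives $\sharp \gbalgo{0} = \nbpol$, and it spells out the reverse inclusion that the paper only sketches as ``the same reasoning''. One small slip: the non-zero-divisor in that argument should be $\hd{f_k}$, not $f_k$.
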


\begin{proof}
Let $\mathscr{M} = \mac{}{}{\seqpol}{\supo{\seqpol}}{}{}$ be the Macaulay matrix
associated to $\seqpol$ with respect to the grevlex ordering, and
consider its row reduced echelon form $\mathscr{E} = \ech{M}$. Since by definition we have
that $\gbalgo{0} = \polrows{\supo{\seqpol}}{\mathscr{E}}$, it holds that
$W_{\delta} = \vectorspace{\gbalgo{0}}{\K}$.  

Let $g \in \gbalgo{0}$. 
Note that, by construction, $g \neq 0$.
Then, there exist coefficients $q_i \in \K$ such that
\[
  g = \sum_{i = 1}^\nbpol q_i f_i.
\]
Since $g$ is a linear combination of polynomials of degree $\delta$, it follows that 
$\deg(g) \leq \delta$. Suppose, for the sake of contradiction, that $\deg(g) < \delta$.
Because $(\hd{f_1}, \ldots, \hd{f_\nbpol})$ lies in $\zarisemi{\delta}{n}{\nbpol}$ with $\nbpol \leq n$ and 
$\deg(q_i) \leq 0$, we can apply \cref{prop_no_degree_fall}.  
This yields the existence of polynomials $(h_i)_{i \in \{1, \ldots , \nbpol\}}$ in $\Kp{n}$ 
with $\deg(h_i) < 0$ for all $i$ such that
\[
  g = \sum_{i = 1}^\nbpol h_i f_i.
\]
Clearly, this forces $h_i = 0$ for every $i$ in $\{1, \ldots ,
\nbpol\}$, which implies $g = 0$.  Since $g$ is nonzero, we reach a
contradiction. Hence, it holds that $\deg(g) = \delta$.  

Now, let $\degalgo \geq \delta$ be an integer and $m \in \mon{\leq \degalgo-\delta,}{n}$.  
We can write
\[
  mg = m \sum_{i=1}^\nbpol q_i f_i,
\]
so that $mg \in W_\degalgo$. The reverse inclusion follows by the same reasoning.  
Therefore, we conclude that
\[
  W_\degalgo = \vectorspace{\{\, mg \mid g \in \gbalgo{0}, \, m \in \mon{\leq \degalgo-\deg(g),}{n} \}}{\K}.
\qedhere
\]
\end{proof}

For any integer $\indalgo \geq 1$, let $\gbalgo{\indalgo}$ denote the value of $G$ at the beginning of the
$(\indalgo+1)$-th iteration of the while loop in
\cref{algo:F4B:while}, precisely the value in effect when
\cref{algo:F4-BuildTracer_index_d} of \algoName{algo:F4B} is executed.
As \algoName{algo:F4B} is only adding elements in $G$, then
for all $\indalgo$ in $\mathbb{N}$ we have that $\gbalgo{\indalgo}
\subset \gbalgo{\indalgo+1}$.  
We set by convention that $\gbalgo{-1} = \emptyset$.
Also, for any integer $\indalgo \geq 1$, let $\allpalgo{\indalgo}$ denote the value of $P$ during the
$\indalgo$-th iteration of the while loop in \cref{algo:F4B:while}, precisely the value in effect when step \cref{algo:F4-BuildTracer_index_d} is executed.
For all $\indalgo \geq 1$, we denote the minimal degree of the current pairs by
$\curdeg{\indalgo} = \min_{p \in \allpalgo{\indalgo}}(\deg(p))$. We
set $\curdeg{0} = \delta$ and $\curdeg{-1} = -1$.

\begin{lemma}
\label{lemma_icreasing_degree_in_algorithm}
Let $\seqpol = (f_1, \ldots ,f_{\nbpol})$ be a sequence of polynomials
in the ring $\Kp{n}$ such that its associated sequence  
$(\hd{f_1}, \ldots , \hd{f_\nbpol})$
lies in $\zarisemi{\delta}{n}{\nbpol}$ with $\nbpol \leq n$.
The following
assertions hold for all $\indalgo \geq 0$. 
\begin{enumerate}[leftmargin=!, labelwidth=0.8cm, align=parleft]
\item $\asseralgo{\indalgo}{1}$:
    \label{lemma_icreasing_degree_in_algorithm_item1}
    $\curdeg{\indalgo} > \curdeg{\indalgo-1}$.
\item $\asseralgo{\indalgo}{2}$:
    \label{lemma_icreasing_degree_in_algorithm_item2} For all $g \in
    G_\indalgo \backslash \gbalgo{\indalgo-1},~~ \deg(g) =
    \curdeg{\indalgo}$.
\item $\asseralgo{\indalgo}{3}$:
    \label{lemma_icreasing_degree_in_algorithm_item3} For all $ 
    (g_1,g_2) \in \gbalgo{\indalgo}\times \gbalgo{\indalgo} \text{ with } g_1 \ne g_2,
    \lm{\grs}{g_1} \text{ does not divide } \lm{\grs}{g_2}.$
\item $\asseralgo{\indalgo}{4}$:
    \label{lemma_icreasing_degree_in_algorithm_item4} For all $p$ in
    $\allpalgo{\indalgo+1}$, $\deg(p) > \curdeg{\indalgo}$.
\item $\asseralgo{\indalgo}{5}$:
    \label{lemma_icreasing_degree_in_algorithm_item5} The equality
    $V_{\indalgo} = \widetilde{V}_{\indalgo}$ holds where:
    \[
 V_{\indalgo} = \vectorspace{\{mf_i \mid i \in \{1, \ldots , \nbpol\},
m \in \mon{\leq \curdeg{\indalgo}-\delta,}{n} \}}{\K}\] 
and 
\[
    \widetilde{V}_{\indalgo} = \vectorspace{\{mg \mid g \in
    \gbalgo{\indalgo}, m \in \mon{\leq \curdeg{\indalgo}-\deg(g),}{n}
\}}{\K}.
\]
\end{enumerate}
\end{lemma}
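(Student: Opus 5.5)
We argue by induction on $\indalgo$, proving the five assertions $\asseralgo{\indalgo}{1},\ldots,\asseralgo{\indalgo}{5}$ simultaneously. The base case $\indalgo=0$ follows from \cref{lemma_icreasing_degree_in_algorithm_vector_space}:
\begin{itemize}
\item $\curdeg{0}=\delta>-1=\curdeg{-1}$, which gives $\asseralgo{0}{1}$.
\item Every $g\in\gbalgo{0}$ has degree $\delta$, which gives $\asseralgo{0}{2}$.
\item $\gbalgo{0}$ comes from a row echelon form, so its leading monomials are distinct. Since they all have the same degree $\delta$, no one of them divides another, which gives $\asseralgo{0}{3}$.
\item Any pair between two distinct elements of $\gbalgo{0}$ has an lcm of degree at least $\delta+1$, which gives $\asseralgo{0}{4}$.
\item $\asseralgo{0}{5}$ is exactly the equality $W_\delta=\vectorspace{\gbalgo{0}}{\K}$ from that lemma.
\end{itemize}

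For the inductive step, assume the assertions hold at step $\indalgo-1$ and consider iteration $\indalgo$.

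\emph{Assertions 1 and 2.} By $\asseralgo{\indalgo-1}{4}$, every pair in $\allpalgo{\indalgo}$ has degree greater than $\curdeg{\indalgo-1}$, so $\asseralgo{\indalgo}{1}$ is immediate. The selected pairs all have degree $\curdeg{\indalgo}$. Since reductions are top-reductions, every row built from an S-polynomial or a reductor has degree at most $\curdeg{\indalgo}$.

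The key claim is that no new polynomial has degree strictly less than $\curdeg{\indalgo}$, and this is where \cref{prop_no_degree_fall} enters. Each row polynomial is a combination $\sum q_i f_i$. To see this, express every element of $\gbalgo{\indalgo-1}$ through $\asseralgo{\indalgo-1}{5}$ with $\deg(q_i)\le \curdeg{\indalgo}-\delta$. Suppose a new element $h$ had degree $e<\curdeg{\indalgo}$. Applying \cref{prop_no_degree_fall} repeatedly lowers the cofactor degrees until $h\in V$ with cofactors of degree at most $e-\delta$. Then $h$ lies in $W_e$, which is contained in the span of multiples of $\gbalgo{\indalgo-1}$ of degree at most $e\le \curdeg{\indalgo-1}$ once we invoke $\asseralgo{\indalgo-1}{5}$. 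The case $\curdeg{\indalgo-1}<e<\curdeg{\indalgo}$ needs separate care.

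From here we want to conclude that $\lm{\grs}{h}$ is divisible by some $\lm{\grs}{g}$ with $g\in\gbalgo{\indalgo-1}$. This is the main obstacle: a span membership alone does not give divisibility of leading monomials. The plan is to use the degree bound to show that every degree-$\le e$ combination of multiples of $\gbalgo{\indalgo-1}$ has a leading monomial already reachable. This should follow from the fact that all pairs of degree $\le e$ were processed, or were discarded by the Gebauer--M\"oller criterion, so that $\gbalgo{\indalgo-1}$ is a $\grs$-Gröbner basis truncated at degree $\curdeg{\indalgo}-1$. This contradicts the fact that only polynomials with new leading monomials are added, so every element of $\gbalgo{\indalgo}\setminus\gbalgo{\indalgo-1}$ has degree exactly $\curdeg{\indalgo}$, which is $\asseralgo{\indalgo}{2}$.

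\emph{Assertion 3.} New leading monomials are not divisible by old ones, by the construction of the algorithm. Old leading monomials have strictly smaller degree, so they cannot be multiples of new ones. Two new leading monomials have the same degree and are distinct pivots, so neither divides the other.

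\emph{Assertion 4.} Remaining old pairs of degree $\curdeg{\indalgo}$ were all selected and removed. A new pair involving a new $f$ has an lcm of degree at least $\deg f$, with equality only if one leading monomial divides the other, which $\asseralgo{\indalgo}{3}$ excludes. Hence every pair in $\allpalgo{\indalgo+1}$ has degree greater than $\curdeg{\indalgo}$.

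\emph{Assertion 5.} The inclusion $\widetilde V_\indalgo\subseteq V_\indalgo$ holds because new elements are combinations of multiples of old elements of degree $\le\curdeg{\indalgo}$, and these lie in $V_\indalgo$ by the induction hypothesis extended by multiplication. For the reverse inclusion, write $mf_i$ with $\deg m\le \curdeg{\indalgo}-\delta$ as a combination of multiples of $\gbalgo{\indalgo-1}$. This uses \cref{lemma_icreasing_degree_in_algorithm_vector_space} and $\asseralgo{\indalgo-1}{5}$ after multiplying by monomials.
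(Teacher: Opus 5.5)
Your overall architecture is the paper's: simultaneous induction, the base case from \cref{lemma_icreasing_degree_in_algorithm_vector_space}, and essentially the same arguments for $\mathcal{H}_1$, $\mathcal{H}_3$, $\mathcal{H}_4$ and $\mathcal{H}_5$. The gap is in $\mathcal{H}_2(\iota)$, the real content of the lemma, where you stop exactly where the work begins.

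You correctly reduce to showing that a nonzero $h\in\operatorname{Span}_{\mathbb{K}}\{mg : g\in G_{\iota-1},\ \deg(mg)<d(\iota)\}$ has $\mathsf{lm}(h)$ divisible by some $\mathsf{lm}(g)$ with $g\in G_{\iota-1}$. You then only announce that this ``should follow'' from $G_{\iota-1}$ being a Gr\"obner basis truncated in degree $d(\iota)-1$. That truncated property is not something you can quote. It must be extracted from what the algorithm did in earlier iterations, and the paper does this by an explicit descent:
\begin{enumerate}
\item Write $h=\sum c_{m,g}\,mg$ with all $\deg(mg)<d(\iota)$, group the terms by $\upsilon=m\,\mathsf{lm}(g)$, and let $\upsilon_1$ be the largest.
\item If $\upsilon_1=\mathsf{lm}(h)$, you have the divisibility.
\item Otherwise the top terms cancel, and \cite[Chapter~2, Section~6, Lemma~5]{cox94} rewrites them as a combination of multiples $\mu\,\mathsf{sp}(g_1,g_2)$ with $\mu\operatorname{lcm}(\mathsf{lm}(g_1),\mathsf{lm}(g_2))=\upsilon_1$.
\item Each such pair has degree $<d(\iota)$. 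Since every pair still in $P(\iota)$ has degree at least $d(\iota)$, it was selected in an earlier iteration.
\item Its reduction there gives $\mathsf{sp}(g_1,g_2)=\sum b_{m,g}\,mg$ over $G_{\iota-1}$ with every $\mathsf{lm}(mg)$ strictly below the lcm. This holds because the reductors and the echelonized rows of $\mathscr{N}$ have pairwise distinct pivots, so an S-polynomial row is expressed in that echelon basis using only rows whose pivot is at most its own leading monomial. Moreover, the nonzero rows of $\mathscr{N}$ were added to $G$.
\item Substituting strictly lowers $\upsilon_1$ while keeping every degree below $d(\iota)$, since grevlex is degree-compatible. Well-foundedness then ends the descent at $\upsilon_1=\mathsf{lm}(h)$.
\end{enumerate}

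Two further points.

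First, your mention of pairs discarded by the Gebauer--M\"oller criterion is legitimate; the paper's write-up passes over them. But then you must also supply the chain argument. If $S_{i,j}\in\mathcal{B}(G)$ with witness $k$, the S-polynomial of $(g_i,g_j)$ is a combination of monomial multiples of those of $(g_i,g_k)$ and $(g_k,g_j)$, whose lcms divide that of $(g_i,g_j)$. The representation property then propagates by induction on the degree and on $\succ_{\mathrm{syz}}$.

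Second, the case $d(\iota-1)<e<d(\iota)$ that you leave for ``separate care'' is harmless. Instead of $\mathcal{H}_5(\iota-1)$, which only covers $W_e$ for $e\le d(\iota-1)$, use \cref{lemma_icreasing_degree_in_algorithm_vector_space}. It gives $W_e=\operatorname{Span}_{\mathbb{K}}\{mg : g\in G_0,\ \deg m\le e-\delta\}$ for every $e\ge\delta$, and $G_0\subseteq G_{\iota-1}$. Incidentally, iterating \cref{prop_no_degree_fall} to put $h$ in $W_e$ works from any representation of $h$ in $\langle f_1,\ldots,f_s\rangle$. So your preliminary cofactor bound $d(\iota)-\delta$ is not needed, which is slightly cleaner than the paper's single application after rewriting through $\mathcal{H}_5(\iota-1)$.
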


\begin{proof}
Our proof is by induction on $\indalgo$.

We start to prove that $\asseralgo{0}{1}$, $\asseralgo{0}{2}$,
$\asseralgo{0}{3}$, $\asseralgo{0}{4}$, and $\asseralgo{0}{5}$, hold.
First, since $\delta$ is positive we have $\asseralgo{0}{1}$.  The set
$\gbalgo{0} \setminus \gbalgo{-1}$ is equal to $\gbalgo{0}$ by
definition. Since the sequence $(\hd{f_1}, \ldots , \hd{f_\nbpol})$
lies in $\zarisemi{\delta}{n}{\nbpol}$ with $\nbpol \leq n$, one can
apply \cref{lemma_icreasing_degree_in_algorithm_vector_space}.  It
follows that, for all $g$ in $\gbalgo{0}$, we have that $\deg(g) =
\delta$. This proves $\asseralgo{0}{2}$. 

Moreover, we know that there exists a matrix $\mathscr{E}$ with entries in $\K$ in row
echelon form such that $\gbalgo{0} = \polrows{\supo{\seqpol}}{\mathscr{E}}$. Since the leading
monomial of a polynomial represented by a row of $\mathscr{E}$ is determined by
the column of its pivot, we can not have twice the same leading
monomial in $\gbalgo{0}$.  In addition to that, using
$\asseralgo{0}{1}$, for all $g$ in $\gbalgo{0}$ we have $\lm{\grs}{g}
= \delta$. Thus, all the elements of $\lm{\grs}{\gbalgo{0}}$ are
different and have the same degree. This establishes $\asseralgo{0}{3}$.

Consider a pair $p$ in $\allpalgo{1}$, then we have that $p =
\pair{g_1}{g_2}$ where $g_1$ and $g_2$ lie in $\gbalgo{0}$.
 We write the monomial $t = \operatorname{lcm} \left(
 \lm{\grs}{g_1},\lm{\grs}{g_2} \right)$. We know that $\deg(p) =
 \deg(t)$.  Since $\lm{\grs}{g_1}$ and $\lm{\grs}{g_2}$ are different
 and have degree $\delta$, the monomial $t$ has degree greater than
 $\delta$. This gives $\asseralgo{0}{4}$.

Since $\curdeg{0} = \delta$, remark that $\asseralgo{0}{5}$ is a
particular case of
\cref{lemma_icreasing_degree_in_algorithm_vector_space} $(\degalgo =
\delta)$.

\smallskip

We proceed now with the induction. Let $\indalgo$ be a positive
integer and  
assume that $\asseralgo{\tilde{\indalgo}}{1}$,
$\asseralgo{\tilde{\indalgo}}{2}$,
$\asseralgo{\tilde{\indalgo}}{3}$,$\asseralgo{\tilde{\indalgo}}{4}$,
and $\asseralgo{\tilde{\indalgo}}{5}$ hold for all $\tilde{\indalgo} <
\indalgo$.  We prove that $\asseralgo{\indalgo}{1}$,
$\asseralgo{\indalgo}{2}$, $\asseralgo{\indalgo}{3}$,
$\asseralgo{\indalgo}{4}$, and $\asseralgo{\indalgo}{5}$ hold.

\medskip

First, we recall that $\curdeg{\indalgo} = \min_{p \in
\allpalgo{\indalgo}}(\deg(p))$. By $\asseralgo{\indalgo-1}{4}$, we
have that $\curdeg{\indalgo}> \curdeg{\indalgo-1}$ which is exactly
$\asseralgo{\indalgo}{1}$.

We prove now that $\asseralgo{\indalgo}{2}$ holds by showing that for
all $g$ in $\gbalgo{\indalgo} \setminus \gbalgo{\indalgo-1}$, we have
 $\deg(g) = \curdeg{\indalgo}$. Let $g'$ be in $\gbalgo{\indalgo}
\setminus \gbalgo{\indalgo-1}$.  By definition of
$\Sel{\allpalgo{\indalgo}}$, we have that all pairs in $P_\indalgo$
are all of degree $\curdeg{\indalgo}$.  We deduce that $\deg(g') \leq
\curdeg{\indalgo}$. Suppose by contradiction that $\deg(g') <
\curdeg{\indalgo}$. 

First, we prove that $g'$ can be expressed as follows $g' = \sum_{g \in
\gbalgo{\indalgo-1}} u_g g$ where $u_g$ is a polynomial with
$\deg(u_g) \leq \curdeg{\indalgo}-\deg(g)$.

Since $g'$ lies in $\gbalgo{\indalgo} \setminus
\gbalgo{\indalgo-1}$, it can be seen as a row of the echelon form of the
matrix $\begin{pNiceArray}[margin]{c}
   \mathscr{R}\\
   \hline 
   \mathscr{S}
\end{pNiceArray}$ built at \cref{algo:F4-BuildTracer_S} and \cref{algo:F4-BuildTracer_R}.
Thus, the polynomial $g'$ is a $\K$-linear combination of the polynomials 
that are represented by the rows of $\mathscr{S}$ and $\mathscr{R}$.
The rows of $\mathscr{S}$ represent $S$-polynomials from pairs of $P_{\indalgo}$.
As previously said, these pairs have degree $\curdeg{\indalgo}$ and are built
using two polynomials of $\gbalgo{\indalgo-1}$. We deduce that the polynomials 
that are represented by the rows of $\mathscr{S}$ lie in the 
$\K$-vector space generated by 
$\{ mg \mid g \in \gbalgo{\indalgo-1}, m \in \mon{\leq \curdeg{\indalgo} - \deg(g)}{} \}$.
Moreover, by definition of $\mathsf{SymbolicPreprocessing}$, the 
rows of $\mathscr{R}$ are of the form $mg$ with $g$ in $\gbalgo{\indalgo-1}$ and
$m$ in $\mon{\leq \curdeg{\indalgo} - \deg(g)}{}$.

We deduce that $g'$ can be expressed as follows $g' = \sum_{g \in
\gbalgo{\indalgo-1}} u_g g$ where $u_g$ is a polynomial with
$\deg(u_g) \leq \curdeg{\indalgo}-\deg(g)$.

Moreover, one can decompose $g'$ as follows:
\begin{align*}
g' &= \sum_{g \in \gbalgo{\indalgo-1}} \left( \sum_{m \in \supo{u_g}}
\lambda_{m,g} mg  \right) \\
&= \sum_{g \in \gbalgo{\indalgo-1}} \left(
\sum_{\substack{m \in \supo{u_g} \\ \deg(m) \leq
\curdeg{\indalgo-1}-\deg(g) }} \lambda_{m,g} mg + \sum_{\substack{ m
\in \supo{u_g} \\ \deg(m) > \curdeg{\indalgo-1}-\deg(g) }}
\lambda_{m,g} mg  \right).
\end{align*}

By $\asseralgo{\indalgo-1}{1}$, we have $\curdeg{\indalgo}>
\curdeg{\indalgo-1}$.  Then, we deduce that $\curdeg{\indalgo} - \deg(g)>
\curdeg{\indalgo-1} - \deg(g)$.  Thus, one can rewrite 
\begin{align*}
g' = p_1 + \sum_{\substack{m \in \mon{\leq
\curdeg{\indalgo}-\curdeg{\indalgo-1},}{n} \\ m \ne 1}} mp_m = \sum_{m
\in \mon{\leq \curdeg{\indalgo}-\curdeg{\indalgo-1},}{n}} mp_m 
\end{align*}
where $p_m$ lies in $\widetilde{V}_{\indalgo-1}$ for all $m$ in
$\mon{\leq \curdeg{\indalgo}-\curdeg{\indalgo-1},}{n}$.  Using
$\asseralgo{\indalgo-1}{5}$, the polynomials $p_m$ (for all $m$
in $\mon{\leq \curdeg{\indalgo}-\curdeg{\indalgo-1},}{n}$) lie in
$V_{\indalgo-1}$.  We deduce that $g'$ can be written as: \[ g' = \sum_{i
= 1}^{\nbpol} q_if_i \] with $\deg(q_i) \leq \curdeg{\indalgo} -
\delta$. 

Since the sequence $(\hd{f_1}, \ldots , \hd{f_{\nbpol}})$ lies in
$\zarisemi{\delta}{n}{\nbpol}$ and $\deg(g') < \curdeg{\indalgo}$ by
hypothesis, one can apply \cref{prop_no_degree_fall} to express $g'$
as follows \[ g' = \sum_{i = 1}^{\nbpol} h_if_i \] with $\deg(h_i) <
\curdeg{\indalgo} - \delta$.  By
\cref{lemma_icreasing_degree_in_algorithm_vector_space}, the
polynomial $g'$ lies in
\begin{align*}
 \vectorspace{\{mg \mid g \in \gbalgo{0}, m \in \mon{<
 \curdeg{\indalgo}-\delta,}{n} \}}{\K} \subset \vectorspace{\{mg \mid
g \in \gbalgo{\indalgo-1}, m \in \mon{< \curdeg{\indalgo}
-\deg(g),}{n} \}}{\K} 
\end{align*}
as $\gbalgo{0} \subset \gbalgo{\indalgo-1}$.

From this membership, one can write $g'$ as follows:
\begin{align*}
 g' = \sum_{\substack{g \in \gbalgo{\indalgo-1} \\ m \in \mon{<
 \curdeg{\indalgo} -\deg(g),}{n}} }  c_{m,g}mg \tag{$\ast$}  
\end{align*}
with \(c_{m, g}\in \K \setminus \{0\} \).  Since $g'$ lies in $\gbalgo{\indalgo}
\setminus \gbalgo{\indalgo-1}$, we have that $\lm{\grs}{g'} \notin
\langle \gbalgo{\indalgo-1} \rangle$.  
We use $(\ast)$ to sort the sum and express $g'$ as
follows

\begin{align*}
g' =  \sum_{\upsilon \in \Upsilon} \sum_{j = 0}^{\ell_\upsilon} p_{\upsilon,j} 
\end{align*}
with $p_{\upsilon,j}$ of the form $c_{m,g}mg$ with $g \in
\gbalgo{\indalgo-1}$, $m \in \mon{< \curdeg{\indalgo}
-\deg(g),}{n}$ and $m\lm{\grs}{g} = \upsilon$. Thus, the set of monomials $\Upsilon$
is a subset of $\mon{< \curdeg{\indalgo},}{n}$.
Moreover, we have that $\lm{\grs}{p_{\upsilon,j}} =
\upsilon$.  Also, the set $\Upsilon$ is chosen such that for all $\upsilon$ in $\Upsilon$, we have that $\{0,
\ldots, \ell_{\upsilon}\} \ne \emptyset$. As $g'$ is not zero, let us consider $\upsilon_1 \in
\Upsilon$, the greatest monomial of $\Upsilon$ with respect to the grevlex ordering.
First, using the last expression of $g'$ and the definition of $\upsilon_1$, 
it is obvious that $\upsilon_1 \grseq \lm{\grs}{g'}$. If
$\lm{\grs}{g'} = \upsilon_1$, then we have that $\lm{\grs}{g'}$ lies in
$\lm{\grs}{\gbalgo{\indalgo-1}}$ which is not possible. Thus, we must have
that $\upsilon_1 \grs \lm{\grs}{g'}$. This implies that $\upsilon_1 \grs
\lm{\grs}{\sum_{j = 0}^{\ell_{\upsilon_1} } p_{\upsilon_1,j}}$. Using
\cite[Chapter~2, Section~6, Lemma~5]{cox94}, we deduce that $\sum_{j =
0}^{\ell_{\upsilon_1} } p_{\upsilon_1,j}$ is a $\K$-linear combination of
S-polynomials formed by the $(p_{\upsilon_1,j})_{0 \leq j \leq
\ell_{\upsilon_1}}$. Moreover, each of these S-polynomials have a leading
monomial which is less than $\upsilon_1$ for the grevlex ordering. Let us consider one of
them.  Let $j_1$ and $j_2$ be in $\{ 0 , \ldots , \ell_{\upsilon_1} \}$.
We denote $p_{\upsilon_1,j_1} = c_{m_1,g_1}m_1g_1$ and
$p_{\upsilon_1,j_2} = c_{m_2,g_2}m_2g_2$. Then
$\Spol{p_{\upsilon_1,j_1}}{p_{\upsilon_1,j_2}}$ is of the form $\lambda \mu
\Spol{g_1}{g_2}$ with $\lambda \in \K$, $\mu \in \mon{}{n}$ and
$\mu\operatorname{lcm}(\lm{\grs}{g_1},\lm{\grs}{g_2}) = \upsilon_1$.
Moreover, we have that $\deg(\pair{g_1}{g_2}) < \curdeg{\indalgo}$.
Such a pair has already been reduced in the algorithm. We deduce that  
\begin{align*}
\Spol{g_1}{g_2} = \sum_{\substack{g \in \gbalgo{\indalgo-1}  \\ m \in \mon{}{n} }} b_{m,g}mg
\end{align*}
where $\operatorname{lcm}(\lm{\grs}{g_1},{\lm{\grs}{g_2}}) \grs \lm{\grs}{mg}$ and 
the $b_{m,g}$ lie in $\K$.
We deduce that $\Spol{p_{\upsilon_1,j_1}}{p_{\upsilon_1,j_2}}$ can be expressed as a linear
combination of elements $mg$ where $m$ lies in $\mon{}{n}$, $g$ lies in $\gbalgo{\indalgo-1}$
and where $\upsilon_1 \grs \lm{\grs}{mg}$.

We can now rewrite $g'$ using this new expression of $\sum_{j =
0}^{\ell_{\upsilon_1} } p_{\upsilon_1,j}$. This yields  
\begin{align*}
g'  = \sum_{\upsilon \in \widetilde{\Upsilon}} \sum_{j = 0}^{\ell_\upsilon} \tilde{p}_{\upsilon,j},
\end{align*}
where the greatest monomial of $\widetilde{\Upsilon}$ is less than
$\upsilon_1$ for the grevlex ordering. Knowing that the set of monomials
that are smaller than $\upsilon_1$ is finite, we deduce by repeating this
process that $g'$ must be zero.  Since it is not the case, this is a
contradiction. Thus, we have $\deg(g') = \curdeg{\indalgo}$. 
This proves $\asseralgo{\indalgo}{2}$.

Now, we prove that $\asseralgo{\indalgo}{3}$ holds. Let $g_1$ and $g_2$
in $\gbalgo{\indalgo}$.  If $(g_1,g_2)$ lies in
$\gbalgo{\indalgo-1}^2$ then, by
$\asseralgo{\indalgo-1}{3}$, we deduce that $\lm{\grs}{g_1}$ does not divide
$\lm{\grs}{g_2}$.  Suppose that $(g_1,g_2)$ lies in
$(\gbalgo{\indalgo} \backslash \gbalgo{\indalgo-1})^2$. Since $g_1$
and $g_2$ are represented by different rows of the same matrix 
which is in row echelon form, it
holds that $\lm{\grs}{g_1} \neq \lm{\grs}{g_2}$.  Since they have the
same degree by $\asseralgo{\indalgo}{2}$, the monomial
$\lm{\grs}{g_1}$ does not divide $\lm{\grs}{g_2}$. Finally, suppose
that $g_2$ lies in $\gbalgo{\indalgo} \backslash \gbalgo{\indalgo-1}$
and that $g_1$ lies in $\gbalgo{\indalgo-1}$. By
$\asseralgo{\tilde{\indalgo}}{1}$ and
$\asseralgo{\tilde{\indalgo}}{2}$ for all $\tilde{\indalgo}$ in $\{0,
\ldots , \indalgo\}$, we have that $\deg(g_2) > \deg(g_1)$. Then
$\lm{\grs}{g_2}$ does not divide $\lm{\grs}{g_1}$. Suppose by
contradiction that $\lm{\grs}{g_1}$ divides $\lm{\grs}{g_2}$. Then
the function $\mathsf{SymbolicPreprocessing}$ would have
selected a reductor $mg_1$ for $R_\indalgo$ which is such that
the equality $m\lm{\grs}{g_1} = \lm{\grs}{g_2}$ holds. Since $g_2$ is reduced by $mg_1$
during \cref{algo:F4-BuildTracer_Reduction} by
\algoName{algo:F4B}, this is a contradiction. Then
$\lm{\grs}{g_1}$ does not divide $\lm{\grs}{g_2}$. This proves 
$\asseralgo{\indalgo}{3}$.

In order to prove $\asseralgo{\indalgo}{4}$, consider a pair $p = \pair{g_1}{g_2}$ 
in $\allpalgo{\indalgo+1}$. Let us prove that $\deg(p) >
\curdeg{\indalgo}$. We discuss two different cases whether both $g_1$
and $g_2$ lie in $\gbalgo{\indalgo-1}$ or not.  Suppose that $g_1$ and
$g_2$ lie in $\gbalgo{\indalgo-1}$. Since all pairs of that type and
of degree $\curdeg{\indalgo}$ have been removed at
\cref{algo:F4-BuildTracer_rem_pairs} in
\algoName{algo:F4B}, it follows that $\deg(p) >
\curdeg{\indalgo}$.  Suppose now that $g_1$ lies in $\gbalgo{\indalgo}
\setminus \gbalgo{\indalgo-1}$. Using $\asseralgo{\indalgo}{3}$, we
have \[ \deg \left( \operatorname{lcm} \left(
{\lm{\grs}{g_1}},{\lm{\grs}{g_2}} \right) \right) > \max \left(
\deg(\lm{\grs}{g_1}),\deg(\lm{\grs}{g_2}) \right). \] Since $\deg(g_1)
= \curdeg{\indalgo}$ by $\asseralgo{\indalgo}{2}$, we have
$\deg(p) > \curdeg{\indalgo}$.  This proves $\asseralgo{\indalgo}{4}$.

Finally, we prove $\asseralgo{\indalgo}{5}$.  Using
\cref{lemma_icreasing_degree_in_algorithm_vector_space}, we know that
$V_{\indalgo} \subseteq \widetilde{V}_{\indalgo}$.  Consider $mg$ a
generator of $\widetilde{V}_{\indalgo}$, with $g \in
\gbalgo{\indalgo}$ and $m \in \mon{\leq
\curdeg{\indalgo}-\deg(g),}{n}$.  We separate two cases.  Suppose that
$g$ lies $\gbalgo{\indalgo-1}$, then we have that $\deg(g) \leq
\curdeg{\indalgo-1}$.  The polynomial $mg$ can be written $m_1m_2g$
where $\deg(m_1) \leq \curdeg{\indalgo} - \curdeg{\indalgo-1}$ and
$m_2g$ lies in $\widetilde{V}_{\indalgo-1}$.  Using
$\asseralgo{\indalgo-1}{5}$, we know that $m_2g$ lies in
$V_{\indalgo-1}$. Thus, it is easy to deduce that $m_1m_2g$ lies in
$V_{\indalgo}$.  Suppose now that $g$ lies in $\gbalgo{\indalgo}
\backslash \gbalgo{\indalgo-1}$. Hence $m = 1$ since $\deg(g) =
\curdeg{\indalgo}$ by $\asseralgo{\indalgo}{2}$.  We know that $g$ is
formed by polynomials that are coming from pairs of degree
$\curdeg{\indalgo}$ and the $\mathsf{SymbolicPreprocessing}$
function. Thus, $g$ lies in \[ \vectorspace{\{mg \mid g \in
\gbalgo{\indalgo-1}, m \in \mon{\leq \curdeg{\indalgo}-\deg(g),}{n}
\}}{\K} .\] By the previous case, the polynomial $g$ lies in
$V_{\indalgo}$. This gives $\asseralgo{\indalgo}{5}$ and concludes the proof.
\end{proof}

Further, we investigate properties of Gr\"obner bases (w.r.t.\ the
grevlex ordering) of sequences of length \(n\) of  \(n\)-variate
polynomials of degree \(\delta\) which lie in the nonempty Zariski
open sets $\zarifilling{\delta}{n}{n}$ defined in
\cref{subsection_Onthesyzygiesmodule} and $\zarimore{\delta}{n}$
defined in
\cref{subsectionThelastdimensionofthegeneriquemonomialstaircase}. 
Last, recall that given a polynomial ideal \(I\),
$\bdg{\degalgo,}{\ast}{I}{\grs}$ ($\bdgshort{\degalgo}$ for short, when
there is no ambiguity) denotes the set of polynomials of
degree  \(d\) in the reduced \(\grs\)-Gr\"obner basis. 

\begin{lemma}
\label{lemma_newbdg_each_step}
Let $\seqpol= (f_1, \ldots ,f_n)$ be a sequence that lies in
$\zarifilling{\delta}{n}{n} \cap \zarimore{\delta}{n} \cap
\zarisemi{\delta}{n}{n}$ where $\delta \geq 2$ and $n \geq 2$ and $I =
\langle \seqpol \rangle$ be the ideal it generates. Set $\dregmb =
n(\delta - 1) + 1$.   Then, for each $\degalgo$ in $\{\delta , \ldots
, \dregmb\}$, the set $\bdgshort{\degalgo}$ is nonempty.
\end{lemma}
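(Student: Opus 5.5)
The plan is to count the elements of $\bdgshort{\degalgo}$ variable by variable, through the decomposition $\bdg{\degalgo,}{\ast}{I}{\grs}=\bigsqcup_{j=1}^n \bdg{\degalgo,}{j}{I}{\grs}$, and to show that for every $\degalgo\in\{\delta,\ldots,\dregmb\}$ at least one index $j$ gives a nonzero count. The degree $\degalgo=\delta$ is immediate: $\seqpol$ consists of $n\geq 2$ linearly independent forms of degree $\delta$, since it is regular, and $\lm{\grs}{I}$ contains no monomial of smaller degree. So the leading monomial of any nonzero $\K$-combination of the $f_i$ is a minimal generator of degree $\delta$.

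\textbf{Key counting identity.} By \cref{remark_fillingup_implies_stable}, $\lm{\grs}{I}$ is stable. I will show that for a stable monomial ideal the minimal generators of degree $\degalgo$ whose largest variable is $x_j$ are exactly the monomials $x_j m$ with $m\in\bdgb{\degalgo-1,}{j}{I}{\grs}$ and $x_j m\in\lm{\grs}{I}$. If $x_j m=u\,g$ with $g$ a generator and $u\neq 1$, then stability lets me shift a variable of $u$ to $x_j$ inside $g$, which forces $m\in\lm{\grs}{I}$. This is essentially the argument of \cref{thm_bijection_monomials}. Splitting $\bdgb{\degalgo,}{j}{I}{\grs}$ into monomials not divisible by $x_j$ and monomials $x_j m$ gives
\[
\sharp\bigl(\bdg{\degalgo,}{j}{I}{\grs}\bigr)=h_j(\degalgo-1)-h_j(\degalgo)+h_{j-1}(\degalgo),
\]
where $h_j(e)=\sharp\bigl(\bdgb{e,}{j}{I}{\grs}\bigr)$. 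By \cref{cor_lien_mono_base_quotient_tronqué}, $h_j$ is the Hilbert function of the truncated ideal $I_j(\seqpol)$.

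\textbf{Computing the $h_j$.} I will use genericity to identify these Hilbert functions. Membership in $\zarimore{\delta}{n}$ (Moreno-Socías' conjectures, as in \cref{thm_card_bdg_deg_moitMB}) should give that each $\truncfu{\seqpol}{j}$ is semi-regular, so that $\sum_e h_j(e)z^e=\bigl[(1-z^{\delta})^n/(1-z)^{j}\bigr]_+$. For $j=n$ this is the regular case, with top degree $\dregmb-1$ and $h_n(\dregmb)=0$. The counting identity then becomes a statement about coefficients of these truncated series.

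\textbf{Showing positivity.} For $\degalgo$ in $\{\dregprev+1,\ldots,\dregmb\}$ I will not use the identity, but instead \cref{thm_card_bdg_deg_moitMB}: it gives $\sharp(\bdg{\degalgo,}{n}{I}{\grs})=\beta_{\dregmb-\degalgo}$. This is positive because $0\le \dregmb-\degalgo\le \dregprev-1$, and the $[\cdot]_+$ series has positive coefficients up to $\dregprev-1$. For $\delta<\degalgo\le\dregprev$, the identity with $j=n$ and the semi-regularity of $I_{n-1}$ give $h_n(\degalgo-1)-h_n(\degalgo)+h_{n-1}(\degalgo)=0$. This holds because $h_n(\degalgo)-h_n(\degalgo-1)$ equals the coefficient of $z^{\degalgo}$ in $(1-z^\delta)^n/(1-z)^{n-1}$, which is $h_{n-1}(\degalgo)$ in that range. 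So in these degrees new generators must come from smaller $j$.

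I will therefore descend: for each $\degalgo$, pick the largest $j$ such that $\degalgo$ lies beyond the degree where $\bigl[(1-z^\delta)^n/(1-z)^{j-1}\bigr]_+$ is truncated. In that window the analogue of \cref{thm_card_bdg_deg_moitMB} for $I_j(\seqpol)$, through \cref{thm_semieregseq_macmat}, or directly the sign of $h_j(\degalgo-1)-h_j(\degalgo)$, should give a positive count. The main obstacle is exactly this step. I must show that these degree windows, as $j$ decreases from $n$ to $1$, cover all of $\{\delta+1,\ldots,\dregprev\}$ without gaps. Equivalently, the truncation degrees of $\bigl[(1-z^\delta)^n/(1-z)^{j}\bigr]_+$ must decrease with $j$ in steps small enough that every degree is the first non-positive index, or lies just past it, for some $j$. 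My first attempt will be to bound the first non-positive coefficient of $(1-z^\delta)^n/(1-z)^j$ explicitly and compare consecutive $j$.
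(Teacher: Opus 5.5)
There is a genuine gap, and it lies where the content of the lemma is. Your case $d=\delta$ is fine. So is your stable-ideal identity $\gamma_{d,j}=h_j(d-1)-h_j(d)+h_{j-1}(d)$, and so is the range $\mathbb{D}_{n-1}<d\le\mathbb{D}$ via \cref{thm_card_bdg_deg_moitMB}. But the range $\delta<d\le\mathbb{D}_{n-1}$ is not proved, and the obstacle is misdiagnosed.

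Let $e_j$ be the first index at which the coefficient of $(1-z^\delta)^n/(1-z)^j$ is nonpositive. Then $1=e_0\le e_1=\delta\le\cdots\le e_{n-1}=\mathbb{D}_{n-1}\le e_n=\mathbb{D}$, so the windows $[e_{j-1},e_j]$ are automatically contiguous. Your identity also gives $\gamma_{e_j,j}=h_j(e_j-1)>0$ for free. Coverage is therefore not the issue. What you must show is that $h_j$ has no plateau strictly inside a window: the coefficient of $z^d$ in $(1-z^\delta)^n/(1-z)^{j-1}$ must not vanish for $e_{j-1}<d<e_j$, $d>\delta$.

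Your proposed reduction to small steps is false: for $\delta=10$, $n=4$ one gets $e_2=13$ and $e_3=19$. Vanishing coefficients strictly inside a window do occur elsewhere (for $j=1$, $1<d<\delta$). So a genuine sign lemma on these series is needed, and you do not supply it. Your claim that $\gamma_{d,n}=0$ for all $\delta<d\le\mathbb{D}_{n-1}$ also fails at $d=\mathbb{D}_{n-1}$: for $\delta=2$, $n=4$ it equals $2$.

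Moreover, identifying $h_j$ with the coefficients of $[(1-z^\delta)^n/(1-z)^j]_+$ for every $j$ requires every truncation $\truncfu{\seqpol}{j}$ to be semi-regular. That is not a hypothesis of the lemma. Membership in $\zarimore{\delta}{n}$ is only said to give the case $j=n-1$, and the paper imposes the all-$j$ condition separately in \cref{prop_comp_formula_eval} and \cref{thm_cardbdg}.

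The idea you are missing is that no Hilbert-function bookkeeping is needed in the middle range; the paper instead shows that gaps propagate upward. Since $I$ is generated in degree $\delta$, every element of $G_{d+1}$ with $d\ge\delta$ arises in \algoName{algo:F4B} from a pair of degree $d+1$ lying in $\taylorbasis^{\star}(G)$ (\cref{lemma_icreasing_degree_in_algorithm}). By \cref{thm_type1_good_pairs}, the multiplier on the second element $g_j$ of that pair is a single variable, so $g_j\in G_d$. Hence $G_d=\emptyset$ for some $d\in\{\delta+1,\ldots,\mathbb{D}-1\}$ forces $G_{d+1}=\emptyset$, and inductively $G_{\mathbb{D}}=\emptyset$, contradicting \cref{thm_card_bdg_deg_moitMB}. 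This uses only the stated hypotheses.

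The staircase shape alone (stability, weak reverse lexicographicity) does not exclude gaps, as $\langle x_1^2,x_1x_2,x_2^4\rangle$ shows. The extra input needed is that generators of degree above $\delta$ must come from S-pairs, which is exactly what \cref{thm_type1_good_pairs} exploits.
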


\begin{proof}
Let $G$ denote the reduced \(\grs\)-Gr\"obner basis of $I$.
Since $\seqpol$ lies in $\zarisemi{\delta}{n}{n}$, one can apply
\cref{lemma_icreasing_degree_in_algorithm_vector_space} and deduce
that $\bdgshort{\delta}$ is nonempty.  In addition, the
sequence $\seqpol$ lies in $\zarimore{\delta}{n}$.  Consequently, one
can apply \cref{thm_card_bdg_deg_moitMB} and deduce that
$\bdgshort{\dregmb}$ is nonempty.  

Suppose, for the sake of contradiction, that for some $\degalgo \in
\{\delta + 1, \ldots , \dregmb-1\}$ the set
$\bdgshort{\degalgo}$ is empty.  We prove that this would
imply that $\bdgshort{\dregmb}$ is also empty.  
By \cref{lemma_icreasing_degree_in_algorithm}, in \algoName{algo:F4B},
each element of $\bdgshort{\degalgo+1}$ arises from a pair
$\pair{g_i}{g_j}$ that lies in $\taylorbasis^{\star}(G)$, where
$g_i$ and $g_j$ lie in $\bigsqcup_{k=\delta}^\degalgo \bdgshort{k}$
(see \cref{algo:F4-BuildTracer_critere}). 

As $\seqpol$ lies in $\zarisemi{\delta}{n}{n}$, then, by \cite[Proposition 1]{pardue2010generic}, the associated
Hilbert series is equal to \( \left( \sum_{i = 0}^{\delta-1}z^i
\right)^n \). 
This is a polynomial, which implies that the ideal $I$
is zero-dimensional.  Moreover, the sequence $\seqpol$ lies in
$\zarifilling{\delta}{n}{n}$. One can then apply
\cref{thm_type1_good_pairs} to deduce that such a pair $\pair{g_i}{g_j}$ is 
equal to $(t,t_i,g_i,t_j,g_j)$ where $(t_j,g_j)$ is distinguished.
Thus, the polynomial $g_j$ must lie in
$\bdgshort{\degalgo}$, which is empty by assumption.
Consequently, it follows that $\bdgshort{\degalgo+1}$ is
also empty.  

By iterating this reasoning, we conclude that
$\bdgshort{\dregmb}$ must be empty, which is a contradiction 
since we have already established that \(\bdgshort{\dregmb}\) is not empty.  
This ends the proof. 
\end{proof}

\begin{cor}
\label{cor_lien_homog_affine_reg}
Let $\seqpol=(f_1, \dots, f_{\nbpol})$ be a sequence of polynomials.
Suppose that $\hd{\seqpol}$ is a regular sequence. Then the following equality
holds:
\[ \lm{\grs}{ \langle \seqpol \rangle} = \lm{\grs}{\langle \hd{\seqpol} \rangle}.\]
\end{cor}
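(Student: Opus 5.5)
We prove the two inclusions separately. Write $I = \langle \seqpol \rangle$, $J = \langle \hd{\seqpol} \rangle$ and $\delta_i = \deg(f_i)$.

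\emph{Inclusion $\lm{\grs}{I} \subseteq \lm{\grs}{J}$.} Let $w \in I$ be nonzero and set $d' = \deg(w)$. Since grevlex is a degree ordering, $\lm{\grs}{w} = \lm{\grs}{\hd{w}}$, so it suffices to show that $\hd{w} \in J$. Write $w = \sum_i q_i f_i$ and let $d = \max_i(\deg(q_i) + \delta_i)$, the maximum being taken over the indices with $q_i \neq 0$.

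We lower $d$ by repeated use of \cref{prop_no_degree_fall}. If $d > d'$, then $\deg(w) < d$ and $\deg(q_i) \leq d - \delta_i$ for all $i$, so \cref{prop_no_degree_fall} gives a new representation $w = \sum_i h_i f_i$ with $\deg(h_i) < d - \delta_i$; the new maximum is therefore strictly smaller than $d$. Since $d \geq d'$ always holds, iterating this step terminates with a representation in which $d = d'$.

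Now take the homogeneous component of degree $d'$ on both sides. Every term $q_i f_i$ has degree at most $d'$, and its degree-$d'$ component is $\hd{q_i}\hd{f_i}$ when $\deg(q_i) + \delta_i = d'$, and $0$ otherwise. Hence
\[ \hd{w} = \sum_{i \,:\, \deg(q_i) + \delta_i = d'} \hd{q_i}\,\hd{f_i} \in J. \]

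\emph{Inclusion $\lm{\grs}{J} \subseteq \lm{\grs}{I}$.} The ideal $J$ is homogeneous, so every element of $J$ is a sum of homogeneous elements of $J$, and its leading monomial is the leading monomial of one of them. It therefore suffices to treat a homogeneous $h \in J$ of degree $d$. By \cref{lemma_generator_homog_space} we may write $h = \sum_i p_i \hd{f_i}$ with each $p_i$ zero or homogeneous of degree $d - \delta_i$. Put $w = \sum_i p_i f_i \in I$. Each difference $p_i f_i - p_i\hd{f_i}$ has degree less than $d$, so $w - h$ has degree less than $d$. Consequently $w \neq 0$ and $\hd{w} = h$, which gives $\lm{\grs}{w} = \lm{\grs}{h}$. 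Regularity is not needed for this direction.

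The only delicate point is the termination argument in the first inclusion. The quantity $d$ strictly decreases at each step and is bounded below by $\deg(w)$, so the process ends after finitely many steps. When applying \cref{prop_no_degree_fall}, indices with $q_i = 0$ must be handled consistently with the convention $\deg(0) = -\infty$.
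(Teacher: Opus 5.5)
Your proof is correct, but it takes a different route from the paper.

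The paper argues through the algorithm. It runs \algoName{algo:F4B} on $\seqpol$ and uses $\mathcal{H}_1$ and $\mathcal{H}_2$ of \cref{lemma_icreasing_degree_in_algorithm} to see that no degree fall occurs in any reduction. It then concludes, somewhat informally, that the run only ever involves top-degree components, so truncating it gives a run on $\hd{\seqpol}$ with the same leading monomials.

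You instead work purely at the level of ideals. Iterating \cref{prop_no_degree_fall} removes all top-degree cancellation from a representation of $w$, which gives $\hd{w}\in\langle\hd{\seqpol}\rangle$. Conversely, \cref{lemma_generator_homog_space} lifts every homogeneous element of $\langle\hd{\seqpol}\rangle$ to an element of $\langle\seqpol\rangle$ with that top-degree part. In other words, you show that $\seqpol$ is a Macaulay (H-)basis of its ideal, and then only use that grevlex is degree-compatible.

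Your argument has several advantages:
\begin{itemize}
\item It is self-contained and does not rely on the informal truncation step.
\item It handles arbitrary degrees $\delta_i$, as in the statement. The lemma the paper invokes is formulated only for sequences all of degree $\delta$ lying in $\mathcal{S}_{\delta,n,s}$.
\item It works for any degree-compatible ordering.
\item It makes clear that regularity is needed for only one of the two inclusions.
\end{itemize}

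The paper's route, in exchange, yields an algorithmic by-product that yours does not. Whenever the run of \algoName{algo:F4B} exhibits no degree fall, its leading monomials coincide with those of the run on $\hd{\seqpol}$. The paper reuses exactly this in \cref{sec:complexity_proof} to argue that several genericity hypotheses can be checked from the trace $\mathcal{T}$.
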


\begin{proof}
Recall that the set $\lm{\grs}{ \langle \seqpol \rangle}$ (resp.\
$\lm{\grs}{\langle \hd{\seqpol} \rangle}$) is the set of monomials
which are divisible by the leading monomials of a minimal grevlex
Gröbner basis $G$ (resp.\  $\widetilde{G}$) of $\langle \seqpol
\rangle$ (resp.\  $\langle \hd{\seqpol} \rangle$). 
Let $G$ be a minimal grevlex Gröbner basis of $\langle \seqpol
\rangle$ computed by \algoName{algo:F4B}.

Because $\hd{\seqpol}$ is a regular sequence, one can apply
\cref{lemma_icreasing_degree_in_algorithm} and use $\asseralgoshort{1}$
and $\asseralgoshort{2}$ to deduce that there is no degree fall
during $\mathsf{Reduction}$ at \cref{algo:F4-BuildTracer_Reduction}.
This implies that all the top-reductions performed in
\algoName{algo:F4B} involve only the homogeneous components of largest degree
of the polynomials under consideration.  Thus, truncating the
computation only to the part of maximum degree gives an exact way to
compute a minimal grevlex Gröbner basis of $\langle \hd{\seqpol}
\rangle$.  It follows that the new leading monomials found during the
execution of \algoName{algo:F4B} with $\seqpol$ as input depend solely
on $\hd{\seqpol}$.
\end{proof}

\section{Complexity analysis}
\label{sec:complexity_analysis}
\subsection{Complexity statements}
\label{sec:complexity_proof}

In this subsection, we state our complexity results on
\algoName{algo:F4T} under genericity assumptions.
As mentioned in \cref{section_intro}, we use several
Zariski open sets, which we intersect to formulate our hypotheses.
Some of these sets are known to be non-empty, while the non-emptiness of others
relies on conjectures.

\begin{itemize}
\item For all $i \geq n$, the set $\zarisemi{\delta}{i}{n}$ is non-empty by \cite[Theorem~1.5]{moreno2003degrevlex}.
\item The set $\zarimore{\delta}{n}$ is non-empty if \cite[Conjectures 1.2
and 1.6]{moreno2003degrevlex} hold.
\item The set $\zarifilling{\delta}{n}{n}$ is non-empty if \cite[Conjecture~4.1]{moreno2003degrevlex} holds.
\item For all $i < n$, the set $\zarisemi{\delta}{i}{n}$ is non-empty if \cite[Conjecture 1.1]{froberg1994hilbert} holds.
\end{itemize}
We recall that \cite[Conjecture 1.1]{froberg1994hilbert} and \cite[Conjecture 1.2]{moreno2003degrevlex} both designate the same statement.

In the following result, we make use of the notion of \emph{simultaneous Noether position},
as defined in \cite[Definition~3]{bardet2014complexity}. We introduce the following notation.
For a positive real number \(\proporMB\) in \( (0,1] \), we define 
\[E_\proporMB(\delta) = 
\ln\left( \frac{1+\proporMB(\delta-1)}{\delta} \right)
+ 
\proporMB(\delta-1)\ln\left(\frac{1+ \proporMB(\delta-1)}{\proporMB(\delta-1)}\right).\] 
For an integer $\delta \geq 2$ we define $L_{\omega}(\delta) = \frac{\delta}{1-e^{-\frac{1}{(\omega-2)}}}-\frac{1}{1-e^{-\frac{1}{\delta(\omega-2)}}}$ where
\(\omega\) is the constant of matrix multiplication over \(\K\).

Note that $L_\omega(\delta)$ is not defined if $\omega = 2$. 
By convention, we set $L_2(\delta) = 1$ for any $\delta \geq 2$.
Further, we denote by \(\bm{\ell}(\omega)\) the
constant \(L_\omega(2)\).
 Finally,
for any real number \(\varepsilon\), we define 
the constant  
\[
    \bm{c}(\varepsilon, \delta, \omega) = E_{1/2}(\delta) + E_{\bm{\ell}(\omega)+\varepsilon}(\delta).
\] 
The truncation $\phi$ is defined in \cref{subsection_Macaulay_matrices}.
We recall that we use the classical notation $\mathcal{O}$ and $\widetilde{\mathcal{O}}$
  for asymptotic bounds, as defined in
\cite[Definitions~25.7 and~25.8]{vzgg03}.
We recall that for two functions $f,g : \mathbb{N} \to \mathbb{R}_{\ge 0}$,
the function $f(n)$ belongs to $\widetilde{\mathcal{O}}\!\left(g(n)\right)$
if there exist constants $N,c$ such that
\[
f(n) \le g(n)\bigl(\log_2(3+g(n))\bigr)^c
\]
for all $n \ge N$.

\begin{thm}
\label{thm_complexity_F4_tracer_assymp}
Let $\seqpol= (f_1, \ldots ,f_n)$ be a sequence of nonzero polynomials
in the ring $\Kp{n}$ such that the sequence $\hd{\seqpol} = (\hd{f_1},
\ldots , \hd{f_n})$ lies in $\zarifilling{\delta}{n}{n} \cap
\zarimore{\delta}{n} \cap \zarisemi{\delta}{n}{n}$ where $\delta \geq
2$ and $n \geq 2$.
Suppose also that 
$\phi(\hd{\seqpol},n-1)$ lies in $\mathcal{S}_{\delta,n-1,n}$.
We suppose that the variables $x_1, \ldots , x_n$ are in simultaneous Noether 
position with respect to $\hd{\seqpol}$.
Let \(\mathcal{T}\) be the Gr\"obner trace returned when 
running \algoName{algo:F4B} with \(\seqpol\) as input and the graded
reverse lexicographical ordering $\grs$.  

For a fixed value of \(\delta \ge 2\),
running
\algoName{algo:F4T} on input a sequence $\bm{g} = \left( g_1, \ldots,
g_n\right)$ of polynomial equations which is compatible with
$\mathcal{T}$,
computes a minimal Gröbner basis for $(\bm{g}, \grs)$ using
\[ \widetilde{\mathcal{O}}\left(
\delta^{\omega n+1} e^{n \bm{c}(\varepsilon, \delta, \omega)}
\right) \]
arithmetic operations in $\K$ for all $\epsilon >0$.
\end{thm}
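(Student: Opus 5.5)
The plan is to bound the cost of each iteration $\indalgo$ of \algoName{algo:F4T} and then sum over iterations. Since the trace removes all reductions to zero and useless reductors, each iteration reduces exactly the S-polynomials indexed by $\Sigma_\indalgo$, which are in bijection with the new elements of $\gbalgo{\indalgo}\setminus\gbalgo{\indalgo-1}$. Compatibility with $\mathcal{T}$, together with \cref{cor_lien_homog_affine_reg}, lets us reason only on $\hd{\seqpol}$ and on its reduced grevlex Gröbner basis $G$.

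First I would control the iteration structure. By \cref{lemma_icreasing_degree_in_algorithm}, iteration $\indalgo$ works in a single degree $\degalgo=\curdeg{\indalgo}$ and produces exactly the elements of degree $\degalgo$. By \cref{lemma_newbdg_each_step}, these degrees run through $\{\delta+1,\dots,\dregmb\}$ with $\dregmb=n(\delta-1)+1$, so there are $O(n\delta)$ iterations. In the iteration of degree $\degalgo$, the number $c_\degalgo$ of S-polynomial rows equals $\sharp\bdgshort{\degalgo}$. By \cref{thm_type1_good_pairs}, every surviving pair has the form $(x_j,g)$ with $j<\ind{g}$ and $\deg g=\degalgo-1$. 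Hence the S-polynomial is $x_j g$ minus a multiple of another basis element, and the reductors come from monomials of degree $\degalgo$ only. By \cref{remark_degree_last_lines}, only the loop of \algoName{algo:Reduction} over the blocks $\redensmon{\seqmono}{i}$ matters. The number of columns $b_\degalgo$ is bounded by $\sharp\mon{\degalgo,}{n}$, as in the state of the art. The number of reductors $a_\degalgo$ is bounded by the number of leading monomials of degree $\degalgo$ in $\lm{\grs}{I}$ that are strictly larger than some new leading term, since top-reduction and $\mathsf{ConstructReductors}$ keep only reductors above $\min\Gamma_\indalgo$.

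The second step is to estimate $c_\degalgo$ and $a_\degalgo$. For $\degalgo\le \dregprev$ I would use the Hilbert series $(1+\cdots+z^{\delta-1})^n$ and the semi-regular series of $\phi(\hd\seqpol,n-1)$ to bound $c_\degalgo$. For $\degalgo\in\{\dregprev+1,\dots,\dregmb\}$ I would use \cref{thm_card_bdg_deg_moitMB}, which gives $c_\degalgo=\beta_{\dregmb-\degalgo}$ with new leading monomials $x_n^{2\degalgo-\dregmb}m$. The blockwise structure of \algoName{algo:Reduction} then lets me apply \cref{prop_complexity_normal_form} block by block: in the block $\redensmon{\seqmono}{i}$, only pivots of variable index $i$ occur. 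Because the new leading monomials are among the smallest in grevlex order, the effective triangular block for the last variables is of size comparable to the staircase of $\phi(\hd\seqpol,n-1)$ in degree $\dregmb-\degalgo$. The per-step cost is then roughly $a^2c^{\omega-2}+cab/\min(\cdot)^{3-\omega}$. Both factors are then expressible as coefficients of $(1-z^\delta)^n/(1-z)^{n}$ or $(1-z^\delta)^{n}/(1-z)^{n-1}$ evaluated near degrees $\dregmb/2$ and $\dregmb(1-p)$. The echelon-form costs on $\selered{\mathscr{S}'}{\seqmono}{\seqmonot{i}}$ are $O(c^{\omega-1}b)$ and are dominated.

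The main obstacle is the asymptotic analysis of the sum over $\degalgo$ of these products of binomial-type coefficients. I would bound the coefficients by a saddle-point or Chernoff estimate: a coefficient at degree $p\,n(\delta-1)$ of $((1-z^\delta)/(1-z))^n$ is at most $\delta^n e^{-n(\ldots)}$, while $\sharp\mon{\degalgo,}{n}$ at $\degalgo=pn(\delta-1)$ gives the $e^{nE_p(\delta)}$-type entropy term. The half-way degree contributes $E_{1/2}$, which comes from the number of reductors and rows around $\dregprev$. The factor $E_{\bm{\ell}(\omega)+\varepsilon}$ should come from optimizing, over the split point, the geometric series in $\degalgo$ arising from $a^{2}c^{\omega-2}$. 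Its ratio $e^{-1/(\omega-2)}$ produces $L_\omega$, and the $\varepsilon$ absorbs the polynomial-in-$n$ losses and the floor in $\dregprev$. Finally, summing the $O(n\delta)$ steps and the $\delta^{\omega n}$ from the column count gives $\widetilde{\mathcal{O}}(\delta^{\omega n+1}e^{n\bm{c}(\varepsilon,\delta,\omega)})$. I expect matching the exact constants $E_p$ and $L_\omega$ to be the delicate part.
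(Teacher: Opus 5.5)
You have the paper's skeleton: cost each degree through the column blocks of \textsf{Reduction} via \cref{prop_complexity_normal_form}, use \cref{thm_type1_good_pairs} and \cref{thm_card_bdg_deg_moitMB} in high degree, then apply entropy asymptotics of binomials. This is \cref{thm_complexity_F4_tracer} followed by \cref{prop_asymp_two_terms}. However, the estimate that keeps the high-degree steps cheap is missing.

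Your bound on the number $a_d$ of reductors (leading monomials of $I$ of degree $d$ lying above a new leading term) only caps their $x_n$-degree from above, by $2d-\mathbb{D}$. For $d>\mathbb{D}_{n-1}$, all of $\mon{d,}{n-1}$ lies in $\lm{\grs}{I}$ and sits above every new leading term. So this bound has the exponential order of $\binom{n+d}{d}$, and the step cost becomes $\binom{n+d}{d}^2\gamma_{d,n}^{\omega-2}$. That misses $\bm{c}(\varepsilon,\delta,\omega)$: for $\delta=2$, $\omega=3$ the largest term grows like $e^{2.98n}$, against a target of about $e^{2.73n}$.

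Your fallback claim that the block is ``comparable to the staircase of $\phi(\hd{\seqpol},n-1)$ in degree $\mathbb{D}-d$'' is unsupported. The staircase does not constrain the reductors, since every degree-$d$ monomial of $x_n$-degree $2d-\mathbb{D}-1$ or $2d-\mathbb{D}-2$ already lies in $\lm{\grs}{I}$.

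The missing idea is a \emph{lower} bound on the $x_n$-degree, and you already hold the ingredient. Take $d\ge\mathbb{D}_{n-1}+2$ and combine \cref{thm_type1_good_pairs} with \cref{thm_card_bdg_deg_moitMB} in degree $d-1$. Every surviving pair then has lcm $x_j\lm{\grs}{g_j}$ with $g_j\in G_{d-1,n}$ of $x_n$-degree $2d-\mathbb{D}-2$. All other monomials of the S-polynomials, and of the reductors found by symbolic preprocessing, are grevlex-smaller of the same degree. Hence they have $x_n$-degree at least $2d-\mathbb{D}-2>0$.

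With top-reduction, the reductors therefore all lie in block $i=n$, with $x_n$-degree in $[2d-\mathbb{D}-2,\,2d-\mathbb{D}]$. So $a_d\le\sum_{k=0}^{2}\binom{n-2+\mathbb{D}-d+k}{\mathbb{D}-d+k}$, which exceeds $\gamma_{d,n}$. The step cost is then $\binom{n+d}{d}\,a_d\,\gamma_{d,n}^{\omega-2}$, and your $a^2c^{\omega-2}$ term is dominated. The $E_{1/2}$ in the final constant is this reductor binomial evaluated at $d=\mathbb{D}_{n-1}+2$.

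Second, $E_{\bm{\ell}(\omega)+\varepsilon}$ does not come from a geometric series in $a^2c^{\omega-2}$, and $\varepsilon$ does not absorb polynomial losses or the floor in $\mathbb{D}_{n-1}$. The paper bounds $\gamma_{d,n}\le b^{(n)}_d\le B_n(r)/r^d$, with $r=r(n,d)$ solving $\frac{d-\delta}{n-1}=\frac{\delta}{1-r^{-\delta}}-\frac{1}{1-r^{-1}}$. It then locates the degree $d(n)$ maximizing $\bigl(B_n(r)/r^{d}\bigr)^{\omega-2}\binom{n+d}{d}\binom{n+\mathbb{D}-d}{\mathbb{D}-d+2}$ through its logarithmic derivative. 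This gives $(\omega-2)\log r(n,d(n))\le\frac{1}{\delta}+\frac{1}{\delta+n}$.

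Since $f(x)=\frac{\delta}{1-x^{\delta}}-\frac{1}{1-x}$ is decreasing, it follows that $\frac{d(n)-\delta}{n-1}\le f\bigl(e^{-(\frac{1}{\delta}+\frac{1}{\delta+n})/(\omega-2)}\bigr)$. The right-hand side tends to $L_\omega(\delta)\le(\delta-1)L_\omega(2)$. So $d(n)\le p(\delta-1)n+1$ for every $p>L_\omega(2)$ once $n\ge n_0$, and this limit is where $\varepsilon$ enters.

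Only then is $\binom{n+d(n)}{d(n)}$ turned into $\delta^{n}e^{E_p(\delta)n}$ by \cref{lemma_assymp_bino}, with $B_n(r)/r^{d}\le\delta^{n-1}$ (take $r=1$). Absorbing the low-degree part $\widetilde{\mathcal{O}}(\delta^{\omega n+1}e^{2E_{1/2}(\delta)n})$ also requires \cref{lemma_assymp_bino_ineg} together with $\bm{\ell}(\omega)>1/2$.

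Not using simultaneous Noether position is harmless. By palindromicity of $(1+z+\cdots+z^{\delta-1})^{n-1}$, $\gamma_{d,n}=\beta_{\mathbb{D}-d}$ is at most $b^{(n)}_d$. That is exactly the bound the paper draws from the Noether hypothesis through \cref{lemma_bound_bdi_combin}.
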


We recall that the hidden factor in the $\widetilde{\mathcal{O}}$ notation of
\cref{thm:main} is bounded by a polynomial in $n$ whose degree
grows linearly with $\omega$ and does not depend on $\delta$
or $\varepsilon$.
One notable fact in \cref{thm_complexity_F4_tracer_assymp} is that,
using the trace \(\mathcal{T}\), one can test whether several
of its hypotheses are satisfied.
Indeed, from the trace \(\mathcal{T}\) one can determine whether the
degree of the pairs decreases during the computation.
If this is not the case, then
\[
\lm{\grs}{\seqpol} = \lm{\grs}{\hd{\seqpol}},
\]
as shown in the proof of \cref{cor_lien_homog_affine_reg}.
Consequently, the set \(\lm{\grs}{\hd{\seqpol}}\) determines the Hilbert
series of \(\ideal{\hd{\seqpol}}\), which allows one to decide whether
\(\hd{\seqpol}\) lies in \(\zarisemi{\delta}{n}{n}\).
The same reasoning shows that the condition that \(\phi(\hd{\seqpol}, n-1)\) lies in
\(\mathcal{S}_{\delta,n-1,n}\) can also be verified from
\(\lm{\grs}{\ideal{\seqpol}}\) using
\cref{cor_lien_mono_base_quotient_tronqué}.
Finally, since \(\lm{\grs}{\ideal{\hd{\seqpol}}}\) completely describes the
monomial staircase of \(\ideal{\hd{\seqpol}}\), it can also be used to
verify that \(\hd{\seqpol}\) lies in
\(\zarifilling{\delta}{n}{n}\) and \(\zarimore{\delta}{n}\).

In order to prove that \cref{thm:main} follows from
\cref{thm_complexity_F4_tracer_assymp}, we need to show that the hypothesis of
simultaneous Noether position is generic. While this fact is considered part of the
folklore, we did not find a formal statement in the literature. The proof is given in
\cref{lemma_Noether_position_generic}. Combined with the fact that being a regular sequence is generic,
this lemma yields the desired result.

\cref{thm_complexity_F4_tracer_assymp} follows from a more precise formula,
given below in \cref{thm_complexity_F4_tracer}.
This formula holds even when $\delta$ is not fixed. Moreover, it requires fewer
hypotheses than \cref{thm_complexity_F4_tracer_assymp}, since it does not rely on
the simultaneous Noether position assumption.

Consider a sequence $\seqpol$ of polynomials in $\Kp{n}$, all of degree $\delta$.
We recall that the sets $\bdgh{\ast,}{\ast}{\ideal{\seqpol}}{\grs}$ and
$\bdgb{\ast,}{\ast}{\ideal{\seqpol}}{\grs}$ were defined in
\cref{subsection_Onthemonomialstaircase}.
For all $j$ in $\{1, \ldots , n\}$ and all integers $\degalgo$, we set
\[
\gamma_{\degalgo,j} = \card{\bdgh{\degalgo,}{j}{\ideal{\seqpol}}{\grs}}.
\]
Moreover, for all integers $\indalgo$ and all $i$ in $\{1, \ldots , n\}$, we define
\[
\nbred{\indalgo}{i}
=
\binom{i-1 + \indalgo + \delta}{\indalgo + \delta}
-
\binom{i-2 + \indalgo + \delta}{\indalgo + \delta}
-
\bigl(
\card{\bdgb{\indalgo + \delta,}{i}{\ideal{\seqpol}}{\grs}}
-
\card{\bdgb{\indalgo + \delta,}{i-1}{\ideal{\seqpol}}{\grs}}
\bigr).
\]
Now, we define
\begin{align*}
\formulaN{1}{\seqpol} &=
\sum_{\degalgo = \delta}^{\dregprev+1}
\sum_{i = 1}^{n}
\binom{n + \degalgo}{\degalgo}
\left(
\frac{
\nbred{\degalgo-\delta}{i}
\sum_{j = i}^n {\gamma_{\degalgo,j}}
}{
\min\!\left(
\nbred{\degalgo-\delta}{i},
\sum_{j = i}^n {\gamma_{\degalgo,j}}
\right)^{3-\omega}
}
+
\left(
\sum_{j = i}^n {\gamma_{\degalgo,j}}
\right)^{\omega-1}
\right), \\
\formulaN{2}{\seqpol} &=
\sum_{\degalgo = \dregprev+2}^{\Dmac}
{\gamma_{\degalgo,n}}^{\omega-2}
\binom{n + \degalgo}{\degalgo}
\sum_{k = 0}^2
\binom{n-2 + \Dmac-\degalgo+k}{\Dmac-\degalgo+k},
\end{align*}
and we set
\[
\compfqt = \formulaN{1}{\seqpol} + \formulaN{2}{\seqpol}.
\]

The integer $\dregmb = n(\delta - 1) + 1$ denotes the Macaulay bound, which,
for a zero-dimensional ideal generated by a regular sequence, is the maximum
degree of the elements in any graded Gröbner basis of the ideal 
\cite[Theorem~3]{lazard83}.
Moreover, we set
\[
\dregprev = \left\lfloor \frac{\dregmb-1}{2} \right\rfloor + 1,
\]
which corresponds to the degree of regularity of an ideal generated by a sequence
in $\mathcal{S}_{\delta,n-1,n}$ as seen in \cref{subsectionThelastdimensionofthegeneriquemonomialstaircase}.

\begin{thm}
\label{thm_complexity_F4_tracer}
Let $\seqpol= (f_1, \ldots ,f_n)$ be a sequence of nonzero polynomials in the ring $\Kp{n}$ such that $\hd{\seqpol} = (\hd{f_1}, \ldots , \hd{f_n})$ lies in $\zarifilling{\delta}{n}{n} \cap \zarimore{\delta}{n} \cap \zarisemi{\delta}{n}{n}$ where $\delta \geq 2$ and $n \geq 2$.
Suppose also that 
$\phi(\hd{\seqpol},n-1)$ lies in $\mathcal{S}_{\delta,n-1,n}$.
Let \(\mathcal{T}\) be the Gr\"obner trace returned when 
running \algoName{algo:F4B} with \(\seqpol\) as input.
Then running
\algoName{algo:F4T} on input a sequence $\bm{g} = \left( g_1, \ldots,
g_n\right)$ of polynomial equations which is compatible with
$\mathcal{T}$,
computes a minimal Gröbner basis for $(\bm{g}, \grs)$ using
\(\mathcal{O}( \compfqt )\)
arithmetic operations in $\K$,
with \(\compfqt\) as defined above.
\end{thm}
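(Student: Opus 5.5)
Correctness of \algoName{algo:F4T} on a compatible input is already established, so only the operation count needs proof. Via \cref{cor_lien_homog_affine_reg} and \cref{lemma_icreasing_degree_in_algorithm}, the trace $\mathcal{T}$ is structured degree by degree: the $\indalgo$-th iteration handles exactly the pairs of degree $\degalgo = \delta + \indalgo$. The new elements produced there are the elements of $\bdgshort{\degalgo}$, and by \cref{lemma_newbdg_each_step} every degree from $\delta$ to $\Dmac$ occurs. The leading monomials depend only on $\hd{\seqpol}$, so the same holds for every compatible $\bm{g}$. It is therefore enough to bound, for each $\degalgo$, the cost of one call to \algoName{algo:Reduction}. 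Three quantities matter: the number of columns, at most $\binom{n+\degalgo}{\degalgo}$ (all monomials of degree $\le \degalgo$); the number of S-polynomial rows left after $\mathsf{RemoveUselessPairs}$, which is $\card{\bdg{\degalgo,}{\ast}{I}{\grs}} = \sum_j \gamma_{\degalgo,j}$; and the number of reductor rows.

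The key step is the $i$-th pass of the loop in \algoName{algo:Reduction}, i.e. the pivots in $\redensmon{\seqmono}{i}$. By \cref{thm_type1_good_pairs}, every surviving pair is of the form $(x_k, g_j)$ with $k<\ind{g_j}$, so the S-polynomial row has leading monomial $x_k\lm{\grs}{g_j}$ of index $\ind{g_j}$. By \cref{lemma_orderonfi_inGj} and the structure of top-reduction, the rows of $\mathscr{S}'$ still alive at pass $i$ come from $g$ with $\ind{g}\ge i$, giving at most $c_i = \sum_{j\ge i}\gamma_{\degalgo,j}$ rows. The reductors with pivot in $\redensmon{\seqmono}{i}$ are indexed by distinct monomials of degree $\degalgo$ in $\mon{\degalgo,}{i}\setminus\mon{\degalgo,}{i-1}$ that lie in $\lm{\grs}{I}$. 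Counting them as all such monomials minus the standard ones, and using \cref{cor_lien_mono_base_quotient_tronqué}, gives exactly $a_i = \nbred{\degalgo-\delta}{i}$. Applying \cref{prop_complexity_normal_form} with $a=a_i$, $c=c_i$ and $b \le \binom{n+\degalgo}{\degalgo}$, and adding the cost $\mathcal{O}(b\,c_i^{\omega-1})$ of the echelon form of the $c_i\times b$ block, yields the summand of $\formulaN{1}{\seqpol}$. There is no $\seqmonot{0}$ step: by \cref{remark_degree_last_lines}, since $\mathsf{ConstructReductors}$ keeps only reductors above $\min\Gamma_\indalgo$, all pivots lie in degree $\degalgo$. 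I will check that the $a^2c^{\omega-2}$ or $ca^{\omega-1}$ term of \cref{prop_complexity_normal_form} is dominated by $b\,a\,c/\min(a,c)^{3-\omega}$, using $a,c\le b$.

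For $\degalgo \ge \dregprev+2$ I want a sharper bound, and here I use \cref{thm_card_bdg_deg_moitMB}. In this range every new element lies in $\bdg{\degalgo,}{n}{I}{\grs}$, so only the pass $i=n$ is nontrivial, with $c=\gamma_{\degalgo,n}$ rows. Its leading monomials are $x_n^{2\degalgo-\Dmac}m$ with $m\in\bdgb{\Dmac-\degalgo,}{n-1}{I}{\grs}$, and the S-polynomials are $x_k g_j$ with $g_j$ of the same shape in degree $\degalgo-1$. The reductors actually needed should then be restricted to monomials that are multiples of degree $\degalgo$ in a small neighbourhood of these, namely in $x_n$-degree within two of $2\degalgo-\Dmac$. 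The number of such pivot columns is bounded by $\sum_{k=0}^{2}\binom{n-2+\Dmac-\degalgo+k}{\Dmac-\degalgo+k}$. With $c\le a$, \cref{prop_complexity_normal_form} then gives $a^2c^{\omega-2}+abc^{\omega-2}$, which is dominated by $\gamma_{\degalgo,n}^{\omega-2}\,b\,a$; this is the term of $\formulaN{2}{\seqpol}$.

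I expect the main obstacle to be this last counting step: proving that the reductors kept by $\mathsf{ConstructReductors}$ (top-reduction above $\min\Gamma_\indalgo$) are confined to those three $x_n$-degree layers. I would first try to show that every relevant pivot monomial $\upsilon$ satisfies $\min\Gamma_\indalgo \preceq \upsilon$ in grevlex. Since all elements of $\Gamma_\indalgo$ share the $x_n$-exponent $2\degalgo-\Dmac$, this forces $\deg_{x_n}\upsilon$ to be at least $2\degalgo-\Dmac$. In the other direction, the divisibility of $\upsilon$ by $x_k\lm{\grs}{g_j}$-type monomials together with the stability of the staircase (\cref{remark_fillingup_implies_stable}) should bound the excess by $2$.
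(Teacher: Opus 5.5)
Your treatment of $\formulaN{1}{\seqpol}$ follows the paper's route: one pass of $\mathsf{Reduction}$ per index $i$, at most $\sum_{j\ge i}\gamma_{\degalgo,j}$ live S-rows, at most $\nbred{\degalgo-\delta}{i}$ reductor rows, \cref{prop_complexity_normal_form} plus an echelon form, and no $\seqmonot{0}$ step. There is one inaccuracy there. A row of $\mathscr{S}$ is the S-polynomial itself, and its leading monomial lies strictly below the lcm $x_k\lm{\grs}{g_j}$. What holds, and is enough, is that its index is at least $\ind{g_j}$. The count of live rows really comes from the trace: all $\gamma_{\degalgo}$ rows end nonzero with pivots in $\Gamma_\indalgo$.

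The genuine gap is in the step you flag for $\formulaN{2}{\seqpol}$: the grevlex inequalities are reversed. With $x_1\grs\cdots\grs x_n$, among monomials of equal degree a larger $x_n$-exponent gives a \emph{smaller} monomial; the proof of \cref{lemma_sub_Macaulay_matrix} uses exactly this. Every element of $\Gamma_\indalgo$ has $x_n$-exponent $2\degalgo-\Dmac$, so $\upsilon\grs\min\Gamma_\indalgo$ forces $\deg_{x_n}(\upsilon)\le 2\degalgo-\Dmac$, not $\ge$. The filter in $\mathsf{ConstructReductors}$ therefore gives the \emph{upper} end of the window.

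The lower end comes from the pairs, through an order argument rather than divisibility or stability. By \cref{thm_type1_good_pairs}, each lcm is $x_k\lm{\grs}{g_j}$ with $k<\ind{g_j}$ and $\deg(g_j)=\degalgo-1\ge\dregprev+1$. So \cref{thm_card_bdg_deg_moitMB} gives $\ind{g_j}=n$ and $\deg_{x_n}(x_k\lm{\grs}{g_j})=2\degalgo-\Dmac-2$. Every monomial of the S-polynomials, and of the reductors selected for them, is $\preceq_{\drl}$ one of these lcms. Hence in degree $\degalgo$ its $x_n$-exponent is at least $2\degalgo-\Dmac-2$.

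The correct window $[2\degalgo-\Dmac-2,\,2\degalgo-\Dmac]$ is what produces $\sum_{k=0}^{2}\binom{n-2+\Dmac-\degalgo+k}{\Dmac-\degalgo+k}$. The window your plan targets, $[2\degalgo-\Dmac,\,2\degalgo-\Dmac+2]$, would instead give $\sum_{k=0}^{2}\binom{n-2+\Dmac-\degalgo-k}{\Dmac-\degalgo-k}$, and its lower bound fails in general. An S-polynomial whose leading monomial has $x_n$-exponent $2\degalgo-\Dmac-2$ or $2\degalgo-\Dmac-1$ must be reduced before reaching a new leading monomial, all of which have exponent $2\degalgo-\Dmac$. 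So reductors with pivots in exactly those two layers are used, and those layers carry the largest binomials.

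The positivity $2\degalgo-\Dmac-2>0$ for $\degalgo\ge\dregprev+2$ also does a second job. It makes every row and every pivot have index $n$, so that $\nbred{\indalgo}{i}=0$ for $i<n$ and only the pass $i=n$ matters. Knowing that the new elements lie in $\bdg{\degalgo,}{n}{I}{\grs}$ does not by itself rule out reductor pivots of smaller index.

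Finally, justify $c\le a$ explicitly. Use $\gamma_{\degalgo,n}\le\binom{n-2+\Dmac-\degalgo}{\Dmac-\degalgo}$ (via \cref{thm_card_bdg_deg_moitMB}), which is the $k=0$ term of the reductor bound. This is legitimate because the cost in \cref{prop_complexity_normal_form} is monotone in $a$.
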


\subsection{Proof of \cref{thm_complexity_F4_tracer}}
\label{subsec_proof_comp_form}

Further, we denote by \(I\) (resp.~\(J\)) the ideal generated by
\(\seqpol\) (resp.~\(\hd{\seqpol}\)).
Since, by hypothesis, $\hd{\seqpol}$ lies in $\zarisemi{\delta}{n}{n}$,
the sequence $\hd{\seqpol}$ is regular. We can therefore
apply \cref{cor_lien_homog_affine_reg} to obtain
\begin{align*}
\lm{\grs}{ I }
  =
  \lm{\grs}{ J }. \tag*{$(\clubsuit)$} \label{eq:lm:equal}
\end{align*}
Thus, the set $\lm{\grs}{ I }$ inherits
all the properties established for $\lm{\grs}{ J }$,
since $I$ and $J$ share the same leading monomial ideal.

Using again the fact that $\hd{\seqpol}$ lies in
$\zarisemi{\delta}{n}{n}$, one can apply
\cref{lemma_icreasing_degree_in_algorithm} and deduce, from
$\asseralgo{\indalgo}{1}$ and $\asseralgo{\indalgo}{2}$ holding
for all $\indalgo \in \mathbb{N}$,
that all polynomials in $\bdg{}{}{J}{\grs}$ have
degree greater than~$\delta$. Moreover, since the
Hilbert series of $\hd{\seqpol}$ is equal to
$\left( \sum_{i = 0}^{\delta-1} z^i \right)^n$
by \cite[Proposition~1]{pardue2010generic},
we deduce that all polynomials in $\bdg{}{}{J}{\grs}$
have degree less than the Macaulay
bound \(\Dmac = n\left( \delta - 1 \right) + 1\)
\cite[Theorem~3]{lazard83}. Hence, the degrees of all
polynomials in $\bdg{}{}{J}{\grs}$ lie in
$\{\delta, \ldots, \Dmac\}$.

By assumption, $\hd{\seqpol}$ lies in
$\zarifilling{\delta}{n}{n} \cap
\zarimore{\delta}{n} \cap
\zarisemi{\delta}{n}{n}$.
Therefore, one can use \cref{lemma_newbdg_each_step}
and deduce that $\bdg{\degalgo,}{\ast}{J}{\grs}$ is nonempty
if and only if $\degalgo$ lies in $\{\delta, \ldots, \Dmac\}$.
Combined with \ref{eq:lm:equal}, this implies that
$\bdg{\degalgo,}{\ast}{I}{\grs}$ is nonempty
if and only if $\degalgo$ lies in $\{\delta, \ldots, \Dmac\}$.

Using \cref{lemma_icreasing_degree_in_algorithm}, each iteration of the
\texttt{while}-loop in \cref{algo:F4B:while} of \algoName{algo:F4B}
corresponds to a unique degree, since the degree never decreases.
This implies that there are exactly $\Dmac - \delta$ iterations of the
\texttt{while}-loop in \cref{algo:F4B:while} of \algoName{algo:F4B}.
Moreover, using the notation of \cref{lemma_icreasing_degree_in_algorithm},
we have
\[
\curdeg{\indalgo} = \indalgo + \delta
\quad \text{for all } \indalgo \in \{0, \ldots, \Dmac - \delta\}.
\]

Consider a sequence $\seqpolt$ that is compatible with
$\mathcal{T}$ and generates the ideal $\ideal{\seqpolt}$.
Note that, by \ref{eq:lm:equal}, for all integers $j \in \{1, \ldots, n\}$ and
all degrees $d$, we have
\begin{align*}
\bdgh{d,}{j}{J}{\grs}
=
\bdgh{d,}{j}{I}{\grs}
=
\bdgh{d,}{j}{\ideal{\seqpolt}}{\grs}.
\tag*{$(\diamondsuit)$} \label{eq:lm:equalbis}
\end{align*}

We bound the number of arithmetic operations over $\K$ performed by
\algoName{algo:F4T} when $\seqpolt$ is given as input.
Since $\seqpolt$ is compatible with $\mathcal{T}$,
it inherits all the previously established properties of
$\bdg{}{}{I}{\grs}$.
In particular, by \ref{eq:lm:equalbis}, the integer
$\indalgo_{\max}$ in \algoName{algo:F4T} is equal to $\Dmac - \delta$.

We denote by $\opark{\indalgo}$ the number of arithmetic operations in $\K$
performed by \algoName{algo:F4T} during the iteration of the
\texttt{for}-loop at \cref{algo:F4-UsingTracer_for} indexed by $\indalgo$,
for $\indalgo \in \{1, \ldots, \Dmac - \delta\}$.
We recall that, since
$\curdeg{\indalgo} = \indalgo + \delta$ by
\cref{lemma_icreasing_degree_in_algorithm},
the pairs selected during this iteration
have degree $\indalgo + \delta$ and are precisely those used to compute the
elements of a grevlex minimal Gröbner basis of $\ideal{\seqpolt}$ whose
leading monomial lies in
$\bdgh{\indalgo + \delta,}{\ast}{\ideal{\seqpolt}}{\grs}$.
The integer $\opark{0}$ denotes the number of arithmetic operations in $\K$
performed by \algoName{algo:F4T} before the \texttt{for}-loop at
\cref{algo:F4-UsingTracer_for}.

Finally, we denote by $\gamma_{d}$ (resp.~$\gamma_{d,j}$) the cardinality of
$\bdgh{d,}{\ast}{I}{\grs}$ (resp.~$\bdgh{d,}{j}{I}{\grs}$).
Moreover, using \ref{eq:lm:equalbis}, we have that $\gamma_{d}$
(resp.~$\gamma_{d,j}$) is equal to the cardinality of
$\bdgh{d,}{\ast}{\ideal{\seqpolt}}{\grs}$
(resp.~$\bdgh{d,}{j}{\ideal{\seqpolt}}{\grs}$).

\smallskip

Since, by assumption, $\hd{\seqpol}$ lies in $\zarisemi{\delta}{n}{n}$,
none of these polynomials is a $\K$-linear combination of the others.
This implies that the row echelon form of the Macaulay matrix associated
with $\seqpol$ has all its pivots in columns indexed by monomials of
degree~$\delta$, and that there are exactly $n$ such pivots.
Hence, we deduce that the Macaulay matrix considered at
\cref{algo:F4-UsingTracer_before_for} of \algoName{algo:F4T} has full rank.
Consequently, using \cite[Chapter~2]{Storjohann2000},
we have $\gamma_{\delta} = n$, which implies that
\[
\opark{0} \in \mathcal{O} \left(
\gamma_{\delta}^{\omega-1} \binom{\delta + n}{\delta}
\right) \in \mathcal{O} \left(
n^{\omega-1} \binom{\delta + n}{\delta}
\right).
\]
\smallskip

Let $\indalgo \in \{1, \ldots, \Dmac - \delta\}$. We now bound $\opark{\indalgo}$.
There are two steps to consider: the first one is described in
\cref{algo:F4-UsingTracer_constpairs}, and the second one in
\cref{algo:F4-UsingTracer_Reduction} of \algoName{algo:F4T}.
The numbers of arithmetic operations over $\K$ required for these two steps
are denoted respectively by
$\opark{\indalgo}^{\mathrm{pairs}}$ and $\opark{\indalgo}^{\mathrm{red}}$.

We first estimate $\opark{\indalgo}^{\mathrm{pairs}}$.
Since $\mathcal{T}$ is a Gröbner trace of $\seqpol$,
and that $\seqpolt$ is compatible with $\mathcal{T}$, the number of
S-polynomials to be constructed in
\cref{algo:F4-UsingTracer_constpairs} is equal to
$\gamma_{\delta + \indalgo}$.
The polynomials involved in the construction of these S-polynomials have
degree $\delta + \indalgo$.
This step therefore consists of performing at most
$\gamma_{\delta + \indalgo}$ additions of two polynomials of degree
bounded above by $\delta + \indalgo$.
Thus, $\opark{\indalgo}^{\mathrm{pairs}}$ can be bounded as follows:
\[
\opark{\indalgo}^{\mathrm{pairs}}
\;\le\;
\gamma_{\delta + \indalgo}
\binom{n + \delta + \indalgo}{\delta + \indalgo}.
\]

We now derive a bound for the integer $\opark{\indalgo}^{\mathrm{red}}$.
We denote by
\[
\mathscr{S} = \mac{}{}{S_{\indalgo}}{\supo{S_{\indalgo} \cup R_{\indalgo}}}{}{}{}
\quad\text{and}\quad
\mathscr{R} = \mac{}{}{R_{\indalgo}}{\supo{S_{\indalgo} \cup R_{\indalgo}}}{}{}{}
\]
the two matrices involved in the reduction step.
For $i \in \{1, \ldots, n\}$, we denote by
$\opark{\indalgo,i,1}$ (resp.~$\opark{\indalgo,i,2}$)
the number of arithmetic operations in $\K$ performed by
\algoName{algo:Reduction} at
\cref{algorithm_reduction_red} (resp.~\cref{algorithm_reduction_ech})
during the iteration of the \texttt{for}-loop indexed by~$i$.
Since we are in a top-reduction setting and the S-polynomials have the same
degree as the newly computed polynomials, the pivots of $\mathscr{R}$ lie in
columns indexed by monomials of maximal degree.
By \cref{remark_degree_last_lines},
\cref{algorithm_reduction_redap} and
\cref{algorithm_reduction_echap} of \algoName{algo:Reduction} do not occur.
It follows that
\[
\opark{\indalgo}^{\mathrm{red}}
\;=\;
\sum_{i=1}^{n} \left(
\opark{\indalgo,i,1} + \opark{\indalgo,i,2}
\right).
\]

Let $i \in \{1, \ldots, n\}$.
We start by estimating $\opark{\indalgo,i,2}$.
Let $\seqmono$ be the sequence of monomials indexing the columns of
$\mathscr{S}$ and $\mathscr{R}$.
As in \cref{sec:reduction}, we denote by $\seqmonot{i}$
the subset of $\seqmono$ consisting of monomials of maximal degree that lie in
$\mon{}{i} \setminus \mon{}{i-1}$.

At this stage of \algoName{algo:Reduction}, the matrix
$\selered{\mathscr{S}'}{\seqmonot}{\seqmonot{i}}$ has at most
$\binom{n + \indalgo + \delta}{\indalgo + \delta}$ columns.
Since all leading monomials in $\Kp{i-1}$ of degree $\indalgo + \delta$
have already been identified by the previous steps of
\algoName{algo:Reduction}, the number of rows of the matrix
$\selered{\mathscr{S}'}{\seqmonot}{\seqmonot{i}}$ is bounded above by
\[
\sum_{j = i}^{n} \gamma_{\indalgo + \delta, j}.
\]

We can make a bijection between each row of $\selered{\mathscr{S}'}{\seqmonot}{\seqmonot{i}}$
and a subset of monomials of degree $\indalgo + \delta$,
and therefore the number of columns is greater than the number of rows.
By \cite[Chapter~2]{Storjohann2000}, it follows that
$\opark{\indalgo,i,2}$ lies in
\begin{align*}
\mathcal{O}\!\left(
\left( \sum_{j = i}^{n}
\gamma_{\indalgo + \delta, j} \right)^{\omega - 1}
\binom{n + \indalgo + \delta}{\indalgo + \delta}
\right).
\end{align*}

We now estimate $\opark{\indalgo,i,1}$.
At this stage of \algoName{algo:Reduction}, the matrix
$\selered{\mathscr{S}'}{\seqmonot}{\seqmonot{i}}$ has
$\binom{n + \indalgo + \delta}{\indalgo + \delta}$ columns and at most
$\sum_{j = i}^{n} \gamma_{\indalgo + \delta,j}$ rows.
Similarly, the matrix
$\selered{\mathscr{R}}{\seqmonot}{\seqmonot{i}}$ has at most
$\binom{n + \indalgo + \delta}{\indalgo + \delta}$ columns.
We now bound the number of rows of
$\selered{\mathscr{R}}{\seqmonot}{\seqmonot{i}}$.
By definition, it is bounded above by the cardinality of $\seqmonot{i}$,
which is itself bounded by the cardinality of
$\mon{\delta + \indalgo,}{i} \setminus \mon{\delta + \indalgo,}{i-1}$, namely
\[
\binom{i-1 + \indalgo + \delta}{\indalgo + \delta}
-
\binom{i-2 + \indalgo + \delta}{\indalgo + \delta}.
\]

However, each row of
$\selered{\mathscr{R}}{\seqmonot}{\seqmonot{i}}$ represents a polynomial in
$\ideal{\seqpolt}$, so its leading monomial must belong to
$\lm{\grs}{\ideal{\seqpolt}}$ by \ref{eq:lm:equalbis}.
Hence, the number of rows of
$\selered{\mathscr{R}}{\seqmonot}{\seqmonot{i}}$ is bounded above by
$\nbred{\indalgo}{i}$, where
\[
\nbred{\indalgo}{i}
=
\binom{i-1 + \indalgo + \delta}{\indalgo + \delta}
-
\binom{i-2 + \indalgo + \delta}{\indalgo + \delta}
-
\bigl( \beta_{\iota, i} - \beta_{\iota, i-1} \bigr),
\]
with
\[
\beta_{\iota, i}
=
\card{\bdgb{\indalgo + \delta,}{i}{\ideal{\seqpolt}}{\grs}}.
\]

We apply \cref{prop_complexity_normal_form} with $a$ equal to the number of
rows of $\selered{\mathscr{R}}{\seqmonot}{\seqmonot{i}}$, $c$ equal to the
number of rows of $\selered{\mathscr{S}'}{\seqmonot}{\seqmonot{i}}$, and $b$
equal to the number of columns of these two matrices minus $a$.
Since $b$ is bounded above by
$\binom{n + \indalgo + \delta}{\indalgo + \delta}$, which is larger than both
$a$ and~$c$, we obtain that $\opark{\indalgo,i,1}$ lies in
\[
\mathcal{O}\!\left(
\frac{
\binom{n + \indalgo + \delta}{\indalgo + \delta}\,
\nbred{\indalgo}{i}\,
\displaystyle\sum_{j = i}^{n}
\gamma_{\indalgo + \delta,j}
}{
\min\!\left(
\nbred{\indalgo}{i},
\displaystyle\sum_{j = i}^{n}
\gamma_{\indalgo + \delta,j}
\right)^{3-\omega}
}
\right).
\]

\smallskip

From now on, the bound obtained for $\opark{\indalgo}^{\mathrm{pairs}}$
is dominated by that for $\sum_{i = 1}^{n} \opark{\indalgo,i,2}$.
Thus, we have $\opark{\indalgo} \in \mathcal{O}\!\left( N_{\indalgo} \right)$
where
\begin{align*}
N_{\indalgo}
=
\sum_{i = 1}^{n} \left(
\frac{
\binom{n + \indalgo + \delta}{\indalgo + \delta}\,
\nbred{\indalgo}{i}\,
\displaystyle\sum_{j = i}^{n}
\gamma_{\indalgo + \delta,j}
}{
\min\!\left(
\nbred{\indalgo}{i},
\displaystyle\sum_{j = i}^{n}
\gamma_{\indalgo + \delta,j}
\right)^{3-\omega}
}
+
\left(
\sum_{j = i}^{n}
\gamma_{\indalgo + \delta,j}
\right)^{\omega - 1}
\binom{n + \indalgo + \delta}{\indalgo + \delta}
\right).
\tag*{($\spadesuit$)}\label{eq:comp}
\end{align*}

Recalling that $\curdeg{\indalgo} = \indalgo + \delta$ by
\cref{lemma_icreasing_degree_in_algorithm}, this proves the result for all
$\indalgo \in \{0, \ldots, \dregprev - \delta + 1\}$.
Now, we define
\[
\formulaN{1}{\seqpol} = \sum_{\indalgo = 0}^{\dregprev - \delta + 1} N_{\indalgo}.
\]
It remains to prove the result for $\indalgo in \{ \dregprev + 2 - \delta ,\ldots , \Dmac\}$

\smallskip

We have just established a bound on $\opark{\indalgo}$.
We now refine this bound for $\indalgo \geq \dregprev + 2 - \delta$.
To this end, we first establish four key properties.

Since, by hypothesis, the sequence $\phi(\hd{\seqpol}, n-1)$ lies in
$\mathcal{S}_{\delta, n-1, n}$, the degree of regularity of the ideal
$\ideal{\phi(\hd{\seqpol}, n-1)}$ is equal to $\dregprev$, as shown in
\cref{subsectionThelastdimensionofthegeneriquemonomialstaircase}.
Moreover, since the Hilbert series of
$\ideal{\phi(\hd{\seqpol}, n-1)}$ is a polynomial by
\cite[Proposition~1]{pardue2010generic}, this ideal is zero-dimensional.
Thus, any element of a grevlex minimal Gröbner basis of
$\ideal{\phi(\hd{\seqpol}, n-1)}$ has degree at most $\dregprev$.

Using \cref{thm_semieregseq_macmat}, we know that any element of a grevlex
minimal Gröbner basis of $\ideal{\hd{\seqpol}}$ whose leading monomial has
degree greater than or equal to $\dregprev + 1$ lies in
$\mon{n}{} \setminus \mon{}{n-1}$.
Together with \ref{eq:lm:equalbis}, we deduce that, for all
$\indalgo \geq \dregprev + 1 - \delta$ and all
$i \in \{1, \ldots, n-1\}$,
\[
\gamma_{\indalgo + \delta, i} = 0.
\tag*{($\heartsuit,1$)}\label{eq:degnlm}
\]

By assumption, the sequence $\hd{\seqpol}$ lies in $\zarimore{\delta}{n}$.
Thus, using \cref{thm_card_bdg_deg_moitMB} applied to $\hd{\seqpol}$ together
with \ref{eq:lm:equalbis}, we obtain that, for all
$\indalgo \geq \dregprev + 1$,
\begin{align*}
\bdgh{\curdeg{\indalgo},}{n}{\ideal{\seqpolt}}{\grs}
=
\left\{
x_n^{2\curdeg{\indalgo} - \Dmac} \, m
\;\middle|\;
m \in \bdgb{\Dmac - \curdeg{\indalgo},}{n-1}{\ideal{\seqpolt}}{\grs}
\right\}.
\tag*{($\heartsuit,2$)} \label{eq:SM}
\end{align*}

Let $\indalgo$ be fixed in
$\{\dregprev + 2 - \delta, \ldots, \Dmac - \delta\}$.
We recall that $\curdeg{\indalgo} = \indalgo + \delta$.
This implies that $\curdeg{\indalgo} \geq \dregprev + 2$.
The goal is to derive a more precise bound on the number of reductors.

Let us consider a selected pair $p$ at step $\indalgo$ of
\algoName{algo:F4T}.
Since $\ideal{\seqpolt}$ is compatible with $\mathcal{T}$, there exists a corresponding
pair $\tilde{p}$ in \algoName{algo:F4B}, involving the same leading monomials,
with $\seqpol$ as input.
We denote $\tilde{p} = (t, t_i, g_i, t_j, g_j)$, where
\[
t = \operatorname{lcm}\!\bigl(\lm{\grs}{g_i}, \lm{\grs}{g_j}\bigr), \quad
t_i = \frac{t}{\lt{\grs}{g_i}}, \quad
\text{and} \quad
t_j = \frac{t}{\lt{\grs}{g_j}}.
\]
We prove that $(t_j, g_j)$ is \type.

First, it was established in \cite[Theorem~2.2]{faugere1999new} that the output
$G$ of \algoName{algo:F4B} is a grevlex Gröbner basis of $I$.
Moreover, using assertion $\asseralgo{\cdot}{3}$ of
\cref{lemma_icreasing_degree_in_algorithm}, we know that, since
$\hd{\seqpol}$ lies in $\zarisemi{\delta}{n}{n}$, this output is a minimal
grevlex Gröbner basis.

This situation fits into the framework of
\cref{subsection_Onthesyzygiesmodule}, which allows us to deduce that all
selected pairs lie in $\taylorbasis^{(1)}(G)$.
More precisely, these pairs belong to $\taylorbasis^{\star}(G)$, since they
are selected by the criterion described in
\cref{algo:F4-BuildTracer_critere} of \algoName{algo:F4B}.
Finally, since $\hd{\seqpol}$ lies in $\zarifilling{\delta}{n}{n}$ and the
ideal $I$ is zero-dimensional, we can apply
\cref{thm_type1_good_pairs} to conclude that $(t_j, g_j)$ is \type.

Since $\ideal{\seqpolt}$ is compatible with $\mathcal{T}$, the pair $p$
has exactly the same leading monomials as $\tilde{p}$.
We deduce that the pair $p$ is constructed using one polynomial whose leading
monomial lies in
$\bdgh{\curdeg{\indalgo}-1,}{n}{I}{\grs}$.
Since $\curdeg{\indalgo-1} = \curdeg{\indalgo}-1$, the pair $p$ is therefore
built using an element whose leading monomial lies in
$\bdgh{\curdeg{\indalgo-1},}{n}{I}{\grs}$.

Moreover, using \ref{eq:SM}, these elements have degree in $x_n$ equal to
$2\curdeg{\indalgo}-\Dmac-2$.
Therefore, the leading monomials of the polynomials used to construct the pairs
have degree in $x_n$ at least $2\curdeg{\indalgo}-\Dmac-2$.

Let us summarize the properties of the pairs at this step:
\begin{itemize}
\item they have total degree $\curdeg{\indalgo}$;
\item one element of each pair has degree $\curdeg{\indalgo}-1$;
\item the leading monomial of this element has degree in $x_n$ equal to
$2\curdeg{\indalgo}-\Dmac-2$;
\item the new polynomials produced at this step have degree in $x_n$ equal to
$2\curdeg{\indalgo}-\Dmac$, by \ref{eq:SM}.
\end{itemize}

Recall that we are in a top-reduction context.
We deduce that the trace $\mathcal{T}$ is such that the reductors used at this
step also have degree $\curdeg{\indalgo}$.
Consequently, for any leading monomial $\mu$ of a reductor, we have
\[
2\curdeg{\indalgo}-\Dmac-2
\;\le\;
\deg_{x_n}(\mu)
\;\le\;
2\curdeg{\indalgo}-\Dmac.
\]

Moreover, since $\curdeg{\indalgo} \geq \dregprev + 2$, we have
$2\curdeg{\indalgo}-\Dmac-2 > 0$.
We deduce that, for all integers
$i \in \{1, \ldots, n-1\}$, the number $\nbred{\indalgo}{i}$ of reductors
is zero.
It remains to bound the number of reductors corresponding to the case
$i = n$.
We can bound the number of such reductors by the number of monomials of degree
$\curdeg{\indalgo}$ whose degree in $x_n$ lies between
$2\curdeg{\indalgo}-\Dmac-2$ and $2\curdeg{\indalgo}-\Dmac$.
This yields
\[
\nbred{\indalgo}{n}
=
\sum_{k = 0}^{2}
\binom{n-2 + \Dmac-\curdeg{\indalgo}+k}{\Dmac-\curdeg{\indalgo}+k},
\quad\text{and}\quad
\nbred{\indalgo}{i} = 0
\;\text{ for all }\;
i \in \{1, \ldots, n-1\}.
\tag*{($\heartsuit,3$)} \label{eq:nbred}
\]

Using \ref{eq:SM}, we deduce that $\gamma_{\curdeg{\indalgo},n}$
 is bounded above by the cardinality of
$\bdgb{\Dmac-\curdeg{\indalgo},}{n-1}{\ideal{\seqpolt}}{\grs}$, which is itself bounded above by
$\binom{n-2 + \Dmac-\curdeg{\indalgo}}{\Dmac-\curdeg{\indalgo}}$.
Therefore, we obtain
\begin{align*}
\gamma_{\curdeg{\indalgo},n}
\leq
\binom{n-2 + \Dmac-\curdeg{\indalgo}}{\Dmac-\curdeg{\indalgo}} < 
\nbred{\indalgo}{n}. \tag*{($\heartsuit,4$)} \label{ineq:rednewel}
\end{align*}

Let us dive into a refined arithmetic complexity expression for this step.
First, we apply \ref{eq:comp} using \ref{eq:degnlm} and the new bound \ref{eq:nbred}
on the number of reductors. Thus, we obtain that 
$\opark{\indalgo}^{\mathrm{red}} \in \mathcal{O}\left( N_{\indalgo,\mathrm{red}} \right)$ with  
\begin{align*}
N_{\indalgo,\mathrm{red}}  &\leq \frac{
\binom{n + \indalgo + \delta}{\indalgo + \delta}\,
\nbred{\indalgo}{n}\,
\displaystyle 
\gamma_{\indalgo + \delta,n}
}{
\min\!\left(
\nbred{\indalgo}{i},
\displaystyle 
\gamma_{\indalgo + \delta,n}
\right)^{3-\omega}} + 
\gamma_{\indalgo + \delta,n}^{\omega - 1}
\binom{n + \indalgo + \delta}{\indalgo + \delta} \\
&\leq \binom{n + \indalgo + \delta}{\indalgo + \delta} \left(
\nbred{\indalgo}{n}\,
\gamma_{\indalgo + \delta,n}^{\omega -2}
+ \gamma_{\indalgo + \delta,n}^{\omega - 1} \right) \tag{by \ref{ineq:rednewel}} \\
&\leq 2\binom{n + \indalgo + \delta}{\indalgo + \delta}
\nbred{\indalgo}{n}\,
\gamma_{\indalgo + \delta,n}^{\omega -2}. \tag{by \ref{ineq:rednewel}}
\end{align*}
We have already seen that $\opark{\indalgo}^{\mathrm{pairs}} \in \mathcal{O}\left( N_{\indalgo,\mathrm{pairs}} \right)$ where
\begin{align*}
N_{\indalgo,\mathrm{pairs}}
&\;\le\;
\gamma_{\delta + \indalgo}
\binom{n + \delta + \indalgo}{\delta + \indalgo} \\
&\;\le\; \gamma_{\delta + \indalgo,n}
\binom{n + \delta + \indalgo}{\delta + \indalgo}  \tag{by \ref{ineq:rednewel}} \\
&\;\le\; \binom{n + \indalgo + \delta}{\indalgo + \delta}
\nbred{\indalgo}{n}\,
\gamma_{\indalgo + \delta,n}^{\omega -2}. \tag{by \ref{eq:degnlm}} 
\end{align*}
Consequently, it follows that $\opark{\indalgo} \in \mathcal{O}\left( \widetilde{N_{\indalgo}} \right)$ where 
\[ \widetilde{N_{\indalgo}} = \binom{n + \indalgo + \delta}{\indalgo + \delta}
\nbred{\indalgo}{n}\,
\gamma_{\indalgo + \delta,n}^{\omega -2}. \]
This ends the proof.
Moreover, we define $\formulaN{2}{\seqpol} = \sum_{\indalgo = \dregprev-\delta+2}^{\Dmac-\delta} \widetilde{N_{\indalgo}}$.

\subsection{Proof of \cref{thm_complexity_F4_tracer_assymp}}
\label{subsec_proofofassymp} 

Now we aim to understand the asymptotic behaviour of the complexity bound established in 
\cref{thm_complexity_F4_tracer},
when the number of variable $n$ goes to infinity.
We recall that for a sequence $\seqpol$
of $\Kp{n}$, we set
for all $j$ in $\{1, \ldots , n\}$ and all integers $\degalgo$ that
\[
\gamma_{\degalgo,j} = \card{\bdgh{\degalgo,}{j}{\ideal{\seqpol}}{\grs}}.
\]
Moreover, if $\seqpol$ generates a zero-dimensional ideal, for all integers $\indalgo$ and all $i$ in $\{1, \ldots , n\}$, we define
\[
\nbred{\indalgo}{i}
=
\binom{i-1 + \indalgo + \delta}{\indalgo + \delta}
-
\binom{i-2 + \indalgo + \delta}{\indalgo + \delta}
-
\bigl(
\card{\bdgb{\indalgo + \delta,}{i}{\ideal{\seqpol}}{\grs}}
-
\card{\bdgb{\indalgo + \delta,}{i-1}{\ideal{\seqpol}}{\grs}}
\bigr).
\] 

For a sequence $\seqpol$ of $\Kp{n}$, of degree $\delta$, that respects
the hypothesis of \cref{thm_complexity_F4_tracer}, we consider the 
decomposition of $\compfqt$ in its two terms using the notations of
\cref{subsec_proof_comp_form}
\[ \compfqt = \formulaN{1}{\seqpol} + \formulaN{2}{\seqpol} \]
where
\[ \formulaN{1}{\seqpol} = \sum_{\degalgo = \delta}^{\dregprev+1} \sum_{i = 1}^{n} \binom{n +
\degalgo}{\degalgo} \left( \frac{ \nbred{\degalgo-\delta}{i}
\sum_{j = i}^n {\gamma_{\degalgo,j}}} {\min\left(
\nbred{\degalgo-\delta}{i},  \sum_{j = i}^n
{\gamma_{\degalgo,j}} \right)^{3-\omega} } + \left(
\sum_{j = i}^n {\gamma_{\degalgo,j}} \right)^{\omega-1}
\right)
\]
and 
\[ \formulaN{2}{\seqpol} = \sum_{\degalgo = \dregprev+2}^{\Dmac}
{\gamma_{\degalgo,n}}^{\omega-2} \binom{n + \degalgo}{\degalgo}
\sum_{k = 0}^2 \binom{n-2 + \Dmac-\degalgo+k}{\Dmac-\degalgo+k}
\]
with the notation introduced in \cref{thm_complexity_F4_tracer}.

Recall that, for a positive real number \(\proporMB\) in \( (0,1] \), we define 
\[E_\proporMB(\delta) = 
\ln\left( \frac{1+\proporMB(\delta-1)}{\delta} \right)
+ 
\proporMB(\delta-1)\ln\left(\frac{1+ \proporMB(\delta-1)}{\proporMB(\delta-1)}\right).\] 
Moreover, we set
$L_{\omega}(2) = (1 + e^{-1/(2(\omega-2))})^{-1}$,
and we extend this definition by continuity by setting $L_{2}(2) = 1$.
Let us establish the following proposition in order to prove
\cref{thm_complexity_F4_tracer_assymp}.

\begin{prop}
\label{prop_asymp_two_terms}
With the previous notation and for a fixed value of \(\delta \ge 2\), we have that 
\begin{enumerate}
\item{} \label{prop_asymp_two_terms1} $\formulaN{1}{\seqpol} \in \widetilde{\mathcal{O}}\left(
\delta^{\omega n+1} e^{2E_{1/2}(\delta)n}
\right)$.
\end{enumerate}
Moreover, if the variables $x_1, \ldots , x_n$ are in simultaneous Noether 
position with respect to $\hd{\seqpol}$, we have that
\begin{enumerate}[resume]
\item{} \label{prop_asymp_two_terms2} $\formulaN{2}{\seqpol} \in \widetilde{\mathcal{O}}\left(
\delta^{\omega n+3-\omega} e^{n \left( E_{1/2}(\delta)+E_{L_{\omega}(2) +\epsilon}(\delta) \right)}
\right)$ for all $\epsilon > 0$.
\end{enumerate}
\end{prop}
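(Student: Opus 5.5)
We bound the two sums separately, each time reducing to a few crude but sharp-enough estimates on binomial coefficients and on the Hilbert function of $\langle \hd{\seqpol}\rangle$, and then to a Laplace-type (saddle point) estimate of the largest term.

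For \ref{prop_asymp_two_terms1}, the plan is to bound every term of $\formulaN{1}{\seqpol}$ crudely. First, $\min(a,c)^{\omega-3}ac\le \max(a,c)^{\omega-2}\min(a,c)$, and both $\nbred{\degalgo-\delta}{i}$ and $\sum_{j\ge i}\gamma_{\degalgo,j}$ are at most $\binom{n+\degalgo}{\degalgo}$. This alone is too weak, so we instead bound $\sum_{j}\gamma_{\degalgo,j}\le\gamma_\degalgo$ by the number of degree $\degalgo$ monomials of the leading ideal. By \cref{lemma_generator_homog_space} and semi-regularity, this is at most $n\cdot\card{\bdgb{\degalgo-1,}{n}{I}{\grs}}$, which is the coefficient of $z^{\degalgo-1}$ in $\left(\sum_{i=0}^{\delta-1}z^i\right)^n$ and hence at most $\delta^n$. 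Each summand is then at most $n\binom{n+\degalgo}{\degalgo}^{2}\delta^{(\omega-2)n}$ times polynomial factors, and $\degalgo$ ranges up to $\dregprev+1\approx n(\delta-1)/2+1$, with $O(n\delta)$ values of $\degalgo$ contributing the extra factor $\delta$. The key estimate is then Stirling's formula applied to $\binom{n+\degalgo}{\degalgo}$ at $\degalgo=pn(\delta-1)$ with $p=1/2$. This gives
\[
\binom{n+\degalgo}{\degalgo}\approx \exp\bigl(n[(1+p(\delta-1))\ln(1+p(\delta-1))-p(\delta-1)\ln(p(\delta-1))]\bigr)=\delta^n e^{nE_p(\delta)},
\]
which is exactly why $E_p$ has the form stated. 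Squaring gives $\delta^{2n}e^{2E_{1/2}(\delta)n}$, and combining with $\delta^{(\omega-2)n}$ yields $\delta^{\omega n}e^{2E_{1/2}n}$ up to polynomial factors. Since the binomial is increasing in $\degalgo$, the last index dominates.

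For \ref{prop_asymp_two_terms2}, we substitute $e=\Dmac-\degalgo$, which ranges over $0,\dots,\Dmac-\dregprev-2\approx n(\delta-1)/2$. The factor $\binom{n+\degalgo}{\degalgo}$ is at most $\delta^ne^{nE_{1/2}(\delta)}$, again after one application of Stirling's formula. By \ref{ineq:rednewel}, the reductor factor is $O(\binom{n-2+e+2}{e+2})$. By \cref{thm_card_bdg_deg_moitMB}, $\gamma_{\degalgo,n}=\beta_e$, the $e$-th coefficient of $\left[(1-z^\delta)^n/(1-z)^{n-1}\right]_+$. The simultaneous Noether position hypothesis is what lets us compare $\beta_e$ with the Hilbert function of $\hd{\seqpol}$ truncated to $n-1$ variables, bounding it by the coefficient of $z^e$ in $(1-z)\left(\sum_{i<\delta}z^i\right)^n$. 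Writing $e=qn(\delta-1)$, we then need to maximise over $q\in(0,1/2]$ the exponent
\[
(\omega-2)\tfrac1n\ln\beta_{e}+\tfrac1n\ln\binom{n+e}{e},
\]
where the binomial contributes $\ln\delta+E_q(\delta)$ by Stirling. For $\beta_e$ we use the saddle-point bound $\beta_e\le x^{-e}\left(\sum_{i<\delta}x^i\right)^n$ for any $x\in(0,1]$.

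The main obstacle is showing that the optimum of this combined exponent is at most $(\omega-3)\ln\delta+\ln\delta+E_{L_\omega(2)+\epsilon}(\delta)$. The definition $L_\omega(2)=(1+e^{-1/(2(\omega-2))})^{-1}$ suggests the intended route. First choose $x=e^{-1/(\omega-2)}$ (this choice produces the exponentials appearing in $L_\omega$). Then bound $\left(\sum_{i<\delta}x^i\right)^n\le\delta^n$, so that the $\beta$ term contributes $\delta^{(\omega-2)n}x^{-(\omega-2)e}=\delta^{(\omega-2)n}e^{e}$. Finally, maximise $e+\ln\binom{n+e}{e}$ over $e$. Its derivative, $1+\ln((n+e)/e)$, is always positive, so the supremum is not interior. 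This means the coarse bound on $\beta_e$ must be sharpened: we keep the factor $(1-x^\delta)^n/(1-x)^{n}$ and optimise $x$ jointly with $q$. The resulting stationary point is expected at $q=L_\omega(2)$ in the $p$-parametrisation of $E$, with the $+\epsilon$ absorbing the polynomial and Stirling error terms and the monotonicity of $E_p$ in $p$. I would first carry out this two-variable optimisation at $\delta=2$, where $L_\omega$ is defined, and then check that the same $p$ serves as an upper bound for general $\delta$ because $E_p(\delta)$ is increasing in $p$. Summing the $O(n\delta)$ terms then costs only polynomial factors, giving the claimed $\delta^{\omega n+3-\omega}e^{n(E_{1/2}+E_{L_\omega(2)+\epsilon})}$.
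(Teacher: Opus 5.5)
\paragraph{Part 1.} Your argument for the first assertion is essentially the paper's. You bound $\binom{n+\degalgo}{\degalgo}$ and $\nbred{\degalgo-\delta}{i}$ by the binomial at the top degree $\dregprev+1$, and bound $\sum_{j\ge i}\gamma_{\degalgo,j}$ by $\mathcal{O}(n\delta^n)$; your injection into standard monomials of degree $\degalgo-1$ is the same count behind \cref{lemma_naiveboundcardbdg}. You then apply Stirling at $p=1/2$ and pay $\mathcal{O}(n\delta)$ for the number of degrees.

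One slip: your first inequality is reversed. The correct relation is $ac/\min(a,c)^{3-\omega}=\max(a,c)\min(a,c)^{\omega-2}\le ac^{\omega-2}+a^{\omega-2}c$. The extra term $a^{\omega-2}c$, like the term $(\sum_{j\ge i}\gamma_{\degalgo,j})^{\omega-1}$, contributes at most $\delta^{\omega n}e^{(\omega-1)E_{1/2}(\delta)n}$ up to polynomial factors. This is dominated because $\omega-1\le 2$.

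\paragraph{Part 2: the gap.} You have swapped the two binomial factors. Since $\degalgo\ge\dregprev+2>n(\delta-1)/2$, the factor $\binom{n+\degalgo}{\degalgo}$ is \emph{at least} $\delta^n e^{nE_{1/2}(\delta)}$ up to polynomial factors, not at most. It grows to $\delta^n e^{nE_{1/2}(\delta)}$ replaced by $\delta^n e^{nE_1(\delta)}$ at $\degalgo=\Dmac$. It is the reductor factor, of size $\binom{n+e}{e+2}$ with $e=\Dmac-\degalgo\le n(\delta-1)/2$, that is bounded by $\delta^n e^{nE_{1/2}(\delta)}$. Three consequences:
\begin{itemize}
\item The exponent $(\omega-2)\ln\beta_e+\ln\binom{n+e}{e}$ you set out to maximise is not where the difficulty lies.
\item Your anticipated optimum $q=L_\omega(2)$ falls outside your own range $q\le 1/2$, since $L_\omega(2)>1/2$. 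In the paper, $L_\omega(2)$ bounds $\degalgo/((\delta-1)n)$, not $e/((\delta-1)n)$.
\item The decisive optimisation is only announced (``is expected at''), never carried out.
\end{itemize}

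\paragraph{The missing idea.} The growth of $\binom{n+\degalgo}{\degalgo}$ must be offset by the decay of the reductor binomial. The paper does this as follows.
\begin{itemize}
\item It locates the maximising degree $\thegoodd{n}$ of the whole summand, using $\gamma_{\degalgo,n}\le\cofc{\degalgo}{n}\le\serieb{n}{r}/r^{\degalgo}$. This is where the Noether position enters, through \cref{lemma_bound_bdi_combin}. The comparison of $\beta_e$ with a coefficient of $(1-z)(\sum_{i<\delta}z^i)^n$, which you attribute to that hypothesis, needs no hypothesis at all.
\item In the logarithmic derivative, $\psi(\degalgo+n+1)-\psi(\degalgo+1)$ is almost cancelled by the reductor term, because $\Dmac-\degalgo+1\le\degalgo$. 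This gives $(\omega-2)\ln r(n,\thegoodd{n})\le 1/\delta+1/(\delta+n)$.
\item The saddle-point equation and the monotonicity of $x\mapsto\delta/(1-x^\delta)-1/(1-x)$ then give $\thegoodd{n}\le(L_\omega(2)+\epsilon)(\delta-1)n+1$ for large $n$. After that, each factor is bounded crudely.
\end{itemize}

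\paragraph{A shortcut.} Keeping the two binomials together gives a much shorter proof. Bound the reductor sum by $3\binom{n+\Dmac-\degalgo+2}{\Dmac-\degalgo+2}$. For each $i$, use $(\degalgo+i)(\Dmac-\degalgo+2+i)\le((\Dmac+2)/2+i)^2$. Then the product of the two binomials is at most $\bigl(\prod_{i=1}^{n}((\Dmac+2)/2+i)/n!\bigr)^2$, which is $\mathcal{O}(\delta^{2n}e^{2nE_{1/2}(\delta)})$ by \cref{lemma_assymp_bino}. Combined with $\gamma_{\degalgo,n}\le n\delta^n$, this gives $\formulaN{2}{\seqpol}\in\widetilde{\mathcal{O}}(\delta^{\omega n}e^{2nE_{1/2}(\delta)})$. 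That already implies the stated bound by \cref{lemma_assymp_bino_ineg}.
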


We need the following two lemmata in order to complete the proof of
\cref{prop_asymp_two_terms}.
Here is a particular case of the Bézout bound and one of its consequences.

\begin{lemma}
\label{lemma_naiveboundcardbdg}
Let $\seqpol = (f_1, \ldots , f_n)$ be a regular sequence of homogeneous
polynomials of $\Kp{n}$ such that $\deg{f_i} = \delta_i$ for all $i$ 
in $\{1, \ldots , n\}$. 
Then it holds that
\[ \card{\bdgb{}{}{\ideal{\seqpol}}{\grs}} = \prod_{i = 1}^n \delta_i \quad \text{and} \quad \card{\bdg{}{}{\ideal{\seqpol}}{\grs}} \leq n\prod_{i = 1}^n \delta_i. \] 
\end{lemma}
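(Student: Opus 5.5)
The plan is to get the first equality from the Hilbert series and the second from the bijection of \cref{thm_bijection_monomials}.

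For the count of standard monomials, since $\seqpol$ is a homogeneous regular sequence, the Hilbert series of $\ideal{\seqpol}$ is
\[
\hilse{\ideal{\seqpol}}{z} = \prod_{i=1}^n \frac{1-z^{\delta_i}}{1-z} = \prod_{i=1}^n \left(\sum_{k=0}^{\delta_i-1} z^k\right).
\]
This is the classical consequence of regularity (via the Koszul complex, or by induction on the number of polynomials using the exact sequence for multiplication by $f_i$ on $\Kp{n}/\ideal{f_1,\ldots,f_{i-1}}$). By definition, the coefficient of $z^d$ is $\card{\bdgb{d,}{n}{\ideal{\seqpol}}{\grs}}$. Evaluating at $z=1$ therefore gives $\card{\bdgb{}{}{\ideal{\seqpol}}{\grs}} = \prod_i \delta_i$. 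In particular the series is a polynomial, so $\ideal{\seqpol}$ is zero-dimensional.

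For the bound on the Gröbner basis, I would partition $\bdg{}{}{\ideal{\seqpol}}{\grs} = \bigsqcup_{\ell=1}^n \bdg{\ast,}{\ell}{\ideal{\seqpol}}{\grs}$. Since the basis is reduced, leading monomials are in bijection with elements, so $\card{\bdg{\ast,}{\ell}{\ideal{\seqpol}}{\grs}} = \card{\bdgh{\ast,}{\ell}{\ideal{\seqpol}}{\grs}}$. The aim is then to show that this is at most $\card{\bdgb{}{\ell-1}{\ideal{\seqpol}}{\grs}} \le \card{\bdgb{}{}{\ideal{\seqpol}}{\grs}}$.

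If $\lm{\grs}{\ideal{\seqpol}}$ is stable, \cref{thm_bijection_monomials} gives equality directly. However, the lemma does not assume stability, so I would reprove only the needed injection, which uses zero-dimensionality alone. Take $\widetilde m \in \bdgh{\ast,}{\ell}{\ideal{\seqpol}}{\grs}$ and write $\widetilde m = m\,x_\ell^{\alpha}$ with $m \in \Kp{\ell-1}$ and $\alpha>0$. Minimality of the reduced basis forces $m \notin \lm{\grs}{\ideal{\seqpol}}$, so $m \in \bdgb{}{\ell-1}{\ideal{\seqpol}}{\grs}$. Moreover, $\alpha$ must equal the least $k$ with $m x_\ell^k \in \lm{\grs}{\ideal{\seqpol}}$, again by minimality. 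Hence $\widetilde m \mapsto m$ is injective.

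Summing over $\ell$ gives $\card{\bdg{}{}{\ideal{\seqpol}}{\grs}} \le n \prod_i \delta_i$. The only step with real content is the Hilbert series identity for regular sequences. I would cite it as standard, for instance \cite[Proposition~1]{pardue2010generic} in the equal-degree case, or derive it from the short exact sequences by the induction sketched above. The injection itself is a routine adaptation of the surjectivity part of the proof of \cref{thm_bijection_monomials}.
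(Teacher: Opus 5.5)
Your proof is correct. The first equality is obtained exactly as in the paper: both rest on the Hilbert series $\prod_{i}(1-z^{\delta_i})/(1-z)$ of a regular sequence. The paper identifies it with the series of $\langle x_1^{\delta_1},\ldots,x_n^{\delta_n}\rangle$, while you evaluate at $z=1$; this is the same count.

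The second part takes a different route. The paper simply invokes \cite[Proposition~2.1]{FGLM} for $\card{G}\le n\,\card{\mathscr{B}}$, i.e.\ the classical fact that the leading monomials of a reduced basis lie on the border $\{x_iu \mid u\in\mathscr{B}\}$ of the staircase. You instead prove directly the injection $\mathscr{M}_{\ast,\ell}\to\mathscr{B}_{\ell-1}$, $m\,x_\ell^{\alpha}\mapsto m$, which is the easy half of \cref{thm_bijection_monomials}. You rightly note that the full bijection cannot be quoted, since stability is not assumed in this lemma.

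Your version is self-contained and gives the finer bound $\card{G}\le\sum_{\ell=1}^{n}\card{\mathscr{B}_{\ell-1}}$. This is exactly the inequality form of the identity $(\star)$ in the last subsection, where stability upgrades it to an equality. The paper's citation is shorter but only gives $n\,\card{\mathscr{B}}$.

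One small correction to your commentary: the injection uses only the reducedness of $G$, not zero-dimensionality. Zero-dimensionality is needed in \cref{thm_bijection_monomials} only to define $\alpha_{m,\ell}$ for the reverse map. Here, finiteness of $\mathscr{B}$ already comes from your first part.
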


\begin{proof}
Since $\seqpol$ is regular, then its Hilbert series is equal to 
$\prod_{i = 1}^{n}\frac{1-z^{\delta_i}}{1-z}$
by \cite[Proposition 1]{pardue2010generic}.
It is the same Hilbert series of the ideal $\ideal{x_1^{\delta_1}, \ldots ,x_n^{\delta_n}}$.
We deduce that 
\[\card{\bdgb{}{}{\ideal{\seqpol}}{\grs}} = \card{\bdgb{}{}{\ideal{x_1^{\delta_1}, \ldots ,x_n^{\delta_n}}}{\grs}} = \prod_{i = 1}^n \delta_i.\]
Using \cite[Proposition 2.1]{FGLM}, we have that 
\[ \card{\bdg{}{}{\ideal{\seqpol}}{\grs}} \leq n\card{\bdgb{}{}{\ideal{\seqpol}}{\grs}} \]
which ends the proof.
\end{proof}

Using generalized binomial coefficients expressed via the Gamma function, 
together with Stirling's formula, we obtain the following asymptotic result.

\begin{lemma}
\label{lemma_assymp_bino}
Let $\delta \geq 2$ be a fixed integer and let $p \in (0,1]$. Then, the
following holds: 
\begin{align*}
\binom{\,n + p(\delta-1)n\,}{\,p(\delta-1)n\,}
\underset{n \to +\infty}{\sim}\,
e^{E_p(\delta)\,n}\,
\delta^{n}\,
\sqrt{\frac{1 + p(\delta-1)}{2\pi n\, p(\delta-1)}},
\end{align*}
\end{lemma}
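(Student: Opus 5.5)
We have to prove Lemma \ref{lemma_assymp_bino}: the asymptotics of $\binom{n+p(\delta-1)n}{p(\delta-1)n}$, understood as $\Gamma(n+q n+1)/(\Gamma(n+1)\Gamma(qn+1))$ with $q=p(\delta-1)>0$ fixed. The plan is to apply Stirling's formula $\Gamma(x+1)\sim\sqrt{2\pi x}\,(x/e)^x$ as $x\to+\infty$ to each of the three Gamma factors. This is legitimate since $n$, $qn$ and $(1+q)n$ all tend to infinity.

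First we collect the exponential parts. The factors $e^{-x}$ cancel, because $(1+q)n = n + qn$. What remains is
\[
\frac{((1+q)n)^{(1+q)n}}{n^n\,(qn)^{qn}} = \frac{(1+q)^{(1+q)n}}{q^{qn}},
\]
since the powers of $n$ cancel as well. Next we compare this with $e^{E_p(\delta)n}\delta^n$. By definition,
\[
E_p(\delta)=\ln\frac{1+q}{\delta}+q\ln\frac{1+q}{q},
\]
so
\[
e^{E_p(\delta)n}\delta^n = (1+q)^n\,\frac{(1+q)^{qn}}{q^{qn}} = \frac{(1+q)^{(1+q)n}}{q^{qn}}.
\]
This matches exactly.

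Then we collect the square-root prefactors:
\[
\frac{\sqrt{2\pi(1+q)n}}{\sqrt{2\pi n}\,\sqrt{2\pi qn}} = \sqrt{\frac{1+q}{2\pi n q}},
\]
which is the claimed factor. Because the three Stirling equivalences are multiplicative and the number of factors is finite, the quotient of the equivalents is an equivalent of the quotient. This yields the statement.

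The only delicate point is that $qn$ is generally not an integer, so the binomial must be read through the Gamma function. For this reason one should invoke Stirling for real arguments, $\Gamma(x+1)\sim\sqrt{2\pi x}(x/e)^x$ as $x\to\infty$ over the reals, rather than the factorial version. With that, the computation above is purely algebraic and no further obstacle arises.
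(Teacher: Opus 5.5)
Your proposal is correct and follows the paper's proof essentially step by step: write the binomial through the $\Gamma$ function, apply Stirling's formula to the three factors, cancel the exponentials and the powers of $n$, and recognize $(1+q)^{(1+q)n}q^{-qn}$ as $\delta^n e^{E_p(\delta)n}$ with $q=p(\delta-1)$. Your explicit remark that Stirling must be used for real arguments, since $p(\delta-1)n$ need not be an integer, is a point the paper leaves implicit in its use of $\Gamma$.
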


\begin{proof}
We first decompose the binomial
\begin{align*}
\binom{\,n + p(\delta-1)n\,}{\,p(\delta-1)n\,} &= 
\frac{\Gamma(n + p(\delta-1)n + 1)}{\Gamma(n + 1)\,\Gamma(p(\delta-1)n + 1)}. 
\end{align*}
Then we use the Stirling formula as follows
\begin{align*}
\binom{\,n + p(\delta-1)n\,}{\,p(\delta-1)n\,}
\underset{n \to +\infty}{\sim}\, & \left( \sqrt{\frac{1 + p(\delta-1)}{2\pi \, n\,p(\delta-1)}}\right) \left( \frac{(n + p(\delta-1)n)^{n + p(\delta-1)n}}{n^n\,(p(\delta-1)n)^{p(\delta-1)n}} \right) \\
\underset{n \to +\infty}{\sim}\, & \left( \sqrt{\frac{1 + p(\delta-1)}{2\pi \,n\,p(\delta-1)}}\right) (1 + p(\delta-1))^{n} 
\left( \frac{1+ p(\delta-1)}{p(\delta-1)} \right)^{p(\delta-1)n} \\
\underset{n \to +\infty}{\sim}\, & \left( \sqrt{\frac{1 + p(\delta-1)}{2\pi \,n\,p(\delta-1)}}\right) \delta^n \left( \frac{1 + p(\delta-1)}{\delta} \right)^{n} 
\left( \frac{1+ p(\delta-1)}{p(\delta-1)} \right)^{p(\delta-1)n} \\
\underset{n \to +\infty}{\sim}\, & \left( \sqrt{\frac{1 + p(\delta-1)}{2\pi \,n\,p(\delta-1)}}\right) \,\delta^n \,e^{E_p(\delta)\,n}  
\tag*{\qedhere}  % dirty trick due to misaligned qed symbol
\end{align*}
\end{proof}

\begin{proof}[Proof of \cref{prop_asymp_two_terms1} of \cref{prop_asymp_two_terms}]
We start by bounding $\formulaN{1}{\seqpol}$. 
\begin{align*}
&\sum_{\degalgo = \delta}^{\dregprev+1} \sum_{i = 1}^{n} \binom{n +
\degalgo}{\degalgo} \left( \frac{ \nbred{\degalgo-\delta}{i}
\sum_{j = i}^n {\gamma_{\degalgo,j}}} {\min\left(
\nbred{\degalgo-\delta}{i},  \sum_{j = i}^n
{\gamma_{\degalgo,j}} \right)^{3-\omega} } + \left(
\sum_{j = i}^n {\gamma_{\degalgo,j}} \right)^{\omega-1}
\right) \\
\leq & \sum_{\degalgo = \delta}^{\dregprev+1} \sum_{i = 1}^{n} \binom{n +\dregprev+1}{\dregprev+1} \left( \frac{ \nbred{\degalgo-\delta}{i}
\sum_{j = i}^n {\gamma_{\degalgo,j}}} {\min\left(
\nbred{\degalgo-\delta}{i},  \sum_{j = i}^n
{\gamma_{\degalgo,j}} \right)^{3-\omega} } + \left(
\sum_{j = i}^n {\gamma_{\degalgo,j}} \right)^{\omega-1}
\right) \\
\leq & \binom{n +\dregprev+1}{\dregprev+1} \sum_{\degalgo = \delta}^{\dregprev+1}
 \sum_{i = 1}^{n} \left( \frac{ \nbred{\degalgo-\delta}{i}
\sum_{j = i}^n {\gamma_{\degalgo,j}}} {\min\left(
\nbred{\degalgo-\delta}{i},  \sum_{j = i}^n
{\gamma_{\degalgo,j}} \right)^{3-\omega} } + \left(
\sum_{j = i}^n {\gamma_{\degalgo,j}} \right)^{\omega-1}
\right) \\
\leq & \binom{n +\dregprev+1}{\dregprev+1} \sum_{\degalgo = \delta}^{\dregprev+1}
 \sum_{i = 1}^{n} \left( \nbred{\degalgo-\delta}{i}
\left( \sum_{j = i}^n {\gamma_{\degalgo,j}} \right)^{\omega-2} + \nbred{\degalgo-\delta}{i}^{\omega-2}
\left( \sum_{j = i}^n {\gamma_{\degalgo,j}} \right) + \left(
\sum_{j = i}^n {\gamma_{\degalgo,j}} \right)^{\omega-1}
\right) \\
\leq & n^{\omega-1} \binom{n +\dregprev+1}{\dregprev+1}
\sum_{\degalgo = \delta}^{\dregprev+1}
\sum_{i = 1}^{n}
\left(
\nbred{\degalgo-\delta}{i}\,
\delta^{n(\omega-2)}
+ \nbred{\degalgo-\delta}{i}^{\omega-2}\,
\delta^{n}
+ \delta^{n(\omega-1)}
\right)
\tag{by \cref{lemma_naiveboundcardbdg}}
\\
&\text{Since, by definition, it holds that } \nbred{\degalgo-\delta}{i} \leq \binom{n +\dregprev+1}{\dregprev+1},\text{ we obtain}\\
&\sum_{\degalgo = \delta}^{\dregprev+1} \sum_{i = 1}^{n} \binom{n +
\degalgo}{\degalgo} \left( \frac{ \nbred{\degalgo-\delta}{i}
\sum_{j = i}^n {\gamma_{\degalgo,j}}} {\min\left(
\nbred{\degalgo-\delta}{i},  \sum_{j = i}^n
{\gamma_{\degalgo,j}} \right)^{3-\omega} } + \left(
\sum_{j = i}^n {\gamma_{\degalgo,j}} \right)^{\omega-1}
\right) \\
\leq &n^{\omega-1}
\binom{n +\dregprev+1}{\dregprev+1}\sum_{\degalgo = \delta}^{\dregprev+1}
\sum_{i = 1}^{n}\left(\binom{n +\dregprev+1}{\dregprev+1}
\delta^{n(\omega-2)}+\binom{n +\dregprev+1}{\dregprev+1}^{\omega-2}\delta^{n}+\delta^{n(\omega-1)}
\right). \\
\leq & \delta n^{\omega+1} \binom{n +\dregprev+1}{\dregprev+1}\left( \binom{n +\dregprev+1}{\dregprev+1}
\delta^{n(\omega-2)} + \binom{n +\dregprev+1}{\dregprev+1}^{\omega-2}
\delta^{n} + \delta^{n(\omega-1)}
\right). \tag*{$(1)$} \label{prop_asymp_first_term_ineg}
\end{align*}

We have that 
\begin{align*}
\binom{n +\dregprev+1}{\dregprev+1} &\leq \binom{n + \frac{n(\delta - 1)}{2} +2}{\frac{n(\delta - 1)}{2} +2} \leq \left(1 + \frac{2n}{n(\delta - 1)+4} \right)
\left(1 + \frac{2n}{n(\delta - 1)+2} \right) 
\binom{n + \frac{n(\delta - 1)}{2}}{\frac{n(\delta - 1)}{2}} \\
&\leq 9\binom{n + \frac{n(\delta - 1)}{2}}{\frac{n(\delta - 1)}{2}}.
\end{align*}
Applying \cref{lemma_assymp_bino} to the above bound with $p = \frac{1}{2}$,
we obtain that  
\[ \binom{n +\dregprev+1}{\dregprev+1} \in \widetilde{\mathcal{O}}\left(e^{E_{1/2}(\delta)\,n}\,
\delta^{n}
\right). \tag*{$(2)$} \label{prop_asymp_first_term_asymp} \]
Combining \ref{prop_asymp_first_term_ineg} and \ref{prop_asymp_first_term_asymp},
we have that 
\begin{align*}
\formulaN{1}{\seqpol} &\in \widetilde{\mathcal{O}}\left(
\delta^{\omega n+1} \left( e^{2E_{1/2}(\delta)n} + e^{(\omega-1)E_{1/2}(\delta)n} +
e^{E_{1/2}(\delta)n} \right)
\right) \\
&\subset \widetilde{\mathcal{O}}\left(
\delta^{\omega n+1} e^{2E_{1/2}(\delta)n}
\right).
\tag*{\qedhere}  % dirty trick due to misaligned qed symbol
\end{align*}
\end{proof}

We introduce some bounds before diving into the proof of 
\cref{prop_asymp_two_terms2} of \cref{prop_asymp_two_terms}.
To do so, we use \cite[Theorem~12]{bardet2014complexity}.
Let $i$ be in $\{1, \ldots , n\}$ and $d$ be an integer.
We first define $\cofc{d}{i}$ as follows
\begin{align*}
\sum_{d = 0}^{+\infty} \cofc{d}{i} z^d = z^{\delta} \prod_{j = 1}^{i-1} \frac{1-z^{\delta}}{1-z}.
\end{align*}
The following lemma establishes some bounds on the cardinalities of subsets 
of a grevlex minimal Gröbner basis under generic assumptions.
For this lemma, we use the following notation introduced in 
\cref{thm_complexity_F4_tracer}
\[ \gamma_{d}  = \card{\bdgh{d,}{\ast}{\ideal{\seqpol}}{\grs}} \quad \text{and} \quad \gamma_{d,i} = \card{\bdgh{d,}{i}{\ideal{\seqpol}}{\grs}}. \]

\begin{lemma}
\label{lemma_bound_bdi_combin}
Consider a sequence of polynomials $\seqpol = (f_1, \ldots , f_n)$ such that 
the variables \( x_1, \ldots, x_n \) are in simultaneous Noether
position with respect to the sequence of polynomials
$(\hd{f_1}, \ldots , \hd{f_n})$. 

Then, for all $i \in \{1, \ldots , n\}$ and all $\degalgo \in \mathbb{N}$, we have
\begin{align*}
\sum_{j = i}^n \gamma_{\degalgo,j} \leq \sum_{j = i}^n \cofc{\degalgo}{j}.
\end{align*}
\end{lemma}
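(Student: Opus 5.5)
The plan is an incremental counting argument: add the generators one at a time, $I_k=\ideal{\hd{f_1},\ldots,\hd{f_k}}$ for $k\in\{0,\ldots,n\}$ with $I_0=\{0\}$. First I reduce to the homogeneous setting. By \cref{cor_lien_homog_affine_reg} (regularity of $\hd{\seqpol}$ follows from simultaneous Noether position), $\lm{\grs}{\ideal{\seqpol}}=\lm{\grs}{I_n}$. So $\gamma_{\degalgo,j}$ is the number of minimal generators of the monomial ideal $\lm{\grs}{I_n}$ of degree $\degalgo$ whose variable index is $j$. Note that $\sum_{j=i}^n\gamma_{\degalgo,j}$ is then exactly the number of such minimal generators of degree $\degalgo$ that involve at least one of $x_i,\ldots,x_n$.

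Next I assign each minimal generator $m$ of $\lm{\grs}{I_n}$ to a step. Let $k(m)$ be the least $k$ with $m\in\lm{\grs}{I_k}$. I claim $m$ is a minimal generator of $\lm{\grs}{I_{k(m)}}$. Indeed, any proper divisor of $m$ lying in $\lm{\grs}{I_{k(m)}}$ also lies in $\lm{\grs}{I_n}$, which contradicts minimality of $m$ there. By the choice of $k(m)$, $m$ does not lie in $\lm{\grs}{I_{k(m)-1}}$. Hence $m$ is one of the \emph{new} minimal generators created when $\hd{f_{k(m)}}$ is added. Since $I_0=\{0\}$ and $m\in\lm{\grs}{I_n}$, we get $k(m)\ge1$. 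The map $m\mapsto(k(m),m)$ is injective.

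The key step is to show that $\ind{m}\ge i$ forces $k(m)\ge i$. Under simultaneous Noether position, $x_{k+1},\ldots,x_n$ form a regular sequence modulo $I_k$, and for the grevlex ordering this implies that the reduced Gröbner basis of $I_k$ has all its leading monomials in $\Kp{k}$; this is a standard property of grevlex (Bayer–Stillman type) used in \cite{bardet2014complexity}. Consequently every minimal generator of $\lm{\grs}{I_k}$ has variable index at most $k$. So if $k(m)<i$, then $\ind{m}\le k(m)<i$. Summing over degree-$\degalgo$ generators gives
\[
\sum_{j=i}^n\gamma_{\degalgo,j}\;\le\;\sum_{k=i}^{n} \card{\{\text{new minimal generators of }\lm{\grs}{I_k}\text{ of degree }\degalgo\}}.
\]

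Finally, \cite[Theorem~12]{bardet2014complexity} bounds the number of new degree-$\degalgo$ leading monomials obtained when passing from $I_{k-1}$ to $I_k$ by the coefficient of $z^{\degalgo}$ in $z^{\delta}\,\hilse{I_{k-1}}{z}$. By regularity, $\hilse{I_{k-1}}{z}=\prod_{j=1}^{k-1}\frac{1-z^\delta}{1-z}$ computed in $k-1$ variables; this is the relevant series after dehomogenizing by the Noether variables. That coefficient is exactly $\cofc{\degalgo}{k}$, which concludes the proof.

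The main obstacle I expect is matching the precise hypotheses and normalization of \cite[Theorem~12]{bardet2014complexity}. In particular, I must check whether the Hilbert series there is taken in $k-1$ variables, i.e.\ modulo the remaining Noether variables, so that it produces $\prod_{j=1}^{k-1}$ rather than $\prod_{j=1}^{k-1}$ divided by additional factors $(1-z)$. I also need to confirm that the bound counts minimal generators rather than all new Gröbner basis elements. If the cited theorem only bounds all new reduced Gröbner basis elements, the inequality still holds, because minimal generators form a subset of them.
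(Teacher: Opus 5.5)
Your proposal is correct and follows essentially the same route as the paper. You reduce to $\hd{\seqpol}$ via \cref{cor_lien_homog_affine_reg}, and you use the grevlex fact (the paper's citation of \cite[Proposition~11]{bardet2014complexity}) that the leading monomials of $I_k$ lie in $\Kp{k}$, so generators with variable index at least $i$ first appear at a step $k\ge i$; you then bound each step's new degree-$\degalgo$ leading monomials by $\cofc{\degalgo}{k}$ via \cite[Theorem~12]{bardet2014complexity}. Your first-appearance index $k(m)$ plays the role of the paper's set-theoretic \cref{remark_lemma_bound_bdi_combin}, and your observation that a minimal generator of $\lm{\grs}{I_n}$ lying in $\lm{\grs}{I_k}$ is already minimal there is exactly why your per-step count coincides with the paper's $\card{\bdgh{\degalgo,}{\ast}{I_k}{\grs}\setminus\bdgh{\degalgo,}{\ast}{I_{k-1}}{\grs}}$.
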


\begin{proof}
We prove this lemma by establishing four key assertions.

First, since the variables \( x_1, \ldots, x_n \) are in simultaneous Noether
position with respect to the sequence of polynomials
$(\hd{f_1}, \ldots , \hd{f_n})$, it follows from
\cite[Proposition~6]{bardet2014complexity} that
$(\hd{f_1}, \ldots , \hd{f_n})$ is a regular sequence.
Therefore, we apply \cref{cor_lien_homog_affine_reg} to deduce that
\[
\lm{\grs}{ \ideal{\seqpol} } = \lm{\grs}{\ideal{\hd{\seqpol}}}.
\tag*{$(\star,1)$} \label{lemma_bound_bdi_combin_prooflm1}
\]
For the remainder of the proof, we set
$I_j = \ideal{\hd{f_1}, \ldots , \hd{f_j}}$ for all $j \in \{1,\ldots ,n\}$.

Let $i \in \{1, \ldots , n\}$.
Since the variables \( x_1, \ldots, x_n \) are in simultaneous Noether
position with respect to the sequence of polynomials
$(\hd{f_1}, \ldots , \hd{f_n})$,
we can apply \cite[Proposition~11]{bardet2014complexity}.
Then, we obtain
\[
\bigcup_{j = 1}^{i-1} \bdgh{\degalgo,}{\ast}{I_{j}}{\grs} 
\subseteq \mon{\degalgo,}{i-1}.
\]
On the other hand, by definition,
\[
\bigsqcup_{j = i}^n
\bdgh{\degalgo,}{j}{I_n}{\grs}
\subseteq \left( \mon{d}{} \setminus \mon{d,}{i-1} \right).
\]
It follows that
\[
\bigsqcup_{j = i}^n
\bdgh{\degalgo,}{j}{I_n}{\grs}
\subseteq 
\bdgh{\degalgo,}{\ast}{I_n}{\grs}
\setminus \left( \bigcup_{j = 1}^{i-1} \bdgh{\degalgo,}{\ast}{I_{j}}{\grs} \right).
\tag*{$(\star,2)$} \label{lemma_bound_bdi_combin_prooflm2}
\]

To establish the next property, we use the following remark.

\begin{remark}
\label{remark_lemma_bound_bdi_combin}
Let $E_1, \ldots , E_n$ be sets. Then, for all $i \in \{1, \ldots ,n\}$, we have
\[
E_n \setminus \left( \bigcup_{j = 1}^{i-1} E_j \right) \subseteq
\bigsqcup_{j = i}^{n} \left( E_j \setminus \bigcup_{k = 1}^{j-1} E_k \right).
\]
\end{remark}

Setting $E_i = \bdgh{\degalgo,}{\ast}{I_i}{\grs}$ and applying
\cref{remark_lemma_bound_bdi_combin}, we obtain
\[
\bdgh{\degalgo,}{\ast}{I_n}{\grs}
\setminus \left( \bigcup_{j = 1}^{i-1} \bdgh{\degalgo,}{\ast}{I_{j}}{\grs} \right)
\subseteq 
\bigsqcup_{j = i}^n
\bdgh{\degalgo,}{\ast}{I_j }{\grs} \setminus
\left( \bigcup_{k = 1}^{j-1} \bdgh{\degalgo,}{\ast}{I_{k}}{\grs} \right).
\tag*{$(\star,3)$} \label{lemma_bound_bdi_combin_prooflm3}
\]

Using once again that the variables \( x_1, \ldots, x_n \) are in simultaneous
Noether position with respect to the sequence of polynomials
$(\hd{f_1}, \ldots , \hd{f_n})$, it follows from
\cite[Theorem~12]{bardet2014complexity} that the inequality
\[
\card{\bdgh{\degalgo,}{\ast}{I_j}{\grs} \setminus
\bdgh{\degalgo,}{\ast}{I_{j-1}}{\grs}
} \leq \cofc{\degalgo}{j}
\tag*{$(\star,4)$} \label{lemma_bound_bdi_combin_prooflm4}
\]
holds for all $j \in \{1, \ldots , n\}$.

We can now conclude that
\begin{align*}
\sum_{j = i}^n \gamma_{\degalgo,j}
&= \card{\bigsqcup_{j = i}^n
\bdgh{\degalgo,}{j}{I_n}{\grs}}
\tag{by \ref{lemma_bound_bdi_combin_prooflm1}}\\
&\leq \card{\bdgh{\degalgo,}{\ast}{I_n}{\grs}
\setminus \left( \bigcup_{j = 1}^{i-1} \bdgh{\degalgo,}{\ast}{I_{j}}{\grs} \right)}
\tag{by \ref{lemma_bound_bdi_combin_prooflm2}} \\
&\leq \sum_{j = i}^n
\card{\bdgh{\degalgo,}{\ast}{I_j}{\grs} \setminus
\left( \bigcup_{k = 1}^{j-1} \bdgh{\degalgo,}{\ast}{I_{k}}{\grs} \right)}
\tag{by \ref{lemma_bound_bdi_combin_prooflm3}} \\
&\leq \sum_{j = i}^n
\card{\bdgh{\degalgo,}{\ast}{I_j}{\grs} \setminus
\bdgh{\degalgo,}{\ast}{I_{j-1}}{\grs}} \\
&\leq \sum_{j = i}^n \cofc{\degalgo}{j}
\tag{by \ref{lemma_bound_bdi_combin_prooflm4}}.
\end{align*}
\end{proof}

\begin{proof}[Proof of \cref{prop_asymp_two_terms2} of \cref{prop_asymp_two_terms}]

We first have that  
\begin{align*} 
\formulaN{2}{\seqpol} &= \sum_{\degalgo = \dregprev+2}^{\Dmac}
\gamma_{\degalgo,n}^{\omega-2} \binom{n +
\degalgo}{\degalgo} \sum_{k = 0}^2 \binom{n-2 +
\Dmac-\degalgo+k}{\Dmac-\degalgo+k} \\
&\leq \sum_{\degalgo = \dregprev+2}^{\Dmac}
\left( \cofc{d}{n} \right)^{\omega-2} \binom{n +
\degalgo}{\degalgo} \sum_{k = 0}^2 \binom{n-2 +
\Dmac-\degalgo+k}{\Dmac-\degalgo+k} \tag{by \cref{lemma_bound_bdi_combin}} \\
&\leq 3 \sum_{\degalgo = \dregprev+2}^{\Dmac}
\left( \cofc{d}{n} \right)^{\omega-2} \binom{n +
\degalgo}{\degalgo} \binom{n-2 +
\Dmac-\degalgo+2}{\Dmac-\degalgo+2}. 
\end{align*}

Now, we suppose that $\omega>2$. The case where $\omega=2$ is dealt with
at the end of the proof, as it is simpler than the case $\omega>2$.
We mimic the reasoning in \cite[Section~3.2]{bardet2014complexity},
while adapting it to our context.
We recall the material introduced there in order to establish our result.

For all $r>0$, we have $\cofc{d}{n} \leq \frac{\serieb{n}{r}}{r^d}$, where
$\serieb{n}{r} = r^{\delta} \prod_{j = 1}^{n-1} \frac{1-r^{\delta}}{1-r}$. For
fixed $d$ and $n$, there exists $\thegoodr{n}{d}>0$ that minimizes
$\frac{\serieb{n}{r}}{r^d}$.
We can now write
\[
\formulaN{2}{\seqpol} \leq 3 \sum_{d = \dregprev+2}^{\Dmac}\left(
\frac{\serieb{n}{\thegoodr{n}{d}}}{\thegoodr{n}{d}^d}
\right)^{\omega-2}
\binom{n + d}{d}
\binom{n-2 + \Dmac-d+2}{\Dmac-d+2}. \tag*{(1)} \label{prop_asymp_two_terms2_fsum}
\]

Moreover, by
\cite[Section~3.2]{bardet2014complexity}, we know that $\thegoodr{n}{d}$
satisfies the following equation:
\[
\frac{d-\delta}{n-1} = \frac{\delta}{1-r^{-\delta}} - \frac{1}{1-r^{-1}}
\tag*{(2)} \label{prop_asymp_two_terms2_eqr}
\]
and that $\thegoodr{n}{d}$ is differentiable and increasing with respect to $d$.

We now identify the degree $\thegoodd{n}$ that dominates all terms in
\ref{prop_asymp_two_terms2_fsum}
by computing the logarithmic derivative
of each term with respect to $d$.
Denoting by $\psi$ the logarithmic derivative of the $\Gamma$
function (see, e.g., \cite[Chapter~6]{abramowitz1965handbook}),
this derivative vanishes when
\begin{align*}
(\omega-2)\log(\thegoodr{n}{d}) & = \psi(d+n+1) - \psi(d+1) -
\psi(n+\Dmac-d+1) + \psi(\Dmac-d+3) \\
& = \sum_{i = 1}^{n}\frac{1}{d+i} - \sum_{i = 2}^{n-1}\frac{1}{\dregmb-d+i+1}.
\tag*{(3)} \label{prop_asymp_two_terms2_3}
\end{align*}
We deduce that \ref{prop_asymp_two_terms2_3} defines a unique function
$\thegoodd{n}$ such that
\begin{itemize}
\item $\thegoodd{n}$ satisfies \ref{prop_asymp_two_terms2_3};
\item $(\thegoodd{n},\thegoodr{n}{\thegoodd{n}})$ satisfies \ref{prop_asymp_two_terms2_eqr}.
\end{itemize}
Since $\dregmb-d+1 \leq d$ for all $d \geq \dregprev+2$, we have
\begin{align*}
(\omega-2)\log(\thegoodr{n}{\thegoodd{n}}) & =
\sum_{i = 1}^{n}\frac{1}{\thegoodd{n}+i}
- \sum_{i = 2}^{n-1}\frac{1}{\dregmb-\thegoodd{n}+i+1}
\tag{\text{by \ref{prop_asymp_two_terms2_3}}} \\
& \leq \frac{1}{\delta} + \frac{1}{\delta + n}\qquad
 \tag*{(4)} \label{prop_asymp_two_terms2_4}.
\end{align*}
The bound $\frac{1}{\delta} + \frac{1}{\delta + n}$ is denoted by $\valint{n}{\delta}$.

We deduce from \ref{prop_asymp_two_terms2_4} that
$\thegoodr{n}{\thegoodd{n}} \leq
e^{\frac{\valint{n}{\delta}}{\omega-2}}$, which implies that
$\thegoodr{n}{\thegoodd{n}}^{-1} \geq
e^{-\frac{\valint{n}{\delta}}{\omega-2}}$.

We consider the function $f: x \mapsto \frac{\delta}{1-x^{\delta}} - \frac{1}{1-x}$. 
The function $f$ is defined, continuous, and decreasing for $x \geq 0$, with
$f(0) = \delta - 1$.
Combined with \ref{prop_asymp_two_terms2_eqr}, we deduce that 
\[
\frac{\thegoodd{n}-\delta}{n-1} \leq 
f\left(e^{-\frac{\valint{n}{\delta}}{\omega-2}}\right).
\tag*{(5)} \label{prop_asymp_two_terms2_5}
\]

We recall the structure of this proof. We have found the bound 
\ref{prop_asymp_two_terms2_fsum} on $\formulaN{2}{\seqpol}$ and we 
identified the degree $\thegoodd{n}$ that corresponds to the greatest 
term in this bound. Nevertheless, we cannot provide an explicit 
expression for $\thegoodd{n}$. This is why, in the following reasoning, we
use \ref{prop_asymp_two_terms2_5} in order to bound from above 
the degree $\thegoodd{n}$ before substituting it into \ref{prop_asymp_two_terms2_fsum}.
To do so, we determine the possible values of $\pourcbm \in [0,1]$,
independently of $\delta$, for which 
there exists an integer $n_{0}$ such that, for all $n \geq n_{0}$, 
the inequality
\begin{align}
f\!\left(e^{-\frac{\valint{n}{\delta}}{\omega-2}}\right) 
\leq \pourcbm (\delta - 1)
\tag*{(6)} \label{prop_asymp_two_terms2_6}
\end{align}
holds.

We define $x(n) = e^{-\frac{\valint{n}{\delta}}{\omega-2}}$.
The function $x(n)$ is increasing, and thus $f(x(n))$ is decreasing.
We therefore need to find $n_{0}$, independent of $\pourcbm$, 
such that $f(x(n_{0})) \leq \pourcbm(\delta-1)$.
When $n$ tends to infinity, $x(n)$ tends to 
$x_{\infty}(\delta,\omega) = e^{-\frac{1}{(\omega-2)\delta}}$. 
We deduce that the limit of $f(x(n))$ is 
$L_{\omega}(\delta) = f(x_{\infty}(\delta,\omega))$.
Hence, $\pourcbm$ must satisfy
\[
\sup_{\substack{\delta \geq 2}} \frac{L_{\omega}(\delta)}{\delta-1} 
< \pourcbm.
\]
This supremum is attained for $\delta = 2$, so $L_{\omega}(2) < \pourcbm$. 
We then know that condition \ref{prop_asymp_two_terms2_6} is equivalent to $L_{\omega}(2) < \pourcbm$, with 
\[
L_{\omega}(2) = \frac{1}{1+e^{-\frac{1}{2(\omega-2)}}}.
\]

Finally, combining \ref{prop_asymp_two_terms2_5} and \ref{prop_asymp_two_terms2_6}, 
for all $p > L_{\omega}(2)$, there exists $n_{0}$ such that for all $n \geq n_{0}$, the following inequality holds:  
\begin{align*}
\thegoodd{n} &\leq p(\delta -1)(n-1) + \delta \\
&\leq \left( p + \frac{1}{n-1} \right)(\delta - 1)(n-1) + 1.
\end{align*}
This means that, for all $p > L_{\omega}(2)$, there exists $n_{0}$ such that for all $n \geq n_{0}$, we have
\begin{align*}
\thegoodd{n} & \leq p (\delta - 1)(n-1) + 1 \\
& \leq p (\delta - 1)n + 1.
\end{align*}

Suppose that $n \geq n_0$, we use that $\thegoodd{n}$ is the degree which maximizes the term of the sum in \ref{prop_asymp_two_terms2_fsum}.
 This
leads to the following inequality
\begin{align*}
\formulaN{2}{\seqpol} &\leq 3n\delta \left( 
\frac{\serieb{n}{\thegoodr{n}{\thegoodd{n}}}}{\thegoodr{n}{\thegoodd{n}}^{\thegoodd{n}}} 
\right)^{\omega-2} 
\binom{n + \thegoodd{n}}{\thegoodd{n}} 
\binom{n-2 + \Dmac-\thegoodd{n}+2}{\Dmac-\thegoodd{n}+2}.
\tag*{(7)} \label{prop_asymp_two_terms2_7}
\end{align*}
Now, we establish two inequalities that will be injected in 
\ref{prop_asymp_two_terms2_7}.
Since $\thegoodd{n} \geq \dregprev+2$ we have that 
\begin{align*}
\binom{n-2 + \Dmac-\thegoodd{n}+2}{\Dmac-\thegoodd{n}+2} &\leq  
\binom{n-2 + \Dmac-\dregprev}{\Dmac-\dregprev} \leq \binom{n-2 + \dregprev + 2}{\dregprev + 2} \leq \binom{n+\frac{n(\delta-1)}{2}+3}{\frac{n(\delta-1)}{2}+3} \\ 
&\leq \left(1 + \frac{2n}{n(\delta - 1)+6} \right)
\left(1 + \frac{2n}{n(\delta - 1)+4} \right)
\left(1 + \frac{2n}{n(\delta - 1)+2} \right)
\binom{n+\frac{n(\delta-1)}{2}}{\frac{n(\delta-1)}{2}} \\
&\leq 27 \binom{n+\frac{n(\delta-1)}{2}}{\frac{n(\delta-1)}{2}}.
\tag*{(8)} \label{prop_asymp_two_terms2_8}
\end{align*}

Since $\thegoodr{n}{\thegoodd{n}}$ minimizes 
$\left( \frac{\serieb{n}{r}}{r^{\thegoodd{n}}} \right)$ 
for all $r>0$, we can bound it from above by setting $r = 1$.
This gives
\begin{align*}
\left( 
\frac{\serieb{n}{\thegoodr{n}{\thegoodd{n}}}}{\thegoodr{n}{\thegoodd{n}}^{\thegoodd{n}}} 
\right)
 \leq \delta^{n-1}. \tag*{(9)} \label{prop_asymp_two_terms2_9}
 \end{align*}

Set $p = L_{\omega}(2) + \epsilon$ with $\epsilon >0$.
For the case $\omega = 2$, just set $L_{\omega}(2) = 1$.
We have $p>\frac{1}{2}$.   
Using \ref{prop_asymp_two_terms2_8} and \ref{prop_asymp_two_terms2_9} in 
\ref{prop_asymp_two_terms2_7} we obtain
\begin{align*}
\formulaN{2}{\seqpol} &\leq 81n \delta^{(\omega-2)(n-1)+1} \binom{n + \thegoodd{n}}{\thegoodd{n}}\binom{n+\frac{n(\delta-1)}{2}}{\frac{n(\delta-1)}{2}} \\
&\leq 81n \delta^{(\omega-2)(n-1)+1} \binom{n + p (\delta - 1)n + 1}{p (\delta - 1)n + 1}\binom{n+\frac{n(\delta-1)}{2}}{\frac{n(\delta-1)}{2}} \tag{because $n \geq n_0$} \\
&\leq 243n \delta^{(\omega-2)(n-1)+1} \binom{n + p (\delta - 1)n}{p (\delta - 1)n}\binom{n+\frac{n(\delta-1)}{2}}{\frac{n(\delta-1)}{2}}. \tag{because $p \geq \frac{1}{2}$}
\end{align*}
We recall that $\delta$ is fixed.
Now, we use \cref{lemma_assymp_bino} on both binomials to obtain that
\[
\formulaN{2}{\seqpol} \in \widetilde{\mathcal{O}}\!\left(
\delta^{\,\omega n + 3-\omega}\,
e^{\,n\left(
E_{1/2}(\delta)
+
E_{L_{\omega}(2)+\epsilon}(\delta)\right)}\right)
\qedhere
\]
\end{proof}

In order to finish the proof of \cref{thm_complexity_F4_tracer_assymp},
we establish the following lemma.

\begin{lemma}
\label{lemma_assymp_bino_ineg}
Let $\delta \geq 2$ be an integer and let $p_1$ and $p_2$ be in $(0,1]$ such that 
$p_1 < p_2$. Then
\[
E_{p_1}(\delta) < E_{p_2}(\delta) 
\]
\end{lemma}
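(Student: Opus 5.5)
We will prove the strict monotonicity by treating $E_p(\delta)$ as a function of a real variable and showing that its derivative is positive. Set $u = p(\delta-1)$, so that $u$ ranges over $(0,\delta-1]$ as $p$ ranges over $(0,1]$, and $u$ is a strictly increasing affine function of $p$ (since $\delta-1\geq 1>0$). Then
\[
E_p(\delta) = \ln(1+u) - \ln\delta + u\ln(1+u) - u\ln u
= (1+u)\ln(1+u) - u\ln u - \ln\delta .
\]
It therefore suffices to show that $h(u) = (1+u)\ln(1+u) - u\ln u$ is strictly increasing on $(0,+\infty)$. The term $-\ln\delta$ is constant in $p$ and plays no role.

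Differentiating gives
\[
h'(u) = \ln(1+u) + 1 - \ln u - 1 = \ln\!\left(\frac{1+u}{u}\right).
\]
This is strictly positive for every $u>0$, since $(1+u)/u>1$. Hence $h$ is strictly increasing on $(0,+\infty)$.

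Now take $p_1<p_2$ in $(0,1]$. Then $u_1 = p_1(\delta-1) < u_2 = p_2(\delta-1)$, so $h(u_1)<h(u_2)$, and therefore $E_{p_1}(\delta)<E_{p_2}(\delta)$. As a sanity check, $h$ is the (base $e$) binomial entropy-type quantity appearing in \cref{lemma_assymp_bino}: it is the exponential growth rate of $\binom{n+un}{un}$, which is clearly increasing in $u$. This gives an alternative combinatorial argument via \cref{lemma_assymp_bino} and the monotonicity of $k\mapsto\binom{n+k}{k}$, but the calculus argument is cleaner and avoids limits.

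There is no real obstacle here. The only points to watch are that $\delta\geq 2$ guarantees $\delta-1>0$ (so the change of variables preserves order), and that all logarithms are well defined because $u>0$.
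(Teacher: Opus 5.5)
Your proof is correct and follows the same route as the paper: both differentiate $E_p(\delta)$ and show that the derivative is strictly positive. In the paper's variable this derivative is $(\delta-1)\ln\bigl(\frac{1+p(\delta-1)}{p(\delta-1)}\bigr)$, and in yours it is $\ln\bigl(\frac{1+u}{u}\bigr)$. Your substitution $u=p(\delta-1)$ simply makes the cancellation in the derivative computation more transparent.
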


\begin{proof}
Fix an integer $\delta \geq 2$.
We consider the function $E_p(\delta)$ as a function of the variable
$p \in (0,1]$.

This function is well defined and differentiable on $(0,1]$.
A direct computation shows that its derivative with respect to $p$ is given by
\begin{align*}
\frac{\partial}{\partial p} E_p(\delta)
&= \frac{\delta-1}{\delta} \times \frac{\delta}{1+p(\delta-1)}
+ (\delta - 1)\ln\!\left( \frac{1 + p(\delta -1)}{p(\delta -1)} \right)
- \frac{\delta-1}{1 + p(\delta-1)} \\
&= (\delta - 1)\ln\!\left( \frac{1 + p(\delta -1)}{p(\delta -1)} \right).
\end{align*}

Since $\delta \geq 2$ and $p \in (0,1]$, we have
\[
\frac{1 + p(\delta -1)}{p(\delta -1)} > 1,
\]
and therefore the logarithm is positive.
It follows that $\frac{\partial}{\partial p} E_p(\delta) > 0$ on $(0,1]$.

Consequently, the function $E_p(\delta)$ is strictly increasing with respect
to $p$ on $(0,1]$.
\end{proof}

\begin{proof}[Proof of \cref{thm_complexity_F4_tracer_assymp}]
By assumption, the sequence $\hd{\seqpol}$ lies in
$\zarifilling{\delta}{n}{n} \cap \zarimore{\delta}{n} \cap \zarisemi{\delta}{n}{n}$,
where $\delta \geq 2$ and $n \geq 2$.
Moreover, $\phi(\hd{\seqpol},n-1)$ lies in $\mathcal{S}_{\delta,n-1,n}$.
Therefore, we can apply \cref{thm_complexity_F4_tracer} and deduce that
running \algoName{algo:F4T} on a sequence polynomials
that is compatible with the Gr\"obner trace $\mathcal{T}$,
requires $\mathcal{O}\!\left( \compfqt \right)$ operations in $\K$.

Recall that $\compfqt = \formulaN{1}{\seqpol} + \formulaN{2}{\seqpol}$.
Since the variables $x_1, \ldots , x_n$ are in simultaneous Noether position
with respect to $\hd{\seqpol}$, we can apply \cref{prop_asymp_two_terms}
and obtain that
\[
\compfqt \in \widetilde{\mathcal{O}}
\left(
\delta^{\omega n+1}
\left(
e^{n \left( E_{1/2}(\delta)+E_{L_{\omega}(2)+\epsilon}(\delta) \right)}
+ e^{2nE_{1/2}(\delta)}
\right)
\right)
\]
for all $\epsilon > 0$.
Since $L_{\omega}(2) > \frac{1}{2}$, we can apply
\cref{lemma_assymp_bino_ineg} to deduce that
\[
\compfqt \in \widetilde{\mathcal{O}}
\left(
\delta^{\omega n+1}
e^{n \left( E_{1/2}(\delta)+E_{L_{\omega}(2)+\epsilon}(\delta) \right)}
\right) \quad \text{for all } \epsilon > 0.
\qedhere
\]
\end{proof}

\subsection{Comparisons of formulas}
\label{subsection_comparisons_formulas}

We first compare our complexity estimates with the one of Lazard's
algorithm, introduced in~\cite{lazard83}, which is usually applied to 
the family of F4 algorithms under some regularity assumptions 
(see e.g.\ \cite{FaSaSpa10, FaSaSpa11, FaSaSpa12, FaSaSpa13}). 

Our second comparison is with the more precise complexity on a variant
of the F5 algorithm~\cite{faugere2002F5}, which is established
in~\cite{bardet2014complexity}. A major advantage of F5 is
that, thanks to its signature system, it avoids all reductions to zero
under the regularity assumption. However, this same mechanism imposes
a significant constraint: it prevents the algorithm from fully
leveraging efficient linear algebra techniques for performing
row-echelon transformations of Macaulay matrices
\cite{eder2017survey}. In particular, the signature system restricts
the possible row operations, thereby limiting the use of optimized
dense or sparse linear algebra methods.  

Since \algoName{algo:F4T} is also designed to eliminate reductions to
zero (but under a more restrictive framework), the comparison between
the two algorithms remains natural and meaningful.

\smallskip

The context of computing a grevlex Gröbner basis of a sequence
$\seqpol = (f_1, \ldots , f_n)$ in $\Kp{n}$ remains the same as in the remainder
 of \cref{sec:complexity_analysis}.
Our comparisons rely on evaluating complexity formulas.

We recall that $\compfqt$ can be expressed as the sum 
$\formulaN{1}{\seqpol} + \formulaN{2}{\seqpol}$, as seen in 
\cref{subsec_proof_comp_form}.
Note that
\[ \formulaN{2}{\seqpol}  = \sum_{\degalgo = \dregprev+2}^{\Dmac}
{\gamma_{\degalgo,n}}^{\omega-2} \binom{n + \degalgo}{\degalgo}
\sum_{k = 0}^2 \binom{n-2 + \Dmac-\degalgo+k}{\Dmac-\degalgo+k}, \]
with $\gamma_{\degalgo,n} = \card{\bdg{\degalgo,}{n}{\ideal{\seqpol}}{\grs}}$.
Since we assumed in \cref{thm_complexity_F4_tracer} that
$\hd{\seqpol}$ lies in $\zarisemi{\delta}{n}{n}$, the sequence $\hd{\seqpol}$
is regular, and by \cref{cor_lien_homog_affine_reg} we have
\[ \card{\bdg{\degalgo,}{n}{\ideal{\seqpol}}{\grs}} = \card{\bdg{\degalgo,}{n}{\ideal{\hd{\seqpol}}}{\grs}} .\]
Moreover, since we assumed in \cref{thm_complexity_F4_tracer} that
$\hd{\seqpol}$ lies in $\zarimore{\delta}{n}$ and 
that $\phi(\hd{\seqpol},n-1)$ lies in $\mathcal{S}_{\delta,n-1,n}$, we can
apply \cref{thm_card_bdg_deg_moitMB} to $\hd{\seqpol}$ to deduce that
\begin{align*}
\gamma_{\degalgo,n} &= \card{\bdg{\degalgo,}{n}{\ideal{\hd{\seqpol}}}{\grs}} = \beta_{\Dmac-\degalgo},
\end{align*}
with $\left[ \frac{(1-z^{\delta})^n}{(1-z)^{n-1}} \right]_+ = \sum_{d = 0}^{\dregprev-1} \beta_d z^d$.

We introduce a new formula 
$\compfqtb = \formulaNb{1}{\seqpol} + \formulaN{2}{\seqpol}$ such that 
$\formulaN{1}{\seqpol} \leq \formulaNb{1}{\seqpol}$
under some generic assumptions.
To do so, we use the notation $\cofc{d}{i}$
introduced just before \cref{lemma_bound_bdi_combin}.
Moreover, for all $i$ in $\{0,\ldots ,n\}$, we introduce the following notation:
\begin{align*}
\left[ \frac{(1-z^{\delta})^n}{(1-z)^{i}} \right]_+
  = \sum_{\degalgo = 0}^{+\infty} 
    \beta_{\degalgo, i}\, z^{\degalgo}.
\end{align*}
We also use the truncation $\phi$ introduced in \cref{subsection_Macaulay_matrices}.
We now express $\formulaNb{1}{\seqpol}$ in the following proposition.

\begin{prop}
\label{prop_comp_formula_eval}
Suppose that, for all $i \in \{1,\ldots,n\}$, the sequence $\phi(\hd{\seqpol},i)$ 
lies in $\mathcal{S}_{\delta,i,n}$ and that the variables $x_1, \ldots , x_n$ are
in simultaneous Noether position with respect to $\hd{\seqpol}$.
Then we have $\formulaN{1}{\seqpol} \leq \formulaNb{1}{\seqpol}$ with
\[ \formulaNb{1}{\seqpol} =  \sum_{\degalgo = \delta}^{\dregprev +1} 
\sum_{i = 1}^n \binom{n + \degalgo}{\degalgo} 
\left( \frac{ \nbredw{\degalgo-\delta}{i} 
\sum_{j = 1}^n \cofc{\degalgo}{j}} {\min\left( \nbred{\degalgo-\delta}{i}, 
\sum_{j = 1}^n \cofc{\degalgo}{j} \right)^{3-\omega} } + 
\left( \sum_{j = 1}^n \cofc{\degalgo}{j} \right)^{\omega-1} \right)\]
where
\[ \nbredw{\degalgo-\delta}{i} =  \binom{i-1 + \degalgo}{\degalgo}
-
\binom{i-2 + \degalgo}{\degalgo}
-
\bigl( \beta_{\degalgo, i} - \beta_{\degalgo, i-1} \bigr) . \]
\end{prop}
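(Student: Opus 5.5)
The comparison is term by term: for each degree $\degalgo$ and index $i$ I replace each quantity appearing in $\formulaN{1}{\seqpol}$ by a larger one. Three facts are needed.

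First, $\nbred{\degalgo-\delta}{i} = \nbredw{\degalgo-\delta}{i}$. The only difference between the two definitions is $\card{\bdgb{\degalgo,}{i}{\ideal{\seqpol}}{\grs}}$ versus $\beta_{\degalgo,i}$. Since $\hd{\seqpol}$ is regular, \cref{cor_lien_homog_affine_reg} gives $\lm{\grs}{\ideal{\seqpol}} = \lm{\grs}{\ideal{\hd{\seqpol}}}$. So I may work with $J=\ideal{\hd{\seqpol}}$. Applying \cref{cor_lien_mono_base_quotient_tronqué} to $\hd{\seqpol}$ gives $\bdgb{\degalgo,}{i}{J}{\grs} = \bdgb{\degalgo,}{i}{I_i(\hd{\seqpol})}{\grs}$, and $I_i(\hd{\seqpol})$ is generated by $\phi(\hd{\seqpol},i)$. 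Because $\phi(\hd{\seqpol},i)$ lies in $\mathcal{S}_{\delta,i,n}$, its Hilbert series is $\left[(1-z^\delta)^n/(1-z)^i\right]_+$, so the cardinality equals $\beta_{\degalgo,i}$. For $i=1$, the term with index $i-1=0$ must be handled separately: $\bdgb{\degalgo,}{0}{\cdot}{\grs}$ is empty for $\degalgo\ge 1$, and $\beta_{\degalgo,0}$ is also $0$ in the relevant range $\degalgo\geq\delta$, because $(1-z^\delta)^n$ has no terms in degrees $1,\ldots,\delta-1$ and its first negative coefficient (at degree $\delta$) truncates it. The binomial parts of the two definitions agree after shifting indices, since $\nbred{\degalgo-\delta}{i}$ uses $\iota+\delta=\degalgo$.

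Second, set $\Gamma_i=\sum_{j=i}^n \gamma_{\degalgo,j}$ and $B=\sum_{j=1}^n \cofc{\degalgo}{j}$. \Cref{lemma_bound_bdi_combin} applies under simultaneous Noether position and gives $\Gamma_i \le \sum_{j=i}^n \cofc{\degalgo}{j}$. All $\cofc{\degalgo}{j}$ are nonnegative, since they are coefficients of a product of polynomials with nonnegative coefficients, so this is at most $B$.

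Third, I need monotonicity of the summand. Write $\rho=\nbred{\degalgo-\delta}{i}$ and $h(\rho,\Gamma)=\rho\Gamma/\min(\rho,\Gamma)^{3-\omega}+\Gamma^{\omega-1}$. This $h$ is nondecreasing in $\Gamma$. The second term is clear. For the first term, when $\Gamma\le\rho$ it equals $\rho\,\Gamma^{\omega-2}$, which is nondecreasing because $\omega\ge 2$. When $\Gamma\ge\rho$ it equals $\rho^{\omega-2}\Gamma$, which is increasing. The two expressions agree at $\Gamma=\rho$. The degenerate cases $\rho=0$ or $\Gamma=0$ make the term $0$, or can be handled by convention. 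Replacing $\Gamma_i$ by $B$ in each summand and summing over $\degalgo$ and $i$ then gives $\formulaN{1}{\seqpol}\le\formulaNb{1}{\seqpol}$.

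The main obstacle is the first step: justifying that the semi-regularity of the truncations gives exactly $\beta_{\degalgo,i}$ in the degree range $\delta\le\degalgo\le\dregprev+1$, with the boundary index $i=1$ treated correctly. The case $i=n$ is the plain Hilbert series of the regular sequence $\hd{\seqpol}$, which is already covered by the hypothesis.
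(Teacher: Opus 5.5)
Your proof is correct and follows the paper's argument. The paper likewise bounds $\sum_{j=i}^n \gamma_{\degalgo,j}$ by $\sum_{j=1}^n \cofc{\degalgo}{j}$ via \cref{lemma_bound_bdi_combin}, and identifies $\nbred{\degalgo-\delta}{i}$ with $\nbredw{\degalgo-\delta}{i}$ via \cref{cor_lien_mono_base_quotient_tronqué} together with the semi-regularity of the truncations. You additionally make explicit several points the paper leaves implicit: the passage from $\ideal{\seqpol}$ to $\ideal{\hd{\seqpol}}$ via \cref{cor_lien_homog_affine_reg}, the nonnegativity of the $\cofc{\degalgo}{j}$, the monotonicity of the summand in $\Gamma$, and the boundary index $i=1$.
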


\begin{proof}
We recall that
\[  \formulaN{1}{\seqpol} = \sum_{\degalgo = \delta}^{\dregprev+1} \sum_{i = 1}^{n} \binom{n +
\degalgo}{\degalgo} \left( \frac{ \nbred{\degalgo-\delta}{i}
\sum_{j = i}^n {\gamma_{\degalgo,j}}} {\min\left(
\nbred{\degalgo-\delta}{i},  \sum_{j = i}^n
{\gamma_{\degalgo,j}} \right)^{3-\omega} } + \left(
\sum_{j = i}^n {\gamma_{\degalgo,j}} \right)^{\omega-1}
\right). \]
Since the variables $x_1, \ldots , x_n$ are
in simultaneous Noether position with respect to $\hd{\seqpol}$, we can apply 
\cref{lemma_bound_bdi_combin} to deduce that
\[ \formulaN{1}{\seqpol} \leq \sum_{\degalgo = \delta}^{\dregprev +1} 
\sum_{i = 1}^n \binom{n + \degalgo}{\degalgo} 
\left( \frac{ \nbred{\degalgo-\delta}{i} 
\sum_{j = 1}^n \cofc{\degalgo}{j}} {\min\left( \nbred{\degalgo-\delta}{i}, 
\sum_{j = 1}^n \cofc{\degalgo}{j} \right)^{3-\omega} } + 
\left( \sum_{j = 1}^n \cofc{\degalgo}{j} \right)^{\omega-1} \right). \]

Let $d$ be in $\{\delta , \ldots , \dregprev +1\}$ and let $i$ be in 
$\{1, \ldots , n\}$.
Using the definition of 
$\nbred{\degalgo-\delta}{i}$ given in \cref{subsec_proof_comp_form}, we have
\[ \nbred{\degalgo-\delta}{i} = \binom{i-1 + \degalgo}{\degalgo}
-
\binom{i-2 + \degalgo}{\degalgo} 
- \bigl( \card{\bdgb{\degalgo,}{i}{\ideal{\hd{\seqpol}}}{\grs}} - \card{\bdgb{\degalgo,}{i-1}{\ideal{\hd{\seqpol}}}{\grs}} \bigr) . \]
Applying \cref{cor_lien_mono_base_quotient_tronqué}, we obtain
\[ \card{\bdgb{\degalgo,}{i}{\ideal{\hd{\seqpol}}}{\grs}} = \card{\bdgb{\degalgo,}{i}{\ideal{\phi(\hd{\seqpol},i)}}{\grs}}.\]
Since, by hypothesis, the sequence $\phi(\hd{\seqpol},i)$ 
lies in $\mathcal{S}_{\delta,i,n}$, we deduce from \cite[Proposition~1]{pardue2010generic}
that
\[ \card{\bdgb{\degalgo,}{i}{\ideal{\phi(\hd{\seqpol},i)}}{\grs}} = \beta_{\degalgo, i}. \]
Applying the same reasoning to $\card{\bdgb{\degalgo,}{i-1}{\ideal{\hd{\seqpol}}}{\grs}}$
concludes the proof.
\end{proof}
We now have a new formula $\compfqtb = \formulaNb{1}{\seqpol} + \formulaN{2}{\seqpol}$ which is much easier to evaluate
and is an upper bound on $\compfqt$.

\smallskip

We start the comparisons with the complexity of Lazard's algorithm.
Consider a sequence of polynomials $\seqpol = (f_1, \ldots , f_n)$ such that $\hd{\seqpol}$ lies
in the open Zariski set $\zarisemi{\delta}{n}{n}$ introduced in
\cref{subsectionThelastdimensionofthegeneriquemonomialstaircase}.
The fact that $\hd{\seqpol}$ lies in $\zarisemi{\delta}{n}{n}$
leads to the following implications.

First, the ideal $\ideal{\hd{\seqpol}}$ is zero-dimensional, since its Hilbert
series is a polynomial \cite[Proposition~1]{pardue2010generic}.
Therefore, by \cite[Theorem~3]{lazard83}, the maximal degree of an element of a
minimal grevlex Gröbner basis of $\ideal{\hd{\seqpol}}$ is the Macaulay bound
\[
\dregmb = n(\delta - 1) + 1.
\]
Using \cref{cor_lien_homog_affine_reg}, the same result holds for
$\ideal{\seqpol}$.

Moreover, using Assertion~$\asseralgo{\cdot}{5}$ of
\cref{lemma_icreasing_degree_in_algorithm}, the $\K$-vector space
\[
\langle S \rangle_{\K}
\quad \text{where} \quad
S = \{ m f_i \mid i \in \{1, \ldots , \nbpol\},\;
m \in \mon{\leq \dregmb-\delta,}{n} \}
\]
contains a grevlex Gröbner basis of $\ideal{\seqpol}$.
Thus, Lazard's algorithm consists in computing a row echelon form of the matrix
\[
\mac{}{}{S}{\mon{\leq \dregmb,}{n}}{}{}.
\]
By \cite[Chapter~2]{Storjohann2000}, its arithmetic complexity is
\[
\mathcal{O}\!\left( \compfq \right),
\qquad
\text{with}
\qquad
\compfq =
n \binom{n+\dregmb}{\dregmb}^{\omega}.
\]

\cref{fig:image1} compares the logarithm of the ratio of 
$\compfq$ over $\compfqtb$ for \(\omega = 3\) and
different values of  \(\delta\). It illustrates the exponential
improvement on \(n\) that our analysis of \algoName{algo:F4T}
provides. Note that this gain seems to slightly increase when  \(\delta\) increases. 
For instance, when $\omega = 3$ and $\delta = 30$, the value of the slope of 
$\ln \left(\frac{\compfq}{\compfqtb} \right)$ as a function of $n$ is approximately
$2.550$.

\cref{fig:image2} provides the same comparison but fixing \(\delta\)
to  \(10\) and for various values of  \(\omega\). It illustrates that
this exponential gain gets smaller when  \(\omega\) decreases but
remains, even assuming  \(\omega=2\). 
For instance, when $\omega = 2$ and $\delta = 10$, the value of the slope of 
$\ln \left(\frac{\compfq}{\compfqtb} \right)$ as a function of $n$ is approximately
$1.335$.

\begin{figure}[htb]
    \centering
\bigskip
\bigskip
    \begin{minipage}{0.45\textwidth}
        \centering
        \begin{overpic}[width=0.7\textwidth]{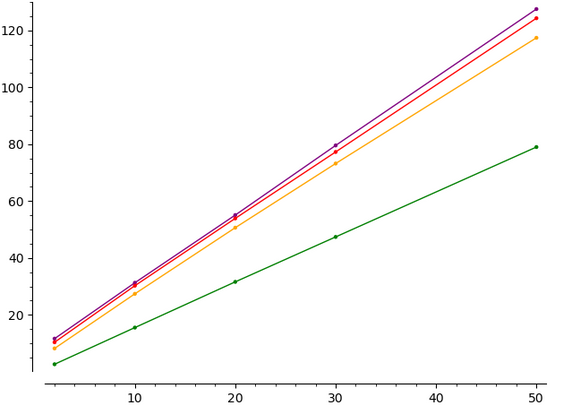}
          \put(-15,80){$\ln \left(\frac{\compfq}{\compfqtb} \right)$}
          \put(105,0){$n$}
          \put(100,45){\color{green!60!black}$\delta = 2$}
          \put(100,61){\color{orange}$\delta = 10$}
          \put(100,68){\color{red}$\delta = 20$}
          \put(100,75){\color{purple}$\delta = 30$}
        \end{overpic}
        \caption{$\omega = 3$}
        \label{fig:image1}
    \end{minipage} \hfill
    \begin{minipage}{0.45\textwidth}
        \centering
        \begin{overpic}[width=0.7\textwidth]{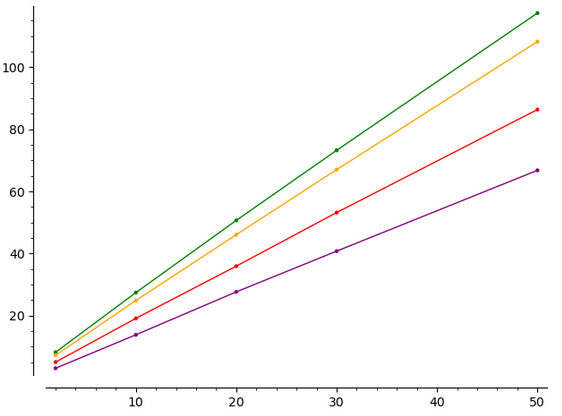}
          \put(-15,80){$\ln \left(\frac{\compfq}{\compfqtb} \right)$}
          \put(105,0){$n$}
          \put(100,72){\color{green!60!black}$\omega = 3$}
          \put(100,65){\color{orange}$\omega = 2.81$}
          \put(100,55){\color{red}$\omega = 2.38$}
          \put(100,43){\color{purple}$\omega = 2$}
        \end{overpic}
        \caption{$\delta = 10$}
        \label{fig:image2}
    \end{minipage}
\end{figure}

We now compare $\compfqtb$ with $\compfc$, which is the complexity
result obtained in \cite{bardet2014complexity}.
We recall that this analysis assumes that $\hd{\seqpol}$ lies in
$\zarisemi{\delta}{n}{n}$ and that the variables $x_1, \ldots, x_n$ are in
simultaneous Noether position with respect to the sequence $\hd{\seqpol}$
\cite[Definition~3]{bardet2014complexity}.
Nevertheless, the complexity result for \algoName{algo:F4T} requires stronger
hypotheses.
In addition to the previously stated assumptions, it is required that the
sequence $\hd{\seqpol}$ lies in
$\zarifilling{\delta}{n}{n} \cap \zarimore{\delta}{n}$, where $\delta \geq 2$
and $n \geq 2$.
We also assume that for all $i \in \{1,\ldots,n\}$, the sequence $\phi(\hd{\seqpol},i)$ 
lies in $\mathcal{S}_{\delta,i,n}$. In particular, $\phi(\hd{\seqpol},n-1)$ lies in
$\mathcal{S}_{\delta,n-1,n}$.

A fact here is that the variables $x_1, \ldots, x_n$ being in simultaneous
Noether position with respect to the sequence
$(\hd{f_1}, \ldots , \hd{f_n})$ implies that the sequence
$(\hd{f_1}, \ldots , \hd{f_n})$ is regular, by
\cite[Proposition~6]{bardet2014complexity}.
The complexity of the F5 algorithm, as stated in
\cite[Section~3.2]{bardet2014complexity}, is given by
\[
N_{F5}
=
\sum_{i=1}^n
\sum_{\degalgo=\thedeg}^{\dregmb}
b_\degalgo^{(i)}
\binom{i-1 + \degalgo}{\degalgo}
\binom{n+\degalgo}{\degalgo},
\]
using the notation $\cofc{d}{i}$
introduced just before \cref{lemma_bound_bdi_combin}.

\cref{fig:image1bis} and \cref{fig:image2bis} perform the same
comparisons between $\compfqtb$ and $\compfc$ as
\cref{fig:image1,fig:image2} do for $\compfqtb$ and
$\compfq$.
\begin{figure}[htb]
  \bigskip
  \bigskip
    \centering
    \begin{minipage}{0.45\textwidth}
        \centering
        \begin{overpic}[width=0.7\textwidth]{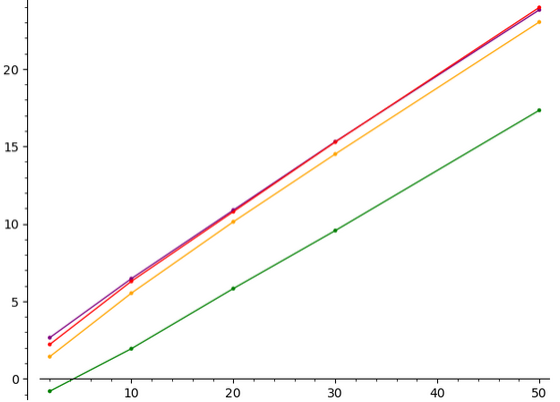}
          \put(-15,80){$\ln \left(\frac{\compfc}{\compfqtb} \right)$}
          \put(105,0){$n$}
          \put(100,50){\color{green!60!black}$\delta = 2$}
          \put(100,61){\color{orange}$\delta = 10$}
          \put(100,68){\color{red}$\delta = 20$}
          \put(100,75){\color{purple}$\delta = 30$}
        \end{overpic}
        \caption{$\omega = 3$}
        \label{fig:image1bis}
    \end{minipage} \hfill
    \begin{minipage}{0.45\textwidth}
        \centering
        \begin{overpic}[width=0.7\textwidth]{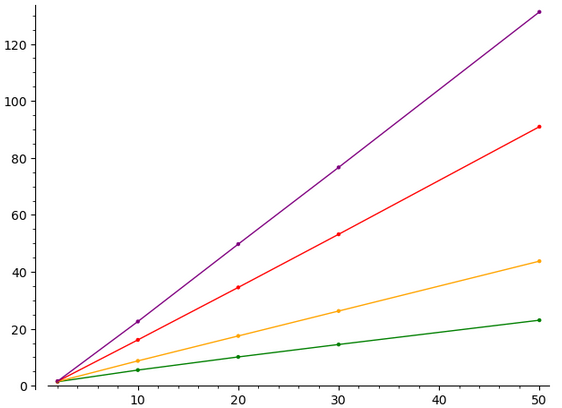}
          \put(-15,80){$\ln \left(\frac{\compfc}{\compfqtb} \right)$}
          \put(105,0){$n$}
          \put(100,15){\color{green!60!black}$\omega = 3$}
          \put(100,25){\color{orange}$\omega = 2.81$}
          \put(100,50){\color{red}$\omega = 2.38$}
          \put(100,70){\color{purple}$\omega = 2$}
        \end{overpic}
        \caption{$\delta = 10$}
        \label{fig:image2bis}
    \end{minipage}
\end{figure}

Fixing \(\omega\) to  \(3\), \cref{fig:image1bis} illustrates 
the exponential gain in \(n\) provided by $\compfqtb$, 
for different values of \(\delta\).
\cref{fig:image2bis} illustrates
that, when fixing \(\delta\), this exponential gain tends to
\emph{increase} when  \(\omega\) decreases.
For instance, when $\omega = 3$ and $\delta = 30$, the value of the slope of 
$\ln \left(\frac{\compfq}{\compfqtb} \right)$ as a function of $n$ is approximately
$0.476$.

\cref{fig:tablecomp} provides an overview of the aforementioned 
complexity statements but this time, expressed
as a power of the degree of the ideal.  The value reported in each
entry of the table is given by $a =
\frac{\log_{\delta}(N)}{n}$ depending on $\delta,n$ and
$\omega$.

\begin{figure}[htb]
\centering
\begin{NiceTabular}{|@{}c|cccc|cccc|c@{}|}
\hline
  & \Block{2-4}{ $\compfqtb$ } & & & & \Block{2-4}{$\compfq$ } & & & & \Block{2-1}{$\compfc$ \quad }  \\
  & & & & & & & & \\
    & $\omega = 3$ & $\omega = 2.81$ & $\omega = 2.38$ & $\omega = 2$ & $\omega = 3$ & $\omega = 2.81$ & $\omega = 2.38$ & $\omega = 2$ &   \\  
\hline
\multicolumn{10}{c}{$\delta = 2$} \\
\hline
$n =2$  & $3.58$ & $3.47$ & $3.25$ & $3.08$ & $5.48$ & $5.16$ & $4.45$ & $3.82$ &  $3.00$ \\
~$n =10$ & $3.62$ & $3.48$ & $3.16$ & $2.90$ & $5.86$ & $5.51$ & $4.71$ & $4.01$ & $3.90$  \\
~$n =20$ & $3.63$ & $3.47$ & $3.11$ & $2.80$ & $5.91$ & $5.55$ & $4.73$ & $4.01$ &  $4.05$ \\
~$n =30$ & $3.65$ & $3.48$ & $3.10$ & $2.77$ & $ 5.93$& $ 5.56$ & $ 4.74$ & $4.00$ & $4.11$ \\
~$n =50$ & $3.67$ & $3.49$ & $3.10$ & $2.76$ & $5.95$ & $5.58$ & $4.74$ & $4.00$ & $4.17$  \\
\hline
\multicolumn{10}{c}{$\delta = 10$} \\
\hline
$n =2$ & $1.85$ & $1.84$ & $1.82$ & $1.81$ & $3.63$ & $3.41$ &  $2.91$ & $2.47$ & $2.16$  \\
~$n =10$ & $2.89$ & $2.75$ & $2.43$ & $2.15$ & $4.08$ & $3.83$ & $3.26$ & $2.75$ & $3.13$ \\
~$n =20$ & $3.05$ & $2.89$ & $2.52$ & $2.19$ & $4.15$ & $3.89$ & $3.30$ & $2.79$ &  $3.27$ \\
~$n =30$ & $3.11$ & $2.94$ & $2.55$ & $2.21$ & $4.17$ & $3.91$ & $3.32$ & $2.80$ & $3.32$ \\
~$n =50$ & $3.17$ & $2.99$ & $2.58$ & $2.23$ & $4.19$ & $3.93$ & $3.33$ & $2.81$ & $3.37$  \\
\hline
\multicolumn{10}{c}{$\delta = 20$} \\
\hline
$n =2$ & $1.74$ & $1.73$ & $1.73$ & $1.72$ & $3.47$ & $3.26$ & $2.78$ & $2.35$ &  $2.11$ \\
~$n =10$& $2.84$ & $2.70$ & $2.37$ & $2.09$ & $3.85$ & $3.61$ & $3.07$ & $2.59$ & $3.05$  \\
~$n =20$ & $3.00$ & $2.84$ & $2.46$ & $2.13$ & $3.90$ & $3.66$ &  $3.11$ & $2.62$ & $3.18$  \\
~$n =30$ & $3.06$ & $2.89$ & $2.50$ & $2.15$ & $3.92$ & $3.68$ & $3.12$ & $2.63$ & $3.23$ \\
~$n =50$ & $3.11$ & $2.93$ & $2.53$ & $2.17$ & $3.94$ & $3.69$ & $3.13$ & $2.63$ & $3.27$ \\
\hline
\multicolumn{10}{c}{$\delta = 30$} \\
\hline
$n =2$ & $1.70$ & $1.70$ & $1.69$ & $1.69$ & $3.41$ & $3.20$ & $2.73$ & $2.31$ & $2.09$ \\
~$n =10$ & $2.83$ & $2.68$ & $2.35$ & $2.07$ & $3.75$ & $3.52$ & $2.99$ & $2.52$ & $3.02$ \\
~$n =20$ & $2.99$ & $2.82$ & $2.44$ & $2.11$ &$3.80$ & $3.56$ & $3.02$ & $2.55$ & $3.15$ \\
~$n =30$ & $3.04$ & $2.87$ & $2.48$ & $2.13$ & $3.82$ & $3.58$ & $3.04$ & $2.56$ & $3.19$ \\
~$n =50$ & $3.09$ & $2.91$ & $2.51$ & $2.15$ & $3.84$ &$3.59$  & $3.05$ & $2.56$ & $3.23$ \\
\hline
\multicolumn{10}{c}{$\delta = 50$} \\
\hline
$n =2$ & $1.66$ & $1.66$ & $1.66$ & $1.66$ & $3.35$ & $3.15$ & $2.68$ & $2.26$ &  $2.08$ \\
~$n =10$ & $2.81$ & $2.67$ & $2.34$ & $2.05$ & $3.66$ & $3.43$ & $2.91$ & $2.46$ & $2.99$  \\
~$n =20$ & $2.97$ & $2.80$ & $2.43$ & $2.09$ & $3.70$ & $3.47$ &  $2.94$ & $2.48$ & $3.11$  \\
~$n =30$ & $3.03$ & $2.85$ & $2.46$ & $2.11$ & $3.72$ & $3.48$ & $2.95$ & $2.49$ & $3.16$ \\
~$n =50$ & $3.07$ & $2.89$ & $2.48$ & $2.12$ & $3.73$ & $3.50$ & $2.96$ & $2.49$ & $3.20$ \\
\hline
\end{NiceTabular}
\caption{Complexities with respect to the degree of the ideal}
\label{fig:tablecomp}
\end{figure}

We observe that, in all cases, the smaller $\omega$ is, the lower the bounds on the number of operations.
Moreover, for fixed values of $n$, $\delta$, and $\omega$, the values of $a$ for
$\compfqtb$ are always smaller than those for $\compfq$.
The same holds for the comparison between $\compfqtb$ and $\compfc$,
except when $n = 2$ and $\delta = 2$.

For all columns of \cref{fig:tablecomp}, we observe that the value of $a$
increases as $n$ increases, for almost all fixed values of $\delta$ and $\omega$.
However, for fixed $n$ and $\omega$, it decreases as $\delta$ increases.
Nevertheless, it is bounded from below.
This can be shown as follows.
All the complexity formulas $\compfqtb$, $\compfq$ and $\compfc$, are greater than $\binom{n+\Dmac}{\Dmac}$,
which is itself greater than $\delta^n$.
Since
\[
\frac{\log_{\delta}(\delta^{n})}{n} = 1,
\]
each value reported in \cref{fig:tablecomp}
is necessarily greater than $1$.
Finally, for fixed $n$ and $\delta$, it decreases as $\omega$ decreases.
To better illustrate the gain shown in the data of \cref{fig:tablecomp},
we fix a single row.
For instance, in the row corresponding to $\delta = 2$ and $n = 50$,
the gain of $\compfqtb$ over $\compfc$ is
$\delta^{(4.17-3.67)n} = \delta^{0.50n}$ when $\omega = 3$,
and $\delta^{1.07n}$ when $\omega = 2.38$.

\subsection{Cardinality of a generic Gröbner basis}

To conclude, the size of the output is a major indicator of the attainable
performance in terms of complexity.
We derive a new bound on the output size in the generic case, assuming several
conjectures introduced earlier.
Let $G$ denote the reduced grevlex Gröbner basis of an ideal $I$ generated by
a sequence $\seqpol$ such that $\hd{\seqpol}$ lies in $\zarisemi{\delta}{n}{n}$.
The size of $G$ is
\[
\sum_{g \in G} \card{\supo{g}}.
\]

Since $G$ is reduced, the following property holds.
For each polynomial $g \in G$, every monomial in $\supo{g}$ distinct from
$\lm{\grs}{g}$ belongs to $\mon{}{n} \setminus \lm{\grs}{I}$.
Moreover, since $\hd{\seqpol}$ lies in $\zarisemi{\delta}{n}{n}$,
\cref{cor_lien_homog_affine_reg} implies that
\[
\mon{}{n} \setminus \lm{\grs}{I}
=
\mon{}{n} \setminus \lm{\grs}{\ideal{\hd{\seqpol}}}.
\]

Using \cite[Proposition~1]{pardue2010generic}, we deduce that the Hilbert series
of $\ideal{\hd{\seqpol}}$ is equal to $\left( \sum_{d = 0}^{\delta-1} z^d \right)^n$.
Since this is also the Hilbert series of the ideal generated by
$(x_1^{\delta}, \ldots, x_n^{\delta})$, it follows that
\[
\card{\mon{}{n} \setminus \lm{\grs}{I}} \leq \delta^n.
\]
Consequently, for any $g \in G$, we have
\[
\card{\supo{g}} \leq \delta^n + 1.
\]

We have just shown that the size of the output is bounded by
$\card{G}(\delta^n+1)$.
We now focus on bounding the cardinality of $G$.

We aim to apply \cref{thm_bijection_monomials}, which requires that
$\lm{\grs}{I}$ be stable and that $I$ be zero-dimensional.
We recall that $I$ is zero-dimensional since its Hilbert series is a polynomial.
Moreover, by \cref{remark_fillingup_implies_stable}, if $\hd{\seqpol}$ lies in
$\zarifilling{\delta}{n}{n}$, then $\lm{\grs}{\hd{\seqpol}}$ is stable.
By \cref{cor_lien_homog_affine_reg}, this would also holds for $\lm{\grs}{I}$.
In addition, \cite[Conjecture~4.1]{moreno2003degrevlex} states that
$\zarifilling{\delta}{n}{n}$ is a nonempty Zariski open set.

We therefore add the hypotheses that $\delta \geq 0$, $n \geq 1$, and that
$\lm{\grs}{I}$ is stable.
Using \cref{thm_bijection_monomials}, we deduce that
\begin{align*}
\card{G}
= \sum_{i = 0}^{n-1} \card{\bdgb{}{i}{I}{\succ}}
= \sum_{i = 0}^{n-1} \sum_{d = 0}^{+\infty} \card{\bdgb{d,}{i}{I}{\succ}}
=1 + \sum_{i = 1}^{n-1} \sum_{d = 0}^{+\infty} \card{\bdgb{d,}{i}{I}{\succ}}.
\tag*{($\star$)} \label{eq:cardbdg}
\end{align*}

At this stage, we consider an integer $d$ and $i$ in $\{1, \cdots ,n-1\}$ and we find 
an expression of $\card{\bdgb{d,}{i}{I}{\succ}}$. 
Using \cref{cor_lien_mono_base_quotient_tronqué}, we already know that 
\[ \card{\bdgb{d,}{i}{I}{\succ}} = \card{\bdgb{d,}{i}{\ideal{\phi(\hd{\seqpol},i)}}{\succ}}. \]
Supposing that for all $i$ in $\{1, \ldots , n \}$ we have that the 
sequence $\phi(\hd{\seqpol},i)$ lies in
$\mathcal{S}_{\delta,i,n}$, we obtain that
for all $i$ in $\{1, \ldots , n\}$
\[ \sum_{d = 0}^{+ \infty} \card{\bdgb{d,}{i}{\ideal{\phi(\hd{\seqpol},i)}}{\succ}}z^d =
\sum_{d = 0}^{+ \infty} \card{\bdgb{d,}{i}{I}{\succ}}z^d = \left[ \frac{(1-z^{\delta})^{n}}{(1-z)^i} \right]_+  .\]
By injecting this last formula in \ref{eq:cardbdg}, we are able to compute 
$\card{G}$ without computing $G$ itself.
We recall that for all $i$ in $\{1, \ldots , n-1\}$, the fact that 
$\zarisemi{\delta}{i}{n}$ is a non-empty open Zariski set rely on 
\cite[Conjecture 1.1]{froberg1994hilbert}.
This gives that following result.

\begin{thm}
\label{thm_cardbdg}
Let $\seqpol$ be a sequence of polynomials such that 
for all $i$ in $\{1, \ldots , n \}$ the 
sequence $\phi(\hd{\seqpol},i)$ lies in $\mathcal{S}_{\delta,i,n}$.
Suppose that $\lm{\grs}{I}$ is stable and let $G$ be a minimal grevlex Gröbner 
basis of $\ideal{f}$.
Then we have that 
\[ \card{G} = 1 + \sum_{i = 1}^{n-1} \sum_{d = 0}^{+\infty} \beta_{d,i} 
\quad \text{with} \quad  \left[ \frac{(1-z^{\delta})^{n}}{(1-z)^i} \right]_+ = 
\sum_{d = 0}^{+\infty} \beta_{d,i}z^d \quad \text{for all $i$ in $\{1, \ldots , n-1\}$ }. \]
\end{thm}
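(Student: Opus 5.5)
The plan is to count $G$ through the bijections of \cref{thm_bijection_monomials}, after reducing the relevant ideals to their homogeneous parts. Set $I = \ideal{\seqpol}$ and $J = \ideal{\hd{\seqpol}}$.

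\textbf{Step 1: reduce to the homogeneous ideal $J$.} The hypothesis with $i=n$ says that $\hd{\seqpol} = \phi(\hd{\seqpol},n)$ lies in $\mathcal{S}_{\delta,n,n}$. By \cite[Proposition~1]{pardue2010generic}, $\hd{\seqpol}$ is then a regular sequence, and the Hilbert series of $J$ is $\left(\sum_{d=0}^{\delta-1}z^d\right)^n$. This series is a polynomial, so $J$ is zero-dimensional. \Cref{cor_lien_homog_affine_reg} gives $\lm{\grs}{I} = \lm{\grs}{J}$. Hence $I$ is zero-dimensional as well, $\lm{\grs}{I}$ is stable by hypothesis, and the minimal-basis leading monomials of $I$ and $J$ coincide. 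Since every minimal Gröbner basis of $I$ has the same set of leading monomials, $\card{G}$ equals $\card{\bdgh{}{}{I}{\grs}}$.

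\textbf{Step 2: count generators variable by variable.} Partition $\bdgh{}{}{I}{\grs}$ as the disjoint union over $\ell\in\{1,\ldots,n\}$ of $\bdgh{\ast,}{\ell}{I}{\grs}$. \Cref{thm_bijection_monomials} applies because $\lm{\grs}{I}$ is stable and $I$ is zero-dimensional. It gives a bijection $\psi_\ell$ from $\bdgb{}{\ell-1}{I}{\grs}$ onto $\bdgh{\ast,}{\ell}{I}{\grs}$. Re-indexing with $i=\ell-1$ yields
\[
\card{G}=\sum_{i=0}^{n-1}\card{\bdgb{}{i}{I}{\grs}}.
\]
For $i=0$, the only monomial of $\Kp{0}$ is $1$, and $1\notin\lm{\grs}{I}$ because $I$ is generated in degree $\delta\geq1$ and is proper. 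So the $i=0$ term equals $1$.

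\textbf{Step 3: identify each remaining term.} Fix $i\in\{1,\ldots,n-1\}$ and a degree $d$. By Step 1, $\bdgb{d,}{i}{I}{\grs}=\bdgb{d,}{i}{J}{\grs}$. Applying \cref{cor_lien_mono_base_quotient_tronqué} to $\hd{\seqpol}$ gives $\bdgb{d,}{i}{J}{\grs} = \bdgb{d,}{i}{I_i(\hd{\seqpol})}{\grs}$, where $I_i(\hd{\seqpol}) = \ideal{\phi(\hd{\seqpol},i)}$ is an ideal of $\Kp{i}$. In $\Kp{i}$ the set $\bdgb{d,}{i}{\cdot}{\grs}$ is exactly the standard monomial basis in degree $d$. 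Its cardinality is therefore the degree-$d$ coefficient of the Hilbert series of $\ideal{\phi(\hd{\seqpol},i)}$. Semi-regularity, i.e.\ membership in $\mathcal{S}_{\delta,i,n}$, says this series is $\left[\frac{(1-z^\delta)^n}{(1-z)^i}\right]_+$, so the coefficient is $\beta_{d,i}$. Summing over $d$ and then over $i$ gives the stated formula.

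\textbf{Main obstacle.} The delicate point is justifying the use of \cref{thm_bijection_monomials} for the possibly inhomogeneous ideal $I$. That theorem needs zero-dimensionality and stability, and both transfer to $I$ only through $\lm{\grs}{I}=\lm{\grs}{J}$, which is why Step 1 comes first. One must also confirm that the sum over $d$ is finite. It is: $\Kp{i}/\ideal{\phi(\hd{\seqpol},i)}$ has finite dimension, since the truncated series $[\cdot]_+$ is a polynomial.
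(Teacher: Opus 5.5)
Your proof is correct and follows essentially the same route as the paper. Both arguments transfer everything to $\ideal{\hd{\seqpol}}$ via \cref{cor_lien_homog_affine_reg}, count $G$ variable by variable through the bijections $\psi_\ell$ of \cref{thm_bijection_monomials}, and identify each $\card{\bdgb{d,}{i}{I}{\grs}}$ with a Hilbert series coefficient of $\ideal{\phi(\hd{\seqpol},i)}$ using \cref{cor_lien_mono_base_quotient_tronqué} and semi-regularity. Your version is slightly more careful than the paper's, which writes the $i=0$ term as $1$ without comment; the one small fix is that the properness of the inhomogeneous ideal $I$ should be justified through $\lm{\grs}{I}=\lm{\grs}{\ideal{\hd{\seqpol}}}$ and the constant term $1$ of the Hilbert series, rather than by saying that $I$ is ``generated in degree $\delta$''.
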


\cref{fig:tablecardbdg} presents numerical data illustrating how the
cardinality of a Gröbner basis grows as a power of the degree of the ideal.
The value reported in the table in the column indexed by $n$
and the row indexed by $\delta$ is equal to
$b = \frac{\log_{\delta}(\card{G})}{n}$.

\begin{figure}
  \centering
\begin{NiceTabular}{|c|c|c|c|c|c|c|c|}
\hline
 & $n=2$ & $n=5$ & $n=10$ & $n=15$ & $n=20$ &  $n=30$ & $n=50$ \\ 
\hline
$\delta=2$ & $0.79$& $0.87$& $0.87$& $0.88$& $0.89$&  $0.91$ & $0.94$ \\
\hline
$\delta=5$ & $0.55$& $0.76$& $0.85$& $0.89$& $0.91$&  $0.94$ & $0.96$ \\
\hline
$\delta=10$ & $0.52$& $0.76$& $0.86$& $0.90$& $0.92$&  $0.95$ & $0.96$ \\
\hline
$\delta=15$ & $0.51$& $0.76$& $0.87$& $0.90$& $0.93$& $0.95$ & $0.96$ \\
\hline
$\delta=20$ & $0.50$& $0.76$& $0.87$& $0.91$& $0.93$& $0.95$ & $0.96$ \\
\hline
$\delta=30$ & $0.50$& $0.77$& $0.87$& $0.91$& $0.93$& $0.95$ & $0.97$ \\
\hline
$\delta=50$ & $0.50$ & $0.77$ & $0.87$ & $0.91$ & $0.93$ & $0.95$ & $0.97$ \\
\hline
\end{NiceTabular}
\caption{Cardinality of $G$ with respect to the degree of the ideal}
\label{fig:tablecardbdg}
\end{figure}

\appendix

\section{}
\label{sectionappendix}

\counterwithout{thm}{section}
\renewcommand{\thelemma}{\arabic{lemma}}

\begin{lemma}
\label{thelemma_Macmat_echlo_contains_GB_homog}
Let $\seqpol = (f_1, \ldots, f_\nbpol)$ be a sequence of
homogeneous polynomials of degree $\delta$ in $\Kp{n}$, and
let $e\in\mathbb{N}$. Let $G=\bdg{}{}{I}{\grs}$ be 
the reduced $\grs$-Gröbner basis of the ideal $I = \langle
\seqpol \rangle$. 
We denote by $E_{e}$ the row reduced echelon form of the
matrix  \[
\mac{}{}{\enspol{e}{n}{\seqpol}}{\mon{e,}{n}}{\grs}{\succ_{\seqpol}}.
\] Then, the following assertions hold.
\begin{enumerate}
\item
The matrix $E_{e}$ contains, as rows, all polynomials $g$ in $G$ such that $\deg(g) = e$.
\end{enumerate}
\end{lemma}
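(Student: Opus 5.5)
The proof will rest on two facts: the rows of $E_e$ span exactly the degree-$e$ part of $I$, and a reduced row echelon form is uniquely determined by its row space. By \cref{lemma_generator_homog_space}, the rows of $\mac{}{}{\enspol{e}{n}{\seqpol}}{\mon{e,}{n}}{\grs}{\succ_{\seqpol}}$ span $I\cap\Khp{n}{e}$. Row operations preserve the row space, so the nonzero rows of $E_e$ form a basis of $I\cap\Khp{n}{e}$. Since the columns are ordered decreasingly for $\grs$, the pivot columns of $E_e$ (placed on the left) are exactly the leading monomials of the corresponding row polynomials, as in \cref{remark_pivot_equal_leadingmonomial}. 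These pivots are pairwise distinct.

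Next, fix $g\in G$ with $\deg(g)=e$, and note that $g$ is homogeneous. Because $I$ is homogeneous, its reduced Gröbner basis consists of homogeneous polynomials; this holds because Buchberger's algorithm preserves homogeneity and reduced Gröbner bases are unique. Hence $g\in I\cap\Khp{n}{e}$, so $g$ is a linear combination of rows of $E_e$.

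Let $m=\lm{\grs}{g}$, which lies in $\lm{\grs}{I}\cap\mon{e,}{n}$. A standard basis argument, using the distinct pivots, shows that the pivot monomials of $E_e$ are exactly the leading monomials of the elements of $I\cap\Khp{n}{e}$. So there is a unique row $r$ of $E_e$ with pivot $m$; since $g$ is monic, we may take $r$ normalised to leading coefficient $1$.

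The key step is to compare $h=g-r$ with $0$. On one hand, $h\in I\cap\Khp{n}{e}$. On the other hand, $r$ has zero entries in every other pivot column, because the form is reduced. Also, every monomial of $g$ other than $m$ lies outside $\lm{\grs}{I}$, because $G$ is reduced. Therefore every monomial of $h$ is a monomial $\neq m$ that is either a non-pivot column of $E_e$ or outside $\lm{\grs}{I}$. Non-pivot columns in degree $e$ are exactly the monomials of degree $e$ not in $\lm{\grs}{I}$, by the identification above. So every monomial of $h$ lies outside $\lm{\grs}{I}$. If $h\neq 0$, then $\lm{\grs}{h}\in\lm{\grs}{I}$, which is a contradiction. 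Hence $g=r$ is a row of $E_e$.

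The main obstacle is making the identification ``pivot columns of $E_e$ equal $\lm{\grs}{I}\cap\mon{e,}{n}$'' rigorous. I would prove it directly: any nonzero $f\in I\cap\Khp{n}{e}$ is a combination $\sum c_k r_k$ of rows with distinct pivots. Its leading monomial is the largest pivot with $c_k\neq0$, since no cancellation can occur among distinct pivots. Conversely, each pivot is the leading monomial of its own row.
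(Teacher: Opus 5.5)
Your proof is correct and is essentially the paper's argument: you take the row of $E_e$ whose pivot is $\lm{\grs}{g}$, which exists because the pivots are exactly the leading monomials of the nonzero elements of $I\cap\Khp{n}{e}$, and you show that its difference with $g$ vanishes, since reducedness of $E_e$ and of $G$ prevents any monomial of that difference from being such a leading monomial. One step is implicit: identifying the non-pivot columns with the degree-$e$ monomials outside $\lm{\grs}{I}$ uses that $I$ is homogeneous and that grevlex is graded; you can avoid it, as the paper does, by observing that $\lm{\grs}{h}$ must itself be a pivot because $h$ is a nonzero element of $I\cap\Khp{n}{e}$.
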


\begin{proof}[Proof of item 1.\ of \cref{thelemma_Macmat_echlo_contains_GB_homog}]
Let $g$ be in $G$ with $\deg(g) = e$. 
Since $G$ is the reduced Gröbner basis of an homogeneous ideal,
by \cite[Chapter 8, Section 3, Theorem 2]{cox94}, $g$ is homogeneous.

By \cref{lemma_generator_homog_space}, the polynomial $g$ belongs to the 
$\K$-vector space generated by the rows of the Macaulay matrix $\mac{}{}{\enspol{e}{n}{\seqpol}}{\mon{e,}{n}}{\grs}{\succ_{\seqpol}}$. Therefore, the polynomial $g$ also lies in the $\K$-vector space generated by the rows of its row-reduced echelon form $E_{e}$.
By \cref{lemma_generator_homog_space} and \cref{remark_pivot_equal_leadingmonomial}, the rows of $E_{e}$ have pivot columns indexed by all possible leading monomials (with respect to $\succ_\drl$) of degree $e$.
Hence, there exists a row of $E_{e}$ representing a polynomial $f$ such that 
\[ \lmdrl{f} = \lmdrl{g}. \]
Since $f$ and $g$ are both homogeneous of degree $e$, 
the difference $f - g$ is also homogeneous of degree $e$ or zero.

As $E_{e}$ is in reduced row echelon form, we have $\lc{\succ_\drl}{f}
= 1$. Since $G$ is a reduced $\grs$-Gröbner basis, we also have
$\lc{\succ_\drl}{g} = 1$ by the definition of reduced Gröbner bases. 
Thus, we have that $\lt{\succ_\drl}{f} = \lt{\succ_\drl}{g}$. This implies that the leading monomial of $f - g$ is not $\lmdrl{f}$.
Suppose that $f-g$ is not zero and write $m = \lmdrl{f - g}$.

\medskip

Suppose that $m$ lies in $\supo{f} \setminus \lmdrl{f}$.
 Then, since $E_{e}$ is in row reduced echelon form, the monomial $m$ does not index a pivot column of $E_{e}$. But since $f - g$ is in $I$ and homogeneous of degree $e$, its leading monomial must index a pivot column of $E_{e}$. This is a contradiction.

\medskip

Suppose that $m$ lies in $\supo{g} \setminus \lmdrl{g}$. Then, by the
definition of Gröbner bases, $m$ is not divisible by any leading monomial of elements of $G$, and then of $I$. In particular, 
it cannot index a pivot column of $E_{e}$. Again, since $f - g$ lies in $I$ and is homogeneous of degree $e$, this contradicts the fact that $m$ must be a pivot index in $E_{e}$.
\medskip

We conclude that $f - g = 0$, hence $g$ appears as a row in $E_{e}$.
\end{proof}

\begin{lemma}
\label{lemma_Noether_position_generic}
Let $h_1, \ldots , h_n$ be homogeneous polynomials in $\Kp{n}$ such that
$1 \notin \ideal{h_1, \ldots , h_n}$.
The variables $x_1, \ldots , x_n$ are in simultaneous Noether position with
respect to $h_1, \ldots , h_n$ if and only if, for all integer
$i \in \{1, \ldots , n\}$, the sequence $\phi((h_1,\ldots , h_i), i)$ is regular.
\end{lemma}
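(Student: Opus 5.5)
The plan is to prove the equivalence index by index. I will use \cite[Definition~3]{bardet2014complexity} in the following form: the variables are in simultaneous Noether position with respect to $h_1,\ldots,h_n$ if, for every $i\in\{1,\ldots,n\}$, the variables $x_{i+1},\ldots,x_n$ are in Noether position with respect to $I_i=\ideal{h_1,\ldots,h_i}$. Concretely, the natural map $\K[x_{i+1},\ldots,x_n]\to \Kp{n}/I_i$ should be injective, and $\Kp{n}/I_i$ should be a finitely generated $\K[x_{i+1},\ldots,x_n]$-module. This convention matches the truncation $\phi(\cdot,i)$, which kills exactly $x_{i+1},\ldots,x_n$; checking that the reference's ordering of the free variables agrees with this is the first thing to do. It then suffices to show, for each fixed $i$, that Noether position of $x_{i+1},\ldots,x_n$ with respect to $I_i$ is equivalent to regularity of $\phi((h_1,\ldots,h_i),i)$ in $\Kp{i}$.

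The first step links finiteness to zero-dimensionality. Since $I_i$ is homogeneous, $A=\Kp{n}/I_i$ is a graded module over $B=\K[x_{i+1},\ldots,x_n]$. By the graded Nakayama lemma, $A$ is finitely generated over $B$ if and only if $A/(x_{i+1},\ldots,x_n)A$ is a finite-dimensional $\K$-vector space. That quotient is $\Kp{n}/(I_i+\ideal{x_{i+1},\ldots,x_n})$, which is isomorphic to $\Kp{i}/\ideal{\phi((h_1,\ldots,h_i),i)}$. So finiteness holds exactly when the truncated ideal is zero-dimensional in $\Kp{i}$.

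The second step handles regularity and injectivity. The ideal $\ideal{\phi((h_1,\ldots,h_i),i)}$ is homogeneous in $\Kp{i}$ with $i$ generators, and it is proper because $1\notin\ideal{h_1,\ldots,h_n}$ and truncation preserves degrees. For such an ideal, zero-dimensionality means height $i$. In the Cohen--Macaulay graded ring $\Kp{i}$, $i$ homogeneous elements generating an ideal of height $i$ form a regular sequence, and conversely a regular sequence of length $i$ generates an ideal of height $i$, hence a zero-dimensional one. Injectivity then comes for free from finiteness. The extension $B\to A$ being finite gives $\dim A=\dim(\text{image of }B)$. Krull's height theorem gives $\dim A\ge n-i$. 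Since $B$ is a domain of dimension $n-i$, a nonzero kernel would force $\dim(\text{image})<n-i$, a contradiction. Hence Noether position is equivalent to finiteness alone, which by the first step is equivalent to regularity of the truncated sequence.

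Putting the two steps together for every $i$ gives the claimed equivalence. The main obstacle is the first step: one has to apply the graded Nakayama lemma carefully, including the passage from finite dimension of the quotient to module finiteness, and identify the quotient exactly with the truncation. The assumption $1\notin\ideal{h_1,\ldots,h_n}$ should matter only to keep every $I_i$ proper, so that Krull's height theorem and the dimension count make sense.
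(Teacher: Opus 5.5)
Your argument is correct, but it takes a different route from the paper's.

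\textbf{The paper's route.} The paper never unpacks the definition of Noether position. It cites \cite[Proposition~6]{bardet2014complexity} to replace simultaneous Noether position by regularity of $(h_1,\ldots,h_i,x_{i+1},\ldots,x_n)$ for every $i$. It then notes that modding out by the linear forms $x_{i+1},\ldots,x_n$, which are trivially regular, sends each $h_j$ to $\phi(h_j,i)$ in $\Kp{i}$. So regularity of $\phi((h_1,\ldots,h_i),i)$ is equivalent to regularity of $(x_{i+1},\ldots,x_n,h_1,\ldots,h_i)$, and the proof ends by permuting this homogeneous regular sequence.

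\textbf{Your route.} You reprove from Definition~3 alone what that cited proposition provides:
\begin{itemize}
\item graded Nakayama converts module-finiteness of $\Kp{n}/\ideal{h_1,\ldots,h_i}$ over $\K[x_{i+1},\ldots,x_n]$ into zero-dimensionality of the truncated ideal;
\item the Cohen--Macaulay property of $\Kp{i}$ converts zero-dimensionality into regularity;
\item a Krull-dimension count shows that algebraic independence is automatic once finiteness holds.
\end{itemize}

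\textbf{What each buys.} The paper's version is two lines, but it depends entirely on the precise content of the cited proposition and on permutability of homogeneous regular sequences. Yours is self-contained. It also makes explicit where $1\notin\ideal{h_1,\ldots,h_n}$ is needed, namely properness of each $\ideal{h_1,\ldots,h_i}$ for the dimension count; the paper leaves this implicit.

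Both proofs rest on the convention that $x_{i+1},\ldots,x_n$ are the free variables for $\ideal{h_1,\ldots,h_i}$. You rightly single this out as the first thing to verify against the reference.
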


\begin{proof}
By \cite[Proposition~6]{bardet2014complexity} and
\cite[Definition~3]{bardet2014complexity}, it suffices to prove that for all
$i \in \{1, \ldots , n\}$, the sequence
$(h_1, \ldots , h_i, x_{i+1}, \ldots , x_n)$ is regular.
Let $i \in \{1, \ldots , n\}$. By hypothesis, we know that
\begin{align*}
& \phi((h_1,\ldots , h_i), i) \text{ is regular} \\
\Longleftrightarrow\; &
(x_{i+1}, \ldots , x_n, h_1, \ldots , h_i) \text{ is regular} \\
\Longleftrightarrow\; &
(h_1, \ldots , h_i, x_{i+1}, \ldots , x_n) \text{ is regular},
\end{align*}
which concludes the proof.
\end{proof}

\end{document}